\newcolumntype{L}[1]{>{\raggedright\arraybackslash}p{#1}}
\newcolumntype{C}[1]{>{\centering\arraybackslash}m{#1}}
\newcolumntype{R}[1]{>{\raggedleft\arraybackslash}p{#1}}
\renewcommand{\epsilon}{\varepsilon}
\newtheorem{theorem}{Theorem}[section]
\newtheorem{observation}[theorem]{Observation}
\newtheorem*{claim*}{Claim}
\newtheorem{condition}[theorem]{Condition}
\newtheorem{lemma}[theorem]{Lemma}
\newtheorem{proposition}[theorem]{Proposition}
\newtheorem{corollary}[theorem]{Corollary}
\theoremstyle{definition}
\newtheorem{definition}[theorem]{Definition}
\newtheorem{remark}[theorem]{Remark}
\newtheorem*{remark*}{Remark}
\def\Ex{\mathop{\mathbf{E}}\nolimits}
\renewcommand{\emptyset}{\varnothing}
\renewcommand{\Pr}[2][]{ \ifthenelse{\isempty{#1}}
  {\mathop{\mathbf{Pr}}\left[#2\right]} {\mathop{\mathbf{Pr}}_{#1}\left[#2\right]} }
\newcommand{\abs}[1]{\left\vert#1\right\vert}
\newcommand{\set}[1]{\left\{#1\right\}}
\newcommand{\tuple}[1]{\left(#1\right)} 
\newcommand{\ceil}[1]{\lceil #1\rceil}
\newcommand{\eps}{\varepsilon}
\newcommand{\tp}{\tuple}
\newcommand{\numP}{\ensuremath{\#\mathbf{P}}}
\renewcommand{\P}{\ensuremath{\mathbf{P}}}
\newcommand{\NP}{\ensuremath{\mathbf{NP}}}
\newcommand{\defeq}{:=}
\newcommand{\DTV}[2]{d_{\mathrm{TV}}\left({#1},{#2}\right)}
\def\*#1{\bm{#1}} 
\def\+#1{\mathcal{#1}} 
\def\-#1{\mathrm{#1}} 
\def\=#1{\mathbb{#1}} 
\def\prob#1#2#3{\goodbreak\begin{list}{}{\labelwidth\z@ \itemindent-\leftmargin
                        \itemsep\z@  \topsep6\p@\@plus6\p@
                        \let\makelabel\descriptionlabel}
                \item[\it Name]#1
               \item[\it Instance]                #2
                \item[\it Output]#3
                \end{list}}
\newcommand{\poly}{{\rm poly}}  
\newcommand{\basicsample}{\textnormal{\textsf{LB-Sample}}}
\newcommand{\appmghinds}{\textnormal{\textsf{ApproxMarginIndSet}}}
\newcommand{\appmghcol}{\textnormal{\textsf{ApproxMarginColouring}}}
\newcommand{\resolve}{\textnormal{\textsf{Resolve}}}
\newcommand{\appresolve}{\textnormal{\textsf{ApproxResolve}}}
\newcommand{\appresolverandom}{\textnormal{\textsf{ApproxResolve-Random}}}
\newcommand{\config}{\mathsf{Boundary}}
\newcommand{\appmghcoledge}{\textnormal{\textsf{ApproxMarginColouring-Set}}}
\newcommand{\upd}{\textnormal{\textsf{pred}}}
\newcommand{\ts}{\textnormal{\textsf{TS}}}
\newcommand{\vbl}{\mathsf{vbl}}
\newcommand{\var}[1]{{{\vbl}}\left({#1}\right)}  
\newcommand{\ssmstr}{\textnormal{\textsf{SSMS-Truncated}}}
\newcommand{\bdsplittr}{\textnormal{\textsf{BD-SPLIT-Truncated}}}
\newcommand{\dist}{\mathrm{dist}}
\newcommand{\KL}[2]{\ensuremath{D_{\textnormal{KL}}\left(#1\parallel#2\right)}}
\newcommand{\randombit}{\basicsample}
\title{Towards derandomising Markov chain Monte Carlo}
  \author{Author(s)}
\author{Weiming Feng, Heng Guo, Chunyang Wang, Jiaheng Wang, Yitong Yin}
\address[Weiming Feng, Heng Guo, Jiaheng Wang]{School of Informatics, University of Edinburgh, Informatics Forum, Edinburgh, EH8 9AB, United Kingdom. \textnormal{E-mail: \url{wfeng@ed.ac.uk}, \url{hguo@inf.ed.ac.uk}, \url{jiaheng.wang@ed.ac.uk}}. }
\address[Chunyang Wang, Yitong Yin]{State Key Laboratory for Novel Software Technology, Nanjing University, 163 Xianlin Avenue, Nanjing, Jiangsu Province, 210023, China. \textnormal{E-mail: \url{wcysai@smail.nju.edu.cn}, \url{yinyt@nju.edu.cn}}}
\begin{document}


\begin{abstract}
  We present a new framework to derandomise certain Markov chain Monte Carlo (MCMC) algorithms.
  As in MCMC, we first reduce counting problems to sampling from a sequence of marginal distributions.
  For the latter task,
  we introduce a method called \emph{coupling towards the past} that can, in logarithmic time, 
  evaluate one or a constant number of variables from a stationary Markov chain state.
  Since there are at most logarithmic random choices, this leads to very simple derandomisation.
  We provide two applications of this framework, namely efficient deterministic approximate counting algorithms for hypergraph independent sets and hypergraph colourings, 
  under local lemma type conditions matching, up to lower order factors, their state-of-the-art randomised counterparts.
\end{abstract}	

\maketitle

\setcounter{tocdepth}{1}
\tableofcontents

\section{Introduction}

It is a central question in the theory of computing to understand the power of randomness.
Indeed, randomisation has been shown to be extremely useful in designing efficient algorithms.
One early and surprising illustration of its power is through estimating the volume of a convex body.
Deterministic algorithms cannot achieve good approximation ratios through membership queries \cite{elekes1986geometric, BF87},
and yet Dyer, Frieze, and Kannan \cite{DFK91} discovered an efficient randomised approximation algorithm under the same model.
Their driving force is the celebrated Markov chain Monte Carlo (MCMC) method,
which has been studied since the origin of electronic computers.
In MCMC, one reduces the counting problem (such as computing the volume) to (usually a sequence of) related sampling problems \cite{JVV86},
and the latter is solved using Markov chains.
This powerful method has lead to many great achievements,
ranging from the early results of approximating the partition function of ferromagnetic Ising models \cite{JS93} and the permanent of non-negative matrices \cite{jerrum2004polynomial},
to more recent developments such as counting the number of bases in matroids \cite{ALOV19,CGM19},
and estimating partition functions of spin systems up to critical thresholds \cite{ALO20, chen2020rapid, CLV21, chen2021rapid, anari2022entropic, CE22,CFYZ22}.

While randomness remains an indispensable ingredient to the MCMC method,
the belief that randomness is essential to efficient approximate counting is seriously challenged over the past two decades.
This was initiated by a highly influential result of Weitz \cite{Wei06}, which gave the first deterministic fully polynomial-time approximation scheme (FPTAS) for a $\numP$-hard problem.
Since then, deterministic algorithms are gradually catching up with their random counterparts on many fronts.
A plethora of techniques have been introduced and developed for deterministic approximate counting, including:
decay of correlation \cite{Wei06,bayati2007simple,gamarnik2007correlation}, 
zero-freeness of polynomials \cite{barvinok2016combinatorics,patel2017deterministic, LSS22}, 
linear programming based methods \cite{Moi19,GLLZ19,JPV21a},
and various statistical physics related techniques \cite{HPR19,JPP22,JPSS22}.
Curiously, although these algorithms are usually guided by probabilistic intuitions related to the problem,
they work very differently from typical randomised algorithms.

All these developments beg one question: can we derandomise MCMC more directly?
The benefit of a positive answer would be two folds.
It can help us to better understand the role of randomisation in MCMC and in approximate counting.
It might also lead to easier ways of designing deterministic counting algorithms thanks to the plentifulness of MCMC algorithms.
However, this task is not easy, as Markov chains require at least a linear amount of random bits to approach their stationary distribution.
Short of a breakthrough in pseudo-random number generators,
it seems impossible to fully derandomise Markov chains.

In this paper we make a substantial step towards turning rapid mixing Markov chains into deterministic approximate counting algorithms.
The saving grace to the issue mentioned above is that the Monte Carlo step of MCMC does not require fully simulating the Markov chains anyways!
Usually, only the marginal probability of a single variable needs to be evaluated, rather than the whole state.
We show that for certain MCMC algorithms,
one or a constant number of variables can be evaluated in their stationary state within only logarithmic time.
This leads to some very simple brute force enumeration derandomisation.
To illustrate the power of this new method, we obtain efficient deterministic approximate counting algorithms for hypergraph independent sets and hypergraph colourings.
Our algorithms match their currently best randomised counterparts under local lemma type conditions up to log factors.
We describe our results in more details next.

\subsection{Our contributions}
We give deterministic approximate counting algorithms by derandomising certain MCMC algorithms. 
In fact, we venture the idea that efficient deterministic approximate counting follows from ``highly efficient'' randomised algorithms.
We first explain what these are and how we obtain them. 
%
 
\subsubsection{Coupling towards the past}
We consider a class of generic Markov chains known as \emph{systematic scan Glauber dynamics},\footnote{Instead of random update Markov chains, we choose systematic scan chains to illustrate our ideas, as the analysis is much cleaner and the result is no worse.
Our method applies to random update Markov chains as well, but with some additional effort. See \Cref{sec:intro-random} for a brief discussion and \Cref{sec:derandomise-random-scan} for the details.}
which are single-site {Glauber dynamics} (a.k.a.~\emph{Gibbs sampling}, \emph{heat-bath dynamics}) 
with a fixed scan order.
In classical MCMC, 
a Markov chain $(X_t)_{t\ge 0}$ is simulated chronologically for long enough to draw a sample according to the stationary distribution $X_\infty\sim\mu$.
Due to a lower bound of Hayes and Sinclair \cite{hayes2007general}, 
which applies to a wide range of models,
this will cost $\Omega(n\log n)$ random bits if $X_t=(X_t(v))_{v\in[n]}$ consists of $n$ variables.
However, this simulation seems very wasteful when we are only interested in evaluating $X_\infty(v)$ for some particular $v\in[n]$.
This reflects a more general question: can we calculate the fixed point of a dynamical system without simulating the dynamical system until convergence?
%

We introduce a new method for evaluating the states of variables in a stationary Markov chain without simulating the entire chain. 
We call this method \emph{coupling towards the past} (CTTP).\footnote{The name resembles the coupling \emph{from} the past (CFTP) method by Propp and Wilson~\cite{PW96}. However, our method has several key differences from CFTP. See \Cref{sec:related-work} for a comparison.}
Consider a mixed chain $(X_t)_{t\le 0}$ that runs from the infinite past to now.
Our goal is to evaluate $X_0(v)$ for $v\in[n]$, which follows the marginal distribution, denoted $\mu_v$.
%
%
It suffices to simulate the last update for $v$, 
and the key observation here is that updates of Glauber dynamics may depend only on a small amount of information.
In particular, when the marginal distributions are properly lower bounded,
there is a positive probability of determining the update without any information on the current state.
Thus, we can deduce $X_0(v)$ by recursively revealing only the necessary randomness backwards in time.
Alternatively, this can be viewed as a grand coupling constructed towards the past.
Each random bit revealed is used for all possible chains,
and every information successfully deduced is the same for all chains as well.
When this process terminates at time $-t$ for some $t\ge 0$,
no matter what the state before $-t$ is, it evolves into the same $X_0(v)$.
This implies that $X_0(v)$ follows $\mu_v$ as desired.

%
When the CTTP process terminates in logarithmic steps with high probability, 
it yields a \emph{marginal sampler} that draws, approximately, from the marginal distribution with $O(\log n)$ cost in both time and random bits.
The error in total variation distance is the failure probability of CTTP.
Such marginal samplers can be straightforwardly derandomised by enumerating all possible random choices in polynomial time to deterministically estimate the marginal probabilities,
which implies FPTASes via standard self-reductions \cite{JVV86}.

We will apply CTTP to the uniform distribution of hypergraph independent sets and a projected distribution induced by uniformly at random hypergraph colourings.
In both applications, the conditions we obtain to guarantee the $O(\log n)$ run-time of CTTP match those for $O(n\log n)$ mixing time bounds of Glauber dynamics.
From a complexity-theory point of view,
our construction of marginal samplers from the original Markov chain algorithm has a certain \emph{direct-sum} flavour: 
we transform a protocol for sampling $n$ variables,
to a new protocol for sampling single variables, using only $O(1/n)$ fraction of the original cost each.

We remark that these low-cost marginal samplers have significance beyond deterministic approximate counting.
For example, in probabilistic inferences,
it is often the end goal to estimate marginal probabilities.
Thus, our marginal samplers are substantially faster in such contexts than standard MCMC.
As an instance of such, the perfect marginal sampler we obtain for hypergraph independent sets terminates in $O(\log 1/\gamma)$ time with probability at least $1-\gamma$ in the regime of parameters we consider (see \Cref{rem:perfect-marginal-HIS}). 
In comparison, other standard methods for such purpose require $\Omega(n)$ running time where $n$ is the number of vertices. 
See \Cref{sec:related-work} for a more detailed elaboration on this. 

\subsubsection{Counting hypergraph independent sets}
The first testing field of our framework is to approximately count the number of hypergraph independent sets (equivalent to satisfying assignments of monotone CNF formulas).
Let $H=(V,\+E)$ be a hypergraph.
A set $S\subseteq V$ is a (weak) \emph{independent set} if $S\cap e\neq e$ for all $e\in \+E$.
This problem is naturally parameterised by $k$ and $\Delta$, which denote the (uniform) hyperedge size and the maximum vertex degree of $H$, respectively.
There was an exponential gap between the parameters for which efficient randomised and deterministic algorithms exist,
and our work closes this gap.

Estimating the number of hypergraph independent sets was first studied by Borderwich, Dyer, and Karpinski \cite{BDK06,BDK08} using Markov chains.
They used path coupling to show that the straightforward Glauber dynamics mixes in $O(n\log n)$ time if $\Delta\le k-2$.
The mixing time analysis was later improved by Hermon, Sly, and Zhang \cite{HSZ19} to extend the condition exponentially to $\Delta\le c2^{k/2}$ for some absolute constant $c>0$ via an information percolation argument.
This threshold is tight up to constants due to a hardness result in \cite{BGGGS19}.
Very recently, Qiu, Wang, and Zhang \cite{qiu2022perfect} gave a perfect sampler under similar conditions.

On the other hand, deterministic counting algorithms have been lagging behind for this problem.
Even with significant adjustment, the correlation decay method works only if $\Delta\le k$ \cite{BGGGS19}.
One may also put it under the Lov\'asz local lemma framework,
and apply a more general algorithm by He, Wang, and Yin~\cite{HWY22c} to obtain an efficient algorithm assuming $\Delta\lesssim 2^{k/5}$.
The symbol $\lesssim$ suppresses lower-order items such as $\poly(k)$.\footnote{We make sure to suppress same order terms when comparing to other works in \Cref{tab:his} and \Cref{tab:hc}.}
The conditions of the latter algorithm still has a gap exponential in~$k$ compared to the randomised algorithm or the hardness threshold.

We close this $\exp(k)$ gap using our new framework.
The result is summarised in \Cref{thm-h-ind},
matching the previously mentioned randomised algorithm and hardness threshold up to a factor of $O(k^2)$.
We note that one caveat of our algorithm, similar to all other deterministic approximate counting algorithms we are aware of,
is that the exponent of its running time depends on $\Delta$ and $k$, 
rather than an absolute constant as in the case of randomised algorithms.
A detailed running time bound of our algorithm is given in \Cref{section:hyper-indset}.
Comparisons with previous works are summarised in \Cref{tab:his}.



\begin{theorem}\label{thm-h-ind}
Let $k \geq 2$ and $\Delta \geq 2$ be two constants satisfying 
$\Delta\le \frac{1}{\sqrt{8\mathrm{e}}k^2}\cdot2^{\frac{k}{2}}$.
There is an FPTAS for the number of independent sets in $k$-uniform hypergraphs with maximum degree $\Delta$. 
\end{theorem}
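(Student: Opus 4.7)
The plan is to invoke the CTTP framework introduced above, specialised to the uniform distribution $\mu$ over independent sets of the $k$-uniform hypergraph $H=(V,\mathcal{E})$ of maximum degree $\Delta$. By the standard self-reduction of \cite{JVV86}, an FPTAS for the number of independent sets follows from a deterministic polynomial-time oracle that estimates the conditional marginal $\mu_v(1)$ (under any partial pinning of already-processed vertices) to multiplicative error $1\pm\varepsilon/\mathrm{poly}(n)$. So the job reduces to constructing such a marginal oracle from the Markov chain.

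For the chain, fix an arbitrary scan order of $V$ and consider the stationary trajectory $(X_t)_{t\le 0}$ of the systematic scan Glauber dynamics. The update at a site $u$ has a particularly convenient form: toss a fair coin; if it lands $0$, accept unconditionally; if it lands $1$, accept only when no hyperedge $e\ni u$ has all its other $k-1$ endpoints currently set to $1$. Thus, to compute $X_0(v)$ one traces back to the last scan time $t_0<0$ of $v$, reveals its coin, and only if the coin is $1$ must one ``certify'' each of the $\Delta$ incident hyperedges. A hyperedge is certified ``safe'' as soon as one of its other endpoints is seen to be $0$, and each such endpoint is in turn resolved by a recursive CTTP call at its own last scan time strictly before $t_0$. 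The procedure unfolds a random tree $\mathcal{T}$ of recursive calls, and $X_0(v)$ is read off once every branch closes.

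The heart of the proof is to show that, under the hypothesis $\Delta\le \frac{1}{\sqrt{8\mathrm{e}}\,k^{2}}\cdot 2^{k/2}$, the tree $\mathcal{T}$ has size $O(\log(n/\varepsilon))$ with probability $1-\varepsilon/\mathrm{poly}(n)$. I would dominate $\mathcal{T}$ by a subcritical Galton--Watson process. The naive count gives at most $\Delta(k-1)$ children per call, but a branch survives only conditioned on the probed endpoints appearing to be $1$, so the effective offspring weight shrinks by a factor of order $2^{-(k-1)/2}$ once one absorbs the freedom in how the $k-1$ endpoints of a hyperedge are ordered --- essentially the same probabilistic slack exploited by the information-percolation argument of Hermon--Sly--Zhang~\cite{HSZ19}, now run in reverse time. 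The stated condition is precisely what forces the offspring mean to be strictly below $1$ by a definite margin, and standard extinction-time tails then yield the required $O(\log(n/\varepsilon))$ bound on $|\mathcal{T}|$, and hence on the number of random bits and computation steps of CTTP.

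Once this marginal sampler is in place, derandomisation is immediate: since a single CTTP evaluation, truncated at $C\log(n/\varepsilon)$ steps, consumes only $O(\log(n/\varepsilon))$ random bits, the distribution of its output can be computed exactly by enumerating all $\mathrm{poly}(n/\varepsilon)$ coin sequences in deterministic polynomial time. This gives the marginal oracle and, via self-reduction, the FPTAS. The main obstacle is the branching-process analysis: the naive per-generation weight $\Delta(k-1)\cdot 2^{-(k-1)}$ only suffices up to $\Delta\lesssim 2^{k}/k$, and pushing the threshold down to $2^{k/2}/k^{2}$ requires correlating the survival events across the different hyperedges incident to a common vertex (rather than union-bounding them), which is where the constants $\sqrt{8\mathrm{e}}$ and $k^{2}$ are really paid.
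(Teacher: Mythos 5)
Your overall skeleton is the same as the paper's: reduce to marginal estimation via the vertex decomposition with the all-zero scheme (\Cref{cor-reduction}), run CTTP backwards with the lower bound $\mu^{\-{LB}}(0)=\mu^{\-{LB}}(\perp)=\tfrac12$, truncate at $O(\log (n/\eps))$ revealed coins, and enumerate the coin sequences. The description of the update rule and of when a branch must recurse (only when \emph{all} $k-1$ other endpoints of an incident hyperedge draw $\perp$) is also essentially correct.

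The genuine gap is in the tail bound on the size of the recursion, which is the entire technical content of the theorem. You propose to dominate the recursion by a subcritical Galton--Watson process, but the recursion tree is not a branching process: the coins $r_t$ are shared and memoized across branches (the set $R$), distinct hyperedges through a vertex can share endpoints, and the same timestamp can be demanded by many callers, so the survival events of different branches are not independent and no domination is justified in your sketch. Your closing paragraph is also internally inconsistent: you say the naive offspring bound $\Delta(k-1)2^{-(k-1)}<1$ ``only suffices up to $\Delta\lesssim 2^k/k$'' and that extra work is needed to ``push the threshold down to $2^{k/2}/k^2$'' --- but $2^{k/2}/k^2$ is the \emph{more restrictive} condition, so if the naive argument were valid it would already prove (a stronger form of) the theorem; the $2^{k/2}$ cannot arise from ``correlating survival events across hyperedges.'' What the paper actually does (\Cref{lemma:indset-efficiency}) is certify a long run by a large connected set $V_{\+B}$ of timestamp tuples in a witness graph (\Cref{definition:witness-graph-indset}, \Cref{lemma:indset-connected}), extract from it a $2$-tree of proportional size (\Cref{lemma:big-2-tree}) whose tuples are pairwise timestamp-disjoint so that the events ``all $k$ coins are $\perp$'' are genuinely independent (\Cref{lemma:indset-all-1}), and union-bound over $2$-trees. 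The $2^{k/2}$ threshold is paid exactly there: the number of $2$-trees of size $\eta$ grows like $(\mathrm{e}D^2)^{\eta}$ with $D=\Theta(\Delta k^2)$ (\Cref{lemma:2-tree-number-bound}), so one needs $\Delta^2 k^4 \lesssim 2^k$, i.e.\ $\Delta\lesssim 2^{k/2}/k^2$. Without some such independent-certificate argument (or a rigorous treatment of the dependencies in your branching process), the proof is incomplete.
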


\begin{table}[htbp]
\centering
\begin{tabular}{|>{\centering\hspace{0pt}}m{0.320\linewidth}|>{\centering\hspace{0pt}}m{0.176\linewidth}|>{\centering\hspace{0pt}}m{0.200\linewidth}|>{\centering\arraybackslash\hspace{0pt}}m{0.200\linewidth}|} 
\hline
\centering Hypergraph independent sets                                                                             & Reference & Bound & Running time  \\ 
\hhline{|====|}
\multirow{2}{0.855\linewidth}{\centering Randomised \\  counting / sampling} &   ~\cite{BDK08,BDK06}    &  $ \Delta\leq k-2$  & $\tilde{O}(n^2)$ / $O(n\log{n})$               \\ 
\cline{2-4}
    & ~\cite{HSZ19,qiu2022perfect}     & $ \boldsymbol{\Delta\lesssim 2^{k/2}}$      &      $\tilde{O}(n^2)$ / $O(n\log{n})$         \\ 
\hhline{|====|}
\multirow{4}{0.855\linewidth}{ \centering Deterministic\\ counting}       &  ~\cite{BGGGS19}   &  $ \Delta\leq k$       &   $n^{O(\log (k\Delta))}$            \\ 
\cline{2-4}
  & ~\cite{JPV21a}   &   $ \Delta\lesssim 2^{k/7}$        &   $n^{\poly(k,\Delta)}$             \\ 
\cline{2-4}
          &  ~\cite{HWY22c}    & $ \Delta\lesssim 2^{k/5}$   &     $n^{\poly(k,\Delta)}$           \\ 
\cline{2-4}
        &\textbf{This work}      & $ \boldsymbol{\Delta\lesssim 2^{k/2}}$       &   $n^{\poly(k,\Delta)}$             \\ 
\hhline{|====|}
Hardness                                                                         &  ~\cite{BGGGS19}     &   $ \Delta\ge 5\cdot2^{k/2}$ assuming $\P\neq\NP$    &         \\
\hline
\end{tabular}
\caption{Algorithms and hardness results for hypergraph independent sets}
\label{tab:his}
\end{table}

This is a relatively straightforward application, since we just apply CTTP to the uniform distribution over hypergraph independent sets.
Our run-time analysis incorporates various techniques developed in the local lemma context.
The crucial part is to show that when CTTP runs for too long,
there must be many independent unlikely events happening.

Comparing to the previous best deterministic algorithm by He, Wang, and Yin~\cite{HWY22c},
our algorithm is both simpler and stronger, thanks to the new CTTP technique.
The previous algorithm is a derandomisation of a recursive marginal sampler by Anand and Jerrum~\cite{AJ22},
which is also a major inspiration for our algorithm.
We will discuss it in more detail in \Cref{sec:intro-AJ}.
Derandomising the Anand-Jerrum algorithm requires sophisticated dynamic programming,
and one needs to carefully control how conditioning goes in the recursive calls,
leading to some complicated analysis and worse conditions.
In contrast, as CTTP always draws fresh random variable, there is no such issue.

The regime of parameters in which our technique applies go further if the hypergraph is \emph{linear}; namely, any two hyperedges intersect on at most one vertex. 
In this case our result also almost matches not only the state-of-the-art randomised algorithms \cite{HSZ19},
where they require $\Delta\le c2^k/k^2$ for some absolute constant $c>0$, 
but also the hardness result that approximate counting is intractable when $\Delta>2.5\cdot 2^k$ \cite{QW22}.
The improvement for linear hypergraphs is achieved by adapting a technique introduced in \cite{FGW22a}.

\begin{theorem}\label{thm-h-ind-simple}
  For any real number $\delta > 0$,
  let $k \geq \frac{25(1+\delta)^2}{\delta^2}$ and $\Delta \geq 2$ be two integers such that $\Delta \le \frac{1}{100k^3} 2^{k/(1+\delta)}$.
  There is an FPTAS for the number of independent sets in $k$-uniform linear hypergraphs with maximum degree $\Delta$.
\end{theorem}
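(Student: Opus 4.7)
The plan is to apply the coupling towards the past (CTTP) framework to the uniform distribution over independent sets of a $k$-uniform linear hypergraph $H=(V,\+E)$ with maximum degree $\Delta$, and then derandomise the resulting marginal sampler to obtain an FPTAS via the standard self-reduction of \cite{JVV86}. The high-level recipe is identical to that of \Cref{thm-h-ind}; the novelty lies in how we analyse the CTTP recursion in the linear setting.

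First, I would set up a systematic scan Glauber dynamics on hypergraph independent sets. At each update of a vertex $v$, flip a fair coin: if it lands tails, set $v$ to be out of $S$; if it lands heads, tentatively place $v$ in $S$ and then check each hyperedge $e\ni v$, putting $v$ out instead if some such $e$ would become monochromatically ``in''. To run CTTP for a target vertex $v_0$ at time $0$, I trace back to the last update of $v_0$, reveal its coin flip, and if that flip was heads, recursively evaluate each of its hyperedge-neighbours at the moment just before that update in order to detect any blocking hyperedge. Correctness, i.e.\ that this yields the correct marginal, follows directly from the general CTTP framework developed earlier in the paper.

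The crucial step is bounding the recursion depth. A naive analysis that charges only the ``all other $k-1$ vertices of some hyperedge lie in $S$'' event yields the threshold $\Delta \lesssim 2^{k/2}$ of \Cref{thm-h-ind}. To push the bound close to $2^k$ for linear hypergraphs, I would adapt the technique of \cite{FGW22a}: in a linear hypergraph, distinct hyperedges sharing a vertex are otherwise vertex-disjoint, so the criticality events of different hyperedges around the same vertex enjoy strong near-independence. The $(1+\delta)$ slack in the exponent then lets us replace the extremal ``$k-1$ in $S$'' event by a weaker ``at least a $1/(1+\delta)$ fraction of the vertices of the hyperedge are currently in $S$'' event and still terminate the recursion: the probability of the weakened event is controlled by a Chernoff/binomial estimate of order $2^{-k/(1+\delta)}$ times a combinatorial $\binom{k}{k/(1+\delta)}$ factor, and the hypothesis $k \ge 25(1+\delta)^2/\delta^2$ is precisely what makes the binomial correction a $2^{o(k)}$ term relative to the main exponent. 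The factor $1/(100k^3)$ in the bound on $\Delta$ absorbs the polynomial overheads coming from the branching factor $\Delta(k-1)$ of the recursion and from union bounds over backward-in-time tree shapes.

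With the recursion depth bounded by $O(\log(n/\eps))$ except with probability $\eps$, CTTP uses $O(\log(n/\eps))$ random bits per marginal and errs in total variation distance by at most $\eps$. Enumerating all $\poly(n/\eps)$ assignments of these random bits then gives deterministic estimates of each marginal probability to within $\eps$ in polynomial time, and self-reduction converts this into an FPTAS for the number of independent sets. I expect the main obstacle to be the tree-percolation calculation in the third step: one must carefully union-bound over the shapes of the backward-in-time recursion tree while fully exploiting the vertex-disjointness granted by linearity, and keep every combinatorial overhead within the slack provided by $\delta > 0$.
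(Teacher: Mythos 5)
Your overall frame --- CTTP on the uniform distribution over independent sets, truncation at $O(\log(n/\eps))$ revealed random variables, enumeration of random choices, and self-reduction --- is exactly the paper's route, and your instinct that linearity must be exploited \`a la \cite{FGW22a} is also correct. However, the concrete mechanism you give for the improvement from $2^{k/2}$ to $2^{k/(1+\delta)}$ does not work, and it misidentifies where the $2^{k/2}$ barrier actually comes from. In the algorithm, a hyperedge $e$ triggers recursion only when \emph{all} $k-1$ lower-bound samples $\basicsample(\upd_u(t))$ for $u\in e\setminus\{v\}$ return $\perp$; this event already has probability $2^{-(k-1)}$ per hyperedge, so there is no ``extremal $k-1$ in $S$'' event to weaken, and a fractional-occupancy event with a Chernoff/$\binom{k}{k/(1+\delta)}$ correction does not correspond to any step of the recursion. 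The loss down to $\Delta\lesssim 2^{k/2}$ in the general case (\Cref{lemma:indset-efficiency}) is instead caused by the union bound over $2$-trees in the witness graph: a $2$-tree of size $\ell$ certifies only $k\ell$ independent $\perp$-events while the number of $2$-trees grows like $(\mathrm{e}D^2)^{\ell}$ with $D=\Theta(\Delta k^2)$ (\Cref{lemma:2-tree-number-bound}), and it is the \emph{square} of $D$ that forces $2^k\gtrsim\Delta^2$. Your proposal never addresses this squaring.

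The paper's actual fix is combinatorial, not probabilistic: it replaces $2$-trees by $2$-block-trees (\Cref{definition:2-block-tree}) with blocks of $\theta=\Theta(1/\delta)$ pairwise-adjacent witness vertices. Linearity enters through \Cref{lemma-vb-linear}: one extracts from the bad set a connected subfamily in which any two witness vertices share at most one timestamp, so a block of $\theta$ hyperedge-occurrences still contains at least $\theta(k-\theta)$ \emph{distinct} timestamps, each independently $\perp$ with probability $1/2$. The enumeration cost per block is $(\theta\mathrm{e}^{\theta}D^{\theta+1})$, so the effective exponent on $D$ drops from $2$ to $(\theta+1)/\theta=1+1/\theta$, which is what yields $\Delta\lesssim 2^{k/(1+\delta)}$ in \Cref{lemma:indset-linear-efficiency}. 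The hypothesis $k\ge 25(1+\delta)^2/\delta^2$ is used to guarantee $k\ge\theta^2$, so that the $\theta$ timestamps lost to intra-block overlaps are a negligible fraction of $k$ --- not to control a binomial coefficient. To complete your proof you would need to replace your third step with this block-amortised union bound (or an equivalent device that beats the $D^2$ cost of $2$-trees); as written, the argument does not close.
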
    

\subsubsection{Counting hypergraph colourings}

The other application we give is to approximately count the number of hypergraph (proper) colourings.
Again let $H=(V,\+E)$ be a hypergraph, and a $q$-colouring $\sigma\in [q]^V$ is called \emph{proper} if no hyperedge is monochromatic under $\sigma$. 
Similar to hypergraph independent sets,
our work also closes the exponential gap that previously existed between the range of parameters for which efficient randomised and deterministic algorithms exist.

This problem was also studied first in \cite{BDK08},
where they show that Glauber dynamics mixes in $O(n\log n)$ time if $\Delta<q-1$ and $k>4$.
However, to go beyond $\Delta=\Theta(q)$, one starts to encounter the so-called frozen barrier.
As observed in \cite{FM11},
Glauber dynamics can no longer be connected when $\Delta > cq$ for sufficiently large $c$.

The key to bypassing the frozen barrier lies in a classical combinatorial result, the Lov\'asz local lemma.
Introduced by Erd\H{o}s and Lov\'asz \cite{EL75},
the local lemma was originally used to show that hypergraph colourings exist when $\Delta\le q^{k-1}/\mathrm{e}$.
In fact, it gives more information about the distribution of uniform colourings than their mere existence \cite{haeupler2011new}.
Especially if we assume similar but stronger conditions than the local lemma condition,
the distribution enjoys many nice properties.
This enables a projection approach \cite{FGYZ21b,FHY21} where one finds a properly projected distribution to run Glauber dynamics on.
The projected distribution has better connectivity than the original state space, yet transitions between projected states can be efficiently implemented.
With this approach, the state-of-the-art randomised algorithm \cite{JPV21} (and the related perfect sampling algorithm \cite{HSW21})
is shown to be efficient if $\Delta\lesssim q^{k/3}$.

On the other hand, in fact, efficient deterministic algorithms were obtained even before randomised algorithms in this setting \cite{GLLZ19} under similar local lemma type conditions,
based on Moitra's linear programming based approach \cite{Moi19}.
Their algorithm was later improved by Jain, Pham, and Vuong \cite{JPV21a},
and then the aforementioned work of He, Wang, and Yin \cite{HWY22c} introduced an alternative method achieving $\Delta\lesssim q^{k/5}$,
which was state-of-the-art before our work.
Note, once again, the exponential in $k$ gap between the range of parameters for efficient randomised and deterministic algorithms.

We also close this gap. Our result is summarised in \Cref{thm-h-color}.
However, in this setting, there is still an exponential in $k$ gap between the algorithmic threshold and the hardness threshold of \cite{GGW22}.
A detailed running time bound is given in \Cref{section:hyper-colouring}.
Previous works and hardness results are summarised in \Cref{tab:hc}.
Once again, for a detailed discussion and comparison with the previous best algorithm by He, Wang, and Yin \cite{HWY22c},
see \Cref{sec:intro-AJ}.

\begin{theorem}\label{thm-h-color}
  Let $k \geq 20$, $\Delta \geq 2$, and $q$ be three integers such that $\Delta \le \left( \frac{q}{64} \right)^{\frac{k-5}{3}}$. 
  There is an FPTAS for the number of proper $q$-colourings in $k$-uniform hypergraphs with maximum degree $\Delta$.
\end{theorem}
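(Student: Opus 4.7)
The plan is to instantiate the CTTP framework on a projected distribution rather than directly on the uniform distribution over proper $q$-colourings, and then derandomise via brute-force enumeration of the $O(\log n)$ random choices. Direct Glauber dynamics on colourings suffers from the frozen barrier (it is not even connected once $\Delta$ exceeds $cq$), so the first step is to define a suitable \emph{projection scheme} in the spirit of \cite{FGYZ21b, FHY21, JPV21, HSW21}: partition the palette $[q]$ into two roughly balanced classes $Q_0, Q_1$ (using a random or greedy partition) and project each colouring $\sigma\in [q]^V$ to its class vector $\tau\in\{0,1\}^V$ given by $\tau(v)=\mathbbm{1}[\sigma(v)\in Q_1]$. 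Under the assumed local-lemma type condition $\Delta\le (q/64)^{(k-5)/3}$, one can invoke the standard Haeupler--Saha--Srinivasan/Moitra style estimates to argue that the projected distribution has marginal lower bounds bounded away from $0$ and $1$, and that for any partial projection $\tau_\Lambda$ the conditional marginal at a vertex $v$ can be computed locally provided no hyperedge is ``frozen''. This reduction is crucial because it turns an intractable Glauber update into one that only needs to look at a bounded-depth neighbourhood.

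Next I would run a systematic-scan Glauber chain on the projected space $\{0,1\}^V$ and apply CTTP to evaluate $X_0(v)$ for a designated vertex $v$. Following the template used for \Cref{thm-h-ind}, the CTTP process recursively reveals only enough randomness to determine the outcome of the last update at $v$: with the lower-bound on the projected marginal, a fresh coin succeeds with positive probability without inspecting the state; on failure, the process recurses on the (small) set of vertices whose current values are needed to resample $v$, i.e.\ those in the hyperedges containing $v$ that are currently ``undetermined''. The recursion produces a witness structure $T$ rooted at $v$, and termination corresponds to every branch successfully resolving its update with a free coin. The target is to show
\[
\Pr[\,|T|\ge t\,] \le (n/\delta)^{-\Omega(1)} \cdot \exp(-\Omega(t))
\]
so that running CTTP for $t=O(\log(n/\delta))$ steps yields a marginal sampler with total variation error at most $\delta/n$.

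The main technical step, and the hardest one, is the witness tree analysis for the projected chain. Following the local-lemma paradigm, I would enumerate the possible shapes of CTTP recursion trees, charging each node a probability factor corresponding to the event ``fresh coin did not resolve the update.'' Each such factor is controlled by the marginal lower bound derived from the projection, and independence between nodes of the tree is ensured because CTTP uses fresh randomness at disjoint time-slots; this is exactly the direct-sum feature the authors emphasise. A union bound over the at most $(\mathrm{e}\Delta k)^{t}$ trees of size $t$ together with the per-node probability bound yields exponential decay provided
\[
\mathrm{e}\Delta k\cdot p \;<\; 1,
\]
where $p$ is the per-step failure probability. Plugging in $p\lesssim k/q^{(k-5)/3}$ (which is the magnitude produced by the projection's marginal bounds) gives exactly the threshold $\Delta\le (q/64)^{(k-5)/3}$. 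I expect the delicate part to be bookkeeping the constants $64$ and $k-5$ correctly, since the projection introduces several layers of slack: a factor from the size of the palette split, a factor from hyperedges that are monochromatic in one class, and a factor from the neighbourhood size of the dependency graph. Getting these constants tight is where the bulk of the calculation lives.

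Once the marginal sampler is established, derandomisation and the FPTAS conclusion are routine. Because CTTP uses at most $O(\log(n/\delta))$ random bits, one enumerates all $2^{O(\log(n/\delta))}=\poly(n/\delta)$ random seeds to deterministically compute $\widehat{\mu}_v(\cdot)$ within additive error $\delta/n$; the cost per seed is $\poly(k,\Delta)$ to evaluate the projected Glauber update locally, giving total running time $n^{\poly(k,\Delta)}$ as claimed. Standard self-reduction \cite{JVV86} lifts this marginal estimator to an FPTAS for the partition function, namely the number of proper $q$-colourings, completing the proof of \Cref{thm-h-color}. The only remaining care is that the self-reduction must condition on projected values (or on a single colour at a vertex) in a way that keeps the conditional instance within the same parameter regime; this is handled by absorbing the conditioning into a slightly worse effective $\Delta$, which is why the constants in the statement ($k\ge 20$, $q/64$, $(k-5)/3$) leave sufficient slack.
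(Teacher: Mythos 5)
Your high-level plan (project the palette, run CTTP on the projected chain, enumerate the logarithmically many random choices) matches the paper's strategy, but three of your concrete choices are wrong or missing, and the first one is fatal to the claimed threshold.

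First, the projection into \emph{two} classes $Q_0,Q_1$ cannot reach $\Delta\lesssim q^{k/3}$. The analysis needs two competing properties of the projection $h:[q]\to[s]$: (i) each class must be large enough that the projected distribution is locally uniform and the projected chain is irreducible, which requires roughly $\lfloor q/s\rfloor^k\gtrsim qs\Delta k$; and (ii) $s$ must be large enough that a hyperedge is unlikely to have all its vertices land in (almost) one class, since that is the event that forces CTTP to recurse; this event has probability about $sk(2/s)^{k-2}$, so one needs $s\gtrsim \Delta^{2/(k-2)}$ and, more importantly, $s$ large enough to beat the $(\mathrm{e}\Delta^2k^4)^{\eta}$ union bound over witness structures. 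The paper balances these by taking $s=\lfloor q^{2/3}\rfloor$, and it is precisely condition (i) with this $s$ that produces the exponent $(k-5)/3$ and the constant $64$. With $s=2$ the recursion probability is only $\approx k\,2^{-(k-2)}$, independent of $q$, so the union bound fails as soon as $\Delta\gg 2^{k/2}$ — far below $q^{(k-5)/3}$ for large $q$. Your heuristic $p\lesssim k/q^{(k-5)/3}$ for the per-node failure probability is not what a two-class projection delivers.

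Second, the self-reduction cannot be the single-vertex (vertex-decomposition) one you describe. Pinning an actual colour at a vertex turns the instance into a list-colouring problem with non-uniform lists, and the single-colour marginal is only $\Theta(1/q)$, so it is not $b$-bounded for constant $b$; "absorbing the conditioning into a slightly worse effective $\Delta$" does not repair this. The paper instead uses the edge decomposition, writing $Z$ as $Z_0\prod_i \Pr_{X\sim\mu_{i-1}}[e_i\text{ not monochromatic}]$, where each factor is at least $1/2$ by the local lemma. This forces you to estimate the \emph{joint} marginal on a whole hyperedge ($k$ vertices), which in turn requires running CTTP for $k$ different target variables with shared randomness (shared global structures $M,R$) so the answers are mutually consistent — a step absent from your proposal.

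Third, CTTP on the projected chain only yields a sample of the projected configuration; you still have to convert this into a sample (or probability) for the original colouring on $S$. The paper does this by using the information CTTP has already revealed to isolate the connected component of unsatisfied hyperedges around $S$ and then enumerating all colourings consistent with the projected values on that component. Without this step your estimator $\widehat{\mu}_v(\cdot)$ is an estimator for the wrong distribution.
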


\begin{table}[htbp]
\centering
\begin{tabular}{|>{\centering\hspace{0pt}}m{0.265\linewidth}|>{\centering\hspace{0pt}}m{0.181\linewidth}|>{\centering\hspace{0pt}}m{0.220\linewidth}|>{\centering\arraybackslash\hspace{0pt}}m{0.230\linewidth}|} 
\hline
 Hypergraph colourings                                                                             & Reference & Bound & Running time  \\ 
\hhline{|====|}
\multirow{3}{0.855\linewidth}{\centering Randomised \\  counting / sampling} &   ~\cite{BDK08}    &  $ \Delta\leq q-1$  & $\tilde{O}(n^2)$ / $O(n\log{n})$               \\ 
\cline{2-4}
    & ~\cite{FHY21}     & $ \Delta\lesssim q^{k/9}$      &    $\tilde{O}(n^{2.0001})$ /  $\tilde{O}(n^{1.0001})$         \\ 
    \cline{2-4}
    & ~\cite{JPV21,HSW21}     & $ \boldsymbol{\Delta\lesssim q^{k/3}}$      &    $\tilde{O}(n^{2.0001})$ /   $\tilde{O}(n^{1.0001})$         \\ 
\hhline{|====|}
\multirow{4}{0.855\linewidth}{\centering Deterministic\\  counting}       &  ~\cite{GLLZ19}   &  $\Delta\lesssim q^{k/14}$       &   $n^{\poly(k,\Delta,\log{q})}$            \\ 
\cline{2-4}
  & ~\cite{JPV21a}   &   $ \Delta\lesssim q^{k/7}$        &   $n^{\poly(k,\Delta,\log{q})}$             \\ 
\cline{2-4}
          &  ~\cite{HWY22c}    & $ \Delta\lesssim q^{k/5}$   &     $n^{\poly(k,\Delta,\log{q})}$           \\ 
\cline{2-4}
        &\textbf{This work}      & $ \boldsymbol{\Delta\lesssim q^{k/3}}$       &   $n^{\poly(k,\Delta,\log{q})}$             \\ 
\hhline{|====|}
Hardness                                                                         &  ~\cite{GGW22}     &  for even~$q$, $\Delta\ge 5 q^{k/2}$\\  assuming $\P\neq\NP$       &               \\
\hline
\end{tabular}
\caption{Algorithms and hardness results for hypergraph colourings}
\label{tab:hc}
\end{table}

In this setting, we apply CTTP to the projected distribution, and there are two main differences from hypergraph independent sets.
First, in each step of the self-reduction, instead of a single variable, we need to evaluate a set of variables of size $k$.
A natural attempt to address this is to run CTTP for each variable,
and there is a further complication that we need to maintain consistency of the randomness used amongst different runs.
Moreover, as we are sampling from the projected distribution, we still need to sample a colouring conditioned on the projected sample.
This is addressed by noticing that CTTP not only returns the final state of the variables,
it also gives us enough information to perform the last update for them.
This information is indeed also enough to sample a colouring from the marginal distribution conditioned on the projected sample.

Similar to the case of hypergraph independent sets,
we also have improved results for linear hypergraphs, almost matching the state-of-the-art randomised algorithm \cite{FGW22a}.

\begin{theorem}\label{thm-h-color-simple}
  For any real number $\delta > 0$,
  let  $k \geq  \frac{50(1+\delta)^2}{\delta^2}$, $\Delta \geq 2$, and $q$ be three integers such that $\Delta \le \left( \frac{q}{50} \right)^{\frac{k-3}{2+\delta}}$. 
  There is an FPTAS for the number of proper $q$-colourings in $k$-uniform linear hypergraphs with maximum degree $\Delta$.
\end{theorem}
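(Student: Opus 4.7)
The plan is to apply the same coupling-towards-the-past strategy developed for \Cref{thm-h-color} to the projected distribution on colourings, and to sharpen the run-time analysis of CTTP by exploiting the linearity of the hypergraph, adapting the technique of \cite{FGW22a}. As in the general case, the counting-to-sampling reduction (via \cite{JVV86}) combined with derandomisation by enumeration of all random choices will give the FPTAS once we establish that a single marginal query on the projected distribution can, with probability at least $1-1/\poly(n)$, be answered by a procedure that reads only $O(\log (n/\eps))$ random bits.

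First I would set up the projected distribution and the systematic scan Glauber dynamics on it exactly as in \Cref{thm-h-color}; this inherits the block structure where each marginal query actually evaluates a set of $k$ variables simultaneously by running coupled instances of CTTP that share their random bits. The heart of the proof is then a tail bound on the depth of this CTTP recursion. The standard approach is to argue that, whenever CTTP has not terminated within $T$ steps, one can extract $\Omega(T/k)$ ``bad'' witness hyperedges at which the local update could not be resolved without further queries, and bound the probability of such a witness pattern by a union bound over a suitable combinatorial structure (a witness tree or $2$-tree) built on these hyperedges. For general hypergraphs this structure has branching factor $\sim k\Delta$, which forces the condition $\Delta \lesssim q^{k/3}$ in \Cref{thm-h-color}. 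In a linear hypergraph, however, two distinct hyperedges share at most one vertex, so witnesses can cluster only very loosely: adapting the $2$-tree construction of \cite{FGW22a}, the effective branching factor drops to roughly $k \cdot \Delta^{1/(1+\delta/2)}$, while the per-event probability of a bad witness remains bounded by $q^{-(k-3)}$ up to constants. Plugging in the hypothesised bound $\Delta \le (q/50)^{(k-3)/(2+\delta)}$ together with $k \geq 50(1+\delta)^2/\delta^2$ makes the expected total witness weight geometrically decaying, which yields the required logarithmic tail for the CTTP depth.

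The main obstacle I expect is carrying the linearity savings through the projection and the blocking. Because CTTP for colourings operates on $k$ correlated copies and the projected distribution effectively enlarges the alphabet, one must show that the witness hyperedges produced by the blocked CTTP still form a set of hyperedges of the original linear hypergraph $H$ (not merely of some auxiliary structure), so that the pairwise-intersection-at-most-one property of \cite{FGW22a} can be invoked. A secondary subtlety is that the constant $1/50$ in the threshold, as opposed to $1/64$ in the non-linear case, must absorb all the constants lost to the projection, to the $k$ parallel instances, and to the $2$-tree enumeration; this is a quantitative exercise but requires choosing the decay parameter in the witness analysis in terms of $\delta$ rather than an absolute constant. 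Once these points are handled, the remaining argument is parallel to the linear hypergraph independent set case of \Cref{thm-h-ind-simple}, and the resulting marginal sampler derandomises in $n^{\poly(k,\Delta,\log q)}$ time by brute-force enumeration of its $O(\log(n/\eps))$ random bits.
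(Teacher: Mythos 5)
Your proposal follows essentially the same route as the paper: edge decomposition plus CTTP on the projected distribution, a set-valued marginal sampler with shared randomness, and a truncation-error analysis via $2$-block-trees from \cite{FGW22a} with a block size $\theta$ chosen as a function of $\delta$, all derandomised by enumerating the $O(\log(n/\eps))$ random choices. The obstacle you flag—ensuring the witness structure inherits the pairwise-intersection-at-most-one property of the linear hypergraph rather than of an auxiliary time-space object—is precisely what the paper resolves with the ``self-neighbourhood powered'' witness graph $G_H^{S,\-{self}}$ and the pruned set $V_{\+B}^{\-{lin}}$, together with the choice $s=\lfloor q^{1/2}\rfloor$ for the projection (versus $q^{2/3}$ in the general case) to make the exponent $(k-3)/(2+\delta)$ come out.
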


\subsection{Connection and comparisons with the Anand-Jerrum algorithm}
\label{sec:intro-AJ}

As mentioned before, the core of our derandomisation method is a logarithmic-cost marginal sampler, which may have independent interests.
Our main source of inspiration, and also the first such marginal sampler, is the recent recursive algorithm by Anand and Jerrum~\cite{AJ22} for perfect sampling in infinite spin systems.
Although the implementation of our marginal sampler also has a recursive structure,
there are some quite noticeable distinctions.
For one thing, our algorithm can be derived from Glauber dynamics,
whereas the AJ algorithm relies on spatial mixing properties and does not seem to directly correspond to any Markov chain.
Moreover, as the recursive call goes deeper and deeper,
AJ's algorithm would pin more and more variables.
In contrast, our sampler resembles Markov chains.
No variable is permanently fixed, and all variables can be refreshed if we go back long enough.
This last point actually provides us with a technical edge in the analysis, which we will discuss later.


The Anand-Jerrum algorithm soon found applications for almost uniform sampling general constraint satisfaction solutions in the local lemma regime~\cite{HWY22a}.
It was later derandomised~\cite{HWY22c}, 
which leads to the previous best deterministic approximate counting algorithms for the two applications we consider.
This is a very recent and rare exception to the common paradigm that deterministic approximate counting algorithms require drastically different techniques from randomised algorithms. 
However, the Anand-Jerrum algorithm encounters some difficulty in matching the state-of-art bounds for randomised algorithms,
particularly in the two applications we consider -- hypergraph independent sets and hypergraph colourings.
From a technical point of view, the difficulty lies in the lack of control for the aforementioned pinning of partial configurations in deep recursive calls of AJ's algorithm.
This makes it hard to analyse the time-space structure derived from the algorithm's execution history,
which is a key feature in ours and previous information percolation arguments~\cite{HSZ19,qiu2022perfect,JPV21,HSW21} to achieve the state-of-the-art bounds.
%
%
We leave it as an interesting direction whether there is a refined analysis of the Anand-Jerrum algorithm matching other methods in these contexts. 

In any case, we remark that the Anand-Jerrum algorithm does also lead to deterministic counting algorithms,
especially for spin systems with strong spatial mixing on graphs with subexponential growth.
The only thing missing from \cite{AJ22} is tail bounds for their algorithm's running time.
We provide such analysis and collate the implications in \Cref{sec:AJ}.

\subsection{Derandomising random scans}
\label{sec:intro-random}

We choose to consider systematic scan to illustrate our idea of using CTTP for derandomisation, 
but the technique can be applied to random update Markov chains as well.
In \Cref{sec:derandomise-random-scan}, we show how to use CTTP to derandomise random scan Glauber dynamics for Gibbs distributions.

A vanilla attempt is to do extra enumerations over possible scan sequences. 
However, even with a more careful argument like considering only the visited vertices, it still costs superpolynomial running time, assuming $O(\log n)$ so many times of backward deductions (which is also optimal for the truncation error we need). 
Instead, we take the advantage of the local structure of visiting patterns by introducing a ``witness tree" structure that captures all information needed to recover the CTTP process.
Enumerating all possible witness trees leads to efficient derandomisation for a random scan order.

We remark that our witness tree for Markov chains is reminiscent of the witness tree appeared in the analysis of the celebrated Moser-Tardos algorithm for algorithmic Lov\'{a}sz Local Lemma~\cite{MT10},
which may be of independent interest.


\subsection{Other related work}\label{sec:related-work}


Our coupling towards the past (CTTP) marginal sampler is also inspired by the celebrated ``coupling \emph{from} the past'' (CFTP) by Propp and Wilson~\cite{PW96} for perfect simulation of Markov chains.
Both methods share some similarities, such as running backwards in time and having underlying grand couplings. 
However, they are also very different in several aspects.
The main difference is that CFTP needs to sequentially simulate the evolution of the whole state,
which, even with some optimisation (such as using bounding chains~\cite{huber1998exact}) and under favourable conditions, would still require at least linear time.
This makes it unsuitable for our derandomisation needs.
Our CTTP, on the other hand, only guarantees the value of a single variable to be coupled from all possible starting configurations.
Even if we want to couple only a single variable in CFTP, it would be impossible to determine what variables to simulate a priori.
Our backwards deduction approach is an adaptive solution to this problem,
namely, the information revealed so far determines what variables to be revealed next.
This constitutes a big difference in implementing the two methods.

Our CTTP marginal sampler is also related to providing \emph{local access to huge random objects}~\cite{BiswasRY20}. 
For example, consider the uniform distribution $\mu$ of independent sets in a huge hypergraph $H=(V,\+E)$.
Upon being queried at a vertex $v \in V$, our algorithm returns an approximate sample $X_v$ from the marginal distribution $\mu_v$ using a sub-linear number of local neighbourhood probes.
By using public random bits, our algorithm could guarantee that the answers for different queries are consistent with each other, which means for different vertices $v$, all the answers $X_v$ come from the same $X$ such that $X$ is an approximate random sample of $\mu$.
An interesting open problem in this direction is to give (consistent or not) sub-linear time marginal samplers for general Gibbs distributions whenever an efficient global sampling algorithm exists.


There have been various works aiming to find deterministic approximations for Markov chains, especially for random walks on graphs \cite{CS06,CDFS10,SYKY17,SYKY18,MRSV21,PV22}.
However, these results do not seem to have meaningful consequences for MCMC,
where the Markov chains are essentially high-dimensional random walks.
The state space and the underlying (implicit) graph are exponentially large, making those aforementioned results difficult to apply.

Our work focuses on approximate counting via derandomising certain Markov Chain Monte Carlo samplers, 
which falls under the broader category of derandomising Monte Carlo methods. 
See~\cite{luby91deter,Luby1993Deterministic,gopalan2012DNF,de2013efficient,rocco19pseudo} for several examples of the latter.
Usually, in their contexts, it is sufficient to consider additive errors, and as a consequence, the randomised algorithms are relatively simple to get.
The emphasis is on finding techniques to derandomise them.
In contrast, our derandomisation actually comes from designing and analysing more involved randomised algorithms.

\

\section{Preliminaries}\label{sec-pre}

\subsection{Markov chain and Glauber dynamics}
Let $\Omega$ be a (finite) state space.
Let $(X_t)_{t = 1}^\infty$ be a Markov chain over the space $\Omega$ with transition matrix $P$.
We often use $P$ to refer to the corresponding Markov chain.
A distribution $\pi$ over $\Omega$ is a \emph{stationary distribution} of $P$ if $\pi = \pi P$.
The Markov chain $P$ is \emph{irreducible} if for any $x,y \in \Omega$, there exists $t$ such that $P^t(x,y) > 0$.
The Markov chain $P$ is \emph{aperiodic} if for any $x \in \Omega$, $\gcd\{t\mid P^t(x,x) > 0\} = 1$.
If the Markov chain $P$ is both irreducible and aperiodic, then it has a unique stationary distribution.
The Markov chain $P$ is \emph{reversible} with respect to distribution $\pi$ if the following \emph{detailed balance equation} holds
\begin{align*}
	\forall x, y \in \Omega,\quad \pi(x) P(x,y) = \pi(y)P(y,x),
\end{align*}
which implies $\pi$ is a stationary distribution of $P$.
The \emph{mixing time} of the Markov chain $P$ is defined by
\begin{align*}
	\forall \epsilon > 0, \quad T(P,\epsilon) \defeq \max_{X_0 \in \Omega} \max\{t \mid \DTV{P^t(X_0,\cdot)}{\mu} \leq \epsilon\},
\end{align*}
where the \emph{total variation distance} is defined by
\begin{align*}
\DTV{P^t(X_0,\cdot)}{\mu} \defeq \frac{1}{2}\sum_{y \in \Omega}\abs{P^t(X_0,y)-\mu(y)}.	
\end{align*}


In this paper, we consider two fundamental Markov chains on discrete state space.
Let $\mu$ be a distribution over $[q]^V$. We assume $V = \{v_1,v_2,\ldots,v_n\}$.
The \emph{Glauber dynamics} starts from an arbitrary $X_0 \in [q]^V$ with $\mu(X_0)>0$.
For the $t$-th transition step, the Glauber dynamics does as follows 
\begin{itemize}
	\item pick a variable $v \in V$ uniformly at random and let $X_t(u) = X_{t-1}(u)$ for all $u \neq v$;
	\item sample $X_t(v)$ from the distribution $\mu_{v}^{X_{t-1}(V \setminus \{v\})}$.
\end{itemize}
The \emph{systematic scan Glauber dynamics} starts from an arbitrary $X_0 \in [q]^V$ with $\mu(X_0)>0$.
For the $t$-th transition step, the systematic scan Glauber dynamics does as follows 
\begin{itemize}
	\item let $i(t) = (t \mod n) + 1$, pick the variable $v = v_{i(t)}$, and let $X_t(u) = X_{t-1}(u)$ for all $u \neq v$;
	\item sample $X_t(v)$ from the distribution $\mu_{v}^{X_{t-1}(V \setminus \{v\} )}$.
\end{itemize}
The only difference between the above two Markov chains is the way they pick variables.
The Glauber dynamics is an aperiodic and reversible Markov chain.
The systematic scan Glauber dynamics is not a time-homogeneous Markov chain. However by bundling $n$ consecutive updates together we can obtain a time-homogeneous Markov chain, which is aperiodic and reversible. 
\begin{theorem}[\cite{levin2017markov}]\label{thm-convergence}
Let $\mu$ be a distribution with support $\Omega \subseteq [q]^V$. Let $(X_t)_{t =0}^\infty$ denote the Glauber dynamics or the systematic scan Glauber dynamics on $\mu$. If $(X_t)_{t =0}^\infty$ is irreducible over $\Omega$, it holds that
\begin{align*}
	\forall X_0 \in \Omega,\quad \lim_{t \to \infty}\DTV{X_t}{\mu} = 0.
\end{align*}
\end{theorem}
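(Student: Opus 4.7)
The plan is to reduce both cases to the standard convergence theorem for finite, irreducible, aperiodic Markov chains, after establishing reversibility with respect to $\mu$.

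First I would handle the (random update) Glauber dynamics. The chain is time-homogeneous by construction, and irreducible by assumption. To apply the fundamental theorem, I need aperiodicity and the fact that $\mu$ is stationary. For aperiodicity, I would observe that the single-site update has a strictly positive self-loop probability at any $X_{t-1}\in\Omega$: with probability $1/n$ the chain picks some variable $v$, and then resamples $X_t(v)$ from $\mu_v^{X_{t-1}(V\setminus\{v\})}$, which assigns positive mass to the current value $X_{t-1}(v)$ because $\mu(X_{t-1})>0$. Hence $P(X_{t-1},X_{t-1})>0$, which implies aperiodicity. For stationarity I would verify detailed balance: for $x,y\in\Omega$ differing only in coordinate $v$,
\[
\mu(x)P(x,y)=\mu(x)\cdot\tfrac{1}{n}\cdot\mu_v^{x(V\setminus\{v\})}(y(v))=\tfrac{1}{n}\mu(x(V\setminus\{v\}),y(v))\cdot\tfrac{\mu(x)}{\mu(x(V\setminus\{v\}))},
\]
and symmetrically for $\mu(y)P(y,x)$; both sides equal $\tfrac{1}{n}\mu(x)\mu(y)/\mu(x(V\setminus\{v\}))$. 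So $\mu$ is stationary. The classical convergence theorem for finite irreducible aperiodic Markov chains then gives $\lim_{t\to\infty}\DTV{X_t}{\mu}=0$ for every starting state $X_0\in\Omega$.

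Next I would handle the systematic scan, which is not itself time-homogeneous. Following the hint in the paper, I would bundle $n$ consecutive single-site updates into one macro-step, yielding a time-homogeneous chain $\widetilde P$ whose transition kernel scans $v_1,\dots,v_n$ in order. I would check (i) $\widetilde P$ is irreducible on $\Omega$ (inherited from the assumed irreducibility of the unbundled process, since any $t$-step transition from the irreducibility statement can be padded with additional full scans), (ii) $\widetilde P$ is aperiodic because each of the $n$ constituent updates has a positive self-loop probability, so $\widetilde P(x,x)>0$, and (iii) $\mu$ is stationary for $\widetilde P$ because each single-site heat-bath update preserves $\mu$ (by the marginal identity $\mu(x)=\mu(x(V\setminus\{v\}))\cdot\mu_v^{x(V\setminus\{v\})}(x(v))$, a straightforward direct computation). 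The finite-state convergence theorem then gives $\DTV{\widetilde X_s}{\mu}\to 0$ as $s\to\infty$, where $\widetilde X_s=X_{sn}$.

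Finally I would upgrade the subsequential convergence to full convergence by noting that any single-site update is a $\mu$-preserving Markov kernel, hence contractive for total variation distance: $\DTV{X_{t+1}}{\mu}\le\DTV{X_t}{\mu}$ for every $t$. Writing $t=sn+r$ with $0\le r<n$, this gives $\DTV{X_t}{\mu}\le\DTV{X_{sn}}{\mu}=\DTV{\widetilde X_s}{\mu}\to 0$ as $t\to\infty$, which is the desired conclusion. The only subtle point in the whole argument is the systematic-scan bit, where one must be careful not to conflate the time-inhomogeneous chain with its bundled time-homogeneous version; the TV-contraction of stochastic kernels is what bridges the two and turns convergence along the subsequence $(sn)_s$ into convergence along all $t$.
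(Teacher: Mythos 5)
Your proof is correct and follows essentially the route the paper intends: it cites this as a standard result from Levin--Peres and sketches exactly your strategy (detailed balance plus a positive self-loop for the random-update chain; bundling $n$ consecutive updates for the systematic scan; TV-contractivity of $\mu$-preserving kernels to pass from the subsequence $(sn)_s$ to all $t$). One small point in your favour: you correctly rely only on \emph{stationarity} of the bundled scan chain rather than its reversibility, which is the right thing to do since the composition of the single-site kernels need not be reversible even though each factor is.
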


\subsection{Lov\'{a}sz local lemma}
We introduce the setting of the (variable framework) Lov\'{a}sz Local Lemma. Let $\+{X}=\{X_1,X_2,\dots,X_n\}$ be a set of mutually independent random variables. We let $\+B=\{B_1,B_2,\dots,B_m\}$ be a set of ``bad events'' that only depends on $\+X$. For each event $A$ (not necessarily one of the bad events in $\+B$), we let $\var{A}\subseteq \+{X}$ be the set of variables in $\+{X}$ that $A$ depends on. Moreover, we let $\Gamma(A)=\{B\in \+{B}\mid B\neq A\land \var{A}\cap \var{B}\neq \emptyset\}$. The celebrated Lov\'{a}sz Local Lemma states that when certain conditions are met, the probability that no bad events occur is nonzero:

\begin{lemma}[\cite{EL75}]\label{locallemma}
    If there exists a function $x:\+{B}\rightarrow [0,1]$ such that
    \begin{align}\label{llleq}
   \forall B\in \+{B}:\quad
        {\Pr{B}\leq x(B)\prod_{B'\in \Gamma(B)}(1-x(B'))},
    \end{align}
    then  
    $$
        {\Pr{ \bigwedge\limits_{B\in \+{B}} \overline{B}}\geq \prod\limits_{B\in \+{B}}(1-x(B))>0},
    $$
\end{lemma}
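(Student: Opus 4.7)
The plan is to establish the bound by induction on a strengthened claim. The key auxiliary statement I would prove is that for every bad event $B \in \mathcal{B}$ and every subset $S \subseteq \mathcal{B}$ with $B \notin S$,
\[
\Pr{B \;\Big|\; \bigwedge_{B' \in S} \overline{B'}} \leq x(B).
\]
Once this is available, the theorem follows quickly: by the chain rule of conditional probability,
\[
\Pr{\bigwedge_{B \in \mathcal{B}} \overline{B}} = \prod_{i=1}^{m} \Pr{\overline{B_i} \;\Big|\; \bigwedge_{j < i} \overline{B_j}} \geq \prod_{i=1}^{m}(1 - x(B_i)),
\]
and the right-hand side is strictly positive because $x(B) \in [0,1)$ (the inequality \eqref{llleq} forces $x(B) < 1$ whenever $\Pr{B} > 0$; the degenerate case $\Pr{B}=0$ is handled trivially).

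To prove the auxiliary claim, I would induct on $|S|$. The base case $|S| = 0$ is immediate from \eqref{llleq} since $\prod_{B' \in \Gamma(B)}(1-x(B')) \leq 1$. For the inductive step, the central idea is to split $S$ into the part that shares variables with $B$ and the part that does not. Set $S_1 = S \cap \Gamma(B)$ and $S_2 = S \setminus \Gamma(B)$, and then write
\[
\Pr{B \;\Big|\; \bigwedge_{B' \in S} \overline{B'}} = \frac{\Pr{B \wedge \bigwedge_{B' \in S_1} \overline{B'} \;\Big|\; \bigwedge_{B' \in S_2} \overline{B'}}}{\Pr{\bigwedge_{B' \in S_1} \overline{B'} \;\Big|\; \bigwedge_{B' \in S_2} \overline{B'}}}.
\]
For the numerator I would use that $B$ depends only on variables in $\vbl(B)$, which are disjoint from every $\vbl(B')$ for $B' \in S_2$; thus $B$ is independent of the conditioning event, and the numerator is at most $\Pr{B} \leq x(B)\prod_{B' \in \Gamma(B)}(1-x(B'))$ by \eqref{llleq}.

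The denominator is where the inductive hypothesis is invoked. Enumerating $S_1 = \{B'_1,\ldots,B'_\ell\}$ in some order and applying the chain rule once more,
\[
\Pr{\bigwedge_{B' \in S_1} \overline{B'} \;\Big|\; \bigwedge_{B' \in S_2} \overline{B'}} = \prod_{j=1}^{\ell} \Pr{\overline{B'_j} \;\Big|\; \bigwedge_{i<j} \overline{B'_i} \wedge \bigwedge_{B' \in S_2} \overline{B'}}.
\]
Each factor is at least $1 - x(B'_j)$ by the induction hypothesis applied to the strictly smaller conditioning set. Combining the numerator bound with the denominator bound, the ratio is at most $x(B) \prod_{B' \in \Gamma(B)}(1-x(B')) / \prod_{B' \in S_1}(1-x(B'))$, which simplifies to at most $x(B)$ because $S_1 \subseteq \Gamma(B)$ and every factor $1-x(B')$ lies in $(0,1]$. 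The main obstacle to get right is the bookkeeping of which conditional independences are being invoked at each step; once the partition $S = S_1 \sqcup S_2$ is fixed and the enumeration of $S_1$ is made explicit, everything else is straightforward manipulation.
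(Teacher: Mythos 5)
Your proof is the standard inductive argument for the asymmetric Lov\'asz Local Lemma, and it is correct; the paper itself states this lemma as a cited classical result from \cite{EL75} and gives no proof, so there is nothing to diverge from. Two minor points worth tightening if you write this out in full: the induction should simultaneously establish that $\Pr{\bigwedge_{B'\in S}\overline{B'}}>0$, so that every conditional probability appearing in the chain rule (both in the main decomposition and in the denominator bound) is well defined; and the strict positivity in the conclusion requires $x(B)<1$ for all $B$ (i.e.\ the codomain should really be $[0,1)$), since condition \eqref{llleq} alone does not rule out $x(B)=1$ when $\Gamma(B)=\emptyset$ --- this is a slight looseness in the statement rather than in your argument.
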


When the condition \eqref{llleq} is satisfied, as observed in \cite{haeupler2011new},
the probability that any event happens, conditioning on no bad events occurs is also bounded:

\begin{lemma}[\cite{haeupler2011new}]\label{lemma:HSS}
If $\eqref{llleq}$ holds, 
then for any event $A$, 
\[
    \Pr{A\mid \bigwedge\limits_{B\in \+{B}} \overline{B}}\leq \Pr{A}\prod_{B\in \Gamma(A)}(1-x(B))^{-1},
\]
\end{lemma}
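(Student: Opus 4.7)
The plan is to reduce the bound to the standard inductive argument behind \Cref{locallemma} by splitting $\+B$ into events that share variables with $A$ and those that do not. First I would observe that if $A$ happens to be one of the bad events $B_i\in\+B$, then the conditioning $\bigwedge_{B\in\+B}\overline{B}$ already forces $\overline{A}$, so the left-hand side is zero and the inequality holds trivially. Thus I may assume $A\notin\+B$, in which case every event in $\+B\setminus\Gamma(A)$ is, by definition of $\Gamma(A)$, supported on a set of variables disjoint from $\var(A)$, hence mutually independent of $A$.

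Next I would split the conditioning and apply Bayes with respect to the ``far'' events:
\[
\Pr{A\mid \bigwedge_{B\in\+B}\overline{B}}\;=\;\frac{\Pr{A\wedge \bigwedge_{B\in\Gamma(A)}\overline{B}\;\Big|\;\bigwedge_{B\in\+B\setminus\Gamma(A)}\overline{B}}}{\Pr{\bigwedge_{B\in\Gamma(A)}\overline{B}\;\Big|\;\bigwedge_{B\in\+B\setminus\Gamma(A)}\overline{B}}}.
\]
The numerator is upper bounded by dropping the bad-event conjunction to get $\Pr{A\mid \bigwedge_{B\in\+B\setminus\Gamma(A)}\overline{B}}$, which by the independence observation above equals $\Pr{A}$. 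So the job reduces to a lower bound on the denominator by $\prod_{B\in\Gamma(A)}(1-x(B))$.

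For the denominator I would invoke the key inductive statement used in the proof of \Cref{locallemma}: under the condition \eqref{llleq}, for every $B\in\+B$ and every $S\subseteq\+B\setminus\{B\}$,
\[
\Pr{B\mid \bigwedge_{B'\in S}\overline{B'}}\;\leq\; x(B),\quad\text{hence}\quad \Pr{\overline{B}\mid \bigwedge_{B'\in S}\overline{B'}}\;\geq\; 1-x(B).
\]
Fixing any ordering of $\Gamma(A)=\{C_1,\dots,C_r\}$ and expanding via the chain rule,
\[
\Pr{\bigwedge_{i=1}^{r}\overline{C_i}\;\Big|\;\bigwedge_{B\in\+B\setminus\Gamma(A)}\overline{B}}\;=\;\prod_{i=1}^{r}\Pr{\overline{C_i}\;\Big|\;\bigwedge_{j<i}\overline{C_j}\wedge\bigwedge_{B\in\+B\setminus\Gamma(A)}\overline{B}},
\]
and each factor is at least $1-x(C_i)$ by the inductive bound applied with $S$ being the conditioning set. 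Multiplying yields the desired denominator estimate, and combining with the numerator bound gives the lemma. The only real subtlety, and arguably the ``hard part,'' is making the inductive LLL bound usable here: one must confirm the condition \eqref{llleq} is strong enough that the induction goes through for arbitrary subsets $S$, which is precisely what the proof of \Cref{locallemma} establishes, so we can reuse it as a black box without further work.
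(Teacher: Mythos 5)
Your proof is correct. The paper cites this lemma from Haeupler–Saha–Srinivasan without proof, and your argument is exactly the standard one: handle $A\in\+B$ trivially, use independence of $A$ from the events outside $\Gamma(A)$ to bound the numerator by $\Pr{A}$, and lower-bound the denominator via the chain rule together with the inductive estimate $\Pr{B\mid\bigwedge_{B'\in S}\overline{B'}}\leq x(B)$ from the proof of \Cref{locallemma}. No gaps.
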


Note that for a $k$-uniform hypergraph $H=(V,\+{E})$ with $\Delta\geq 2$ and $q^{k-1}>\mathrm{e}\Delta$, by setting the bad events $B_e$ being ``$e$ is not monochromatic" for each $e\in \+{E}$, we then have $\Pr{B_e} = q^{1-k}$ for each $e\in \+{E}$. Setting $x(B_e)=\frac{1}{\Delta}$ for each $e\in \+E$ we have the condition in \eqref{llleq} is satisfied, and therefore \Cref{lemma:h-color-edge-uniform} is a direct corollary from \Cref{lemma:HSS}.

\subsection{Counting to sampling reductions} \label{sec-jvv}
The classical result of Jerrum, Valiant and Vazirani \cite{JVV86} showed that sampling and (randomised) approximate counting can be reduced to each other in polynomial time for ``self-reducible'' functions.
We do not need that level of generality, and will describe the next reductions from counting to sampling for the following \emph{graphic model}. 
Let $H = (V,\+E)$ be a hypergraph, where each vertex $v \in V$ represents a random variable that takes its value from a finite domain $[q]=\{1,2,\ldots,q\}$ and each hyperedge $e \in \+E$ represents a local constraint on the variable set $e \subseteq V$.
For each $v \in V$, there is an ``external field'' function $\phi_v: [q]\to \mathbb{R}_{\geq 0}$, 
and for each $e \in \+E$, there is a ``constraint'' function $\phi_e: [q]^{|e|} \to \mathbb{R}_{\geq 0}$.
A graphical model is specified by the tuple $\+G = (H, (\phi_v)_{v \in V}, (\phi_e)_{e \in \+E})$, namely the hypergraph associate with the family of ``external field'' and ``constraint'' functions.
For each configuration $\sigma \in [q]^V$, define its weight by
\begin{align}\label{eq-def-weight}
	w(\sigma) := \prod_{v \in V}\phi_v(\sigma_v)\prod_{e \in \+E}\phi_e(\sigma_e).
\end{align}
The Gibbs distribution $\mu$ defined by the graphical model satisfies 
\begin{align}\label{eq-def-Gibbs}
	\forall \sigma \in [q]^V, \quad \mu(\sigma):=\frac{w(\sigma)}{Z},
\end{align}
where the partition function $Z$ is given by 
\begin{align}\label{eq-def-partition}
	Z:=\sum_{\sigma \in [q]^V}w(\sigma).
\end{align}
In our applications, both counting problems for hypergraph independent sets and the hypergraph colourings can be expressed as graphic models, 
by setting the external field $\phi_v$ on each vertex to constant $1$, and the constraint function $\phi_e$ on each hyperedge to the indicator function that the respective constraint is not violated. 

At the core of \cite{JVV86} is the decomposition of the partition function into products of more tractable quantities in a telescoping manner.
Typically, these quantities are probabilities or expectations related to a sequence of distributions induced by the original model. 
In our paper, such decomposition comes in two manners: the vertex decomposition and the (hyper)edge decomposition. 
The vertex decomposition is simpler and more common, 
but it does not always work.
The edge decomposition works more generally with the sacrifice of simplicity.
We will apply the vertex decomposition to hypergraph independent sets and the edge decomposition to hypergraph colourings.

\paragraph{\textbf{Vertex decomposition.}}
We assume that $V=\{v_1,\cdots,v_n\}$. 
Let $\sigma\in[q]^V$ be a feasible configuration that $w(\sigma)>0$.  
We call $\sigma$ the \emph{scheme} of vertex decomposition. 
By the chain rule of the Gibbs distribution, the following product holds
\begin{align}\label{eq-vtx-decom}
	Z = \frac{w(\sigma)}{\mu(\sigma)} = w(\sigma) \prod_{i=1}^n \frac{1}{\mu_{v_i}^{{\sigma_{< i}}}\tp{\sigma_{v_i}}},  
\end{align}
where we use $\sigma_{< i}$ to denote the partial configuration of $\sigma$ restricted on vertex set $\{v_1,v_2,\ldots,v_{i-1}\}$, and $\mu_{v_i}^{\sigma_{<i}}$ denote the marginal distribution on $v_i$ induced from $\mu$ conditional on $\sigma_{<i}$.
Therefore, to approximate the partition function $Z$, it suffices to approximate each marginal probability $\mu_{v_i}^{{\sigma_{< i}}}\tp{\sigma_{v_i}}$, since the weight function $w(\sigma)$ is easy to compute. 
The scheme $\sigma$ is said to be \emph{$b$-bounded} if 
\begin{align*}
\forall i \in [n], \quad \mu_{v_i}^{{\sigma_{< i}}}\tp{\sigma_{v_i}} \geq b.
\end{align*}

\paragraph{\textbf{Edge decomposition.}}
Let $H=(V,\mathcal{E})$ be a hypergraph. 
We assume $V=\{v_1,\cdots,v_n\}$ and $\mathcal{E}=\{e_1,\cdots,e_m\}$. 
In the edge decomposition, the original hypergraph $H$ is decomposed into a sequence of hypergraphs $H_0,H_1,\cdots,H_m$, given by $H_i:=\{V,\{e_1,\cdots,e_i\}\}$. 
In other words, the $H_i$ sequence is obtained by, starting from independent vertices, adding one edge from the original hypergraph in each step. 
Let $w_i$, $Z_i$ and $\mu_i$ be the weight function, the partition function and the Gibbs distribution induced by the hypergraph $H_i$ respectively. Then
\begin{equation}\label{eq-edge-decom}
\begin{gathered}
	Z = Z_m = Z_0 \prod_{i=1}^m \frac{Z_i}{Z_{i-1}} = Z_0 \prod_{i=1}^m \sum_{\tau \in [q]^V}\frac{\mu_{i-1}(\tau) w_i(\tau)}{w_{i-1}(\tau)}\\
	 \overset{(\star)}{=} Z_0 \prod_{i=1}^m \sum_{\tau_{e_i} \in [q]^{e_i}}\mu_{i-1,e_i}(\tau_{e_i}) \phi_{e_i}(\tau_{e_i})= Z_0 \prod_{i=1}^m \Ex_{z \sim \mu_{i-1,e_i}}[\phi_{e_i}(z)],
\end{gathered}
\end{equation}
where $\mu_{i-1,e_i}(\cdot)$ denotes the marginal distribution on $e_i$ projected from $\mu_{i-1}$, and thus $(\star)$ holds because $\frac{w_i(\tau)}{w_{i-1}(\tau)} = \frac{w_i(\tau_{e_i})}{w_{i-1}(\tau_{e_{i-1}})}$. 
Therefore, it suffices to approximate the expectation $\Ex_{z \sim \mu_{i-1,e_i}}[\phi_{e_i}(z)]$ in order to approximate the partition function. 
Again, the edge decomposition is called \emph{$b$-bounded} if
\begin{align*}
	\forall 1 \leq i \leq m, \quad \frac{ \Ex_{z \sim \mu_{i-1,e_i}}[\phi_{e_i}(z)] }{\max_{ y \in [q]^{e_i}} \phi_{e_i}(y) } \geq b.
\end{align*}

\section{Derandomisation for deterministic counting} \label{sec-reduction}
Our idea for deterministic counting is very simple --- we just enumerate all possible random choices.
In this section, we give a quick formalisation of this idea,
and in subsequent sections, we tackle the main challenge of finding algorithms with logarithmic random choices.
We consider randomised algorithms whose whole randomness comes from drawing random variables from discrete distributions,
such as lines of the form:
``draw $r\sim D$'', 
where $D$ is a probability distribution over a finite sample space $\Omega$ of constant size.
The specification of $D$ is also computed by the algorithm if necessary. 
Aside from these samples, there is no randomness involved in the algorithm.
This motivates us to consider the following random oracle model.
%

\vspace{6pt}
\paragraph{\textbf{Random oracle model}}
We consider randomised algorithms that are deterministic algorithms with access to a random oracle $\mathsf{Draw}(\cdot)$, 
which, given as input the description of a distribution $D$, 
returns an independent random value $r\in \Omega$ distributed according to $D$.

This model allows us to quantify the number of random choices made in algorithms, as given in the next definition.

\begin{definition}
\label{def-sampling-alg}
Let $t,r:\mathbb{N}\to\mathbb{N}$ be two nondecreasing functions
and let $c\ge 2$ be a constant.
We say that a randomised algorithm $\+{A}$ has \emph{time cost $t(n)$} and  \emph{draws at most $r(n)$ random variables over domains of sizes at most $c$}, 
if for any $n\in\mathbb{N}$, in the worst case of the inputs of size $n$ and all possible random choices,
the algorithm $\+{A}$ terminates within $t(n)$ steps of computation, 
and accesses the random oracle $\mathsf{Draw}(\cdot)$ for at most $r(n)$ times such that each time it draws from a sample space of size at most~$c$.
\end{definition}

We are interested in those randomised algorithms that have $\-{poly}(n)$ time cost  and draws at most $O(\log n)$ random variables over constant-sized domains,
because such randomised algorithms can be transformed into polynomial-time deterministic algorithms for computing the output distributions,
due to a standard routine for derandomisation by enumerating all random choices. 
%

\begin{proposition}\label{proposition-deran}
Let $\+A$ be a randomised algorithm with time cost $t(n)$ and draws at most $r(n)$ random variables over domains of sizes at most $c$.
There is a deterministic algorithm $\+B$ that, on any input $\Pi$ of size $n$, outputs the distribution of $\+A(\Pi)$ in time $O(t(n)c^{r(n)})$. 
\end{proposition}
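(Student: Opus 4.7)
The plan is to derandomise by performing a complete branching enumeration over the decisions made at each invocation of the random oracle $\mathsf{Draw}(\cdot)$. The deterministic algorithm $\+B$ will simulate $\+A$ step-by-step, deterministically executing every non-random instruction exactly as $\+A$ would, until it reaches a call $\mathsf{Draw}(D)$. At that point, the description of $D$ (which may depend on all prior computation and draws) has already been produced by the simulation, so $\+B$ reads it, enumerates the support $\mathrm{supp}(D)\subseteq\Omega$, and for each value $r\in\mathrm{supp}(D)$ recursively continues the simulation under the assumption that the oracle returned $r$, carrying along an accumulated probability which is multiplied by $D(r)$. When a branch finishes executing $\+A$, it yields one output together with the product of the probabilities along the path.

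Concretely, I would describe $\+B$ as a depth-first traversal of the induced computation tree. Each internal node of this tree corresponds to a state of $\+A$ right before a call to $\mathsf{Draw}(\cdot)$, and has at most $c$ children, one for each element in the support of the requested distribution (of size at most $c$ by assumption). Leaves correspond to terminating executions of $\+A$. To output the distribution of $\+A(\Pi)$, the algorithm $\+B$ maintains a table indexed by possible outputs; each time a leaf is reached, it adds the accumulated probability to the entry of the leaf's output value. At the end, this table is exactly the distribution of $\+A(\Pi)$, because the law of total probability gives
\begin{equation*}
\Pr{\+A(\Pi)=y}=\sum_{\text{leaves }\ell\text{ with output }y}\prod_{\text{draws along }\ell}D_i(r_i),
\end{equation*}
and the tree enumerates every leaf that can arise from any sequence of random choices.

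For the running time, by \Cref{def-sampling-alg} every root-to-leaf path invokes $\mathsf{Draw}(\cdot)$ at most $r(n)$ times, so the tree has depth at most $r(n)$ and, with branching factor at most $c$, at most $c^{r(n)}$ leaves and $O(c^{r(n)})$ nodes in total. The work done along any single root-to-leaf path is at most the total computation cost of $\+A$, which is $O(t(n))$; the additional bookkeeping (multiplying probabilities, updating the output table) adds only constant factors per node since $c$ is constant. Summed over all branches, the total time is $O(t(n)\cdot c^{r(n)})$, matching the claim.

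There is essentially no hard step here: the main thing to be careful about is that the distribution queried at each call to $\mathsf{Draw}$ is not known in advance but is produced on the fly by $\+A$, so a naive ``enumerate all $c^{r(n)}$ random strings up front'' would not quite work; the branching must be interleaved with simulation, which is exactly what the recursive tree traversal above does. A minor point is that the output table may in principle grow to size $c^{r(n)}$, but this is already absorbed in the stated time bound. This completes the proposed proof.
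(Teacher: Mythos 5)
Your proposal is correct and follows essentially the same route as the paper's proof: both build the adaptive decision tree of oracle queries (interleaving simulation with branching), bound it by depth $r(n)$ and branching factor $c$ to get at most $c^{r(n)}$ leaves each costing $O(t(n))$, and aggregate path probabilities over leaves to recover the output distribution. No gaps.
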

\begin{proof}
Consider the decision tree $\+{T}=\+{T}(\Pi)$ for adaptively querying the random oracle $\mathsf{Draw}(\cdot)$ by the algorithm $\+{A}$ on an input $\Pi$ of size $n$.
Since the algorithm  $\+A$ draws at most $r(n)$ random variables over domains of sizes at most $c$ in the worst case of inputs and random choices,
the decision tree has a branching number at most $c$ and depth at most $r(n)$. Therefore, there are at most $c^{r(n)}$ leaves in $\+{T}$.
Since $\+A$ has a time cost $t(n)$  in the worst case of inputs and random choices, the computation cost for the path from the root to each leaf in $\+{T}$ is bounded by $t(n)$.
Hence, the entire tree $\+{T}$ can be computed in $O(t(n)c^{r(n)})$ time.
Note that each leaf in $\+{T}$ corresponds to a possible output value for $\+A(\Pi)$, whose probability is given by that of the random choices along the path.
Therefore, the distribution of the output $\+A(\Pi)$ can be computed within $O(t(n)c^{r(n)})$ time by aggregating over all leaves in $\+{T}$.
\end{proof}

\paragraph{\textbf{Implications to approximate counting}}
Recall the self-reductions using vertex/edge decompositions defined in \Cref{sec-jvv} and the corresponding marginal distributions $\mu_{v_i}^{\sigma_{<i}}$ in \eqref{eq-vtx-decom} and $\mu_{i-1,e_i}$ in \eqref{eq-edge-decom}.
Then a straightforward consequence to \Cref{proposition-deran} is that
the partition functions can be approximated deterministically in polynomial-time 
as long as one can sample approximately from the marginal distribution in polynomial-time drawing $O(\log n)$ random variables whose domain sizes are upper bounded by a constant.
This is formally stated below.

\begin{corollary}\label{cor-reduction}
Let $\+{G}$ be a class of graphical models, where each instance $\+{I}\in\+{G}$ is provided with a $b$-bounded vertex decomposition (or a $b$-bounded edge decomposition).
If for every $\epsilon\in(0,1)$
there exists a randomised algorithm $\+{A}$ such that
for every instance $\+{I}\in\+{G}$ of $n$ vertices and $m$ edges, 
and every possible marginal distribution $\mu_{v_i}^{\sigma_{<i}}$ used in the vertex decomposition (or $\mu_{i-1,e_i}$ in the edge decomposition), 
the algorithm $\+{A}$ returns a $Y_i$ within time $t(\epsilon,n)$, by drawing at most $r(\epsilon,n)$ random variables of domain sizes at most $c$,
such that
\[
\DTV{Y_i}{\mu_{v_i}^{\sigma_{<i}}}\leq\frac{b\varepsilon}{10n}, \qquad \left(\text{or }\,\DTV{Y_{i}}{\mu_{i-1,e_{i}}}\leq\frac{b\varepsilon}{10m}\,\,\text{ for the edge decomposition,}\right)
\]
then there exists a deterministic algorithm $\+{B}$ that  for every $\epsilon\in(0,1)$ and every instance $\+{I}\in\+{G}$,
returns an $\epsilon$-approximation of the partition function of $\+{I}$ within time $O((m+n)t(\epsilon,n)c^{r(\epsilon,n)})$.
\end{corollary}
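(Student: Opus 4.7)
The plan is to invoke Proposition~\ref{proposition-deran} once per factor in the decomposition from Section~\ref{sec-jvv}, then multiply the resulting exact probabilities (or expectations) according to \eqref{eq-vtx-decom} or \eqref{eq-edge-decom}. Consider the vertex decomposition first; the edge case is entirely analogous. For each $i \in [n]$, feed the marginal sampling task for $\mu_{v_i}^{\sigma_{<i}}$ into the randomised algorithm $\+A$. Proposition~\ref{proposition-deran} then yields a deterministic procedure that computes the \emph{full} distribution of the random output $Y_i$ within time $O(t(\epsilon,n) c^{r(\epsilon,n)})$. From this explicit distribution, one reads off the single number $\hat p_i \defeq \Pr{Y_i = \sigma_{v_i}}$.

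Having computed $\hat p_1,\ldots,\hat p_n$, define the estimator
\[
\hat Z \defeq w(\sigma) \prod_{i=1}^n \frac{1}{\hat p_i},
\]
mirroring~\eqref{eq-vtx-decom}. For the edge decomposition the analogous step produces $\hat E_i \defeq \Ex[\phi_{e_i}(Y_i)]$ and $\hat Z \defeq Z_0 \prod_{i=1}^m \hat E_i$. Note $w(\sigma)$ (resp.\ $Z_0$, which factors over independent vertices) is directly computable in polynomial time from the instance.

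For the error analysis in the vertex case, TV closeness controls the probability of a single outcome:
\[
\abs{\hat p_i - \mu_{v_i}^{\sigma_{<i}}(\sigma_{v_i})} \le \DTV{Y_i}{\mu_{v_i}^{\sigma_{<i}}} \le \frac{b\epsilon}{10n}.
\]
Combined with the $b$-boundedness hypothesis $\mu_{v_i}^{\sigma_{<i}}(\sigma_{v_i}) \ge b$, each factor carries relative error at most $\epsilon/(10n)$, so the product accumulates multiplicative error at most $(1+\epsilon/(10n))^n \le e^{\epsilon/10}$, which lies inside $1\pm\epsilon$ for every $\epsilon \in (0,1)$. The edge case is handled by the same template after scaling: with $M_i \defeq \max_{y \in [q]^{e_i}}\phi_{e_i}(y)$, TV distance bounds differences of expectations of $[0,M_i]$-valued functions, giving $\abs{\hat E_i - \Ex_{z \sim \mu_{i-1,e_i}}[\phi_{e_i}(z)]} \le M_i \cdot \DTV{Y_i}{\mu_{i-1,e_i}} \le bM_i\epsilon/(10m)$, while $b$-boundedness guarantees $\Ex_{z \sim \mu_{i-1,e_i}}[\phi_{e_i}(z)] \ge bM_i$; hence per-factor relative error $\epsilon/(10m)$ and overall error within $(1\pm\epsilon)$ by the same calculation.

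Finally, the running time is $n$ (or $m$) invocations of the derandomisation procedure, each at cost $O(t(\epsilon,n) c^{r(\epsilon,n)})$, plus routine arithmetic to aggregate the telescoping product, summing to the stated $O((m+n) t(\epsilon,n) c^{r(\epsilon,n)})$. There is no substantive obstacle here; the only point worth noting is the conversion of TV closeness on a sampled distribution into multiplicative closeness of the aggregate estimator, which is exactly what the $b$-bounded hypothesis is designed for and why the desired TV tolerance was prescribed as $b\epsilon/(10n)$ (resp.\ $b\epsilon/(10m)$) to begin with.
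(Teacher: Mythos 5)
Your proposal is correct and is exactly the argument the paper has in mind: the paper only sketches this proof, noting that the sole subtlety is converting additive (TV) error into relative error via the $b$-bounded assumption, which is precisely the step you carry out explicitly with the per-factor relative error $\epsilon/(10n)$ (resp.\ $\epsilon/(10m)$) and the telescoping product. The application of \Cref{proposition-deran} per factor, the read-off of $\hat p_i$ (resp.\ $\hat E_i$) from the exactly computed output distribution, and the running-time accounting all match the intended proof.
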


The proof of \Cref{cor-reduction} is rather straightforward.
The only slight complication is to convert additive errors into relative errors,
which is made possible by our $b$ bounded assumptions.

\section{Coupling towards the past} \label{sec:marginal-sampler}

In this section, we introduce the coupling towards the past idea and present a marginal sampler by evaluating the state of a single variable in stationary Markov chains.
Suppose the chain has evolved sufficiently long. 
Our goal is to evaluate the current state by revealing as little randomness as possible.
It suffices to perform the last update,
and the update function is determined by both a fresh random variable and the states of other variables.
The fresh random variable may allow us to determine the state of the target variable directly without knowing any state of the rest.
This is possible when marginal probability lower bounds are available.
If we cannot make such a quick decision, we recursively reveal the necessary information on other variables required for this update.
In the case of Gibbs distributions, this step amounts to finding out the states of the neighbours of the target variable.
However, in the application of hypergraph colourings, to overcome the irreducibility barrier, we need to sample from a ``projected distribution" instead of the uniform distribution over all proper colourings. This ``projected distribution" we sample from is no longer a Gibbs distribution,
and the revealing step becomes more complicated.
Finally, we will truncate once we have revealed too much information to ensure that this process is efficient on randomness.

There is an implicit grand coupling (i.e.~a coupling for chains starting from all possible initial configurations) underlying the construction above.
We may consider all random variables drawn beforehand, and then all chains use the same values.
The approach above is just delayed revelation, or alternatively constructing the grand coupling recursively towards the past.
If the CTTP process terminates at time $-t$,
then it means that under these random choices,
all chains starting from a time $T<-t$, no matter what the initial configurations are, lead to the same value at time $0$ for the target variable.

To carry out the plan, we first introduce an implementation of the standard systematic scan Glauber dynamics which utilises the lower bound information as much as possible.
This is in \Cref{section-scan-GD}.
Then in \Cref{section-backward-deduction}, we flip the order of evaluation and deduce the state of a single variable backwards.
The algorithm is given in \Cref{Alg:resolve} and its correctness is shown in \Cref{theorem:resolve-cor}.
Finally, in \Cref{sec:approx-resolve}, we give the truncated algorithm, \Cref{Alg:appresolve}, and bound its error in \Cref{thm:dtv-truncate}.

\subsection{Simulating Glauber dynamics assuming marginal lower bound}\label{section-scan-GD}
Let $(X_t)$ be a convergent Markov chain on the state space $[q]^V$ with its stationary distribution $\mu$.
In particular, consider the chain $(X_t)_{-\infty<t\le 0}$ running from time $-\infty$ to time 0.
Drawing a sample from the marginal distribution $\mu_v$ for an arbitrary $v\in V$ can be realised by evaluating $X_0(v)$.

For technical reasons, we consider a class of chains known as \emph{systematic scan Glauber dynamics}. 
Enumerate the variables as $V=\{v_1,v_2,\ldots,v_n\}$, and for any $t\in\mathbb{Z}$, define:
\begin{align}
i(t)\defeq (t\bmod n)+1. \label{eq:i-scan}
\end{align}
%
%
The rule for the $t$-th transition ($X_{t-1}\to X_{t}$) is:
\begin{itemize}
\item
pick the variable $v=v_{i(t)}$ where $i(t)$ is defined in~\eqref{eq:i-scan};
\item
let $X_t\in[q]^V$ be constructed as that $X_t(u)=X_{t-1}(u)$ for all $u\neq v$, and $X_t(v)$ is drawn independently according to the marginal distribution $\mu_v^{X_{t-1}(V\setminus\{v\})}$.
\end{itemize}
This chain converges to the (unique) stationary distribution $\mu$ when it is irreducible.
\begin{remark}[choice of systematic scan Glauber dynamics]
    The CTTP is not restricted to systematic scan Glauber dynamics. Assuming a fixed scan order as in the systematic scan simplifies our expositions. But such simplification does not change the nature of the derandomisation problem.
    Later in \Cref{sec:derandomise-random-scan}, we show that with some extra efforts, the CTTP can be applied to derandomise the standard Glauber dynamics with random scan order.
\end{remark}

To perform the update, it is usually not necessary to know all of $X_{t-1}(V\setminus\{v\})$.
Typically there is a subset $\Lambda\subseteq V\setminus\{v\}$ such that $\mu_v^{X_{t-1}(\Lambda)}=\mu_v^{X_{t-1}(V\setminus\{v\})}$.
Let $\sigma_\Lambda = X_{t-1}(\Lambda)$.
The marginal distribution $\mu_v^{\sigma_\Lambda}$ can be decomposed as follows if it is suitably lower bounded.

\begin{definition}\label{definition:margin-lower-bound}
Let $\mu$ be a distribution over $[q]^V$. Let $\boldsymbol{b}=(b_1,b_2,\ldots,b_q)\in[0,1]^q$.
\begin{itemize}
\item
\textbf{Marginal lower bound}: 
$\mu$ is said to be \emph{$\boldsymbol{b}$-marginally lower bounded} if for any $v\in V$, $\Lambda\subseteq V\setminus\{v\}$ and any feasible ${\sigma_\Lambda}\in[q]^\Lambda$, it holds that $\mu_v^{\sigma_\Lambda}(j)\ge b_j$ for all $j\in [q]$.
\end{itemize}
For a $\boldsymbol{b}$-marginally lower bounded distribution $\mu$ over $[q]^V$, for each $v\in V$, 
we define the following distributions (we follow the convention $0/0=0$):
\begin{itemize}
\item
\textbf{Lower bound distribution $\mu_v^{\-{LB}}$ over $\{\perp\}\cup[q]$}: $\mu_v^{\-{LB}}=\mu^{\-{LB}}$ for all $v\in V$ such that
\begin{align*}
\mu^{\-{LB}}(\perp)
\defeq
1-\sum_{i=1}^qb_i
\quad\text{ and }\quad
\forall j\in [q],\quad
\mu^{\-{LB}}(j) 
\defeq
\frac{b_j}{\sum_{i=1}^qb_i}.
\end{align*}
\item
\textbf{Padding distribution $\mu_v^{\-{pad},\sigma_\Lambda}$ over $[q]$}: 
for $\Lambda\subseteq V\setminus\{v\}$ and feasible $\sigma_\Lambda\in[q]^{\Lambda}$,
\[
	\forall j \in [q], \quad \mu_v^{\-{pad},\sigma_\Lambda}(j) \defeq \frac{\mu^{\sigma_\Lambda}_v(j) - b_j}{1 - \sum_{i=1}^q b_i}.
\]
\end{itemize}
\end{definition}

With above definitions,
drawing a sample $c\sim\mu^{\sigma_\Lambda}_v$ according to the marginal distribution $\mu^{\sigma_\Lambda}_v$ can be simulated by the following two steps:
\begin{enumerate}
\item
draw $c\sim \mu_v^{\-{LB}}$;
\item
if $c=\perp$, override $c$ by drawing $c \sim \mu_v^{\-{pad},\sigma_\Lambda}$.
\end{enumerate}

\bigskip
Assume that $\mu$ has suitable marginal lower bounds.
Fix an integer $T \geq 0$.
The systematic scan Glauber dynamics $(X_t)_{-T\le t\le 0}$ from time $-T$ to $0$ can be generated by the following process $\+P(T)$.
We use the convention that for any $v\in V$ and $t<-T$, $X_{t}(v) = X_{-T}(v)$.

\begin{center}
  \begin{tcolorbox}[=sharpish corners, colback=white, width=1\linewidth]
  	\begin{center}
	\textbf{\emph{The systematic scan Glauber dynamics $\+{P}(T)$}}
  	\end{center}
  	\vspace{6pt}
   \begin{itemize}
	\item Initialize $X_{-T} \in [q]^V$ as an arbitrary feasible configuration.
	\item For $t=-T+1,-T+2,\ldots,0$,  the configuration $X_{t}$ is constructed as follows:
	\begin{enumerate}
		\item[(a)] pick $v = v_{i(t)}$, where $i(t)$ is defined in~\eqref{eq:i-scan}, and let $X_{t}(u) \gets X_{t-1}(u)$ for all $u \neq v$; 
		\item[(b)] draw $r_t \sim \mu^{\mathrm{LB}}$ independently, and  let $X_t(v) \gets r_t$ if $r_t \neq \perp$; otherwise, 
\begin{align}\label{eq-findconfig}
\sigma_\Lambda \gets \config(t), \quad\text{(by accessing $X_{t-1}$ and $\+{R}_{t-1}\defeq (r_s)_{-T < s < t}$)} 
\end{align} 
		and draw $X_t(v) \sim \mu^{\mathrm{pad},\sigma_\Lambda}_{v}$ independently.
	\end{enumerate}
\end{itemize}
  \end{tcolorbox} 
\end{center}

In \eqref{eq-findconfig}, we use a subroutine $\config(t)$ satisfying the following condition,
which allows us to perform the update without revealing the whole $X_{t-1}$.

\begin{condition}\label{cond:invariant-boundary}
  The procedure $\config(t)$ always terminates and returns $\sigma_\Lambda\in[q]^\Lambda$ satisfying that $\mu_v^{\sigma_\Lambda}=\mu_v^{X_{t-1}(V\setminus\{v\})}$ for $v=v_{i(t)}$.
\end{condition}

The implementation of $\config(t)$ will be application specific.
It is a deterministic procedure with oracle access to previous random variables $\+{R}_{t-1}= (r_s)_{-T < s < t}$ and to the current configuration $X_{t-1}$ generated in the process $\+{P}(T)$. 
These accesses are provided by the following two oracles:
\begin{itemize}
    \item \textbf{lower bound oracle $\+B(s)$}: given any $s < t$, returns $r_s$ if $s>-T$ and $\perp$ otherwise;
    \item \textbf{configuration oracle $\+C(u)$}: given $u \in V$, returns $X_{t-1}(u)$.
\end{itemize}
Typically, we will query $\+B(s)$ as much as possible,
and default to $\+C(u)$ only if $\+B(s)$ returns $\perp$.
It is even possible, in some situations, not needing to query $\+C(u)$ after $\+B(u)$ returns $\perp$.
This is because not all variables queried in $\config(t)$ are necessary to determine $\Lambda$ and $\sigma_\Lambda$,
and this phenomenon will become self-evident in the applications later.
Although using only $\+C(u)$ is sufficient to achieve \Cref{cond:invariant-boundary},
the use of $\+B(s)$ is crucial and allows us to reduce the number of random variables used for the backward deduction in the next subsection.

To better understand $\config(t)$, take a Gibbs distribution $\mu$ as an example.
In Gibbs distributions,
each variable $v\in V$ has a neighbourhood $N(v)\subseteq V\setminus\{v\}$ conditioned on which $v$ is independent from the rest of the variables, namely its non-neighbours.
%
Consequently, for any feasible $X_{t-1}\in[q]^{V}$,
\[
\mu_v^{X_{t-1}(V\setminus\{v\})}=\mu_v^{X_{t-1}{(N(v))}}.
\]
%
Then a straightforward implementation of $\config(t)$ is to return the configuration $\sigma_{\Lambda}\gets X_{t-1}(N(v))$ by retrieving it from $X_{t-1}$ via the oracle $\+C(\cdot)$.
However, as we want to reduce the number of accesses to $\+C(\cdot)$, 
we instead try to infer $X_{t-1}(N(v))$ from already drawn samples $\+{R}_{t-1}=(r_s)_{-T < s < t}$. 

The following gives an alternative implementation of $\config(t)$ for the Gibbs distribution $\mu$.
For any $u\in V$ and integer $t$, denote by $\upd_u(t)$ the last time before $t$ at which $u$ is updated, i.e.
\begin{align}\label{eq:update-time}
\upd_u(t)\defeq \max\{s\leq t \mid v_{i(s)}=u\},
\end{align}
where $i(s)$ is specified by the scan order  defined in~\eqref{eq:i-scan}.
\begin{center}
  \begin{tcolorbox}[=sharpish corners, colback=white, width=1\linewidth]
  	\begin{center}
	\textbf{\emph{An implementation of $\config(t)$ for Gibbs distribution $\mu$}}
  	\end{center}
  	\vspace{6pt}
   \begin{itemize}
	\item Let $\Lambda\gets N(v)$, where $v = v_{i(t)}$ and $i(t)$ is defined as~\eqref{eq:i-scan}. 
	\item For each $u\in \Lambda$: 
	\begin{enumerate}
		\item \label{boundary-line-oracle-B}
		let $s=\upd_u(t)$, if $s>-T$ and $r_{s}\neq \perp$, then $\sigma(u)\gets r_{s}$; \hfill  (\emph{oracle query $\+B(s)$})
		\item \label{boundary-line-oracle-C}
		otherwise, $\sigma(u)\gets X_{t-1}(u)$; \hfill (\emph{oracle query $\+C(u)$})
	\end{enumerate}
	\item return $\sigma_{\Lambda}$.
\end{itemize}
  \end{tcolorbox} 
\end{center}
This implementation satisfies \Cref{cond:invariant-boundary} due to the conditional independence property mentioned earlier for Gibbs distributions.


In general, $\mu$ can be an arbitrary distribution over $[q]^V$ and the subroutine $\config(t)$ will be implemented specifically depending on $\mu$.
Nevertheless, the following proposition is easy to verify.
\begin{proposition}
As long as the subroutine $\config(t)$ satisfies \Cref{cond:invariant-boundary}, 
the process $\+P(T)$ generates a faithful copy of the systematic scan Glauber dynamics $(X_t)_{-T\le t\le 0}$ for $\mu$. 
\end{proposition}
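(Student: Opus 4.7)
The plan is to prove the proposition by a straightforward induction on $t$ from $-T$ up to $0$, showing that the joint distribution of $(X_{-T},X_{-T+1},\ldots,X_t)$ generated by $\+{P}(T)$ coincides with that of the systematic scan Glauber dynamics initialised at the same configuration $X_{-T}$. The base case $t=-T$ is immediate from the shared initialisation. For the inductive step, the key observation is that both processes pick the same variable $v=v_{i(t)}$, both keep the coordinates on $V\setminus\{v\}$ unchanged, so only the conditional law of $X_t(v)$ given $X_{t-1}$ needs to be verified.

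The main computational step is a ``two-step sampling'' identity: I would show that for any feasible $\sigma_\Lambda\in[q]^\Lambda$ and any $j\in[q]$,
\[
\mu^{\-{LB}}(j)+\mu^{\-{LB}}(\perp)\cdot\mu_v^{\-{pad},\sigma_\Lambda}(j)=\mu_v^{\sigma_\Lambda}(j),
\]
which follows immediately from the definitions in \Cref{definition:margin-lower-bound} (where the padding distribution is precisely designed so that the lower-bound part plus the rescaled residual recovers $\mu_v^{\sigma_\Lambda}$). Hence the random value produced in step (b) of $\+{P}(T)$, obtained by first drawing $r_t\sim\mu^{\-{LB}}$ and then, only if $r_t=\perp$, drawing from $\mu_v^{\-{pad},\sigma_\Lambda}$, is distributed exactly according to $\mu_v^{\sigma_\Lambda}$.

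Now I would invoke \Cref{cond:invariant-boundary}: since $\config(t)$ is assumed to return a $\sigma_\Lambda$ satisfying $\mu_v^{\sigma_\Lambda}=\mu_v^{X_{t-1}(V\setminus\{v\})}$, the distribution of $X_t(v)$ conditional on $X_{t-1}$ under $\+{P}(T)$ is exactly $\mu_v^{X_{t-1}(V\setminus\{v\})}$, matching the transition rule of the systematic scan Glauber dynamics. A small point worth spelling out is that the subroutine $\config(t)$ is \emph{deterministic} and reads only from $X_{t-1}$ and $\+{R}_{t-1}=(r_s)_{-T<s<t}$, both of which are measurable with respect to the history strictly before time $t$; in particular, $\config(t)$ does not touch the fresh random variable $r_t$ or the fresh padding draw, so those remain independent of the past conditional on $X_{t-1}$.

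I do not anticipate a serious obstacle: once the identity above is in hand and \Cref{cond:invariant-boundary} is invoked, the induction closes and the whole trajectory under $\+{P}(T)$ is a faithful copy of the systematic scan Glauber dynamics. The only slightly delicate part is bookkeeping — being careful that the conditioning structure used to define $\config(t)$ does not secretly consume or alter the fresh randomness used for the step-$t$ update — but this is handled by the explicit separation in the description of $\+{P}(T)$ between the freshly drawn $r_t$ (and any subsequent padding draw) and the past information accessed through the oracles $\+{B}(\cdot)$ and $\+{C}(\cdot)$.
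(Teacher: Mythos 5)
Your proof is correct and is exactly the verification the paper leaves implicit (it states the proposition as ``easy to verify''): the two-step identity $\mu^{\-{LB}}(j)+\mu^{\-{LB}}(\perp)\cdot\mu_v^{\-{pad},\sigma_\Lambda}(j)=\mu_v^{\sigma_\Lambda}(j)$ combined with \Cref{cond:invariant-boundary} gives the correct one-step conditional law, and the induction on $t$ closes. One minor remark: the identity holds when $\mu^{\-{LB}}(j)=b_j$, which is how the lower bound distribution is instantiated in the applications (e.g.\ $\mu^{\-{LB}}(0)=1/2$ for independent sets and $\psi^{\-{LB}}(j)=\rho_j$ for colourings), rather than with the non-normalised expression $b_j/\sum_i b_i$ as literally written in \Cref{definition:margin-lower-bound}; your argument is using the intended definition.
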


\subsection{Evaluating stationary states via backward deductions}\label{section-backward-deduction}
Fix an arbitrary integer $T\ge 0$.
Consider the Markov chain $(X_t)_{-T\le t\le 0}$ generated by the process $\+P(T)$. 
We present an algorithm that outputs the random variable $X_0(v)$ for $v\in V$.
Instead of simulating the process $\+P(T)$ chronologically from time $-T$ to 0,
our algorithm uses a backward deduction which tries to infer the correct value of $X_0(v)$ by accessing as few random variables as possible for resolving $X_0(v)$.

The algorithm is described in \Cref{Alg:resolve}.
It is a recursive algorithm that has an input argument $t\le 0$ and maintains two global data structures $M$ and $R$, initialized respectively as  $M_0=\perp^{\mathbb{Z}}$ and $R_0=\emptyset$.
Since all recursive calls access and update the same $M$ and $R$, we sometimes write  $\resolve_T(t)=\resolve_T(t;M,R)$ for short.

For $-T<t\le 0$, $\resolve_T(t)$ tries to calculate the result of the update at time $t$, which is 
$X_t(v_{i(t)})$. In particular, for $-n< t< 0$, we have $X_t(v_{i(t)})=X_0(v_{(t\bmod n) +1})=X_0(v_{t+n+1})$, and $X_0(v_{i(0)})=X_0(v_{1})$.

\begin{algorithm} 
\caption{$\resolve_T(t; M,R)$} \label{Alg:resolve}
  \SetKwInput{KwPar}{Parameter}
  \SetKwInput{KwData}{Global variables}
 \KwIn{
an integer $t \leq 0$;}
\KwData{a map $M: \mathbb{Z}\to [q]\cup\{\perp\}$ and a set $R$;}

\KwOut{a value $X_t(v_{i(t)}) \in [q]$;}
\lIf{$t \leq -T$}{\Return $X_{-T}(v_{i(t)})$\label{Line-boundary}}
\lIf{$M(t) \neq \perp$}{\Return $M(t)$\label{Line-resolve-memoization}}
$M(t) \gets \basicsample(t; R)$ and \lIf{$M(t) \neq \perp$}{\Return $M(t)$\label{Line-resolve-direct-return}}
$\sigma_\Lambda \gets \config(t)$, with the oracle queries being replaced by   
{\begin{itemize}
	\item \textbf{upon} querying $\+B(s)$: 
      \textbf{if} {$s > -T$}, \textbf{then} {\Return $\basicsample(s; R)$}; \lElse{\Return $\perp$}
	\item \textbf{upon} querying $\+C(u)$: \Return $\resolve_T(\upd_u(t); M,R)$, where $\upd_u(t)$ is defined in~\eqref{eq:update-time}\;
\end{itemize}}\label{Line-resolve-item}
 $M(t) \gets$ a random value drawn independently according to $\mu^{\text{pad},\sigma_\Lambda}_{v}$\; \label{Line-resolve-sample}
\Return $M(t)$\; \label{Line-resolve-final-return}
\end{algorithm}

The algorithm recursively deduces the outcome of the update at time $t\le 0$.
The data structure $M(t)$ stores the resolved outcomes of the updates at time $t$, and the set $R$ stores the generated samples from the lower bound distribution $\mu^{\-{LB}}$.
Each value in $M$ will only be updated at most once, and the set $R$ will never remove elements.
Moreover, to implement the subroutine $\config(t)$,
each query to the lower bound oracle $\+{B}(s)$ is replaced by accessing the sample $r_s$ from the lower bound distribution $\mu^{\-{LB}}$, realized by \Cref{Alg:randombit};
each query to the configuration oracle $\+{C}(u)$ is replaced by a recursive call to $\resolve_T(\upd_u(t); M,R)$ for evaluating $X_{t-1}(u)$.
Here, the \emph{principle of deferred decision} is applied, so that the decision of the random choices of the $(r_s)_{-T<s<t}$ in $\+P(T)$ is deferred to the moments when they are accessed in the backward deduction.

\begin{algorithm}
\caption{$\basicsample(t;R)$} \label{Alg:randombit}
\SetKwInput{KwData}{Global variables}
 \KwIn{an integer $t\le 0$;}
 \KwData{a set $R$ of pairs $(s,r_s)\in\mathbb{Z}\times ([q]\cup\{\perp\})$;}
 \KwOut{a random value in $[q]\cup \set{\perp}$ distributed as $\mu^{\mathrm{LB}}$.}
\lIf{$(t,r) \in R$}{\Return $r$}
\Else{
  	draw $r_t \sim \mu^{\mathrm{LB}}$\;\label{alg-line:randombit-draw}
  	$R \gets R \cup \{(t,r_t)\}$\;
  	\Return $r_t$\;
  }
\end{algorithm}

When $T$ is set to $\infty$, the program $\resolve_{\infty}(t)$ is still well-defined, as the only difference is that $t\le -T$ never triggers.
Indeed $\resolve_{\infty}(t)$ is what we call the coupling towards the past process.
It tries to calculate the result of the update at time $t\le 0$ in a chain running from the infinite past to time $0$.
By \Cref{thm-convergence}, if the systematic scan Glauber dynamics $\+{P}(T)$ is irreducible, then the distribution of $X_0$ converges to $\mu$ as $T\to\infty$, regardless of the initial state. 
We give a sufficient condition for both the convergence of  the forward process  $\+{P}(T)$ and the termination of the backward program $\resolve_{\infty}(t)$.

\begin{condition}\label{condition:sufficient-correctness}
The lower bound distribution $\mu^{\-{LB}}$ in \Cref{definition:margin-lower-bound} satisfies that
 $\mu^{\-{LB}}(\perp)<1$.
\end{condition}

The next theorem states that for any finite~$T\ge 0$, \Cref{Alg:resolve}  always correctly evaluates the chain;
and by setting $T=\infty$, it returns a sample from the marginal distribution $\mu_v$ with probability $1$ if \Cref{condition:sufficient-correctness} holds.

\begin{theorem}\label{theorem:resolve-cor}
Let $\mu$ be a distribution over $[q]^V$, $T\ge 0$ be an integer, and 
$(X_t)_{-T\le t\le 0}$ be generated by the process $\+{P}(T)$ whose $\config(t)$ subroutine satisfies \Cref{cond:invariant-boundary}.
For any $-T\le t \leq 0$, the followings hold:
\begin{itemize}
\item
$\resolve_{T}(t)$ terminates in finite steps and returns a sample identically distributed as $X_t(v_{i(t)})$;
\item
if further \Cref{condition:sufficient-correctness} holds, then $\resolve_{\infty}(t)$ terminates with probability $1$, and  returns a sample distributed as $\mu_v$ where $v=v_{i(t)}$.
\end{itemize}
\end{theorem}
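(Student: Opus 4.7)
The argument splits into the finite-$T$ case and the $T = \infty$ case, both underpinned by a grand coupling in which the backward algorithm $\resolve_T$ and the forward process $\+P(T)$ share all their random draws---the lower-bound samples $r_s \sim \mu^{\-{LB}}$ accessed through $\basicsample$ (and cached in $R$), and the padding draws on \Cref{Line-resolve-sample}. Under this coupling the plan is to prove the pointwise identity $\resolve_T(t) = X_t(v_{i(t)})$, strictly stronger than (and hence implying) the distributional claim.

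For the finite-$T$ case, termination is immediate: each recursive call $\resolve_T(\upd_u(t); M, R)$ fires only for $u \in \Lambda \subseteq V \setminus \{v_{i(t)}\}$, hence with $\upd_u(t) < t$; together with the base case at $t \leq -T$ and the memoization through $M$, only finitely many integer times in $[-T, 0]$ are ever resolved. Correctness is proved by backward induction on $t$. The base $t \leq -T$ is handled by \Cref{Line-boundary} and the convention $X_t(v) = X_{-T}(v)$. For the inductive step at $-T < t \leq 0$, on \Cref{Line-resolve-direct-return} $\basicsample$ returns the same $r_t$ as used in $\+P(T)$, so if $r_t \neq \perp$ both processes commit $X_t(v) = r_t$. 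If $r_t = \perp$, the simulated oracles inside $\config(t)$ are faithful: $\+B(s)$ returns $r_s$ for $s > -T$ and $\perp$ otherwise, matching $\+R_{t-1}$; by the inductive hypothesis, $\+C(u) = \resolve_T(\upd_u(t))$ returns $X_{\upd_u(t)}(u) = X_{t-1}(u)$, the last equality because $u \neq v_{i(t)}$ is untouched between times $\upd_u(t)$ and $t-1$. Hence by \Cref{cond:invariant-boundary}, $\config$ outputs $\sigma_\Lambda$ with $\mu_v^{\sigma_\Lambda} = \mu_v^{X_{t-1}(V \setminus \{v\})}$, and the padding draw on \Cref{Line-resolve-sample} completes the faithful simulation of the update rule of $\+P(T)$.

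For the $T = \infty$ case, extend the grand coupling to $(r_s)_{s \leq 0}$ i.i.d.\ from $\mu^{\-{LB}}$ on all nonpositive integers. Let $E_T$ denote the event that the execution trace of $\resolve_\infty(t)$ never references any index $s \leq -T$; on $E_T$ the trace coincides with that of $\resolve_T(t)$ and, by the finite-$T$ analysis, returns $X_t(v_{i(t)})$ for the chain $\+P(T)$. The events $E_T$ are monotone increasing and $\resolve_\infty(t)$ terminates precisely on $\bigcup_T E_T$. Given a.s.\ termination, the distributional conclusion follows by letting $T \to \infty$: on $E_T$ the output matches that of $\resolve_T$, so \Cref{thm-convergence} (invoked under the standing irreducibility ensuring $\mu$ is the stationary distribution) identifies the limiting distribution with $\mu_v$.

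The main obstacle is thus establishing $\Pr\bigl[\bigcup_T E_T\bigr] = 1$ under the sole hypothesis $\mu^{\-{LB}}(\perp) < 1$. The plan is to view the backward recursion as a random tree whose nodes are the visited times $s$: each node is a leaf whenever \Cref{Line-resolve-direct-return} fires with $r_s \neq \perp$, an independent event (by the caching in $R$) of probability $1 - \mu^{\-{LB}}(\perp) > 0$. A Borel--Cantelli-style argument along every infinite backward path in the tree shows that no such path can consist solely of $\perp$-draws, so the set of visited times is almost surely finite. Converting this path-wise statement into a genuine tail estimate on the whole recursion tree (rather than an individual path) is the technical crux, and will require either careful control of the branching induced by $\config$ or the additional quantitative assumptions that the later sections impose on specific applications.
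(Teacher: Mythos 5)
Your coupling framework and the finite-$T$ backward induction match the paper's proof essentially step for step, and that part is fine. But the second bullet of the theorem is not proved: you explicitly leave open the "technical crux" of converting a path-wise Borel--Cantelli statement into a tail bound on the whole recursion tree, and you suggest it may require control of the branching of $\config$ or extra quantitative assumptions. The theorem claims almost-sure termination of $\resolve_\infty(t)$ under \Cref{condition:sufficient-correctness} alone, so this is a genuine gap, not a deferrable technicality. Moreover, the path-wise heuristic is pointed in a direction that does not obviously close: an infinite recursion need not follow a single path of consecutive $\perp$-draws, and bounding the tree by its paths runs into exactly the branching blow-up you mention. The paper avoids branching entirely with a \emph{barrier} argument: for a window of $n$ consecutive times $[t-n+1,t]$, let $\+B_t$ be the event that $r_s\neq\perp$ for all $s$ in the window. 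Since any recursive call targets $\upd_u(s')\ge s'-n$, a call from above the window that lands at or below $t-n$ must originate from some $s'\in[t-n+1,t]$; but on $\+B_t$ every such $\resolve_\infty(s')$ returns immediately at \Cref{Line-resolve-direct-return} without recursing. Hence $\+B_t$ blocks \emph{all} paths simultaneously. The windows $\+B_{t_0},\+B_{t_0-n},\dots$ are independent, each with probability at least $(1-\mu^{\-{LB}}(\perp))^n>0$, so the probability the recursion reaches depth $Ln$ is at most $(1-p)^L\to 0$. This is the missing idea.

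A second, smaller gap: you invoke \Cref{thm-convergence} "under the standing irreducibility," but irreducibility is not a hypothesis of the theorem; it must be derived from \Cref{condition:sufficient-correctness}. The paper does this by noting that some $c_0\in[q]$ has $\mu^{\-{LB}}(c_0)>0$, so the constant configuration $\sigma\equiv c_0$ is feasible, reachable from every feasible state, and (by positivity of the relevant transition probabilities) can reach every feasible state. Without this step the identification of the limit with $\mu_v$ is unjustified.
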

\begin{proof}
Fix a finite $T \geq 0$. It is straightforward to see  that $\resolve_{T}(t)$ terminates in finite steps since the time $t$ decreases in the recursive calls to $\resolve_T$ and the procedure terminates once $t\le -T$.
%

We then show that the output of $\resolve_{T}(t)$ is identically distributed as $X_t(v_{i(t)})$ by a coupling between the process $\+{P}(T)$ and  $\resolve_{T}(t)$ for $-T \leq t \leq 0$.
Consider the following implementations of $\+P(T)$ and $\resolve_{T}(t)$.
For each $- T <  \ell \leq 0$, let $U_\ell \in [0,1)$ be a real number sampled uniformly and independently at random.
%
Whenever the process $\+P(T)$ tries to draw an $X_\ell(v_{i(\ell)})$ according to the padding distribution $\mu^{\mathrm{pad},\sigma_\Lambda}_{v_{i(\ell)}}$, 
we simulate this by assigning $X_\ell(v_{i(\ell)})$ the value $c \in [q]$ satisfying
\begin{align}\label{eq:proof:resolve-coupling}
U_\ell\in\left[\,\sum_{j=1}^{c-1} \mu^{\mathrm{pad},\sigma_\Lambda}_{v_{i(\ell)}} (j),\,\,\sum_{j=1}^{c} \mu^{\mathrm{pad},\sigma_\Lambda}_{v_{i(\ell)}} (j)\,\right).
\end{align}
Similarly, in the procedure $\resolve_{T}(t)$, when a recursion $\resolve_{T}(\ell; M,R)$ tries to draw an $M(\ell)$ according to $\mu_{v_{i(\ell)}}^{\-{pad},\sigma_{\Lambda}}$ in \Cref{Line-resolve-sample} of \Cref{Alg:resolve},
we simulate this by assigning $M(\ell)$ the same $c \in [q]$ satisfying \eqref{eq:proof:resolve-coupling}.
It is easy to see that such implementations faithfully simulate the original processes $\+P(T)$ and $\resolve_{T}(t)$ respectively.

Besides the random choices for sampling from the padding distributions, which are provided by the sequence $(U_\ell)_{-T<\ell\leq 0}$, the only remaining random choices in the two processes $\resolve{}_{T}(t)$ and $\+P(T)$ are the outcomes $(r_{\ell})_{-T<\ell \leq 0}$ for sampling from the lower bound distribution $\mu^{\-{LB}}$.

We define the following coupling between $\resolve{}_{T}(t)$ and $\+P(T)$:
\begin{itemize}
	\item the two processes use the same random choices for $(r_{\ell})_{-T<\ell \leq 0}$ and $(U_{\ell})_{-T<\ell \leq 0}$.
\end{itemize}

Fix any evaluation of the random choices $\mathbf{r}=(r_{\ell})_{-T < \ell \leq 0}$ and $\mathbf{U}=(U_{\ell})_{-T < \ell \leq 0}$.
Both $\resolve_T(t)$ and the $(X_{t})_{-T<t\le 0}$ generated according to $\+P(T)$ are fully deterministic given $(\mathbf{r},\mathbf{U})$. 
Furthermore, for any mapping $M:\mathbb{Z}\rightarrow [q]\cup \set{\perp}$ 
and any set $R$ of $(s,r'_s)\in\mathbb{Z}\times([q]\times \{\perp\})$ pairs, 
we say that $(M,R)$ are 
consistent with $(\mathbf{r},\mathbf{U})$, if the followings hold
\begin{itemize}
    \item $\forall -T\le\ell\leq 0:\quad M(\ell)\neq\perp\implies M(\ell)=X_{\ell}(v_{i(\ell)})$;
    \item $\forall -T< \ell\leq 0:\quad (\ell, r_\ell')\in R\implies r_\ell'=r_\ell$.
\end{itemize}

Under the coupling above, by an induction on $t$ from $-T$ to $0$, 
one can  routinely verify that for any $(M,R)$ consistent with $(\mathbf{r},\mathbf{U})$, 
the output of $\resolve_T(t;M,R)$ is precisely $X_{t}(v_{i(t)})$ generated according to $\+P(T)$ using the same random choices $(\mathbf{r},\mathbf{U})$,
and the states of $(M,R)$ during the execution of $\resolve_T(t;M,R)$ remain consistent with $(\mathbf{r},\mathbf{U})$. 
This shows that $\resolve_T(t)$ is identically distributed as $X_t(v_{i(t)})$ for any finite $T\ge 0$ and any $-T\le t\le 0$.
In other words,
\begin{align}\label{eqn:resolve-Xt}
  \DTV{X_{t}(v)}{\resolve_{T}(t)} = 0.
\end{align}

Next, we deal with the infinite case.
We claim that \Cref{condition:sufficient-correctness} implies both the irreducibility of $\+{P}(T)$ and the termination of $\resolve_{\infty}(t)$ with probability~1. 
%

%
For the irreducibility of $\+{P}(T)$, by \Cref{condition:sufficient-correctness}, there must exist $c_0\in [q]$ such that $\mu^{\-{LB}}(c_0)>0$.
Let $\sigma\in [q]^V$ be the constant configuration such that $\sigma(v)=c_0$ for all $v\in V$. 
Then by the definition of $\mu^{\-{LB}}$ it follows that $\sigma$ is feasible and also can be reached via transitions of $\+P(T)$ from all feasible configurations.
It is also straightforward to verify that for any two feasible configurations $\tau,\tau'\in [q]^V$ and any $-T<t\leq 0$, if $P_t(\tau,\tau')>0$ then $P_t(\tau,\tau')>0$, where $P_t$ denotes the one-step transition matrix of $\+P(T)$ at time $t$.
Therefore any feasible configuration is also reachable from $\sigma$. 
This shows the irreducibility of $\+{P}(T)$.

Then we show the termination of $\resolve_{\infty}(t_0)$ for any $t_0\leq 0$. 
For each $t\leq t_0$, define the event:
\[
  \+B_{t}: r_{s}\neq \perp \text{ for all }s\in [t-n+1,t].
\]
We claim that if $\+{B}_t$ happens for some $t\leq t_0$,  then no recursive calls of $\resolve_{\infty}(s;M,R)$ would be incurred for any $s\leq t-n$. Assume for the sake of contradiction that there exists a maximum $s^*\leq t-n$ such that $\resolve_{\infty}(s^*;M,R)$ is called. As $s^*\leq t-n< t_0$, $\resolve_{\infty}(s^*;M,R)$ must be recursively called directly within another instance of $\resolve_{\infty}(s';M,R)$ such that $s^*<s'$. Note that by \Cref{Line-resolve-item} of \Cref{Alg:resolve} and \eqref{eq:update-time} we also have $s^*>s'+n$. We then have two cases:
\begin{enumerate}
    \item $s'\leq t-n$, this contradicts the maximality assumption for $s^*$.
    \item Otherwise $s'>t-n$. By $s^*\leq t-n$ and $s^*>s'+n$ we have $s'\in [t-n+1,t]$. 
      Also by the assumption that $\+{B}_t$ happens, we have $r_{s'}\neq \perp$, therefore $\resolve_{\infty}(s';M,R)$ would have terminated at \Cref{Line-resolve-direct-return} of \Cref{Alg:resolve} without incurring any recursive call. This also leads to a contradiction and thus proves the claim.
\end{enumerate}

Note that by \Cref{condition:sufficient-correctness}, for any $t\leq t_0$, we have $\Pr{\+{B}_t}\ge p\defeq (1-\mu^{\-{LB}}(\perp))^n >0$.
For any $L>0$, let $\+E_L$ be the event that there is a recursive call to $\resolve_{\infty}(t^*;M,R)$ where $t^* \le t_0-L n$.
By the claim above,
\begin{align*}
  \Pr{\+E_L} \le \Pr{\bigwedge\limits_{j=0}^{L-1}\tp{\neg \+{B}_{t_0-jn}}} = \prod\limits_{j=0}^{L-1}\Pr{\neg \+{B}_{t_0-jn}}\leq (1-p)^L,
\end{align*}
where the equality is due to independence of $(r_t)_{t\leq t_0}$.
Thus, with probability $1$ there is only a finite number of recursive calls, namely $\resolve_{\infty}(t_0)$ terminates with probability $1$.

For any $t\le 0$, since $\resolve_{\infty}(t)$ terminates with probability $1$,
its output distribution is well defined.
For any $\eps>0$, consider a sufficiently large $L$ such that $(1-p)^L\le \eps$.
For any $T\ge Ln-t$, we couple $\resolve_{\infty}(t)$ with $\resolve_{T}(t)$ using the same random variables drawn.
As the coupling fails only if $\+E_L$ happens, by the coupling lemma,
\begin{align*}
  \DTV{\resolve_{T}(t)}{\resolve_{\infty}(t)} \le \Pr{\+E_L}\le \eps.
\end{align*}
This implies 
\begin{align}\label{eqn:resolve-dtv}
  \lim_{T\rightarrow\infty}\DTV{\resolve_{T}(t)}{\resolve_{\infty}(t)}=0.
\end{align}
The irreducibility of $\+{P}(T)$ and \Cref{thm-convergence} implies that
\begin{align}\label{eqn:limiting-dist}
  \lim_{T\rightarrow\infty}\DTV{\mu_v}{X_{T,t}(v)}=0,
\end{align}
where $v=v_{i(t)}$ and $X_{T,t}$ is the state of $\+P(T)$ at time $t\ge T$. 
Moreover, for any $T>0$,
\begin{align}\label{eqn:triangle-DTV}
  \DTV{\mu_v}{\resolve_{\infty}(t)} &\le \DTV{\mu_v}{\resolve_{T}(t)} + \DTV{\resolve_{T}(t)}{\resolve_{\infty}(t)}.
\end{align}
Combining the above, we have
\begin{align*}
  &\DTV{\mu_v}{\resolve_{\infty}(t)} \\
  \le~&~\limsup_{T\rightarrow\infty}\DTV{\mu_v}{\resolve_{T}(t)}  + \limsup_{T\rightarrow\infty}\DTV{\resolve_{T}(t)}{\resolve_{\infty}(t)}\tag{by \eqref{eqn:triangle-DTV}}\\
  =~&~\limsup_{T\rightarrow\infty}\DTV{\mu_v}{\resolve_{T}(t)} \tag{by \eqref{eqn:resolve-dtv}}\\
  \le~&~\limsup_{T\rightarrow\infty}\DTV{\mu_v}{X_{T,t}(v)} + \limsup_{T\rightarrow\infty}\DTV{X_{T,t}(v)}{\resolve_{T}(t)} = 0, \tag{by \eqref{eqn:limiting-dist} and \eqref{eqn:resolve-Xt}}
\end{align*}
and the theorem follows.
\end{proof}

Note that in the proof above, $L$ has to be at least $\Omega(p^{-1})$ to guarantee a constant probability upper bound for the event $\+E_L$.
Since $p$ is an exponentially small quantity, this argument usually does not yield a useful mixing time bound.
This is mainly a proof for convergence and correctness.

\begin{remark}  \label{rem:CTTP}
  In fact, if $\resolve_{\infty}(t)$ terminates with probability $1$ for any $-n<t\le 0$,
  then the Glauber dynamics has to be irreducible.
  This is because, fix an arbitrary set of random choices,
  and run $\resolve_{\infty}(t)$ for all $-n<t\le 0$.
  Because of the termination assumption,
  there is a finite $T<0$ beyond which all of them terminates,
  and we obtain a configuration $\sigma$ such that $\sigma(v_i) = \resolve_{\infty}(\upd_i(0))$.
  It is then straightforward to see that if we run $P(T)$ under the same random choices,
  no matter what the initial configuration is, the configuration at time $0$ is exactly $\sigma$.
  This implies that all feasible configurations can reach $\sigma$ and vice verse,
  and thus the Glauber dynamics is irreducible.
  
  On the other hand, $\resolve_{\infty}(t)$ terminating with probability $1$ for any $-n<t\le 0$ is not that far away from \Cref{condition:sufficient-correctness}.
  Consider a distribution $\mu$.
  It is possible for $\resolve_{\infty}(t)$ to terminate with probability $1$ if there is some $v$ such that $\mu_v^{\-{LB}}(\perp)=1$.
  However, as soon as there are two variables $u$ and $v$ such that $\mu_v^{\-{LB}}(\perp)=\mu_u^{\-{LB}}(\perp)=1$,
  and one cannot deduce the value of $u$ (or $v$) without knowing the value of $v$ (or $u$),
  $\resolve_{\infty}$ will not terminate when resolving either $u$ or $v$.
\end{remark}

\subsection{Truncated simulation using bounded randomness}\label{sec:approx-resolve}
\Cref{theorem:resolve-cor} implies that under \Cref{condition:sufficient-correctness},
the coupling towards the past process, $\resolve_{\infty}(t)$, is a perfect sampler for the marginal distribution of $\mu_v$.
In our typical applications, the expected running time of this algorithm is often a constant.
However, since our end goal is derandomisation, we need an algorithm that draws no more than logarithmic random variables in the worst case.
This leads to the truncated version in \Cref{Alg:appresolve}.
Similar to \Cref{Alg:resolve}, we may sometimes drop $M$ and $R$ from the input as all recursive calls access the same data structures.

%


\begin{algorithm}
\caption{$\appresolve(t, K; M,R)$} \label{Alg:appresolve}
  \SetKwInput{KwPar}{Parameter}
  \SetKwInput{KwData}{Global variables}
  \SetKwIF{Try}{Catch}{Exception}{try}{:}{catch}{exception}{}
\KwIn{integers $t \leq 0$ and $K \geq 0$}
\KwData{a map $M:\mathbb{Z}\to[q]\cup\{\perp\}$ and a set $R$;}
\KwOut{a random value in $[q]\cup\{\perp\}$;}
initialize $M\gets\perp^{\mathbb{Z}}$ and $R\gets\emptyset$\;
\uTry{}{
\Return $\resolve_{\infty}(t;M,R)$\;
}
\Catch{$|R| \geq K$\label{alg:appresolve-line-catch}}{ 
\Return $\perp$\;
}
\end{algorithm}

The algorithm $\appresolve(t, K)$ simulates $\resolve_{\infty}(t)$ using a set $R$ of size bounded by $K$.
Recall that the set $R$ is used to store the generated samples $(r_s)_{s\le t}$ from the lower bound distribution $\mu^{\-{LB}}_v$.
Moreover, since a sample from the padding distribution $\mu_v^{\-{pad},\cdot}$ is drawn only after a corresponding sample from the lower bound distribution has been drawn, we have the following observation.

\begin{observation}
In \Cref{Alg:appresolve}, at most $K$ samples are drawn from the lower bound distribution $\mu^{\-{LB}}_v$ and at most $K$ samples are drawn from the padding distributions $\mu_v^{\-{pad},\cdot}$.
\end{observation}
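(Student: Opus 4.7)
The plan is to verify the two bounds separately, both following directly from the catch condition $|R| \geq K$ at \Cref{alg:appresolve-line-catch} of \Cref{Alg:appresolve}, together with how $R$ and $M$ are maintained during execution.

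For the lower bound distribution, I would trace through \Cref{Alg:randombit}: the only place where a fresh sample from $\mu^{\-{LB}}$ is drawn is \Cref{alg-line:randombit-draw}, immediately after which the pair $(t, r_t)$ is inserted into $R$. The initial check $(t,r) \in R$ ensures no new draw is made when an entry for the current time already exists, so there is a bijection between fresh $\mu^{\-{LB}}$ samples and elements of $R$. Consequently $|R|$ equals the running count of lower bound samples drawn so far. Since the catch in $\appresolve$ triggers as soon as $|R|$ reaches $K$, at most $K$ lower bound samples are ever drawn.

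For the padding distribution, I would charge each padding draw to a distinct entry of $R$. A padding sample at time $t$ can occur at \Cref{Line-resolve-sample} of \Cref{Alg:resolve} only if the preceding assignment $M(t) \gets \basicsample(t; R)$ returns $\perp$; otherwise the algorithm exits at \Cref{Line-resolve-direct-return} without drawing from $\mu_v^{\-{pad},\cdot}$. In particular $(t, \perp)$ must lie in $R$ once the padding draw takes place. The memoization at \Cref{Line-resolve-memoization} moreover ensures that the body of $\resolve_{\infty}(\cdot;M,R)$ past \Cref{Line-resolve-memoization} is executed at most once per time index $t$, so distinct padding draws correspond to distinct values of $t$ and hence to distinct entries of $R$. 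Therefore the number of padding samples is bounded by $|R| \leq K$.

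I do not anticipate any technical obstacle; the statement is a bookkeeping claim that follows from carefully reading the control flow of \Cref{Alg:resolve}, \Cref{Alg:randombit}, and \Cref{Alg:appresolve}. The only point requiring modest care is the at-most-once argument for the padding branch per fixed $t$, which is precisely what the memoization check at \Cref{Line-resolve-memoization} provides; without it, the same $t$ could in principle contribute multiple padding draws while only occupying one slot in $R$.
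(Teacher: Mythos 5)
Your proposal is correct and takes essentially the same route as the paper, which justifies the observation in one line by noting that $R$ records exactly the lower bound samples (so the catch condition $|R|\ge K$ caps them at $K$) and that each padding draw is preceded by a corresponding $\perp$-valued lower bound sample at the same time index. Your extra care about the memoization at \Cref{Line-resolve-memoization} guaranteeing at most one padding draw per time index is a valid and worthwhile elaboration of the paper's implicit injection from padding draws into entries of $R$.
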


Denote by $\+E_{\-{trun}}(K)$  the event  $\appresolve(t,K)=\perp$, i.e.~the exception at \Cref{alg:appresolve-line-catch} occurs.
The following theorem says that this is precisely the error for $\appresolve(t,K)$ sampling from $\mu_{v_{i(t)}}$.

\begin{theorem}\label{thm:dtv-truncate}
Let $\mu$ be a distribution over $[q]^V$. Assume \Cref{cond:invariant-boundary} and \Cref{condition:sufficient-correctness}.
For $-n<t\le 0$, $K\ge 0$, 
and $Y=\appresolve(t,K)$, 
it holds that  $\DTV{Y}{\mu_v}= \Pr{\+E_{\-{trun}}(K)}$, where $v=v_{i(t)}$.
\end{theorem}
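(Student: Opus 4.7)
\medskip

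\noindent\textbf{Proof proposal.} The plan is to couple $Y=\appresolve(t,K)$ with $X=\resolve_{\infty}(t)$ using a common pool of random draws, and then translate the disagreement between $Y$ and $X$ into the total variation distance between $Y$ and $\mu_v$.

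First, I would set up the coupling. Both procedures operate on a map $M$ and a set $R$ that are initialised empty, and both make random choices only in two places: when $\basicsample$ draws a fresh $r_s\sim\mu^{\-{LB}}$ at \Cref{alg-line:randombit-draw} of \Cref{Alg:randombit}, and when \Cref{Line-resolve-sample} of \Cref{Alg:resolve} draws from a padding distribution $\mu_{v_{i(\ell)}}^{\-{pad},\sigma_\Lambda}$. I would feed both algorithms the same stream of underlying uniform random variables, exactly as in the coupling used in the proof of \Cref{theorem:resolve-cor}. By inspection of \Cref{Alg:appresolve}, as long as the recursion inside the \textbf{try} block has $|R|<K$ throughout, $\appresolve$ returns exactly what $\resolve_{\infty}(t;M,R)$ returns under the same randomness; the only way the two can differ is that the guarded exception $|R|\ge K$ is raised and $\appresolve$ returns $\perp$, which is precisely the event $\+E_{\-{trun}}(K)$.

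Next, by \Cref{condition:sufficient-correctness} and \Cref{theorem:resolve-cor}, the procedure $\resolve_{\infty}(t)$ terminates almost surely and its output $X$ is distributed as $\mu_v$. In particular $X\in[q]$ with probability $1$, while $Y=\perp$ on $\+E_{\-{trun}}(K)$ and $Y=X\in[q]$ on its complement. Let $p\defeq\Pr{\+E_{\-{trun}}(K)}$. Then $\Pr{Y=\perp}=p$ and for each $j\in[q]$,
\[
\Pr{Y=j}=\Pr{X=j,\,\neg\+E_{\-{trun}}(K)}=\mu_v(j)-\Pr{X=j,\,\+E_{\-{trun}}(K)}.
\]

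Finally, since $\mu_v(\perp)=0$ and the differences $\mu_v(j)-\Pr{Y=j}=\Pr{X=j,\+E_{\-{trun}}(K)}$ are all nonnegative and sum (over $j\in[q]$) to $p$, I conclude
\[
\DTV{Y}{\mu_v}=\frac{1}{2}\sum_{j\in[q]\cup\{\perp\}}\bigl|\Pr{Y=j}-\mu_v(j)\bigr|=\frac{1}{2}\bigl(p+p\bigr)=p,
\]
which is the claimed identity. The argument is essentially bookkeeping once the coupling is in place, so there is no real obstacle; the only mildly delicate point is verifying that $\appresolve$ and $\resolve_{\infty}$ execute identical recursion trees until the moment (if any) that $|R|$ first reaches $K$, which follows because both algorithms access the shared state $(M,R)$ in the same order and consult $\basicsample$ and the padding samplers on identical arguments.
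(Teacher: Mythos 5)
Your proposal is correct and follows essentially the same route as the paper: couple $\appresolve(t,K)$ with $\resolve_{\infty}(t)$ on shared randomness, note they agree unless the truncation event occurs, and invoke \Cref{theorem:resolve-cor} to identify the untruncated output with $\mu_v$. The only cosmetic difference is that you extract the exact equality by computing the total variation distance term-by-term (using that all the deficits $\mu_v(j)-\Pr{Y=j}$ are nonnegative and sum to $\Pr{\+E_{\-{trun}}(K)}$), whereas the paper splits it into a coupling-lemma upper bound plus the lower bound $\Pr{Y=\perp}-\mu_v(\perp)$; these are equivalent bookkeeping.
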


\begin{proof}[Proof]
Suppose that sampling from the padding distribution $\mu_{v_{i(\ell)}}^{\-{pad},\sigma_{\Lambda}}$ in \Cref{Line-resolve-sample} of \Cref{Alg:resolve} is realised in the same way as in~\eqref{eq:proof:resolve-coupling}, 
using a sequence of real numbers $U_\ell \in [0,1)$ chosen uniformly and independently at random for each $\ell\le 0$.
%
%

We apply the following coupling between $\resolve{}_{\infty}(t)$ and $\appresolve{}(t,K)$:
\begin{itemize}
	\item the two processes use the same random choices for $(r_{\ell})_{\ell \leq 0}$ and $(U_{\ell})_{\ell \leq 0}$.
\end{itemize}
One can verify that if  $\+E_{\-{trun}}(K)$ does not occur, then the two processes $\resolve{}_{\infty}(t)$ and $\appresolve{}(t,K)$ are coupled perfectly. 
By the coupling lemma and \Cref{theorem:resolve-cor}, we have
\begin{align*}
\DTV{Y}{\mu_v} \leq \Pr{\+E_{\-{trun}}(K)}.
\end{align*}
Note that $\Pr{Y = \perp} =  \Pr{\+E_{\-{trun}}(K)}$ and $\mu_v(\perp) = 0$. Therefore, we have
\begin{align*}
\DTV{Y}{\mu_v} \geq \Pr{Y = \perp} - \mu_v(\perp) = \Pr{\+E_{\-{trun}}(K)}.
\end{align*}
Combining the two inequalities proves the theorem.
\end{proof}

A well-known limitation of various perfect sampling approaches, including the coupling from the past (CFTP), 
is that the sampling procedure is  ``non-interruptible'', which means that forced early termination would introduce a bias in sampling.
The same also holds for the coupling towards the past (CTTP) process $\resolve_{\infty}(t)$.
However, \Cref{thm:dtv-truncate} guarantees that this bias due to the interruption is bounded by the truncation probability.
In particular, when applying the $\appresolve(t,K)$ to draw approximate samples from the marginal distributions, 
we are especially interested in the cases where $\Pr{\+E_{\-{trun}}(K)}\le {1}/{\poly(n)}$ is achieved by a $K=O(\log n)$.

Finally, the ``convergence rate'' of the $\resolve_{\infty}(t)$ procedure, represented by the upper bound for $\Pr{\+E_{\-{trun}}(K)}$, 
depends very much on how the $\config(t)$ subroutine is implemented, which may vary on concrete models.
This is also quite similar to the case of CFTP, where the analyses of convergence rates are also highly dependent on its implementations.
Therefore,  in the current section, we do not give a generic analysis of the convergence rate of the CTTP procedure. 
Instead, in the next two sections,
we will analyse the efficiencies of CTTP for hypergraph independent set and hypergraph colouring, where the Markov chains with concrete implementations of the $\config(t)$ subroutine are specified.

\section{Hypergraph independent set}\label{section:hyper-indset}


In this section, we give FPTASes for counting hypergraph independent sets and prove \Cref{thm-h-ind} and \Cref{thm-h-ind-simple}. 
We first introduce some notations.
Given a $k$-uniform hypergraph $H=(V,\+E)$ with maximum degree $\Delta$, denote by $\Omega_H \subseteq \{0,1\}^V$ the set of all independent sets of $H$, and $Z_H=\abs{\Omega_H}$ its size. 
Let $\mu = \mu_H$ be the uniform distribution over $\Omega_H$.
We identify a subset $S\subseteq V$ with an assignment $\tau_S\in\{0,1\}^V$ by $\tau_S(v)=1$ if and only if $v\in S$, for any $v\in V$.
Recall that $\tau\in \{0,1\}^V$ is a hypergraph independent set if every hyperedge $e \in \+E$ contains at least one vertex $v \in e$ with $\tau_v = 0$.

Assume $V = \{v_1,v_2,\ldots,v_n\}$ where $ n = |V|$.
To approximate the partition function $Z_H$, namely the number of independent sets of $H$, 
we use the vertex decomposition defined in~\eqref{eq-vtx-decom} with the decomposition scheme $\sigma = \mathbf{0} $ (all-zero vector).
This reduces the task to approximating $\mu_{v_i}^{\sigma_{<i}}(0)$ for each $i\in[n]$, the probability that $v_i$ takes the value $0$ conditional on all of $v_j$ take $0$ where $j<i$.
One reason for choosing the all-zero scheme $\sigma = \mathbf{0}$ is the following marginal lower bound (recall \Cref{definition:margin-lower-bound}) by the nature of independent sets.
\begin{observation}\label{ob-hind}
For any $\Lambda\subseteq V$, any $\sigma_{\Lambda}\in \{0,1\}^{\Lambda}$ and any $v\in V\setminus \Lambda$, it holds that
\[
  \mu^{\sigma_{\Lambda}}_v(0) \geq \frac{1}{2} \text{ and } \mu^{\sigma_{\Lambda}}_v(1) \geq 0.
\]
And therefore, $\sigma = \mathbf{0}$ is a $\frac{1}{2}$-bounded vertex decomposition scheme.
\end{observation}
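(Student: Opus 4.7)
The plan is to establish the marginal lower bound via a simple injection argument, using the monotonicity inherent in the definition of hypergraph independent sets.

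Fix $\Lambda \subseteq V$, a feasible $\sigma_\Lambda \in \{0,1\}^\Lambda$ (i.e., one that extends to at least one independent set of $H$), and $v \in V \setminus \Lambda$. Let $\Omega^{\sigma_\Lambda} \subseteq \Omega_H$ be the set of independent sets $\tau$ whose restriction to $\Lambda$ equals $\sigma_\Lambda$, and partition it as $\Omega^{\sigma_\Lambda}_0 \sqcup \Omega^{\sigma_\Lambda}_1$ according to the value at $v$. Since $\mu$ is uniform on $\Omega_H$, the conditional marginal satisfies $\mu_v^{\sigma_\Lambda}(0) = |\Omega^{\sigma_\Lambda}_0| / (|\Omega^{\sigma_\Lambda}_0| + |\Omega^{\sigma_\Lambda}_1|)$. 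Thus it suffices to show $|\Omega^{\sigma_\Lambda}_1| \le |\Omega^{\sigma_\Lambda}_0|$.

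Define a map $\Phi: \Omega^{\sigma_\Lambda}_1 \to \{0,1\}^V$ by flipping the value at $v$ from $1$ to $0$ and leaving all other coordinates unchanged. The key observation is that flipping a vertex from $1$ to $0$ cannot create a monochromatic-all-ones hyperedge; more precisely, for any hyperedge $e \in \+E$ that was not entirely set to $1$ in $\tau$, the same holds for $\Phi(\tau)$, since we only decreased a coordinate. Hence $\Phi(\tau)$ remains an independent set, agrees with $\sigma_\Lambda$ on $\Lambda$ (as $v \notin \Lambda$), and has $\Phi(\tau)_v = 0$, so $\Phi$ indeed maps into $\Omega^{\sigma_\Lambda}_0$. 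Injectivity is immediate since $\Phi$ only changes a single fixed coordinate in a deterministic way. This gives $|\Omega^{\sigma_\Lambda}_1| \le |\Omega^{\sigma_\Lambda}_0|$, hence $\mu_v^{\sigma_\Lambda}(0) \ge 1/2$, and $\mu_v^{\sigma_\Lambda}(1) \ge 0$ is trivial.

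For the second assertion, recall from Section~\ref{sec-jvv} that the vertex decomposition with scheme $\sigma$ is $b$-bounded iff $\mu_{v_i}^{\sigma_{<i}}(\sigma_{v_i}) \ge b$ for every $i$. Taking $\sigma = \mathbf{0}$ and applying the bound just proved with $\Lambda = \{v_1,\dots,v_{i-1}\}$ and the all-zero partial configuration (which is always feasible, as the empty set is trivially an independent set of $H$) yields $\mu_{v_i}^{\sigma_{<i}}(0) \ge 1/2$ for every $i \in [n]$.

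The entire argument is essentially a one-line injection; no delicate estimate is needed, and there is no real obstacle. The only thing to be careful about is ensuring that the conditioning $\sigma_\Lambda$ is feasible so that $\mu_v^{\sigma_\Lambda}$ is well-defined, which is implicit in the statement.
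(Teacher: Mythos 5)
Your proof is correct; the paper states this as an unproved observation ("by the nature of independent sets"), and your injection argument — flipping $v$ from $1$ to $0$ preserves independence because it can only remove, never create, an all-ones hyperedge — is exactly the monotonicity the authors leave implicit. The feasibility caveat and the reduction to the $\frac{1}{2}$-bounded scheme (noting the all-zero configuration is always feasible) are both handled properly, so nothing is missing.
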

The lower bound for $\mu^{\sigma_{\Lambda}}_v(1)$ can be improved slightly.
It will not have any substantial improvement, 
so for clarity we do not pursue that.

Next, consider the distribution $\mu_{v_i}^{\sigma_{<i}}$ obtained by pinning the partial configuration $\sigma_{<i}$ on the hypergraph $H$. 
As another reason of choosing $\sigma = \mathbf{0}$, observe that for each vertex $v$, if the value of $v$ is fixed to be $0$, then all the constraints arising from hyperedges incident to $v$ get immediately satisfied.
Therefore, these hyperedges can be safely pruned away from the subinstance. 
Suppose we are given the partial configuration $\sigma_{<i}$ at some stage during the computation of the product in~\eqref{eq-vtx-decom}. 
All vertices with index at most $i-1$, together with hyperedges incident to \emph{any} of them, can be safely removed. 
This gives the hypergraph $H_i = (V_i,\+E_i)$ where $V_i = \{v_i,v_{i+1},\ldots,v_n\}$ and $\+E_i = \{e \in \+E \mid e \subseteq V_i\}$. 
It is straightforward to verify that 
\begin{itemize}
	\item  $\mu_{v_i}^{\sigma_{<i}} = \mu'_{v_i}$, where $\mu'$ is the uniform distribution over all independent sets in $H_i$;
	\item $H_i$ is a $k$-uniform hypergraph;
	\item the maximum degree of $H_i$ is at most that of $H$;
	\item $H_i$ is a linear hypergraph if $H$ is a linear hypergraph.
\end{itemize}
Hence, sampling from the distribution $\mu_{v_i}^{\sigma_{<i}}$ is equivalent to sampling from $\mu'_{v_i}$. 
Moreover, if $H$ satisfies the condition of \Cref{thm-h-ind} (or the condition of \Cref{thm-h-ind-simple}), then so does $H_i$. 
Using the ``reduction'' from single-site samplers to deterministic counting algorithms as per \Cref{cor-reduction}, it suffices to design single-site samplers satisfying the conditions therein. 
We summarise this into the following two lemmata, one for the general case and the other for the linear case. 
\begin{lemma}\label{lem-sampling-ind}
Let $k,\Delta \geq 2$ be two integers satisfying $2^{\frac{k}{2}}\geq \sqrt{8\mathrm{e}}k^2\Delta$.
There exists an algorithm that 
given as inputs a $k$-uniform hypergraph $H = (V,\+E)$ with maximum degree at most $\Delta$, a vertex $v \in V$ and a parameter $\gamma > 0$,
outputs a random $Y_v \in \{0,1\}$ such that $\DTV{Y_v}{\mu_v} \leq \gamma$, where $\mu$ is the uniform distribution over all independent sets in $H$.
The algorithm runs in time $O(\Delta^3k^5\log \frac{1}{\gamma})$ 
and draws at most $3\Delta^2 k^4 \ceil{\log \frac{1}{\gamma}}$ Boolean random variables. 
\end{lemma}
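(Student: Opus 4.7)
My plan is to instantiate the coupling-towards-the-past framework of \Cref{sec:marginal-sampler} for $\mu$, the uniform distribution over independent sets of $H$. By \Cref{ob-hind}, $\mu$ is $\boldsymbol{b}$-marginally lower bounded with $\boldsymbol{b}=(1/2,0)$, so $\mu^{\-{LB}}(0)=\mu^{\-{LB}}(\perp)=1/2$ and \Cref{condition:sufficient-correctness} is immediate. I would construct a subroutine $\config(t)$ for $v=v_{i(t)}$ as follows: for each hyperedge $e \ni v$, sequentially probe the other vertices $u\in e\setminus\{v\}$, first via the lower bound oracle $\+B(\upd_u(t))$ (which returns $0$ or $\perp$); stop the scan of $e$ as soon as a probe returns $0$ (since $e$ then imposes no constraint on $v$), and resort to the configuration oracle $\+C(u)$ only when $\+B$ returns $\perp$. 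Once all hyperedges have been processed, the resulting $\sigma_\Lambda$ determines $\mu_v^{\sigma_\Lambda}$, which is either $\mathrm{Bernoulli}(0)$ (if some hyperedge forces $v=0$) or the uniform distribution on $\{0,1\}$ otherwise; \Cref{cond:invariant-boundary} is clearly satisfied.

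The sampler returns $Y_v=\appresolve(t_v,K)$ with $t_v\in(-n,0]$ the unique time for which $v_{i(t_v)}=v$ and $K=\Theta(\Delta^2 k^4 \log(1/\gamma))$ (a returned $\perp$ being mapped to the default value $0$, which does not affect the claimed total variation bound). By \Cref{thm:dtv-truncate}, $\DTV{Y_v}{\mu_v}\le\Pr{\+E_{\-{trun}}(K)}$. All $\basicsample$ calls use Boolean randomness since $\mu^{\-{LB}}$ is supported on $\{0,\perp\}$, and padding draws are supported on $\{0,1\}$; both counts are bounded by $|R|\le K$, matching the stated randomness bound. Each recursion call performs $O(\Delta k)$ deterministic work to scan hyperedges, giving total running time $O(K\Delta k)=O(\Delta^3 k^5 \log(1/\gamma))$.

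The crux, and the expected main obstacle, is bounding $\Pr{\+E_{\-{trun}}(K)}\le\gamma$. I plan to use an information-percolation-style argument inspired by~\cite{HSZ19}. At every vertex-time $(u,s)$ visited by the recursion, with probability $1/2$ the call terminates immediately because $\basicsample(s)$ returned $0$; recursion propagates into an incident hyperedge $e\ni u$ only if every probed vertex in $e\setminus\{u\}$ returned $\perp$ from its most recent lower-bound sample. From the execution trace I would extract a collection $\+F$ of \emph{fully frozen} hyperedges: those $e$ witnessed in the recursion for which every relevant $\basicsample$ returned $\perp$. Because distinct frozen hyperedges correspond to disjoint sets of fresh $\basicsample$ draws (indexed by distinct space-time locations), the events in $\+F$ are mutually independent, each of probability at most $2^{-(k-1)}$.

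The combinatorial step is to show that $|R|\ge K$ forces $|\+F|\gtrsim K/(\Delta k)$ and to enumerate the possible witness configurations. These are connected subgraphs of the hyperedge dependency graph rooted near $v$, and the number of size-$m$ structures is bounded by $(\mathrm{e}\Delta k)^{O(m)}$ via a standard argument analogous to Moser--Tardos witness trees~\cite{MT10}. A union bound then gives
\[
    \Pr{|R|\ge K}\;\le\;\sum_{m\ge K/(c\Delta k)}(\mathrm{e}\Delta k)^{O(m)}\cdot 2^{-(k-1)m},
\]
which decays geometrically under $\Delta\le 2^{k/2}/(\sqrt{8\mathrm{e}}k^2)$ since $(\mathrm{e}\Delta k)^2 \cdot 2^{-(k-1)} < 1$, so $K=\Theta(\Delta^2 k^4 \log(1/\gamma))$ suffices. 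The delicate part will be the careful extraction of $\+F$ from the time-stamped trace — in particular, ensuring the extracted events involve genuinely disjoint randomness — and controlling the interplay between the backward time indexing imposed by the systematic scan and the hyperedge dependency structure.
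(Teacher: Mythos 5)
Your overall architecture is the same as the paper's: the same lower bound distribution from \Cref{ob-hind}, essentially the paper's $\config(t)$ (\Cref{Alg:config-h-indset}), truncation at $K=\Theta(\Delta^2k^4\log(1/\gamma))$, the reduction of the total variation error to $\Pr{\+E_{\-{trun}}(K)}$ via \Cref{thm:dtv-truncate}, and the same running-time accounting. The one step you have not actually supplied is the one the paper spends most of its effort on, and as you have stated it, it is wrong: the claim that ``distinct frozen hyperedges correspond to disjoint sets of fresh $\basicsample$ draws'' does not hold. Each frozen hyperedge in your collection $\+F$ is really a pair $(e,t)$, and the random bits it consumes are indexed by the timestamp set $\{\upd_u(t): u\in e\}$. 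Two such pairs $(e,t)$ and $(e',t')$ with $e\cap e'\neq\emptyset$ (or even $e=e'$ at nearby times) can query the \emph{same} timestamp $\upd_u(t)=\upd_u(t')$ for a shared vertex $u$, so the corresponding ``all probes returned $\perp$'' events reuse random bits and are not mutually independent. Since the witness structure you want to enumerate is precisely a \emph{connected} family of such pairs, overlaps of this kind are the generic case, not an exception; a naive product bound $2^{-(k-1)|\+F|}$ is therefore unjustified.

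The paper's fix is the $2$-tree device (\Cref{definition:2-tree}, \Cref{lemma:big-2-tree}, \Cref{lemma:2-tree-number-bound}): it defines the witness graph on the timestamp sets $\ts(e,t)$ with adjacency given by timestamp intersection, shows the bad set $V_{\+B}$ is connected there (\Cref{lemma:indset-connected}) and that every timestamp in it received $\perp$ (\Cref{lemma:indset-all-1}), and then extracts from the connected set a subset that is an independent set in the witness graph yet connected in its square. Only for such a $2$-tree are the timestamp sets pairwise disjoint, so independence holds and each element genuinely contributes a factor $2^{-k}$; the price is a loss of a factor $D+1=2\Delta k^2$ in size, which is exactly why the threshold is $3\Delta^2k^4\eta$ rather than something of order $\Delta k\cdot\eta$. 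Your union bound then goes through with the count $(\mathrm{e}D^2)^{\eta-1}$ of $2$-trees and the hypothesis $2^k\ge 8\mathrm{e}\Delta^2k^4$. One further small point: for hypergraph independent sets the padding distribution is deterministic (it is a point mass on $0$ or on $1$ depending on whether some incident hyperedge is otherwise fully occupied), so no additional random draws are needed beyond the at most $K$ Boolean samples in $R$; counting padding draws separately, as you do, would exceed the stated $3\Delta^2k^4\lceil\log(1/\gamma)\rceil$ budget.
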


\begin{lemma}\label{lem-sampling-ind-simple}
Let $\delta > 0$.
Let $k \geq \frac{25(1+\delta)^2}{\delta^2}$ and $\Delta \geq 2$ be two integers satisfying 
$2^{k} \geq (100 k^3 \Delta)^{1+\delta}$.
There exists an algorithm that 
given as inputs a $k$-uniform linear hypergraph $H = (V,\+E)$ with maximum degree at most $\Delta$, a vertex $v \in V$ and a parameter $\gamma > 0$, 
outputs a random $Y_v \in \{0,1\}$ such that $\DTV{Y_v}{\mu_v} \leq \gamma$, where $\mu$ is the uniform distribution over all independent sets in $H$.
The algorithm runs in time $O((\frac{1+\delta}{\delta})^2\Delta^4 k^{10} \log \frac{1}{\gamma})$ 
and draws at most $10^4 (\frac{1+\delta}{\delta})^2 \Delta^3 k^9 \ceil {\log{\frac{1}{\gamma}}}$ Boolean random variables.
\end{lemma}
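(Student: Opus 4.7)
The plan is to apply the CTTP framework of \Cref{sec:marginal-sampler} to the uniform distribution $\mu$ over independent sets of the linear $k$-uniform hypergraph $H=(V,\+E)$. By \Cref{ob-hind}, $\mu$ is $(1/2,0)$-marginally lower bounded in the sense of \Cref{definition:margin-lower-bound}, so the induced lower-bound distribution satisfies $\mu^{\mathrm{LB}}(\perp)=1/2$ with the remaining mass concentrated on the value $0$; in particular \Cref{condition:sufficient-correctness} holds. A direct computation reveals that $\mu^{\mathrm{pad},\sigma_\Lambda}_v$ is in fact a point mass: it puts all its weight on $0$ when some hyperedge $e\ni v$ has $\sigma_{e\setminus\{v\}}=\mathbf{1}$, and on $1$ otherwise. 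We implement $\config(t)$ as in the Gibbs-distribution example of \Cref{section-scan-GD}, with the optimisation that once any vertex of a hyperedge is resolved to $0$, we skip the remaining vertices of that hyperedge.

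With these ingredients in place, we invoke $\appresolve(t,K;M,R)$ at the unique $t\in\{1-n,\dots,0\}$ with $v_{i(t)}=v$ and output $Y_v$, replacing a $\perp$ return by $0$. Since $\mu^{\mathrm{LB}}$ is Boolean-supported and the padding distribution is deterministic, every random draw in the execution is a single fair coin, so the total Boolean-draw count is bounded by $K$. Each recursive call performs $O(k\Delta)$ bookkeeping to scan incident hyperedges and their members, so the total running time is $O(Kk\Delta)$. \Cref{thm:dtv-truncate} then gives $\DTV{Y_v}{\mu_v}\le\Pr{\+E_{\mathrm{trun}}(K)}$, and both quantitative targets of the lemma reduce to showing that $K=\Theta\bigl((\tfrac{1+\delta}{\delta})^{2}\Delta^{3}k^{9}\log(1/\gamma)\bigr)$ suffices to make $\Pr{\+E_{\mathrm{trun}}(K)}\le\gamma$.

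To bound $\Pr{\+E_{\mathrm{trun}}(K)}$, we plan an information-percolation/witness-tree argument tailored to the CTTP recursion. A call at time $s$ spawns children only when $r_s=\perp$ (probability $1/2$) \emph{and} some hyperedge $e\ni v_{i(s)}$ is still a candidate for being fully $1$ on $e\setminus\{v_{i(s)}\}$, which forces each of its $k-1$ other vertices to be recursively resolved to $1$. Since a resolution to $1$ can arise only through the padding distribution, each such $1$-valued vertex must itself come from a $\perp$ draw at its last update. We therefore associate to every non-terminating execution a rooted ``witness tree'' whose nodes are hyperedges and whose size is $\Omega(K/k)$, on whose timestamps the prescribed $\perp$-draws all occur; the probability of any fixed tree of size $s$ is at most $2^{-(k-1)s}$ by independence of the fresh coins $(r_t)$.

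The main obstacle, and where the linearity hypothesis is decisive, lies in the combinatorial enumeration of admissible witness trees. A naive bound over all rooted subtrees of size $s$ in the hyperedge dependency graph (two hyperedges adjacent iff they share a vertex) yields $(\mathrm{e} k\Delta)^{s}$ trees, and balancing against $2^{-(k-1)s}$ gives only $\Delta\lesssim 2^{k}/k$. Following the $2$-tree construction from~\cite{FGW22a}, we instead extract inside each witness tree a sparsified sub-structure consisting of hyperedges pairwise at graph-distance exactly two; by linearity any two such hyperedges are vertex-disjoint, which both cuts the enumeration cost to about $(k\Delta)^{s/2}$ and sharpens the per-hyperedge probability estimate. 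The slack $\delta$ then absorbs the polynomial-in-$k$ losses from this sparsification and from a concentration inequality (whence the hypothesis $k\geq 25(1+\delta)^{2}/\delta^{2}$), yielding $\Delta\lesssim 2^{k/(1+\delta)}/(100k^{3})$, which matches the lemma's hypothesis exactly; the constant $10^{4}(\tfrac{1+\delta}{\delta})^{2}$ in the final choice of $K$ is calibrated accordingly.
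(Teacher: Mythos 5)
Your setup — the marginal lower bound, the deterministic padding distribution, the reduction via \Cref{thm:dtv-truncate} to bounding $\Pr{\+E_{\-{trun}}(K)}$, and the general witness/percolation strategy — matches the paper. The gap is in the one step that actually has to use linearity. Your proposed sparsification, ``hyperedges pairwise at graph-distance exactly two,'' is just the ordinary $2$-tree already used in the general (non-linear) case: vertices of a $2$-tree in the witness graph have disjoint timestamps \emph{by definition of distance $\ge 2$}, with no appeal to linearity. For a $2$-tree the enumeration count is $(\mathrm{e}D^2)^{\ell}$ with $D=\Theta(\Delta k^2)$ the witness-graph degree (\Cref{lemma:wg-degree-bound}, \Cref{lemma:2-tree-number-bound}) — not $(k\Delta)^{s/2}$; there is no mechanism that produces a square root in the exponent — and each $2$-tree vertex contributes only $k$ forced $\perp$-draws. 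Balancing $(\mathrm{e}\Delta^2k^4)^{\eta}$ against $2^{-k\eta}$ gives exactly the general condition $\Delta\lesssim 2^{k/2}/k^2$, not the claimed $\Delta\lesssim 2^{k/(1+\delta)}/k^3$. There is also nothing to ``sharpen'' in the per-hyperedge probability: a single hyperedge can never contribute more than $2^{-k}$.

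What is actually needed is the $2$-\emph{block}-tree of \Cref{definition:2-block-tree}: clusters of $\theta=\lceil 4(1+\delta)/\delta\rceil$ witness-graph vertices that are connected \emph{and} pairwise intersect in at most one timestamp, the clusters themselves being pairwise at distance $\ge 2$. Linearity enters precisely to guarantee the pairwise-intersection-at-most-one property inside a block — and even that is not automatic, because two witness-graph vertices arising from the \emph{same} hyperedge at different times can share many timestamps in a linear hypergraph; this is why the paper routes the argument through the self-neighbourhood powered witness graph (\Cref{definition:self-neighbourhood-witness-graph-indset}) and the extraction lemma \Cref{lemma-vb-linear}. A block then forces at least $\theta(k-\theta)$ distinct independent $\perp$-draws (probability $2^{-\theta(k-\theta)}$) while costing only $(\theta\mathrm{e}^{\theta}(10k^3\Delta)^{\theta+1})$ in the enumeration (\Cref{lemma:2-block-tree-number-bound}); the balance $2^{k-\theta}\gtrsim (10k^3\Delta)^{(1+\theta)/\theta}$ is what converts the exponent from $k/2$ to $k/(1+\delta)$ as $\theta\to\infty$, and is the source of both the constant $10^4(\frac{1+\delta}{\delta})^2$ in $K$ and the hypothesis $k\ge 25(1+\delta)^2/\delta^2$ (which ensures $k\ge\theta^2$ so that $(k-\theta)/k\ge(\theta-1)/\theta$). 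Without the block structure your argument cannot get past the $2^{k/2}$ barrier.
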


%

\begin{remark}[\textbf{perfect marginal samplers}]\label{rem:perfect-marginal-HIS}
  Applying coupling towards the past (\Cref{Alg:resolve} with $T=\infty$) to the uniform distribution of hypergraph independent sets yields a perfect marginal sampler for the marginal distributions $\mu_v$ 
  that outputs $Y^*_v$ distributed exactly as $\mu_v$ upon termination.
  Indeed, the algorithms stated in \Cref{lem-sampling-ind} and \Cref{lem-sampling-ind-simple} are obtained from truncating it.
  For any $\gamma>0$, with probability at least $1-\gamma$, the perfect marginal samplers terminate: 
  \begin{itemize}
    \item
      (on $k$-uniform hypergraphs)
      within $O\tp{\Delta^3k^5\log \frac{1}{\gamma}}$ time for computation, 
      while drawing at most $3\Delta^2 k^4 \ceil{\log \frac{1}{\gamma}}$ Boolean random variables;
    \item
      (on $k$-uniform linear hypergraphs)
      within $O\tp{(\frac{1+\delta}{\delta})^2\Delta^4 k^{10} \log \frac{1}{\gamma}}$ time for computation, 
      while drawing at most $10^4 (\frac{1+\delta}{\delta})^2 \Delta^3 k^9 \ceil {\log{\frac{1}{\gamma}}}$ Boolean random variables;
  \end{itemize}
  under the same conditions as stated in  \Cref{lem-sampling-ind} and \Cref{lem-sampling-ind-simple}, respectively.
  Similar to CFTP, these perfect samplers are ``non-interruptible'' in the sense that truncations may bias the sample.
  Nevertheless, due to the tail bounds above, the truncating error introduced is bounded by~$\gamma$ in the total variation distance.
\end{remark}

\Cref{thm-h-ind} and \Cref{thm-h-ind-simple} are straightforward consequences of \Cref{lem-sampling-ind} and \Cref{lem-sampling-ind-simple} respectively.
We first show \Cref{lem-sampling-ind} in \Cref{sec-ind-scan}, 
and then adapt the proof to the linear case and show \Cref{lem-sampling-ind-simple} in \Cref{sec-linear-ind}.

\begin{proof}[Proofs of \Cref{thm-h-ind} and \Cref{thm-h-ind-simple}]


\Cref{thm-h-ind} follows from \Cref{lem-sampling-ind} with $\gamma = \frac{\epsilon}{20n}$, \Cref{cor-reduction} with the vertex decomposition scheme $\sigma = \mathbf{0}$ and \Cref{ob-hind}. 
The running time of the approximate counting algorithm is 
\begin{align*}
	T &= O\tp{ \left(n+ \frac{\Delta n}{k}\right) \cdot \tp{ \Delta^3k^5\log \frac{20 n}{\epsilon}} \times   2^{  3\Delta^2 k^4 \ceil{\log \frac{20n}{\epsilon}}}  } = \mathrm{poly}(\Delta k) \tp{\frac{n}{\epsilon}}^{O(\Delta^2 k^4)}.
\end{align*}

\Cref{thm-h-ind-simple} is proved the same way but with \Cref{lem-sampling-ind-simple} invoked. 
The running time of the deterministic approximate counting algorithm is 
\begin{align*}
  T &= O\tp{ \frac{\Delta n}{k} \cdot \tp{\tp{\frac{1+\delta}{\delta}}^2\Delta^4 k^{10} \log \frac{20n}{\epsilon}} \cdot 2^{ 10^4 \tp{\frac{1+\delta}{\delta}}^2 k^9 \Delta^3  \ceil {\log{\frac{20n}{\epsilon}}}} } = \mathrm{poly}\tp{\frac{\Delta k(1+\delta)}{\delta}}\tp{\frac{n}{\epsilon}}^{O\tp{ \frac{\Delta^3 k^9 (1+\delta)^2}{\delta^2} }}. \qedhere
\end{align*}
\end{proof}

%

\subsection{Systematic scan Glauber dynamics for hypergraph independent sets}\label{sec-ind-scan}
To apply the general algorithm framework as in \Cref{Alg:appresolve}, we first identify the lower bound and marginal distribution in \Cref{definition:margin-lower-bound} using \Cref{ob-hind}. 
\begin{itemize}
    \item The lower bound distribution $\mu^{\-{LB}}$ is given by
\begin{align*}
	 \mu^{\-{LB}}(0)=\mu^{\-{LB}}(\perp)=\frac{1}{2}, \quad \mu^{\-{LB}}(1)=0.
\end{align*}
Hence, we can assume that $\mu^{\-{LB}}$ is defined over $\{0,\perp\}$ such that $\mu^{\-{LB}}(0)=\mu^{\-{LB}}(\perp)=\frac{1}{2}$.
    \item The padding distribution $\mu_v^{\-{pad},\sigma_\Lambda}$, given any $\Lambda\subseteq V$, any $\sigma_{\Lambda}\in \{0,1\}^{\Lambda}$ and any $v\in V\setminus \Lambda$, is defined by
    \begin{align*}
        \mu_v^{\-{pad},\sigma_\Lambda}(0)=2\mu^{\sigma_\Lambda}_v(0)-1 \text{ and } \mu_v^{\-{pad},\sigma_\Lambda}(1)=2\mu^{\sigma_\Lambda}_v(1).	
    \end{align*}
\end{itemize}
Apparently, the Glauber dynamics for $\mu$ satisfies \Cref{condition:sufficient-correctness}. 
We then use \Cref{Alg:appresolve} on the distribution $\mu=\mu_H$ to prove \Cref{lem-sampling-ind} and \Cref{lem-sampling-ind-simple}, where the subroutine $\config(t)$ is defined in \Cref{Alg:config-h-indset}.
Recall that two oracles $\+B$ and $\+C$ in \Cref{Alg:config-h-indset} are defined in \Cref{section-scan-GD}.

\begin{algorithm}
\caption{$\config(t)$ for hypergraph independent sets } \label{Alg:config-h-indset}
  \SetKwInput{KwPar}{Parameter}
 \KwIn{hypergraph $H=(V,\+{E})$ specifying the distribution $\mu$ and an integer $t\leq 0$\;}
\KwOut{a partial configuration $\sigma_{\Lambda}\in \{0,1\}^{\Lambda}$ over some $\Lambda\subseteq V\setminus \{v_{i(t)}\}$;}
$v \gets v_{i(t)}, \Lambda \gets \emptyset, \sigma_{\Lambda}\gets \emptyset$\;
\ForAll{$e\in \+E$ s.t. $v \in e$\label{Line-indset-incident}}{ 
    \uIf{for all $u \in e \setminus \{v\}$, $\+{B}(\upd_u(t))=\perp$\label{Line-indset-satisfy}}{
        \ForAll{$u \in e \setminus \{v\}$}{$\Lambda \gets \Lambda \cup \{u\}$\;
        $\sigma_{\Lambda}(u)\gets \+{C}(u)$\; \label{Line-indset-assign}}
        \If{for all $u \in e \setminus \{v\}$, $\sigma_{\Lambda}(u)=1$\label{Line-indset-check}}{
            \Return $\sigma_{\Lambda}$\; \label{Line-indset-direct-return}
        }
    }
    \Else{
        \ForAll{$u \in e \setminus \{v\}$}{
            \If{$\+{B}(\upd_u(t))\neq \perp$}{
                $\Lambda \gets \Lambda \cup \{u\}$\;
                $\sigma_{\Lambda}(u)\gets 0$\; \label{Line-indset-assign-2}
            }
        }
    }
} 
\Return $\sigma_{\Lambda}$\; \label{Line-indset-final-return}
\end{algorithm}


\begin{lemma}\label{lemma:config-h-indset-cor}
$\config(t)$ in \Cref{Alg:config-h-indset} satisfies \Cref{cond:invariant-boundary}.
\end{lemma}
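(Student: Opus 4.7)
The plan is to verify two things: termination, which is immediate because the outer loop ranges over the finite set of hyperedges containing $v=v_{i(t)}$; and correctness, namely that the returned $\sigma_\Lambda$ satisfies $\mu_v^{\sigma_\Lambda}=\mu_v^{X_{t-1}(V\setminus\{v\})}$.

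The structural fact I would rely on is that, for the uniform distribution over hypergraph independent sets, the full-configuration marginal $\mu_v^{X_{t-1}(V\setminus\{v\})}$ is two-valued: it equals $(1,0)$ exactly when some hyperedge $e\ni v$ has all other vertices set to $1$ in $X_{t-1}$, forcing $v=0$, and equals $(1/2,1/2)$ otherwise. For a partial assignment $\tau_\Lambda$, two simple patterns suffice for matching: if some $e\ni v$ has $e\setminus\{v\}\subseteq\Lambda$ with $\tau_\Lambda\equiv 1$ on $e\setminus\{v\}$, then $\mu_v^{\tau_\Lambda}=(1,0)$; and if every $e\ni v$ contains some $u\in\Lambda$ with $\tau_\Lambda(u)=0$, then $\mu_v^{\tau_\Lambda}=(1/2,1/2)$, because for any such $\tau_\Lambda$ feasibility of an extension is invariant under flipping $v$, so the extensions with $v=0$ and $v=1$ are in bijection. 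It therefore suffices to show that $\sigma_\Lambda$ exhibits the matching one of these two patterns.

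The proof would then split on the exit point of the algorithm. If it returns early inside the if-branch, $\sigma_\Lambda$ exhibits the first pattern, and since its values were assigned via $\+C(u)=X_{t-1}(u)$, the same hyperedge also forces $v$ in $X_{t-1}$. If it returns at the end, then for every $e\ni v$ some $u\in\Lambda$ has $\sigma_\Lambda(u)=0$: in the if-branch all of $e\setminus\{v\}$ enter $\Lambda$ and, since the early return was not triggered, at least one entry is $0$; in the else-branch, any $u\in e\setminus\{v\}$ with $\+B(\upd_u(t))\neq\perp$ is added with value $0$. To show $v$ is also unforced in $X_{t-1}$, I would argue contrapositively: if some $e\ni v$ forced $v$ in $X_{t-1}$, then $X_{t-1}\equiv 1$ on $e\setminus\{v\}$; this rules out the else-branch at $e$ (any $u$ there with $\+B(\upd_u(t))\neq\perp$ would have $r_{\upd_u(t)}=0$ and hence $X_{t-1}(u)=0$); and the if-branch then assigns $\sigma_\Lambda\equiv 1$ on $e\setminus\{v\}$, triggering the early return, which contradicts the case assumption.

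The main conceptual point, and the only step that is not routine, is the consistency between $\sigma_\Lambda$ and $X_{t-1}$ on vertices populated through the lower-bound oracle $\+B$; it relies crucially on $\mu^{\-{LB}}$ being supported on $\{0,\perp\}$, so that a non-$\perp$ sample at the last update of $u$ pins $X_{t-1}(u)=0$, which matches the value assigned by the algorithm. Everything else follows the structure of the algorithm and the two-valued nature of the target marginal.
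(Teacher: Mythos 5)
Your proof is correct and follows essentially the same route as the paper's: the crucial observation that $\mu^{\-{LB}}$ is supported on $\{0,\perp\}$, so a non-$\perp$ answer from $\+B$ pins $X_{t-1}(u)=0$; the case split on whether the algorithm returns early or at the end; and the verification that in each case the partial configuration $\sigma_\Lambda$ forces (or frees) $v$ exactly when $X_{t-1}(V\setminus\{v\})$ does. Your explicit two-valued characterisation of the marginal and the contrapositive phrasing in the final-return case are cosmetic repackagings of the paper's argument, not a different method.
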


\begin{proof}

Let $X_{t-1}$ and $(r_{j})_{-T < j < t}$ be as defined in $\mathcal{P}(T)$.
Then
\begin{align}\label{eq-proof-find}
	\forall u \neq v, \qquad \+B(\upd_u(t)) \neq \perp \quad \Longrightarrow \quad  X_{t-1}(u) = \+B(\upd_u(t)) = 0,
\end{align}
where the first equality is due to the construction of $X_{t-1}$ as in $\mathcal{P}(T)$, and the second one is due to the definition of the lower bound oracle $\mathcal{B}$ and $\mu^{\-{LB}}(1)=0$. 

It is straightforward to see that \Cref{Alg:config-h-indset} is guaranteed to terminate. We then show that the output $\sigma_{\Lambda}$ of \Cref{Alg:config-h-indset} satisfies the conditional independence property in \Cref{cond:invariant-boundary}. 
Evidently $\sigma_{\Lambda} = X_{t-1}(\Lambda)$: 
if for some $u$ it holds that $\+B(\upd_u(t)) \neq \perp$ then it must be assigned $0$ in $\sigma_\Lambda$ in \Cref{Line-indset-assign-2} in \Cref{Alg:config-h-indset}, which is consistent with $X_{t-1}(u)$ because of~\eqref{eq-proof-find}; 
otherwise, its value is assigned using the configuration oracle $\+C(u)$ in \Cref{Line-indset-assign} in consistency with $X_{t-1}(u)$ by the definition of $\+C$. 
Then verify $\mu_{v}^{\sigma_\Lambda}=\mu_v^{X_{t-1}(V\setminus\{v\})}$ as below. 
\begin{itemize}
\item Suppose $\sigma_{\Lambda}$ is returned in \Cref{Line-indset-direct-return}. 
There exists $e \in \+E$ with $v \in e$ such that $e \setminus \{v\} \subseteq \Lambda$ and $X_{t-1}(u) = 1$ for all $u \in e \setminus \{v\}$. 
Conditional on such $\sigma_{\Lambda}$, $v$ must take the value 0, the same as conditioning on $X_{t-1}(V\backslash\{v\})$. 
\item Suppose $\sigma_{\Lambda}$ is returned in \Cref{Line-indset-final-return}. 
This implies that the condition in \Cref{Line-indset-check} has never been satisfied. 
Hence, for any $e \in \+E$ with $v \in e$, we either update $\sigma_{\Lambda}$ in \Cref{Line-indset-assign} or in \Cref{Line-indset-assign-2}. 
Fix an $e \in \+E$ with $v \in e$. 
If $\sigma_{\Lambda}$ is updated in \Cref{Line-indset-assign}, then there is a vertex $u \in e \setminus \{v\}$ such that $\sigma_{\Lambda}(u) = X_{t-1}(u) = \+C(u) = 0$. 
Otherwise, $\sigma_{\Lambda}$ is updated in \Cref{Line-indset-assign-2}. 
Then there is a vertex $u \in e \setminus \{v\}$ such that $\+{B}(\upd_u(t))\neq \perp$, and by~\eqref{eq-proof-find}, it must take $\sigma_{\Lambda}(u) = X_{t-1}(u) = 0$. 
In all, the constraints on all hyperedges incident to $v$, conditioned on either $\sigma_{\Lambda}$ or $X_{t-1}(V\setminus\{v\})$, are all satisfied, and thus $v$ take the value $\{0,1\}$ uniformly at random.
\end{itemize}
In both cases $\mu_{v}^{\sigma_\Lambda}=\mu_v^{X_{t-1}(V\setminus\{v\})}$. 
\end{proof}
The above argument also indicates the following observation: 
\begin{observation}
If $\sigma_{\Lambda}$ is returned in \Cref{Line-indset-direct-return}, then the padding distribution $\mu_v^{\-{pad},\sigma_{\Lambda}}$ takes value $0$ with probability $1$; if $\sigma_{\Lambda}$ is returned in \Cref{Line-indset-assign-2}, then the padding distribution $\mu_v^{\-{pad},\sigma_{\Lambda}}$ takes value $1$ with probability~$1$.
\end{observation}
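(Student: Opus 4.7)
The plan is to derive this observation directly from the characterisation of $\mu_v^{\sigma_\Lambda}$ already established in the proof of \Cref{lemma:config-h-indset-cor}, combined with the explicit form of the padding distribution $\mu_v^{\-{pad},\sigma_\Lambda}$ recorded at the start of \Cref{sec-ind-scan}. Since $\mu^{\-{LB}}(0)=1/2$ and $\mu^{\-{LB}}(1)=0$, the padding formulas specialise to
\[
\mu_v^{\-{pad},\sigma_\Lambda}(0) = 2\mu_v^{\sigma_\Lambda}(0)-1, \qquad \mu_v^{\-{pad},\sigma_\Lambda}(1)=2\mu_v^{\sigma_\Lambda}(1),
\]
so the whole task reduces to pinning down $\mu_v^{\sigma_\Lambda}(0)$ and $\mu_v^{\sigma_\Lambda}(1)$ in each of the two cases.

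For the first case, I would argue that if $\sigma_\Lambda$ is returned at \Cref{Line-indset-direct-return}, then by the guard of the enclosing \textbf{if} statement there is some hyperedge $e \ni v$ with $\sigma_\Lambda(u)=1$ for every $u \in e\setminus\{v\}$. Any independent-set extension of $\sigma_\Lambda$ must satisfy $e$, which forces $v\mapsto 0$. Hence $\mu_v^{\sigma_\Lambda}(0)=1$, and plugging into the padding formula gives $\mu_v^{\-{pad},\sigma_\Lambda}(0)=1$.

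For the second case (the other return point, \Cref{Line-indset-final-return}), I would invoke what is already shown in the lemma's proof: whenever the algorithm reaches the final return, every hyperedge $e\ni v$ has already been witnessed as satisfied through some vertex $u\in e\setminus\{v\}$ with $\sigma_\Lambda(u)=0$. Consequently no constraint on $v$ is active under $\sigma_\Lambda$, and the marginal $\mu_v^{\sigma_\Lambda}$ is uniform on $\{0,1\}$. Substituting $\mu_v^{\sigma_\Lambda}(1)=1/2$ into the padding formula yields $\mu_v^{\-{pad},\sigma_\Lambda}(1)=1$.

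The only subtle point is making sure one does not need to reprove anything about $\sigma_\Lambda$: the key structural claims, namely which hyperedges are ``witnessed satisfied'' and which vertex of each is pinned to $0$ or $1$, are exactly the invariants already extracted when verifying \Cref{cond:invariant-boundary}. So no new combinatorial work is required, and I expect the whole proof to be a few lines of bookkeeping. The only thing to be slightly careful about is distinguishing the two return points and matching them consistently with the algorithm's labels, but no genuine obstacle is anticipated.
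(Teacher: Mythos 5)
Your proof is correct and follows essentially the same route as the paper, which derives the observation directly from the case analysis in the proof of Lemma~\ref{lemma:config-h-indset-cor} (forced value $0$ when some incident edge has all other vertices at $1$; uniform marginal when every incident edge is already witnessed satisfied) combined with the explicit padding formulas $\mu_v^{\-{pad},\sigma_\Lambda}(0)=2\mu_v^{\sigma_\Lambda}(0)-1$ and $\mu_v^{\-{pad},\sigma_\Lambda}(1)=2\mu_v^{\sigma_\Lambda}(1)$. Your reading of the second case as the final-return branch (Line~\ref{Line-indset-final-return}) is the intended one, and no further work is needed.
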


By adapting the subroutine $\config(t)$ (\Cref{Alg:config-h-indset}) to $\appresolve$ (\Cref{Alg:appresolve}) and setting the threshold $K$ therein to $3\Delta^2 k^4 \lceil\log{\frac{1}{\gamma}}\rceil$,
we get an algorithm that approximately samples from the marginal distributions of $\mu$, 
where $\mu$ is uniform over all independent sets in $k$-uniform hypergraphs with the maximum degree at most $\Delta$. 
Clearing up this algorithm a bit gives \Cref{Alg:marginal-indset-sampler-app}. 

The algorithm maintains two global data structures $M$ and $R$ that might be updated as the algorithm proceeds. 
Upon the first invocation of this algorithm, 
these two global variables are initialised as $M=M_0=\perp^{\mathbb{Z}}$ and $R=R_0=\emptyset$. 
For simplicity of notations, 
we may drop both $M$ and $R$ from the list of parameters so long as the references to these two global structures are clear from the context. 
We may also drop the hypergraph $H$ from the list of parameters as it does not change throughout the process. 
Recall that the subroutine \basicsample{} in the algorithm is given in \Cref{Alg:randombit} with the lower bound distribution $\mu^{\-{LB}}$ given by $\mu^{\-{LB}}(0) = \mu^{\-{LB}}(\perp) = \frac{1}{2}$. 

\begin{algorithm}
\caption{$\appmghinds(t, \gamma;M,R)$} \label{Alg:marginal-indset-sampler-app}

  \SetKwInput{KwPar}{Parameter}
  \SetKwInput{KwData}{Global variables}
 \KwIn{a hypergraph $H = (V,\+E)$, an integer $t \leq 0$ and a real number $0 < \gamma < 1$;}
 \KwData{a map $M: \mathbb{Z}\to [q]\cup\{\perp\}$ and a set $R$; }

\KwOut{a random value in $\{0,1\}$}
\SetKwIF{Try}{Catch}{Exception}{try}{:}{catch}{exception}{}
\uTry{}{
\textbf{if} $M(t) \neq \perp$ \textbf{then} \Return $M(t)$\;\label{Line-indset-direct-return-1}
\textbf{if} $\basicsample{}(t;R) = 0$ \textbf{then} $M(t) \gets 0$ and \Return $0$\;\label{Line-indset-direct-return-2}
$v \gets v_{i(t)}$\label{Line-indset-setv-1}\;
\ForAll{$e\in \+E$ s.t. $v \in e$\label{Line-indset-incident-1}}{ 
    \If{for all $u \in e \setminus \{v\}$, ${\basicsample}(\upd_u(t);R)=\perp$\label{Line-indset-satisfy-1}}{
        \If{for all $u \in e \setminus \{v\}$, $\appmghinds(\upd_u(t),\gamma;M,R)=1$\label{Line-indset-recurse-1}}{
            $M(t) \gets 0$ and \Return $0$\; 
        }
    }
}
 $M(t) \gets 1$ and \Return $1$\; \label{Line-mt-final-return}
}
\Catch{$|R| \geq 3\Delta^2 k^4 \lceil\log{\frac{1}{\gamma}}\rceil$\label{Line-indset-catch}}{
    \Return $0$\;\label{Line-indset-force-return-1}
}
\end{algorithm}

%


The following lemma bounds the running time of  \Cref{Alg:marginal-indset-sampler-app}.
\begin{lemma}\label{lem-alg-h-2-time}
The running time of $\appmghinds(t,\gamma;M_0,R_0)$ (\Cref{Alg:marginal-indset-sampler-app}) is $O(\Delta^3 k^5 \log \frac{1}{\gamma})$.	
\end{lemma}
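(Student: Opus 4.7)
The plan is to bound the running time by (i) showing that each \emph{fresh} invocation of $\appmghinds$ (one entered with $M(t)=\perp$, so that the memoisation shortcut at \Cref{Line-indset-direct-return-1} does not fire) performs $O(\Delta k)$ direct work, and (ii) bounding the total number of fresh invocations by the threshold $K \defeq 3\Delta^2 k^4 \lceil\log(1/\gamma)\rceil$ enforced by the exception at \Cref{Line-indset-catch}. Non-fresh invocations return in $O(1)$ from the memoisation check, so their cost can be absorbed into their caller's work.

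Step (i) is routine. A fresh invocation performs one $\basicsample(t;R)$ query at \Cref{Line-indset-direct-return-2}, followed (if that query returned $\perp$) by a loop over the at most $\Delta$ hyperedges incident to $v_{i(t)}$; each such hyperedge contributes at most $k-1$ further $\basicsample$ lookups plus at most $k-1$ recursive launches of $\appmghinds$. Treating each $\basicsample$ query as $O(1)$ via a hash table on $R$ and setting aside the cost inside the recursive calls, the direct per-invocation cost is $O(\Delta k)$.

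The heart of the proof is step (ii), where I plan to exhibit an injection from non-root fresh invocations into distinct entries of $R$. Two ingredients are needed. First, distinct fresh invocations carry distinct time arguments: every recursive launch at \Cref{Line-indset-recurse-1} uses a strictly smaller time $\upd_u(t) < t$, so a fresh invocation cannot trigger itself recursively, and once a fresh invocation completes it writes a non-$\perp$ value into $M(t)$ so that any later call on the same $t$ is caught by the memoisation shortcut. Second, every non-root fresh invocation $\appmghinds(t',\cdot)$ is launched inside its parent's hyperedge loop only after the parent evaluated $\basicsample(\upd_u(t); R) = \perp$ for $\upd_u(t)=t'$, which forces $(t',\perp)\in R$ at the moment the child call enters. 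Thus distinct non-root fresh invocations inject into distinct entries of $R$, and since $|R| \le K$ before the exception fires, the total number of fresh invocations is at most $K+1$.

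Putting (i) and (ii) together gives total running time $O((K+1)\Delta k) = O(\Delta^3 k^5 \log(1/\gamma))$, as claimed. The main subtlety I expect is establishing the second ingredient in step (ii) cleanly, namely carefully tracking which $\perp$-entry in $R$ is charged to which fresh invocation and verifying that the asymmetric treatment of the root invocation (whose initial $\basicsample(t;R)$ may insert a brand-new $\perp$ entry rather than reuse a pre-existing one) contributes only the innocuous ``$+1$''. Once this bookkeeping is in place, the bound is immediate from the basicsample-first-then-recurse structure of the algorithm.
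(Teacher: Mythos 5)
Your proof is correct and follows essentially the same route as the paper's: both bound the per-call work (with recursive calls costed at $O(1)$) by $T^*=O(\Delta k)$ and bound the number of distinct (fresh) invocations by $|R|+1\le K+1$ via the observation that each executed timestamp forces a corresponding entry into $R$. The only cosmetic difference is that you charge a fresh call at time $t'$ to the parent's $\basicsample(t';R)$ query, whereas the paper charges it to the call's own $\basicsample$ at \Cref{Line-indset-direct-return-2}; both yield the same entry of $R$ and the same count.
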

\begin{proof}
Consider the recursion tree generated by executing $\appmghinds(t,\gamma;M_0,R_0)$. 
Let $S$ be the set of all $t_0$ such that $\appmghinds(t_0,\gamma;M,R)$ is executed at least once. 
For any such $t_0 \in S$, one of the following two events must happen: 
(1) the whole algorithm terminates in \Cref{Line-indset-force-return-1} upon attempting to call $\basicsample$ with time $t_0$; 
(2) $(t_0,r_{t_0}) \in R$. 
Moreover, only one $t^* \in S$ triggers the first case, and hence $|S| \leq |R| + 1$. 
For any $t_0 \in S$, any execution of $\appmghinds(t_0,\gamma;M,R)$ beyond the first run returns on \Cref{Line-indset-direct-return-1}. 
Therefore, the total running time is bounded by $O(T^*|S|)$ where $T^*$ is the run time bound if we assume the recursive calls return in $O(1)$ time.
It is easy to see that $T^*=O(\Delta k)$.
Thus, the total running time is at most
\begin{align*}
	O(\Delta k |S|)  = O(\Delta k (|R|+1)) = O\tp{\Delta^3 k^5 \log \frac{1}{\gamma}}. &\qedhere
\end{align*}
\end{proof}

The correctness of \Cref{Alg:marginal-indset-sampler-app} is due to the following lemma. 
Its proof is given in \Cref{sec:proof-error-ind}. 
\begin{lemma}\label{lemma:indset-correct}
Let $0 < \gamma < 1$.
Let $H=(V,\+E)$ be a $k$-uniform hypergraph with maximum degree at most $\Delta$ such that $2^{\frac{k}{2}}\geq \sqrt{8\mathrm{e}}k^2\Delta$, 
and $\mu$ be the uniform distribution over all the independent sets in $H$. 
Then for any vertex $v \in V$, 
the output distribution $Y$ of $\appmghinds(\upd_v(0),\gamma)$ satisfies $\DTV{Y}{\mu_v} \leq \gamma$. 
\end{lemma}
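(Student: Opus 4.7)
The plan is to apply \Cref{thm:dtv-truncate}, which gives $\DTV{Y}{\mu_v}=\Pr{\+E_{\-{trun}}(K)}$ where $\+E_{\-{trun}}(K)$ is the event that $|R|$ exceeds the threshold $K=3\Delta^2 k^4 \lceil \log(1/\gamma)\rceil$ during the execution of $\appmghinds$. The goal therefore reduces to showing $\Pr{\+E_{\-{trun}}(K)}\le \gamma$, i.e.\ the recursion draws at most $K$ random bits except with probability $\gamma$.

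I would analyse the execution via a \emph{witness tree} $\+T$ rooted at the initial call. Each node of $\+T$ is a pair $(u,s)$ at which $\basicsample(s)=\perp$; otherwise the call returns immediately after consuming a single random bit and is a leaf. A non-leaf node's children correspond to the recursive invocations at times $\upd_w(s)$ for $w\in e\setminus\{u\}$, grouped by the ``firing'' hyperedges $e\ni u$ for which all $k-1$ neighbour $\basicsample$-queries returned $\perp$. Two accounting facts drive the analysis. First, each node of $\+T$ consumes at most $1+\Delta(k-1)\le 2\Delta k$ fresh random bits via $\basicsample$ (one for its own time, plus $k-1$ per incident hyperedge), so $|R|\ge K$ forces $|\+T|\ge L$ with $L=\Theta(\Delta k^3 \log(1/\gamma))$. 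Second, a hyperedge firing at a node $(u,s)$ requires $k-1$ independent Bernoulli$(1/2)$ outcomes to equal $\perp$; because the systematic scan gives every vertex its own strictly monotone schedule of update times, the events attached to firings at distinct nodes of $\+T$ concern pairwise disjoint $\basicsample$ time-slots, hence are independent.

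Next I would union-bound over abstract labelled witness trees of size $L$. A tree of size $L$ contains $(L-1)/(k-1)$ firings and so enforces at least $L-1$ independent events of probability $1/2$, contributing a factor $2^{-(L-1)}$. The number of abstract witness trees of size $L$, where each node can fire any subset of its $\Delta$ incident hyperedges and each firing contributes $k-1$ ordered children, is bounded by $(C\Delta k)^L$ for an absolute constant $C$, via a standard Catalan-style counting for plane trees with branching bounded by $\Delta(k-1)$. Multiplying, $\Pr{|\+T|\ge L}\le \bigl(2C\Delta k\cdot 2^{-(k-1)/k}\bigr)^L$, and invoking the hypothesis $\Delta\le 2^{k/2}/(\sqrt{8\mathrm{e}}\,k^2)$ one verifies that each unit of $L$ contributes a factor well below $1/\mathrm{e}$, so the chosen $L=\Theta(\Delta k^3\log(1/\gamma))$ suppresses the probability below~$\gamma$.

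The main obstacle is organising the witness tree so that the firings attached to distinct nodes are genuinely independent, despite the possibility that different descent paths revisit the same vertex and hence in principle access overlapping $\basicsample$ time-slots. Controlling these collisions---either by pruning $\+T$ to a subtree in which vertex labels are distinct along each root-to-leaf path, or by re-weighting the counting bound via the neighbourhood structure of the $(k-1)\Delta$-bounded ``line-like'' graph of the hypergraph---is the most delicate step, and is what ultimately dictates the specific polynomial factors $\Delta^2 k^4$ in the threshold $K$. A secondary concern is calibrating the tree-counting and per-firing probability bounds carefully enough to fit the stated condition $\Delta\le 2^{k/2}/(\sqrt{8\mathrm{e}}\,k^2)$ with an explicit constant; this is a careful Local-Lemma-style estimate along the lines of the information-percolation arguments underlying \cite{HSZ19,qiu2022perfect}.
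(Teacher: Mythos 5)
Your reduction to bounding $\Pr{\+E_{\-{trun}}(K)}$ via \Cref{thm:dtv-truncate} is exactly how the paper proceeds, and your first accounting fact (each recursion node consumes $O(\Delta k)$ fresh draws, so $|R|\ge K$ forces a large execution structure) matches \Cref{lemma:indset-randomness-bound}. The gap is in the second half. Your independence claim --- that firings at distinct nodes of the recursion tree concern pairwise disjoint $\basicsample$ time-slots --- is false, and you concede as much two paragraphs later: two hyperedges sharing a vertex $w$, processed at times $t_0,t_1$ with $\upd_w(t_0)=\upd_w(t_1)$, query the very same slot. Neither of your proposed repairs is carried out, and the first (distinct vertex labels along each root-to-leaf path) would not suffice anyway, since the product bound needs disjointness \emph{across the whole collection} of firings, not merely along paths. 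Worse, even granting independence, your own arithmetic cannot close: you extract only about one saved bit per tree node (a factor $2^{-(L-1)}$ from a size-$L$ tree) against a tree-counting factor of $(C\Delta k)^L$, so the product is $(C\Delta k/2)^{\Theta(L)}$, which does not decay for any $\Delta,k\ge 2$. The per-unit savings must be $2^{-\Theta(k)}$, not $2^{-1}$, to beat a per-unit entropy of $\mathrm{poly}(\Delta k)$ under the hypothesis $2^{k/2}\ge\sqrt{8\mathrm{e}}k^2\Delta$.

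The paper's resolution (\Cref{lemma:indset-efficiency}) is to change the unit of accounting from recursion calls to hyperedge--timestamp pairs $\ts(e,t)$, which form the vertices of a witness graph $G_H$ of maximum degree $O(\Delta k^2)$ (\Cref{definition:witness-graph-indset}, \Cref{lemma:wg-degree-bound}). The set $V_{\+B}$ of pairs that trigger recursions is connected in $G_H$ and anchored at a vertex containing $t^*$ (\Cref{lemma:indset-connected}); from any large connected set one extracts a $2$-tree --- pairwise non-adjacent in $G_H$, hence with pairwise \emph{disjoint} timestamp sets --- of size $|V_{\+B}|/O(\Delta k^2)$ (\Cref{lemma:big-2-tree}). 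Each retained $2$-tree vertex forces all $k$ of its timestamps to return $\perp$ (\Cref{lemma:indset-all-1}), giving genuine independence and a factor $2^{-k}$ per unit, while the number of $2$-trees of size $\eta$ through a fixed vertex is at most $(\mathrm{e}D^2)^{\eta-1}$ with $D=O(\Delta k^2)$ (\Cref{lemma:2-tree-number-bound}). The balance $(4\mathrm{e}\Delta^2k^4/2^k)^\eta\le 2^{-\eta}$ is precisely the stated condition. So your skeleton is right, but the step you flag as ``the most delicate'' is the entire content of the lemma, and the specific device needed --- disjointness via an independent-set (2-tree) extraction in a timestamp-level witness graph, trading a polynomial shrinkage in size for an exponential-in-$k$ gain per unit --- is missing from the proposal.
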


Assuming the lemma above for now, we finish off \Cref{lem-sampling-ind}. 
\begin{proof}[Proof of \Cref{lem-sampling-ind}]
By \Cref{lemma:indset-correct}, for any $v\in V$, the algorithm $\appmghinds(\upd_v(0),\gamma)$ generates a distribution that is $\gamma$-close to $\mu_v$. 
Its running time is given in \Cref{lem-alg-h-2-time}. 
As each entry in $R$ must have been fetched from the random oracle exactly once, the number of calls to the random oracle equals $|R|$, which is bounded from above by $3\Delta^2 k^4 \lceil\log{\frac{1}{\gamma}}\rceil$. 
The random variables returned by the random oracle lie in $\{0,\perp\}$ because $\mu^{\-{LB}}(1)=0$. 
%
\end{proof}

\subsection{Analysis of the truncation error}\label{sec:proof-error-ind}
Unless stated otherwise, the hypergraph $H=(V,\+E)$ is fixed and satisfies the same condition as in \Cref{lemma:indset-correct} throughout this section. 
Let $t^*\leq 0$ be the timestamp of the initial call of $\appmghinds$. 

For any $\gamma>0$, consider the event that $\appmghinds(t^*,\gamma)$ terminates by \Cref{Line-indset-force-return-1}. 
Alternatively, it is equivalent to that the size of $R$ reaches $3\Delta^2 k^4 \lceil\log{\frac{1}{\gamma}}\rceil$ midst the execution of the algorithm (even if we assume the algorithm did not truncate). 
By~\Cref{thm:dtv-truncate}, it suffices to bound the probability of this event in order to bound the bias of the output distribution from the uniform distribution. 
This is given by the following lemma. 

\begin{lemma}\label{lemma:indset-efficiency}
Let $\gamma>0$ be a real, and $H=(V,\+E)$ be a $k$-uniform hypergraph with maximum degree at most $\Delta$ such that $2^{\frac{k}{2}}\geq \sqrt{8\mathrm{e}}k^2\Delta$. 
Let $\eta=\lceil\log{\frac{1}{\gamma}}\rceil$. 
Upon the termination of $\appmghinds{}(t^*,\gamma)$, the size of $R$ satisifies
    \begin{align}
        \Pr{\abs{R}\geq 3\Delta^2 k^4 \cdot \eta}\leq 2^{-\eta}.
    \end{align}
\end{lemma}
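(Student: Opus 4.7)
The plan is an information-percolation / witness-tree argument. The idea is to reduce the bound on $|R|$ to a tail bound on the size of a suitably chosen tree of nontrivial recursive invocations of $\appmghinds$, then apply a union bound over all possible tree topologies.

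First, I would define $\+T$ to be the tree of invocations of $\appmghinds(t;\cdot)$ that pass the cache check $M(t)\neq\perp$ and reach the $\basicsample(t;R)$ step; the root is the initial call. Each such invocation contributes at most $1+\Delta(k-1)\leq 2\Delta k$ new entries to $R$: one from its own $\basicsample(t)$ and at most $k-1$ per incident edge checked inside $\config(t)$. Consequently, $|R|\geq 3\Delta^2 k^4\eta$ forces $|\+T|\geq N_0$ for some $N_0=\Theta(\Delta k^3 \eta)$, so it suffices to bound $\Pr{|\+T|\geq N_0}$.

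Next, I would analyse the branching mechanism of $\+T$. A non-root node $(u,t_u)\in\+T$ arises only when its parent $(v,t_v)$ is \emph{internal}, meaning $r_{t_v}=\perp$ and some incident edge $e\ni v$ is \emph{triggering}: $r_{\upd_{u'}(t_v)}=\perp$ for every $u'\in e\setminus\{v\}$. Each triggering edge produces exactly $k-1$ children, and the $k-1$ random bits witnessing its triggering coincide with the first bits of those $k-1$ children. For a fixed tree topology with $N$ nodes and thus $B=(N-1)/(k-1)$ triggering edges, the probability that $\+T$ contains this subtree is at most $(1/2)^{N-1}$, assuming the bits are distinct (which is the typical situation since $\upd_{u'}(t)$ ranges over distinct timestamps under the systematic scan). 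I would then enumerate tree topologies by specifying, at each internal node, which of its $\leq\Delta$ incident edges trigger; a Moser--Tardos-style Cayley counting gives at most $(\mathrm{e}\Delta(k-1))^B$ such topologies. Combining these,
\[
  \Pr{|\+T|\geq N_0}\leq\sum_{B\geq B_0}\tp{\frac{\mathrm{e}\Delta(k-1)}{2^{k-1}}}^{B}
\]
with $B_0=\Theta(\Delta k^2\eta)$. Under the hypothesis $2^{k/2}\geq\sqrt{8\mathrm{e}}k^2\Delta$, the base simplifies to $O\tp{1/(k\cdot 2^{k/2})}$, so the geometric sum is at most $2^{-\eta}$ for $N_0$ with appropriate constants.

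The main obstacle is making the witness-tree counting tight enough to match the hypothesis. A coarse bound based on the per-node branching factor $\Delta(k-1)$ only yields the threshold $\Delta\lesssim 2^k/k$, which is exponentially weaker than the required $\Delta\lesssim 2^{k/2}/k^2$. The improvement comes from charging the counting cost per \emph{triggering edge} rather than per node, amortizing the factor $\mathrm{e}\Delta(k-1)$ over the $k-1$ children it produces; this is precisely the Moser--Tardos witness-tree paradigm. A secondary subtlety is that in non-linear hypergraphs, two triggering edges at the same parent may share vertices (and thus share some of the required $\perp$ bits), reducing the effective number of independent events. This is typically handled by restricting attention to a canonical witness skeleton with one triggering edge per internal node---absorbing multi-triggering into deeper recursion---or by bounding the potential overlap combinatorially; both are standard techniques in the local-lemma and information-percolation literature.
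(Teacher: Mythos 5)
Your high-level plan (reduce $\abs{R}$ to the size of a connected witness structure rooted at $t^*$, then union-bound over structures using the $\perp$-events) is the right one and matches the paper's strategy in spirit. But the central probabilistic step is wrong, and the issue you dismiss as a ``secondary subtlety'' is in fact the main difficulty that the paper's machinery exists to handle.

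The gap is the claim that a fixed recursion-tree topology with $N$ nodes is realised with probability at most $(1/2)^{N-1}$. The $k-1$ witnessing bits of a triggering edge at a node $(v,t_v)$ are $r_{\upd_{u}(t_v)}$ for $u\in e\setminus\{v\}$, and two triggering edges — at the same parent or at nearby parents — can share vertices and hence share timestamps (in a non-linear hypergraph two hyperedges can overlap in up to $k-1$ vertices). When bits coincide, the probability of the conjunction is $(1/2)^{\#\text{distinct bits}}$, which can be vastly larger than the product $(1/2)^{N-1}$; in the worst case a single set of roughly $k$ bits witnesses many triggering edges at once. Your proposed fixes (a canonical one-edge-per-node skeleton, or ``bounding the overlap combinatorially'') are not arguments. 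There is also a concrete sanity check you missed: your bound, if correct, would make the truncation probability small whenever $\mathrm{e}\Delta(k-1)/2^{k-1}<1/2$, i.e.\ up to $\Delta\lesssim 2^k/k$. Combined with \Cref{thm:dtv-truncate} this would give an FPTAS well past the $\NP$-hardness threshold $\Delta\ge 5\cdot 2^{k/2}$ of \cite{BGGGS19} — so the independence assumption cannot be salvageable as stated. (You describe $2^k/k$ as ``exponentially weaker'' than $2^{k/2}/k^2$; it is the opposite, and that inversion is what hid the contradiction.)

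The paper restores independence differently: it builds a witness graph whose vertices are the timestamp sets $\ts(e,t)$, shows the set $V_{\+B}$ of triggering timestamp sets is connected and that every timestamp in every $x\in V_{\+B}$ carries $r_{t'}=\perp$ (\Cref{lemma:indset-connected}, \Cref{lemma:indset-all-1}), and then extracts from $V_{\+B}$ a \emph{$2$-tree} — an independent set of size $\lfloor\abs{V_{\+B}}/(D+1)\rfloor$ connected in the square graph (\Cref{lemma:big-2-tree}). Independence in the witness graph forces the timestamp sets to be pairwise disjoint, so the $k$-fold $\perp$-events become genuinely independent, each of probability $2^{-k}$. The price is the factor $D+1\approx\Delta k^2$ lost in the size and the $2$-tree count $(\mathrm{e}D^2)^{\eta-1}$ of \Cref{lemma:2-tree-number-bound}, which is exactly what produces the condition $2^k\ge 8\mathrm{e}\Delta^2k^4$, i.e.\ the $2^{k/2}/k^2$ threshold in the lemma. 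Without some device of this kind (or the $2$-block-tree refinement used for linear hypergraphs), the union bound does not close.
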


\Cref{lemma:indset-correct} follows directly from the above lemma, \Cref{lemma:config-h-indset-cor} and \Cref{thm:dtv-truncate}.



We now prove \Cref{lemma:indset-efficiency}.
For any $e\in \+E$ and $t\in \mathbb{Z}_{\leq 0}$, let
\begin{align}\label{eq:timestamp}
		\ts{}(e,t)\defeq\{\upd_v(t)\mid v\in e\}
\end{align}
be the set of timestamps of the latest consecutive updates of vertices in $e$ up to time $t$.
Given a hypergraph $H=(V,\+E)$, we introduce the following definition of the witness graph $G_H=(V_H, E_H)$. 

\begin{definition}[witness graph]\label{definition:witness-graph-indset}
For a hypergraph $H=(V,\+E)$, the \emph{witness graph} $G_H=(V_H,E_H)$ is an infinite graph with a vertex set 
\[
V_H=\{\ts{}(e,t)\mid e\in \+E,t\in \mathbb{Z}_{\leq 0}\},
\] 
and there is an undirected edge between $x,y \in V_H$ in $G_H$ if and only if $x\neq y$ and $x\cap y\neq\emptyset$.
\end{definition}
We remark that the vertex set $V_H$ is not a multiset, i.e., it does not allow multiple instances of the same element. 
For example, it is possible that for some $t\neq t'$, $\ts{}(e,t)=\ts{}(e,t')$. 
In this case, the two correspond to the same vertex even if $t\neq t'$. 
In addition, the witness graph $G_H$ is decided only by the hypergraph $H$ as $\upd_x(t)$ is a deterministic function. 
For any vertex $x \in V_H$ in the witness graph,
let $e_H(x) \in \+E$ be the edge in the original hypergraph that corresponds to $x$. 

We first bound the degree of the witness graph.   
Define the following neighbourhoods of $x$ in the witness graph $G_H$: 
\begin{align}\label{eq-def-ng}
\begin{split}
	N_{\-{self}}(x) &\defeq \{y \in V_H \mid \{x,y\} \in E_H   \text{ and } e_H(x) =  e_H(y)\},\\
	N_{\-{out}}(x) &\defeq \{y \in V_H \mid \{x,y\} \in E_H  \text{ and } e_H(x) \neq  e_H(y)\},\\
	N(x) &\defeq N_{\-{self}}(x) \uplus N_{\-{out}}(x).
	\end{split}
\end{align}
In other words, we partition the neighbourhood $N(x)$ of $x$ in the witness graph $G_H$ into two parts, one containing those corresponding to the same edge in the original hypergraph, and the other one collecting the rest. 
\begin{lemma}\label{lemma:wg-degree-bound}
For any vertex $x \in V_H$, it holds that $|N_{\-{self}}(x)| \leq 2k-2$ and $|N_{\-{out}}(x)| \leq (2k-1)k(\Delta - 1)$, and therefore, the maximum degree of $G_H$ is bounded by $2\Delta k^2 -2$.
\end{lemma}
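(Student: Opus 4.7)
The plan is to bound $|N_{\-{self}}(x)|$ and $|N_{\-{out}}(x)|$ separately and then sum. Write $e_H(x) = e = \{u_1,\dots,u_k\}$ and $s_i = \upd_{u_i}(t)$, so $x = \ts(e,t) = \{s_1,\dots,s_k\}$. Under systematic scan, distinct vertices have distinct update timestamps (they are congruent to different residues mod $n$), so $|x| = k$; I may assume WLOG $s_1 < s_2 < \cdots < s_k$, and note that all $s_i$ lie in the window $(t-n,\,t]$.

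For $N_{\-{self}}(x)$: for any $y = \ts(e, t')$, its $i$-th coordinate is of the form $s_i + c_i n$, where $c_i \in \mathbb{Z}$ is determined by $t'$. Since $\upd_{u_i}(t')$ is a non-decreasing step function of $t'$, all $c_i$'s change monotonically in the same direction as $t'$ moves away from $t$. A coordinate of $y$ equals some $s_j \in x$ iff $c_i = 0$ and $i = j$ (because the $s_j$'s are distinct and each timestamp identifies a unique vertex), so $y \cap x \neq \emptyset$ iff at least one $c_i = 0$. For $t' > t$, the increments $c_i\colon 0 \to 1$ are triggered in the order $i = 1, 2, \dots, k$ (dictated by $s_1 < \cdots < s_k$), so the reachable patterns $(c_1, \dots, c_k)$ form the prefix chain $(0,\dots,0), (1,0,\dots,0), \dots, (1,\dots,1,0), (1,\dots,1)$. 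Discarding the first (which equals $x$) and the last (no $c_i=0$) leaves $k-1$ admissible $y$'s. By a symmetric argument for $t' < t$, I get another $k-1$, giving $|N_{\-{self}}(x)| \leq 2k-2$.

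For $N_{\-{out}}(x)$: let $y = \ts(e', t')$ with $e' \neq e$ share some $s_i \in x$. Since systematic scan makes each timestamp identify a unique vertex, $s_i \in y$ forces $u_i \in e'$ and $t' \in I_i \defeq [s_i, s_i + n - 1]$. I therefore assign each such $y$ to a pair $(u_i, e')$ with $u_i \in e \cap e'$ and $e' \neq e$; there are at most $\sum_{i=1}^k (\deg(u_i) - 1) \leq k(\Delta - 1)$ such pairs. For a fixed pair, I count distinct values of $\ts(e', t')$ as $t'$ ranges over $I_i$: the set can only change when $t'$ crosses an update time of some vertex of $e'$, and within the length-$n$ window $I_i$ each of the $k$ vertices of $e'$ is updated exactly once, giving at most $k$ (hence certainly at most $2k-1$) distinct sets. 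Multiplying yields $|N_{\-{out}}(x)| \leq (2k-1)k(\Delta-1)$.

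Summing and simplifying gives $|N(x)| \leq (2k-2) + (2k-1)k(\Delta-1) \leq 2\Delta k^2 - 2$, the final inequality being routine algebra valid for $k,\Delta \geq 2$. The subtlest step is the structural characterisation used in the $N_{\-{self}}$ bound: one must argue that every reachable shift vector $(c_1, \dots, c_k)$ really is a prefix pattern under the order dictated by the $s_i$'s, rather than an arbitrary $\{0,1\}$-vector. The $N_{\-{out}}$ bound reduces to interval counting once the unique-identifier property of timestamps in the systematic scan is exploited.
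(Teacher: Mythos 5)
Your proof is correct. It reaches the same two bounds as the paper but organises the argument differently. The paper proves a single claim — for any $e'$ with $e\cap e'\neq\emptyset$, the number of distinct translates $\ts(e',t')$ meeting $x=\ts(e,t)$ is at most $2k-1$ — by locating $t'_1$ in the window $(t_{j_{\min}}-n,\,t_{j_{\max}}]$ and counting update times of $e'$-vertices there; it then specialises to $e'=e$ (subtracting $1$ for $x$ itself) and to the at most $k(\Delta-1)$ choices of $e'\neq e$. You instead give two bespoke arguments: for $N_{\-{self}}(x)$ an explicit ``shift vector'' description showing the admissible translates of $x$ form a prefix chain and a suffix chain of $k-1$ patterns each (the key point, which you correctly flag and which follows from all $s_i$ lying in one length-$n$ window, is that no $c_i$ can reach $\pm 2$ while some coordinate is still $0$); and for $N_{\-{out}}(x)$ a charge of each neighbour to a pair $(u_i,e')$ followed by a count of the distinct values of $\ts(e',\cdot)$ over the length-$n$ interval $[s_i,s_i+n-1]$. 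Your out-neighbour count actually yields the sharper $k$ per pair (hence $k^2(\Delta-1)$ in total) before you relax it to $2k-1$ to match the statement; the paper's single-claim version is a bit more economical to state but gives the slightly looser $2k-1$ per hyperedge. Both routes rest on the same structural fact — that under systematic scan each timestamp identifies a unique vertex and $\ts(e',t')$ is a step function of $t'$ with period $n$ — and either suffices for the downstream degree bound $2\Delta k^2-2$.
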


\begin{proof}

Fix arbitrarily an edge $e\in \+{E}$ in the original hypergraph $H$ and a time $t\in \mathbb{Z}_{\leq 0}$. 
This induces a vertex $x$ in the witness graph $G_H$. 
We claim that for any $e'\in \+{E}$ such that $e\cap e'\neq \emptyset$, it holds that
\[\abs{\{\ts(e',t') \mid t'\in \mathbb{Z}_{\geq 0} \text{ and } \ts{}(e',t')\cap \ts{}(e,t)\neq \emptyset\}}\leq 2k-1.\]
This claim implies the lemma because
\begin{itemize}
\item by taking $e'=e$, we have $|N_{\-{self}}(x)| \leq 2k-2$ (we need to exclude $x$ itself); and
\item by taking $e'$ such that $e'\neq e$ and $e' \cap e \neq \emptyset$, we have $|N_{\-{out}}(x)| \leq (2k-1)k(\Delta - 1)$, since there are at most $(\Delta - 1) k$ choices for such $e'$. 
\end{itemize}


We then verify the claim. 
Assume $\ts(e,t)=(t_1,t_2,\dots,t_k)$ where $t_1<t_2<\dots<t_k<t_1+n$, and $\ts(e',t')=(t'_1,t'_2,\dots,t'_k)$ where $t'_1<t'_2<\dots<t'_k<t'_1+n$. 
Observe that $\ts{}(e',t')\cap \ts{}(e,t)\neq \varnothing$ implies $e\cap e'\neq \varnothing$. 
This means the following two quantities
\begin{align*}
  j_{\-{min}}\defeq&\min\left\{j\in[k] \mid v_{i(t_j)}\in e'\right\},~\text{and}\\
  j_{\-{max}}\defeq&\max\left\{j\in[k] \mid v_{i(t_j)}\in e'\right\}
\end{align*}
are well defined.  

Observe that $t'_k\geq t_{j_{\-{min}}}$ and $t'_1\leq t_{j_{\-{max}}}$. 
This can be argued as follows. 
Assume towards contradiction that $t'_k<t_{j_{\-{min}}}$. 
As $\ts{}(e',t')\cap \ts{}(e,t)\neq \varnothing$, there must be such $j$ and $j'$ that $t_j=t'_{j'}\in\ts{}(e',t')\cap \ts{}(e,t)$. 
Note that $t_j=t'_{j'}<t_{j_{\-{min}}}$ because of the ordering $t'_{j'}\leq t'_{k}$ and the assumption. 
This contradicts with the choice of $j_{\min}$. 
The same argument goes for the other inequality. 

The two inequalities above, together with $t'_k<t'_1+n$, implies $t'_1\in (t_{j_{\-{min}}}-n,t_{j_{\-{max}}}]$.  
Note further that in the interval $[t_{j_{\-{min}}}-n,t_{j_{\-{max}}}]$, there are at most $2k$ timestamps when there is a vertex in $e'$ that gets updated: 
$k$ from $[t_{j_{\-{min}}}-n,t_{j_{\-{min}}})$ and at most $k$ from $[t_{j_{\-{min}}},t_{j_{\-{max}}}]$
By definition of $j_{\-{min}}$, the vertex that gets updated at time $t_{j_{\-{min}}}$ must come from $e'$. 
Therefore, there are at most $2k-1$ timestamps in the interval $(t_{j_{\-{min}}}-n,t_{j_{\-{max}}}]$ that update vertices in $e'$,
and hence at most $2k-1$ choices for $t'_1$. 
The claim then follows by observing that the sequence $\ts{}(e',t')=(t'_1,t'_2,\cdots,t'_k)$ is uniquely determined by $t'_1$. 
\end{proof}

%
%
%
%
%
%
%
%
%
%
%

We analyse how $\appmghinds$ branches. 
For $t_0<t_1\leq t^*$, if $\appmghinds(t_0,\gamma)$ originates from $\appmghinds(t_1,\gamma)$,  
then we say the recursion $\appmghinds(t_0,\gamma)$ is triggered by $x=\ts{}(e,t_1)\in V_H$, 
where $e$ is the edge that the caller is handling. 
Also note that the same $t_0$ might be triggered by various different $t_1$'s, but only the first recursion $\appmghinds(t_0,\gamma)$ can trigger more recursive calls. 
Let
\begin{align}\label{eq:indset-bad-timestamps}
    V_{\+{B}}\defeq\{x\in V_H\mid \text{ $x$  triggers recursive calls }\} 
\end{align}
be a random subset of $V_H$. 
In fact, its randomness only comes from that used by the algorithm. 
Upon the termination of the algorithm, it generates the set $R$ of randomness used over time, and defines the set $V_{\+{B}}$. 
The size of $R$ can be upper bounded in terms of $|V_{\+B}|$.
\begin{lemma}\label{lemma:indset-randomness-bound}
$\abs{R}\leq (|V_{\+B}|+1)k^2\Delta$. 
\end{lemma}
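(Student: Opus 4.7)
The plan is to bound $|R|$ by a direct counting argument, exploiting the memoization in $M$ to show that only ``first invocations'' of $\appmghinds(t,\gamma;M,R)$ generate any random bits.

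First, I would observe that every entry of $R$ is added at a \basicsample{} call and that each timestamp $t$ can appear in $R$ at most once. Hence $|R|$ is upper bounded by the total number of distinct timestamps across all \basicsample{} calls made during the execution, which in turn is at most the total number of \basicsample{} calls. Next, I would note that whenever $\appmghinds(t,\gamma;M,R)$ is entered with $M(t)\neq\perp$ it returns immediately on \Cref{Line-indset-direct-return-1} without invoking \basicsample{}. Since $M(t)$ is set to a value in $\{0,1\}$ whenever the body of $\appmghinds$ completes (on \Cref{Line-indset-direct-return-2}, the \emph{return 0} line after the loop, or \Cref{Line-mt-final-return}), and the recursive calls only go to strictly smaller timestamps $\upd_u(t)<t$ (as $u\neq v_{i(t)}$), it follows that for each timestamp $t_0$ at most one ``first invocation''—the one that actually executes the body—can contribute to $R$.

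Having reduced the problem to counting first invocations, I would bound both the number of first invocations and the per-invocation cost. Within one first invocation of $\appmghinds(t,\gamma;M,R)$, the calls to \basicsample{} are: the single call $\basicsample(t;R)$ on \Cref{Line-indset-direct-return-2}, and, for each hyperedge $e\ni v_{i(t)}$, the $k-1$ calls $\basicsample(\upd_u(t);R)$ for $u\in e\setminus\{v_{i(t)}\}$ in \Cref{Line-indset-satisfy-1}. Since $v_{i(t)}$ lies in at most $\Delta$ hyperedges, this is at most $1+(k-1)\Delta$ calls. For the number of first invocations, I would argue that they form a tree rooted at $t^\ast$: the root is the initial call; every other first invocation $\appmghinds(t_0,\gamma;M,R)$ is reached through a recursive call on \Cref{Line-indset-recurse-1}, which happens only when the predicate on \Cref{Line-indset-satisfy-1} holds, i.e.\ precisely when the corresponding witness vertex $x=\ts(e,t_1)$ belongs to $V_{\+B}$ by \eqref{eq:indset-bad-timestamps}. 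Each such $x$ triggers at most $k-1$ recursive calls, so the total number of first invocations is at most $1+(k-1)|V_{\+B}|$.

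Combining these two bounds yields
\[
|R|\;\le\;\bigl(1+(k-1)|V_{\+B}|\bigr)\bigl(1+(k-1)\Delta\bigr)\;\le\; k\bigl(|V_{\+B}|+1\bigr)\cdot k\Delta\;=\;k^2\Delta\bigl(|V_{\+B}|+1\bigr),
\]
as desired. The main subtlety, and essentially the only nontrivial step, is justifying that only first invocations contribute to $R$: one must check that $M(t)$ is always set before a second call with the same $t$ can be made, which relies on the fact that recursive calls strictly decrease the timestamp and on the semantics of the \textbf{try}/\textbf{catch} block (if the outer algorithm truncates, then no further calls of any kind occur). Once this bookkeeping is in place, the rest is a straightforward product of branching and per-node cost.
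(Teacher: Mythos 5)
Your proof is correct and follows essentially the same route as the paper's: fix the randomness, observe via memoization that only the first invocation at each timestamp calls \basicsample{}, bound the per-invocation count by roughly $k\Delta$ (one call on \Cref{Line-indset-direct-return-2} plus at most $(k-1)\Delta$ on \Cref{Line-indset-satisfy-1}), and bound the number of first invocations by charging each non-root one to the witness-graph vertex $\ts(e,t_1)\in V_{\+B}$ that triggered it, with at most $k$ (you say $k-1$) charges per vertex. The paper phrases the second step as a map $f:A\to V_{\+B}$ with at most $k$ preimages per witness vertex, but this is the same counting, and your final product bound indeed simplifies to $(|V_{\+B}|+1)k^2\Delta$.
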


\begin{proof}
Arbitrarily fix the values of all random bits ${(r_t)}_{t \leq 0}$.
Given this, $\appmghinds$ is deterministic. 
We show that the lemma always holds.

A timestamp $t_0$ is said to be active if $\appmghinds(t_0,\gamma)$ is invoked. 
Let $A$ be the set of all active timestamps except the initial call of $\appmghinds$. 
For any $t_0\in A$, consider the first invocation of $\appmghinds(t_0,\gamma)$. 
It invokes the subroutine $\randombit$ in \Cref{Line-indset-direct-return-2} and \Cref{Line-indset-satisfy-1}. 
The number of invocations in \Cref{Line-indset-direct-return-2} is $1$, 
and the number of invocations in \Cref{Line-indset-satisfy-1} is bounded by $k\Delta-1$, 
as each vertex is incident to at most $\Delta$ hyperedges and each hyperedge contains $k$ vertices. 
For any call of $\appmghinds(t_0,\gamma)$ beyond the first run, 
the algorithm directly returns a value in \Cref{Line-indset-direct-return-1} and does not invoke $\randombit$. 
Together with the initial call at time $t^*$, we have
\begin{align*}
	|R| \leq (|A|+1)k\Delta. 
\end{align*}

For each $t_0 \in A$, consider the first invocation of $\appmghinds(t_0,\gamma)$ and suppose it is invoked by $\appmghinds(t_1,\gamma)$ when it is processing the hyperedge $e$. 
This actually defines a map $f:A \to V_{\+B}$ where $f(t_0) = \ts{}(e,t_1)$.
By definition, for any $t'$ such that $f(t')=x'$, it must hold that $t' \in x'$ and $|x'| = k$. 
This implies that for each $x' \in V_{\+B}$, there are at most $k$ different $t' \in A$ such that $f(t') = x'$.
This gives
\begin{align}
	|A| \leq k |V_{\+B}|.
\end{align}

Hence $|R| \leq (|A|+1)k\Delta \leq (|V_{\+B}|+1)k^2\Delta$.
\end{proof}

%

\begin{lemma}\label{lemma:indset-connected}
The subgraph of $G_H$ induced by $V_{\+B}$ is connected. 
Furthermore, if $V_{\+B} \neq \emptyset$, then there exists $x \in V_{\+B}$ such that $t^* \in x$. 
\end{lemma}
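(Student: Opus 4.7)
The plan is to trace the recursion tree of \appmghinds{} upwards from any $x\in V_{\+B}$ to the root call $\appmghinds(t^*,\gamma)$, producing at each level a vertex of $V_{\+B}$ so that consecutive ones share a common timestamp.

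First, I would set up some notation for the recursion tree $\+T$. For each $x=\ts(e,t)\in V_{\+B}$, call $t$ the \emph{node} of $x$ in $\+T$. By definition of $V_{\+B}$, the first invocation $\appmghinds(t,\gamma)$ successfully processes the hyperedge $e$ and launches recursive calls $\appmghinds(\upd_u(t),\gamma)$ for every $u\in e\setminus\{v_{i(t)}\}$. In particular $v_{i(t)}\in e$, so $\upd_{v_{i(t)}}(t)=t$, whence $t\in\ts(e,t)=x$. This already establishes the second clause: any $x\in V_{\+B}$ whose node equals the initial timestamp $t^*$ satisfies $t^*\in x$.

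Next I would prove the following one-step lemma: if $\appmghinds(s,\gamma)$ is called from $\appmghinds(t,\gamma)$ while the latter is processing an edge $e$, and if $s$ itself triggers further calls while processing some edge $e'$, then the two witness vertices $x=\ts(e,t)$ and $x'=\ts(e',s)$ are equal or adjacent in $G_H$. This is pure bookkeeping: on the one hand $s=\upd_u(t)$ for some $u\in e$, so $s\in\ts(e,t)=x$; on the other, since $s=\upd_u(t)$ we have $v_{i(s)}=u$, and as $\appmghinds(s,\gamma)$ only processes edges containing $v_{i(s)}$, we have $u\in e'$, hence $\upd_u(s)=s\in\ts(e',s)=x'$. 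Thus $s\in x\cap x'$, giving $x=x'$ or $\{x,x'\}\in E_H$ by \Cref{definition:witness-graph-indset}.

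Now fix an arbitrary $x=\ts(e,t)\in V_{\+B}$ and walk up $\+T$ from its node $t$ to the root $t^*$, obtaining timestamps $t=t_0,t_1,\dots,t_L=t^*$ where $t_{i}$ is the parent of $t_{i-1}$ in $\+T$. For each $i\ge 1$, the recursive call to $\appmghinds(t_{i-1},\gamma)$ was issued by $\appmghinds(t_i,\gamma)$ while processing some hyperedge $e_i$ in \Cref{Line-indset-recurse-1} of \Cref{Alg:marginal-indset-sampler-app}; the success of the outer \textbf{if} in \Cref{Line-indset-satisfy-1} guarantees $y_i\defeq \ts(e_i,t_i)\in V_{\+B}$. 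Setting $y_0\defeq x$, the one-step lemma applies to each consecutive pair and shows $y_{i-1},y_i$ are equal or adjacent in $G_H[V_{\+B}]$, so $x$ is connected in this induced subgraph to $y_L=\ts(e_L,t^*)$, which contains $t^*$ as remarked above.

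Finally, to obtain connectivity of $G_H[V_{\+B}]$, apply the above to any two $x,x'\in V_{\+B}$: both connect to respective vertices $y_L,y_L'\in V_{\+B}$ containing $t^*$, and these two share $t^*$, so they are equal or adjacent, yielding a walk from $x$ to $x'$ inside $V_{\+B}$. The only conceptual care is the possibility that some intermediate $y_i$ coincides with $y_{i-1}$, but this only shortens the walk and does not affect connectivity. I do not expect real obstacles; the argument is essentially a structural unwinding of what it means to be a triggering witness vertex, and the key identity $s\in x\cap x'$ comes out of the definitions \eqref{eq:update-time} and \eqref{eq:timestamp} directly.
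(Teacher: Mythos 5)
Your proposal is correct and follows essentially the same route as the paper: both arguments connect every element of $V_{\+B}$ to a witness vertex containing $t^*$ via a chain of triggering vertices, using the identical key observation that a caller's witness vertex $\ts(e,t_1)$ and the witness vertex $\ts(e',t_0)$ created inside the triggered call $\appmghinds(t_0,\gamma)$ both contain the timestamp $t_0$, hence are adjacent in $G_H$. The only cosmetic difference is that you walk up the recursion tree from each $x$ while the paper runs an induction over the order in which vertices join $V_{\+B}$; the content is the same.
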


\begin{proof}
Again, arbitrarily fix the values of all random bits ${(r_t)}_{t \leq 0}$.
Given this, $\appmghinds$ is deterministic. 
We show that the lemma always holds.
We may also assume $V_{\+B} \neq \emptyset$ or otherwise the lemma trivially holds.

Order the set $V_{\+B} = \{x^{(1)},x^{(2)},\ldots,x^{(\ell)}\}$ by the time a vertex joins this set, i.e., triggers a recursion. 
Here $x^{(i)}$ is the $i$-th vertex that triggers a recursion. 
The last index $\ell$ is finite because $\appmghinds(t,\gamma)$ terminates within a finite number of steps. 
We claim that for any $i \in [\ell]$, there exists a path $P=(y_1,y_2,\ldots,y_m)$ in $G_H$ such that 
\begin{itemize}
	\item $t^* \in y_1$ and $x^{(i)} = y_m$;
	\item for all $j\in[m]$, $y_j \in V_{\+B}$.
\end{itemize}
This immediately proves the lemma. 
We prove the claim by induction on index $i$.

\noindent\textbf{Base case. \quad} Consider the initial invocation $\appmghinds(t^*,\gamma)$. 
It must trigger some recursion, or otherwise $V_{\+B} = \emptyset$, violating our assumption. 
Therefore, $t^* \in x^{(1)}$, and we can simply take the path $P=(x^{(1)})$. 

\noindent\textbf{Induction step. \quad} 
Fix an integer $1 <j \leq \ell$.
Suppose the claim holds for all $x^{(i)}$ for $i < j$. 
We prove this for $x^{(j)}$. 
If $x^{(j)}$ joins $V_{\+B}$ whilst $\appmghinds(t^*,\gamma)$ executes, then we can simply take $P=(x^{(j)})$. 
Otherwise, suppose $x^{(j)}$ joins $V_{\+B}$ whilst $\appmghinds(t_0,\gamma)$ executes for some $t_0 < t^*$, and this call is invoked by another $\appmghinds(t_1,\gamma)$ where $t_0<t_1\leq t^*$ when the caller is processing some hyperedge $e \in \+E$ in \Cref{Line-indset-recurse-1}. 
Then there exists $i < j$ such that $x^{(i)} = \ts(e,t_1)$. 
By induction hypothesis, there is a path $P= (y_1,y_2,\ldots,y_{m'})$ for $x^{(i)}$ with $y_{m'} = x^{(i)}$. 
Note that $t_0 \in x^{(i)}$ because $x^{(i)}$ triggers a recursive call of $\appmghinds(t_0,\gamma)$, and $t_0 \in x^{(j)}$ because $v_j$ joins $V_{\+B}$ whilst $\appmghinds(t_0,\gamma)$.
Thus the edge $(x^{(i)},x^{(j)})$ exists in the witness graph, and we can construct the new path as $P' = (P,x^{(j)})$ to meet the conditions. 
\end{proof}

\begin{lemma}\label{lemma:indset-all-1}
For all $x \in V_{\+B}$ and all $t'\in x$, it holds that $r_{t'} = \perp$.
\end{lemma}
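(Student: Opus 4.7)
The plan is to unpack the conditions under which a vertex $x \in V_H$ joins $V_{\+B}$ by retracing the control flow of \Cref{Alg:marginal-indset-sampler-app}. By the definition preceding \eqref{eq:indset-bad-timestamps}, $x = \ts(e, t_1) \in V_{\+B}$ only if during some execution of $\appmghinds(t_1, \gamma)$ the body performs the recursive calls on \Cref{Line-indset-recurse-1} while iterating over the edge $e$. I would argue that this control path forces $r_{t'} = \perp$ for every $t' \in x$.

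First I would note that the lower bound distribution satisfies $\mu^{\-{LB}}(0) = \mu^{\-{LB}}(\perp) = \tfrac{1}{2}$ with $\mu^{\-{LB}}(1) = 0$, so each invocation of $\basicsample(\cdot; R)$ returns a value in $\{0, \perp\}$. For $\appmghinds(t_1, \gamma)$ to reach the main loop starting at \Cref{Line-indset-incident-1}, execution must not have short-circuited on \Cref{Line-indset-direct-return-2}; hence $\basicsample(t_1; R)$ returned $\perp$, i.e.\ $r_{t_1} = \perp$. Setting $v = v_{i(t_1)}$, we have $v \in e$ and $\upd_v(t_1) = t_1$, which already handles the timestamp $t_1 \in x$.

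Next, for the \textbf{if} on \Cref{Line-indset-satisfy-1} to evaluate to true on the edge $e$ and thereby enable the recursive calls on \Cref{Line-indset-recurse-1}, the predicate there certifies that $\basicsample(\upd_u(t_1); R) = \perp$ for every $u \in e \setminus \{v\}$; thus $r_{\upd_u(t_1)} = \perp$ for each such $u$. Combining this with the previous paragraph and recalling that $x = \ts(e, t_1) = \{\upd_u(t_1) : u \in e\}$, we conclude $r_{t'} = \perp$ for every $t' \in x$, as claimed.

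The only subtle point requiring a brief remark is that the values $r_{t'}$ appearing in the argument are well-defined across the execution: once $\basicsample(t'; R)$ is invoked it records $(t', r_{t'})$ in the global set $R$, and every later call at the same timestamp returns the same value (see \Cref{Alg:randombit}). Hence the perpetual agreement assumed above is consistent. I do not expect any real obstacle — the proof is essentially a syntactic reading of the algorithm, and no probabilistic reasoning or structural lemma is needed.
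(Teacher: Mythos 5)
Your proof is correct and follows essentially the same route as the paper's: the paper likewise derives $r_{t'}=\perp$ for the timestamps in $x\setminus\{t_1\}$ from the condition on \Cref{Line-indset-satisfy-1}, and $r_{t_1}=\perp$ from the fact that the call did not terminate at \Cref{Line-indset-direct-return-2} (using $\mu^{\-{LB}}(1)=0$). Your closing remark about consistency via the global set $R$ is a reasonable extra observation but not a point of divergence.
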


\begin{proof}
Fix a $x=\ts{}(e,t_0) \in V_{\+B}$.
As $x$ triggers another subroutine in \Cref{Line-indset-recurse-1}, 
the current hyperedge $e$ must satisfy the condition in \Cref{Line-indset-satisfy-1}. 
This means $r_{t'}=\perp$ for all $t' \in x \setminus \{t_0\}$ (recall the definition in~\eqref{eq:timestamp}). 
Moreover, $r_{t_0} = \perp$, as otherwise $r_{t_0} = 0$ (recall the lower bound distribution $\mu^{\-{LB}}$), 
in which case $\appmghinds(t_0,\gamma)$ must have terminated in \Cref{Line-indset-direct-return-2} before reaching \Cref{Line-indset-recurse-1}.
\end{proof}

To prove \Cref{lemma:indset-efficiency}, we need the notion of $2$-trees~\cite{Alon91}. 
Given a graph $G=(V,E)$, its power graph $G^2=(V,E_2)$ has the same vertex set, while an edge $(u,v)\in E_2$ if and only if $1\leq \dist_G(u,v)\leq 2$. 

\begin{definition}[$2$-tree]\label{definition:2-tree}
Let $G=(V,E)$ be a graph. A set of vertices $T\subseteq V$ is called a \emph{$2$-tree} of $G$, if
\begin{itemize}
    \item  for any $u,v\in T$, $\text{dist}_G(u,v)\geq 2$, and
    \item  $T$ is connected on $G^2$.
\end{itemize}
\end{definition}

Intuitively, a $2$-tree is an independent set that does not spread far away. 
The next lemma bounds the number of $2$-trees of a certain size containing a given vertex.
\begin{lemma}[{\cite[Corollary 5.7]{FGYZ21b}}]\label{lemma:2-tree-number-bound}
Let $G=(V,E)$ be a graph with maximum degree $D$, and $v\in V$ be a vertex. 
The number of 2-trees in $G$ of size $\ell\geq 2$ containing $v$ is at most $\frac{(\mathrm{e}D^2)^{\ell-1}}{2}$.
\end{lemma}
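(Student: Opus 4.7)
The plan is to reduce counting $2$-trees in $G$ to counting rooted subtrees in the square graph $G^2$, and then apply a classical subtree-counting bound.

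First I would observe that, by \Cref{definition:2-tree}, any $2$-tree $T$ of size $\ell$ containing $v$ is in particular a subset of $V$ of size $\ell$ containing $v$ that is connected in $G^2$. Moreover, the maximum degree of $G^2$ is at most $D^2$: any vertex $u \in V$ has at most $D$ neighbours in $G$, and each of these has at most $D-1$ further neighbours distinct from $u$, so $u$ has at most $D + D(D-1) = D^2$ neighbours in $G^2$. Therefore, the number of $2$-trees of size $\ell$ containing $v$ is upper bounded by the number of connected subsets of $V$ of size $\ell$ in $G^2$ that contain $v$.

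Second, I would pass from connected subsets to subtrees. Every connected induced subgraph $G^2[S]$ contains at least one spanning tree. Fixing, for each such $S$, a canonical spanning tree (e.g.\ the lexicographically smallest breadth-first spanning tree rooted at $v$) yields an injection from connected subsets containing $v$ to rooted subtrees of $G^2$ of size $\ell$ rooted at $v$. Consequently, the number of $2$-trees of size $\ell$ containing $v$ is at most the number of rooted subtrees of $G^2$ with $\ell$ vertices rooted at $v$.

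Third, I would invoke the classical bound: in any graph of maximum degree $\Delta'$, the number of subtrees with $\ell$ vertices containing a prescribed vertex is at most $\tfrac{(\mathrm{e}\Delta')^{\ell-1}}{2}$. Plugging $\Delta' = D^2$ gives precisely $\tfrac{(\mathrm{e}D^2)^{\ell-1}}{2}$.

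The main obstacle is establishing the classical tree-counting bound. A standard way is to encode each rooted subtree by its depth-first traversal: the traversal uses each of the $\ell-1$ tree edges exactly twice (once forward, once backward), giving a walk of length $2(\ell-1)$; at each forward step there are at most $\Delta'$ choices, while the pattern of forward/backward steps corresponds to a Dyck path, enumerated by the Catalan number $C_{\ell-1}$. Combining the $\Delta'^{\,\ell-1}$ neighbour choices with the $C_{\ell-1} \le \tfrac{4^{\ell-1}}{(\ell-1)^{3/2}\sqrt{\pi}}$ shape choices, together with the overcounting factor $\ell$ accounting for which vertex is the root, and applying Stirling's approximation, yields the desired $\tfrac{(\mathrm{e}\Delta')^{\ell-1}}{2}$ bound. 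Since this is a standard estimate (see, e.g., Knuth or the analysis of Moser--Tardos), I would cite the result rather than re-derive it in full.
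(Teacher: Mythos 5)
Your proposal is correct and is essentially the standard argument behind the cited result: the paper gives no proof of this lemma (it is quoted from \cite[Corollary 5.7]{FGYZ21b}), and the proof there is exactly your reduction --- a $2$-tree is by definition connected in $G^2$, the maximum degree of $G^2$ is at most $D+D(D-1)=D^2$, every connected vertex subset injects into a canonical spanning subtree, and the number of subtrees of size $\ell$ containing a fixed vertex in a graph of maximum degree $\Delta'$ is at most $\frac{(\mathrm{e}\Delta')^{\ell-1}}{2}$ (the same bound the paper itself invokes elsewhere as \Cref{lem:subtree-bound}). One minor caveat: your sketched re-derivation of that last bound via depth-first traversals and Dyck paths would only yield base $4\Delta'$ rather than $\mathrm{e}\Delta'$, since $C_{\ell-1}\approx 4^{\ell-1}$ and $4>\mathrm{e}$; the constant $\mathrm{e}$ comes instead from counting embeddings into the infinite $\Delta'$-regular tree via a Fuss--Catalan-type formula, but as you defer to the standard citation rather than the sketch, this does not affect the correctness of your proof.
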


The next lemma shows the existence of a large $2$-tree.
\begin{lemma}[{\cite[Observation 5.5]{FGYZ21b}} and {\cite[Lemma 4.5]{JPV21}}]\label{lemma:big-2-tree}
Let $G = (V, E)$ be a graph of maximum degree $D$, 
$H =(V(H),E(H))$ be a connected finite subgraph of $G$, and $v \in V (H)$ a vertex in $H$. 
Then there exists a $2$-tree $T$ of $H$ containing $v$
such that $\abs{T} = \lfloor |V (H)|/(D + 1) \rfloor$.
\end{lemma}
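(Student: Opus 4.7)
The plan is to construct $T$ greedily along a depth-first search of $H$. Specifically, I would run DFS on $H$ starting from $v$, process the vertices in DFS discovery order, and admit the current vertex $u$ into $T$ if and only if $u$ has no $G$-neighbour already in $T$. Since $v$ is visited first, $v$ enters $T$, and by construction all pairs in $T$ have $G$-distance at least~$2$.

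For the size bound, every $u \in V(H)\setminus T$ was rejected because some $w \in T$ processed earlier was a $G$-neighbour of $u$. Each such $w$ has at most $D$ $G$-neighbours in total and hence can be the blocker of at most $D$ distinct rejected vertices, yielding $|V(H)| \leq (D+1)|T|$ and so $|T| \geq \lceil |V(H)|/(D+1)\rceil$. Once $G^2$-connectivity of $T$ is established, I can trim $T$ down to exactly $\lfloor |V(H)|/(D+1)\rfloor$ vertices while keeping it $G^2$-connected and containing $v$, for instance by running a BFS in $G^2$ restricted to $T$ and starting at $v$, then keeping only the first $\lfloor |V(H)|/(D+1)\rfloor$ discovered vertices.

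The key step, and the one where the DFS order is indispensable, is establishing $G^2$-connectivity. I would argue by induction on the DFS discovery time: when $u\neq v$ is added to $T$, let $p$ be its DFS parent in $H$, so $(p,u)\in E(H) \subseteq E(G)$ and $p$ was processed earlier than $u$. Either $p\in T$, in which case $u$ and $p$ are $G$-adjacent and hence $G^2$-adjacent, or $p$ was rejected because some $w \in T$ was $G$-adjacent to $p$, in which case the length-$2$ walk $u$--$p$--$w$ in $G$ witnesses that $u$ and $w$ are $G^2$-adjacent. Either way, the newly admitted $u$ is $G^2$-adjacent to a vertex already in $T$, so the inductive hypothesis propagates and $T$ remains $G^2$-connected at every step.

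The subtle point I expect to be the main obstacle, and why the greedy-with-DFS combination is needed rather than an arbitrary greedy selection, is precisely the interaction between the size bound and the connectivity: the size accounting only requires that each rejected vertex be $G$-adjacent to some member of $T$, whereas connectivity in $G^2$ demands that consecutive additions be at $G$-distance at most $2$. The DFS traversal of $H$ bridges these two requirements because it guarantees that every newly visited $u$ has some neighbour $p$ within $H$ that has already been processed, and this $p$ is either in $T$ directly or is itself adjacent in $G$ to the $T$-member that blocked it, producing the desired short path in $G$ back to $T$.
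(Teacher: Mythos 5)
Your proof is correct and self-contained. Note that the paper does not reprove this lemma --- it is cited from \cite{FGYZ21b} and \cite{JPV21} --- and your argument is essentially the standard one from those sources: grow $T$ greedily, charge each rejected vertex to a blocker in $T$ (at most $D$ per blocker, giving $|V(H)|\le (D+1)|T|$), and obtain $G^2$-connectivity via the length-two walk $u$--$p$--$w$ through the blocker of $u$'s predecessor; organising the greedy selection along a DFS of $H$ is a clean way to guarantee that every processed vertex has an already-processed $H$-neighbour, which is exactly what the connectivity step needs. One cosmetic remark: in your case analysis the branch ``$p\in T$'' is vacuous, since an admitted $u$ cannot have a $G$-neighbour already in $T$; this does no harm. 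The closest in-paper analogue is the component-removal argument for $2$-block-trees in \Cref{lemma:existence-2-block-tree} (Appendix A), which generalises to block size $\theta$ at the cost of the weaker bound $\lfloor |C|/(\theta^2\Delta^2)\rfloor$; your charging argument is tighter for the $\theta=1$ case and recovers the $\lfloor |V(H)|/(D+1)\rfloor$ constant actually claimed here.
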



\begin{proof}[Proof of \Cref{lemma:indset-efficiency}]
By \Cref{lemma:indset-randomness-bound}, it suffices to show $\Pr{\abs{V_{\+B}}\geq 2\Delta k^2\cdot \eta}\leq 2^{-\eta}$ for any integer $\eta\geq 1$, as
\begin{align*}
	\Pr{|R| \geq 3\Delta^2 k^4 \cdot \eta} \leq \Pr{(|V_{\+B}| + 1)\Delta k^2 \geq 3\Delta^2 k^4 \cdot \eta} \leq \Pr{|V_{\+B}| \geq 2\Delta k^2 \cdot \eta} \leq \tp{\frac{1}{2}}^\eta.
\end{align*}

Fix an integer $\eta \geq 1$ and assume $\abs{V_{\+B}}\geq 2\Delta k^2\cdot \eta$.
Recall that $V_{\+B}$ is finite because $\appmghinds$ terminates within a finite number of steps.
By \Cref{lemma:indset-connected}, there exists $y\in V_{\+B}$ such that $t^*\in y$. 
Also by \Cref{lemma:wg-degree-bound} and \Cref{lemma:big-2-tree}, there exists a $2$-tree $T\subseteq V_{\+B}$ of size $i$ such that $y\in T$. 
For $x\in V_{H}$, let $\+{T}^{\eta}_{x}$ denote the set of $2$-trees of $V_{H}$ of size $\eta$ containing $x$. 
Then by \Cref{lemma:indset-connected}, \Cref{lemma:big-2-tree}, and a union bound over all $2$-trees, we have
\begin{equation*}
 \Pr{\abs{V_{\+B}}\geq 2 \Delta k^2\cdot \eta} \leq   \sum\limits_{T\in \+{T}^{\eta}_{x}}\Pr{T\subseteq V_{\+B}}.
\end{equation*}
By \Cref{lemma:indset-all-1}, the event $T\subseteq V_{\+B}$ implies $r_{t}=\perp$ for all timestamps $t$ involved in $T$. 
Also note that, any pair $x,y$ of vertices in $T$ do not share timestamps. 
Thus, we have $\Pr{T\subseteq V_{\+B}}\leq 2^{-k\cdot \abs{T}}$.
Using \Cref{lemma:wg-degree-bound} and \Cref{lemma:2-tree-number-bound}, it holds that 
\begin{align*}
     \sum\limits_{T\in \+{T}^{\eta}_{x}}\Pr{T\subseteq V_{\+B}} \leq  2\Delta k^2\cdot (4\mathrm{e}\Delta^2k^4)^{\eta-1}\cdot \tp{\frac{1}{2}}^{k\eta} \leq  \tp{\frac{4\mathrm{e}\Delta^2 k^4}{2^k} }^\eta\leq  2^{-\eta},
\end{align*}
where the last inequality is due to $2^k \geq 8\mathrm{e}\Delta^2 k^4$.
\end{proof}

%

\subsection{Improved bounds for linear hypergraphs}\label{sec-linear-ind}
We now give a marginal sampler for linear hypergraphs and prove \Cref{lem-sampling-ind-simple}. 
Let $\delta > 0$ be a constant. 
Let $k \geq \frac{25(1+\delta)^2}{\delta^2}$ and $\Delta \geq 2$ be two integers satisfying 
$2^{k} \geq (100 k^3 \Delta)^{1+\delta}$.
Given as inputs a linear $k$-uniform hypergraph $H=(V,\+E)$ with maximum degree $\Delta$ and a parameter $\gamma>0$,
the algorithm is the same as \Cref{Alg:marginal-indset-sampler-app}, 
except that the truncation condition in \Cref{Line-indset-catch} is changed from $|R| \geq 3\Delta^2 k^4\left\lceil\log{\frac{1}{\gamma}}\right\rceil$ to:
\begin{align}\label{eq-ind-simple}
|R| \geq 10^4 \tp{\frac{1+\delta}{\delta}}^2 \Delta^3 k^9  \left \lceil \log{\frac{1}{\gamma}}\right \rceil . 	
\end{align}
Much of the analysis for the general case can be applied to the linear case, with a much refined condition on the maximum degree. 

The running time of the modified algorithm can be bounded in the same way as \Cref{lem-alg-h-2-time}. 
\begin{lemma}\label{lem-time-linear-ind}
The running time of the modified algorithm is $O\left((\frac{1+\delta}{\delta})^2\Delta^4 k^{10} \log \frac{1}{\gamma}\right)$.		
\end{lemma}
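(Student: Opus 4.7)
The plan is to follow exactly the template of the proof of \Cref{lem-alg-h-2-time} for the general case, since the modified algorithm differs from \Cref{Alg:marginal-indset-sampler-app} only in the truncation threshold in \Cref{Line-indset-catch}. Nothing in the runtime analysis of \Cref{lem-alg-h-2-time} uses linearity of $H$ or the specific value of the threshold beyond passing through it as an upper bound on $|R|$.

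Concretely, I would consider the recursion tree of the top-level call and let $S$ be the set of timestamps $t_0$ for which $\appmghinds(t_0,\gamma; M,R)$ is invoked at least once. As in the general case, for every such $t_0\in S$ either the algorithm aborts at \Cref{Line-indset-catch} while trying to call $\basicsample(t_0;R)$ (which happens for at most one $t_0$), or $(t_0,r_{t_0})$ ends up in $R$. Hence $|S|\le |R|+1$. Every call to $\appmghinds(t_0,\gamma)$ after the first one returns immediately at \Cref{Line-indset-direct-return-1}, and a single first execution of $\appmghinds(t_0,\gamma)$ performs $O(\Delta k)$ non-recursive work (scanning all hyperedges incident to $v_{i(t_0)}$ and the $k-1$ vertices inside each). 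Thus the total runtime is $O(\Delta k\cdot |S|) = O(\Delta k\cdot(|R|+1))$.

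Substituting the new truncation bound~\eqref{eq-ind-simple}, namely
\[
|R| \le 10^4\tp{\frac{1+\delta}{\delta}}^2\Delta^3 k^9\ceil{\log\tfrac{1}{\gamma}},
\]
yields
\[
O(\Delta k\cdot(|R|+1)) = O\!\left(\tp{\frac{1+\delta}{\delta}}^2\Delta^4 k^{10}\log\tfrac{1}{\gamma}\right),
\]
which is exactly the stated bound.

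Since this is a direct adaptation rather than a new argument, there is essentially no genuine obstacle. The only thing to check carefully is that the per-call work bound $T^\star = O(\Delta k)$ still applies to the modified algorithm; this is immediate because the body of the algorithm (the $\config$-style loop over hyperedges through $v_{i(t)}$ and the $k-1$ vertices per hyperedge) is unchanged, and all linearity-specific improvements enter only later through the analysis of the truncation probability, not through the per-call cost.
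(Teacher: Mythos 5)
Your proposal is correct and is exactly what the paper intends: the paper omits the proof, stating only that the bound follows "in the same way as" the general-case runtime lemma, and your argument reproduces that analysis, correctly observing that the $O(\Delta k(|R|+1))$ bound is insensitive to linearity and to the particular truncation threshold, and that substituting the new threshold from~\eqref{eq-ind-simple} gives the claimed $O\bigl((\tfrac{1+\delta}{\delta})^2\Delta^4 k^{10}\log\tfrac{1}{\gamma}\bigr)$.
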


Next, we bound the truncation error.

\begin{lemma}\label{lemma:indset-linear-efficiency}
Denote $\eta=\lceil\log{\frac{1}{\gamma}}\rceil$. 
Upon the termination of the modified algorithm, 
the size of $R$ satisfies
\begin{align}
    \Pr{\abs{R}\geq  10^4 \tp{\frac{1+\delta}{\delta}}^2 \Delta^3 k^9 \cdot \eta}\leq 2^{-\eta}.
\end{align}
\end{lemma}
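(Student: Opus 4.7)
I would follow the three-phase skeleton from the proof of \Cref{lemma:indset-efficiency}: (i) reduce a tail bound on $|R|$ to one on $|V_{\+B}|$ via the analogue of \Cref{lemma:indset-randomness-bound}; (ii) use \Cref{lemma:indset-connected} to conclude that $V_{\+B}$ is connected in the witness graph and contains a vertex touching $t^*$, then extract a 2-tree $T \subseteq V_{\+B}$ via \Cref{lemma:big-2-tree}; (iii) union-bound over 2-trees using \Cref{lemma:2-tree-number-bound} together with a per-2-tree tail estimate. Steps (i) and (ii) are model-agnostic and carry over verbatim up to cosmetic constants, so the task reduces to proving a sharper tail bound on $|V_{\+B}|$ matching the condition $2^k \geq (100 k^3 \Delta)^{1+\delta}$.

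\textbf{Two refinements from linearity.} First, the witness-graph degree sharpens: the factor $(2k-1)$ in the bound $|N_{\-{out}}(x)| \leq (2k-1)k(\Delta-1)$ of \Cref{lemma:wg-degree-bound} enumerates possible starting times of a neighbouring witness $\ts(e',t')$ overlapping a fixed $\ts(e,t)$, but when $|e \cap e'| \leq 1$ the unique shared vertex $v^\star$ forces $\upd_{v^\star}(t') = \upd_{v^\star}(t)$, pinning the starting time modulo $n$ to a single residue class; hence $|N_{\-{out}}(x)| \leq k(\Delta-1)$ and the maximum degree becomes $D = O(k\Delta)$. Second, and crucially, the per-2-tree estimate $\Pr{T \subseteq V_{\+B}} \leq 2^{-k|T|}$ can be sharpened by identifying \emph{additional} forced $\perp$-events. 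For each $x = \ts(e,t_0) \in V_{\+B}$, the call at $x$ recursively invokes $\appmghinds(\upd_v(t_0),\gamma)$ for each $v \in e \setminus \{v_{i(t_0)}\}$ at \Cref{Line-indset-recurse-1}, and each descendant must itself satisfy the condition at \Cref{Line-indset-satisfy-1} on some incident hyperedge, producing further forced $\perp$-events at $k-1$ additional timestamps. Adapting the charging argument from \cite{FGW22a}, linearity ensures these extra timestamps attach to unique parents through the single shared vertex, so the aggregated bound is
\begin{align*}
  \Pr{T \subseteq V_{\+B}} \;\leq\; 2^{-k(1+\delta')|T|}
\end{align*}
for some $\delta'$ controlled by $\delta$, with the hypothesis $k \geq 25(1+\delta)^2/\delta^2$ absorbing lower-order corrections.

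\textbf{Final assembly and main obstacle.} Combining the above with \Cref{lemma:2-tree-number-bound} gives, for $|T| = \eta$,
\begin{align*}
  \Pr{|V_{\+B}| \geq (D+1)\eta} \;\leq\; O(\Delta k^2) \cdot (\mathrm{e}D^2)^{\eta-1} \cdot 2^{-k(1+\delta')\eta},
\end{align*}
which is at most $2^{-\eta}$ under $2^k \geq (100 k^3 \Delta)^{1+\delta}$. The randomness bound then follows from $|R| \leq O(k^2 \Delta)|V_{\+B}|$ via the analogue of \Cref{lemma:indset-randomness-bound}, with the explicit $10^4 ((1+\delta)/\delta)^2 \Delta^3 k^9$ constant encoding the overhead of the sharpened estimate and the conversion between $\delta$ and $\delta'$. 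The main technical obstacle is the charging argument underlying the sharpened per-2-tree tail bound: the extra $\perp$-events associated with different 2-tree vertices must be certified as jointly forced and essentially disjoint across $T$, which is where linearity (rather than merely sparse intersection) is essential. Making the bookkeeping tight with the claimed explicit dependence on $\delta$ is the bulk of the remaining work.
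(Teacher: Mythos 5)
There is a genuine gap at the heart of your argument: the sharpened per-$2$-tree estimate $\Pr{T\subseteq V_{\+B}}\le 2^{-k(1+\delta')|T|}$ is not justified, and without it the rest of the plan cannot reach the condition $2^k\ge(100k^3\Delta)^{1+\delta}$. Your claim is that each descendant call $\appmghinds(\upd_u(t_0),\gamma)$ spawned by a witness $x=\ts(e,t_0)\in V_{\+B}$ "must itself satisfy the condition at \Cref{Line-indset-satisfy-1} on some incident hyperedge, producing further forced $\perp$-events at $k-1$ additional timestamps." This is false: a descendant that finds no incident hyperedge passing the test in \Cref{Line-indset-satisfy-1} simply returns $1$ at \Cref{Line-mt-final-return} without recursing and without forcing any further $r_s=\perp$. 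The only events certified by $x\in V_{\+B}$ are the $k$ outcomes $r_{t'}=\perp$ for $t'\in x$ (\Cref{lemma:indset-all-1}), so the per-witness probability stays at $2^{-k}$. Your degree refinement also overcounts the savings: for $|e\cap e'|=1$ the shared timestamp $s^\star=\upd_{v^\star}(t)$ is pinned, but the window $t'\in[s^\star,s^\star+n)$ still admits up to $k$ distinct witnesses $\ts(e',t')$, so $|N_{\-{out}}(x)|=O(k^2\Delta)$ rather than $O(k\Delta)$; and in any case a degree bound $D$ of order $\Delta\cdot\poly(k)$ fed into \Cref{lemma:2-tree-number-bound} still produces a counting term $(\mathrm{e}D^2)^{\eta}\approx\Delta^{2\eta}$, which against $2^{-k\eta}$ only recovers $\Delta\lesssim 2^{k/2}$.

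The paper's mechanism is different: it keeps the per-witness probability at essentially $2^{-(k-\theta)}$ but changes the \emph{combinatorial object} from $2$-trees to $2$-block-trees (\Cref{definition:2-block-tree}) of block size $\theta=\lceil 4(1+\delta)/\delta\rceil$. Within a block of $\theta$ witnesses that pairwise overlap in at most one timestamp, one certifies at least $\theta(k-\theta)$ distinct forced $\perp$'s, while the enumeration cost per block is only $(\theta\mathrm{e}^{\theta}D^{\theta+1})$ by \Cref{lemma:2-block-tree-number-bound}; the effective trade-off is $D^{1+1/\theta}$ versus $2^{k-\theta}$, and choosing $\theta$ large pushes the exponent of $\Delta$ down to $1+o_\delta(1)$. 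Making this work requires a step you have not addressed at all: even in a linear hypergraph, two witnesses $\ts(e,t)$ and $\ts(e,t')$ arising from the \emph{same} hyperedge can share many timestamps, so pairwise overlap $\le 1$ fails inside $V_{\+B}$. The paper repairs this with the self-neighbourhood powered witness graph $G_H^{\-{self}}$ (\Cref{definition:self-neighbourhood-witness-graph-indset}) and \Cref{lemma-vb-linear}, which extracts a connected sub-collection $V_{\+B}^{\-{lin}}$ of size at least $\lfloor|V_{\+B}|/(2k+1)\rfloor$ with genuine pairwise overlap $\le 1$, and only then builds the $2$-block-tree inside it. You would need both of these ingredients — the block-tree packing and the same-hyperedge deduplication — to complete the proof.
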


\Cref{lem-sampling-ind-simple} then follows by \Cref{lem-time-linear-ind}, \Cref{lemma:indset-linear-efficiency}, and \Cref{thm:dtv-truncate}.
The rest of this section is dedicated to the proof of \Cref{lemma:indset-linear-efficiency}.

As the condition of maximum degree is much more refined, we need to rely heavily on the property that the original hypergraph $H$ is linear. 
Recall that in the analysis in \Cref{sec:proof-error-ind}, we were actually dealing with the witness graph $G_H$ (\Cref{definition:witness-graph-indset}). 
However, for some pair of adjacent vertices in the witness graph $G_H$, the overlap $\abs{x\cap y}$ is not necessarily $1$ even the underlying hypergraph $H$ is linear.
To see this, recall the partition of neighbourhood of $x\in V_H$ constructed in~\eqref{eq-def-ng}. 
For any neighbour $y$ that corresponds to a different edge in $H$, i.e., $y \in N_{\-{out}}(x)$, it indeed holds that $|x\cap y| = 1$;
however, for those $u \in N_{\-{self}}(v)$, there is no guarantee on the size of the intersection $|x \cap y|$. 
To handle such kind of ``neighbours'' that are actually doppelg\"{a}ngers, 
we introduce the following new notion of the \emph{self-neighbourhood powered witness graph}.
\begin{definition}\label{definition:self-neighbourhood-witness-graph-indset}
Let $G_H=(V_H,E_H)$ be the witness graph as in \Cref{definition:witness-graph-indset}.
The self-neighbourhood powered witness graph $G_H^{\-{self}} = (V_H, E_H^{\-{self}})$ is defined on the same vertex set as $G_H$.
Its edge set $E_H^{\-{self}} = E_H \cup E'$ is given by
\begin{align*}
E' = \{\{x,y\} \mid   (\exists w \in V_H \text{ s.t. } w \in N_{\-{self}}(y) \land w \in N(x)) \lor (\exists w \in V_H \text{ s.t. } w \in N(y) \land w \in N_{\-{self}}(x)) \},	
\end{align*}
where $N_{\-{self}}(x)$ and $N(x)$ are defined as in~\eqref{eq-def-ng}.
\end{definition}

We first bound the maximum degree of $G_H^{\-{self}}$. 
\begin{lemma}\label{lem-max-degree-G-self}
The maximum degree of $G_H^{\-{self}}$ is at most $10k^3\Delta - 1$.
\end{lemma}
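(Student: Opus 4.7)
The plan is to bound the degree of an arbitrary vertex $x\in V_H$ in $G_H^{\-{self}}$ by splitting the incident edges into three contributions: the original edges of $G_H$, and the two new families comprising $E'$, each enumerated by fixing the ``bridging'' vertex $w$ in the definition of $E'$ from \Cref{definition:self-neighbourhood-witness-graph-indset}.

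First, I will sharpen \Cref{lemma:wg-degree-bound} using the assumption that $H$ is linear. Reinspecting its proof, for any $e'\in \+E$ with $e\cap e'\neq\varnothing$, linearity forces $|e\cap e'|=1$ and hence $j_{\-{min}}=j_{\-{max}}$, so that the interval $(t_{j_{\-{min}}}-n,\,t_{j_{\-{min}}}]$ has length $n$ and contains exactly $k$ update times of vertices of $e'$. This yields the improved bound $|N_{\-{out}}(x)|\le k^2(\Delta-1)$, while $|N_{\-{self}}(x)|\le 2k-2$ remains as before. Consequently, for every $x\in V_H$ (and in particular for every bridging vertex $w$ in a later step),
\[
  |N(x)| \;\le\; (2k-2) + k^2(\Delta-1) \;\le\; k^2\Delta + 2k.
\]

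Next, I will count the extra edges in $E'$ incident to a fixed $x$. The first clause in the definition of $E'$ gives edges $\{x,y\}$ such that some $w\in N(x)$ satisfies $y\in N_{\-{self}}(w)$; summing over $w\in N(x)$ and using the self-neighbourhood bound produces at most $|N(x)|\cdot(2k-2)$ such $y$. The second clause, by symmetry, gives at most $|N_{\-{self}}(x)|\cdot \max_{w}|N(w)| \le (2k-2)\cdot(k^2\Delta+2k)$ values of $y$. Adding the original $|N(x)|$ neighbours from $G_H$, the total degree of $x$ in $G_H^{\-{self}}$ is bounded by
\[
  |N(x)| + 2(2k-2)\bigl(k^2\Delta + 2k\bigr) \;\le\; \bigl(k^2\Delta+2k\bigr) + 4k\bigl(k^2\Delta + 2k\bigr) \;\le\; 4k^3\Delta + k^2\Delta + 8k^2 + 2k.
\]
A routine estimate using $k,\Delta\ge 2$ (so $k^2\Delta\le k^3\Delta$, $8k^2\le 4k^3\Delta$, and $2k\le k^3\Delta$) shows this is strictly less than $10k^3\Delta$, and hence at most $10k^3\Delta-1$ since the degree is an integer.

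The main obstacle is the first step: without the linearity of $H$, the bound $|N_{\-{out}}(x)|\le (2k-1)k(\Delta-1)$ from \Cref{lemma:wg-degree-bound} is already of order $k^3\Delta$, and feeding it into the two-step enumeration above would yield roughly $16k^3\Delta$, which exceeds the target. Linearity is precisely what collapses the ``window'' of candidate timestamps from $2k-1$ down to $k$, shaving a factor of two from the most expensive contribution and making the constant $10$ work out. Everything else is a bookkeeping exercise on the three sources of incident edges.
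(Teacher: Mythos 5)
Your proof is correct, and its core counting -- the original neighbours in $G_H$ plus the two clauses of $E'$, each enumerated through the bridging vertex $w$ -- is exactly the paper's argument. The issue is the preliminary step you present as the ``main obstacle'': the linearity-based sharpening of \Cref{lemma:wg-degree-bound} is unnecessary, and your justification for it rests on a miscalculation. The general bound $|N_{\-{out}}(x)|\le (2k-1)k(\Delta-1)$ is of order $k^2\Delta$, not $k^3\Delta$ as you claim. Feeding the unsharpened bounds $|N(x)|\le 2\Delta k^2-2$ and $|N_{\-{self}}(x)|\le 2k-2$ directly into your own three-part enumeration gives a degree of at most
\[
(2\Delta k^2-2)+2\,(2k-2)(2\Delta k^2-2)\;=\;(2\Delta k^2-2)(4k-3)\;\le\;8\Delta k^3\;\le\;10k^3\Delta-1
\]
for $k,\Delta\ge 2$ -- nowhere near the ``roughly $16k^3\Delta$'' you feared. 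This is precisely the paper's one-line proof, and it shows the lemma holds for arbitrary $k$-uniform hypergraphs, with linearity playing no role. Your extra sharpening is itself valid (when $|e\cap e'|=1$ one does have $j_{\-{min}}=j_{\-{max}}$ and the window of candidate values for $t'_1$ collapses from $2k-1$ to $k$), so the proof as written does go through; but the framing that linearity ``is precisely what makes the constant $10$ work out'' is false, and in a graded setting that misdiagnosis would be the thing to fix.
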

\begin{proof}
For any $x \in V_H$, by \Cref{lemma:wg-degree-bound}, $|N_{\-{self}}(x)| \leq 2k$ and $N(x) \leq 2\Delta k^2 - 1$. Hence, the maximum degree of $G_H^{\-{self}}$ is at most 
$
2\Delta k^2 - 1 + 2 \times 2k \times (2\Delta k^2 - 1) \leq  10k^3\Delta - 1.
$
\end{proof}

Recall from the proof of general case that we only need to analyse the size of a connected vertex set $V_{\+B}$ in $G_H$, as \Cref{lemma:indset-randomness-bound} still applies. 
To employ linearity at some point, we instead work on $G_H^{\-{self}}$. 
As there are more edges in $G_H^{\-{self}}$ than in $G_H$, small components are more likely to emerge. 
Therefore, for any connected component $V_{\+B}$ in $G_H$, we can find a subset $V_{\+B}^{\-{lin}} \subseteq V_{\+B}$ that is connected in $G_H^{\-{self}}$, fulfilling linearity. 
This is formally described as the following lemma. 
\begin{lemma}\label{lemma-vb-linear}
Given a $k$-uniform linear hypergraph $H = (V,\+E)$ with the maximum degree $\Delta$, let $G_H=(V_H, E_H)$ be its witness graph.
Fix a vertex $x \in V_H$, and let $V_{\+B} \subseteq V_H$ be a finite subset containing $x$ and connected in $G_H$.
Then, there exists $V_{\+B}^{\-{lin}} \subseteq V_{\+B} $ such that
\begin{enumerate}[label = (L\arabic*)]
	\item $x \in V_{\+B}^{\-{lin}} $ and $|V_{\+B}^{\-{lin}}| \geq \lfloor\frac{|V_{\+B}|}{2k+1}\rfloor$, \label{enum:vblin-size}
	\item the induced subgraph $G_H^{\-{self}}[V_{\+B}^{\-{lin}}]$ is connected, and \label{enum:vblin-connected}
	\item for any two distinct vertices $x_1,x_2 \in V_{\+B}^{\-{lin}}$, it holds that $|x_1 \cap x_2| \leq 1$. \label{enum:vblin-linear}
\end{enumerate}
\end{lemma}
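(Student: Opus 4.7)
The plan is a greedy ``BFS with absorption'' on $V_{\+B}$. Initialise $V_{\+B}^{\-{lin}} \gets \{x\}$ and a consumed set $V' \gets \{x\} \cup (N_{\-{self}}(x) \cap V_{\+B})$. While $V' \subsetneq V_{\+B}$, connectivity of $V_{\+B}$ in $G_H$ guarantees some $z \in V_{\+B} \setminus V'$ that is $G_H$-adjacent to $V'$; add such a $z$ to $V_{\+B}^{\-{lin}}$ and enlarge $V'$ by $\{z\} \cup (N_{\-{self}}(z) \cap V_{\+B})$. The procedure terminates once $V' = V_{\+B}$, and $x \in V_{\+B}^{\-{lin}}$ by construction.

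For the size bound \ref{enum:vblin-size}, each iteration grows $V'$ by at most $1 + |N_{\-{self}}(z)| \leq 2k - 1 \leq 2k+1$ elements by \Cref{lemma:wg-degree-bound}, so $|V_{\+B}| = |V'| \leq (2k+1)\,|V_{\+B}^{\-{lin}}|$, giving the desired bound.

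For linearity \ref{enum:vblin-linear}, take two distinct $x_1, x_2 \in V_{\+B}^{\-{lin}}$. If $e_H(x_1) \neq e_H(x_2)$, linearity of $H$ gives $|e_H(x_1) \cap e_H(x_2)| \leq 1$; since any shared timestamp of $x_1$ and $x_2$ records an update of a vertex in $e_H(x_1) \cap e_H(x_2)$, we obtain $|x_1 \cap x_2| \leq 1$. If $e_H(x_1) = e_H(x_2)$, WLOG $x_2$ was added after $x_1$; then $x_2 \notin V'$ at that moment, so in particular $x_2 \notin N_{\-{self}}(x_1)$. Combined with $x_1 \neq x_2$ and $e_H(x_1) = e_H(x_2)$, this forces $\{x_1, x_2\} \notin E_H$, i.e.~$x_1 \cap x_2 = \emptyset$.

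The crux --- and the step for which $G_H^{\-{self}}$ was defined in the first place --- is connectivity \ref{enum:vblin-connected}. When a new $z$ is added, it is $G_H$-adjacent to some $w \in V'$. Either $w \in V_{\+B}^{\-{lin}}$, giving a direct edge in $E_H \subseteq E_H^{\-{self}}$, or $w \in N_{\-{self}}(r) \cap V_{\+B}$ for some previously added $r \in V_{\+B}^{\-{lin}}$, in which case the witness $w$ satisfies $w \in N_{\-{self}}(r)$ and $w \in N(z)$, so by \Cref{definition:self-neighbourhood-witness-graph-indset} we have $\{r, z\} \in E'$. In either case $z$ attaches to the current $V_{\+B}^{\-{lin}}$ in $G_H^{\-{self}}$, and induction closes the argument. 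The main subtlety throughout is that two distinct $V_H$-vertices on the same hyperedge may have disjoint timestamp sets and thus fail to be $G_H$-adjacent; tracking $N_{\-{self}}$ (i.e.~$G_H$-adjacency within a hyperedge) rather than mere same-hyperedge membership is what makes both the linearity and connectivity bookkeeping line up.
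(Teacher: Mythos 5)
Your proof is correct and follows essentially the same route as the paper's: the same greedy construction that absorbs $\{z\}\cup(N_{\-{self}}(z)\cap V_{\+B})$ at each step (your $V'$ is the paper's $\Lambda$), the same $\le 2k+1$ per-iteration accounting for \ref{enum:vblin-size}, and the same two-case connectivity argument via a direct $E_H$ edge or a witness $w\in N_{\-{self}}(r)\cap N(z)$ yielding an $E'$ edge. Your linearity argument splits cases by $e_H(x_1)=e_H(x_2)$ versus not, rather than the paper's $x_2\notin N(x_1)$ versus $x_2\in N_{\-{out}}(x_1)$, but this is a cosmetic difference and both reduce to the same observation that the algorithm forbids $x_2\in N_{\-{self}}(x_1)$.
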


\begin{proof}
We construct one such $V_{\+B}^{\-{lin}}$ explicitly. 
Let $G=G_H[V_{\+B}]$. 
The condition in the lemma says $G$ is connected. 
For any vertex $y \in V_B$, denote by $\Gamma_G(y)$ the neighbourhood of $y$ in $G$, and for any subset $\Lambda \subseteq V_B$, define
\begin{align*}
\Gamma_G(\Lambda) \defeq \{y \in V_{\+B} \setminus \Lambda \mid \exists w \in \Lambda \text{ s.t. } y \in \Gamma_G(w) \}.
\end{align*}
The set $V_{\+B}^{\-{lin}}$ is constructed by the following algorithm. 
\begin{itemize}
  \item Initialise $V_{\+B}^{\-{lin}} \gets \emptyset$, $\Lambda \gets \emptyset$ and $V \gets V_{\+B}$. 
  \item Repeat the following until $V = \emptyset$:
\begin{enumerate}
	\item if $\Lambda = \emptyset$, let $y \gets x$; otherwise, let $y$ be an arbitrary vertex in $\Gamma_G(\Lambda)$;
	\item $V_{\+B}^{\-{lin}} \gets V_{\+B}^{\-{lin}} \cup \{y\}$;
	\item $\Lambda \gets \Lambda \cup \{y\} \cup (V_{\+B} \cap N_{\-{self}}(y) )$.
	\item $V \gets V_{\+B} \setminus \Lambda$.
\end{enumerate}
\end{itemize}

The algorithm above is well defined and terminates in finite steps. 
First, the vertex $y$ in Line (1) always exists. 
To see this, we only need to consider the case $\Lambda \neq \emptyset$. 
Then $V \neq \emptyset$ because the algorithm has not yet terminated. 
Note that $\Lambda \uplus V = V_{\+B}$ upon entering Line (1), 
which implies $\Lambda\neq V_{\+B}$. 
In addition, $G$ is connected, and hence $\Gamma_G(\Lambda) \neq \emptyset$, implying the existence of such $y$.
Secondly, the algorithm will eventually terminate because the set $V$ is initialised to be $V_{\+B}$ which is finite, and the size of $V$ decreases by at least $1$ after each iteration. 

We then verify that the output $V_{\+B}^{\-{lin}}$ satisfies the conditions in the lemma. 

\ref{enum:vblin-size}: 
Apparently $x \in V_{\+B}^{\-{lin}}$.
By \Cref{lemma:wg-degree-bound}, we add at most $2k+1$ vertices into $\Lambda$ in Line (3), thus remove at most $2k+1$ vertices from $|V|$ in each iteration. 
This implies $|V_{\+B}^{\-{lin}}|  \geq \lfloor\frac{|V_{\+B}|}{2k+1}\rfloor$.
%
%
%
%

\ref{enum:vblin-connected}: 
We show by a simple induction that $G_H^{\-{self}}[V_{\+B}^{\-{lin}}]$ is connected during the whole process.  
As a base case, $V_{\+B}^{\-{lin}} = \{x\}$ after the first iteration, and hence the claim holds. 
For the upcoming iterations, we always pick $y \in \Gamma_G(\Lambda)$, thus there exists $w \in \Lambda$ such that $y$ and $w$ are adjacent in $G$, and thus adjacent in $G_H$. 
There are two cases for $w$ joining $\Lambda$ in Line (3) in earlier iterations, that either (a) $w \in V_{\+B}^{\-{lin}}$, or (b) $w \in N_{\-{self}}(w')$ for some $w' \in V_{\+B}^{\-{lin}}$. 
In Case (a), $y$ and $w$ are adjacent in $G_H$, and thus they are adjacent $G_H^{\-{self}}$. 
In Case (b), $y$ and $w'$ are adjacent $G_H^{\-{self}}$.  
The claim then follows by the principle of induction. 

\ref{enum:vblin-linear}:
Fix $x_1,x_2 \in V_{\+B}^{\-{lin}}$. Suppose the algorithm first adds $x_1$ into $V_{\+B}^{\-{lin}}$ and then $x_2$. 
After $x_1$ gets added, the algorithm puts all $(V_{\+B} \cap N_{\-{self}}(x_1) )$ into $\Lambda$, implying $x_2 \notin N_{\-{self}}(x_1)$. 
Hence, there are only two cases for $x_2$: either (a) $x_2 \notin N(x_1)$, and thus $x_1 \cap x_2 = \emptyset$; 
or (b) $x_2 \in N_{\-{out}}(x_1)$, in which case, since the hypergraph $H$ is linear, $|x_1 \cap x_2 | = 1$.
\end{proof}

To make the most of linearity, we use the following $2$-block trees, first introduced in~\cite{FGW22a}, instead of $2$-trees as in the general case.

\begin{definition}[$2$-block-tree]\label{definition:2-block-tree}
Let $\theta,\ell\geq 1$ be two integer parameters, and $G=(V,E)$ be a graph. 
We call a collection of vertex sets $\{C_1,C_2,\cdots,C_\ell\}$ a \emph{2-block-tree} of block size $\theta$ and tree size $\ell$ in $G$, if
\begin{itemize}
  \item for any $1\leq i\leq \ell$, the set $C_i\subset V$ has size $\theta$, and the induced subgraph $G[C_i]$ is connected; 
  \item $\dist_G(C_i,C_j) := \min_{u \in C_i,v \in C_j}\dist_G(u,v)\geq 2$ for any distinct $1\leq i,j\leq \ell$; 
  \item $\bigcup_{i=1}^{\ell}C_i$ is connected on $G^2$, where $u$ and $v$ are adjacent in $G^2$ if and only if $1\leq \dist_G(u,v)\leq 2$.
\end{itemize}
\end{definition}

In fact, a $2$-tree is a $2$-block-tree but with $\theta=1$. 
As we see from last section, the point of using $2$-trees is to secure a bound on the independent vertices in the original hypergraph. 
Therefore, we only look at independent hyperedges, by which we may drop too much. 
The observation is that, if the hypergraph is simple, then a block, as defined above, has a much better lower bound on the number of distinct vertices in the original hypergraph than just dropping dependent hyperedges, so long as $\theta\ll k$. 

We can construct a $2$-block-tree out of a connected induced subgraph. 
\begin{lemma}\label{lemma:existence-2-block-tree}
Let $\theta \geq 1$ be an integer. Let $G = (V , E)$ be a graph with maximum degree $\Delta$. Given a finite subset $C\subseteq V$ such that the subgraph $G[C]$ induced by $C$ is connected, and a vertex $v\in C$, then there exists	a $2$-block-tree $\{C_1,C_2, \dots ,C_{\ell} \}$ in $G$ with block size $\theta$ and tree size $\ell = \lfloor |C|/(\theta^2\Delta^2) \rfloor$ such that $v \in C_1$ and $C_i \subseteq C$ for all $i \in [\ell]$.
\end{lemma}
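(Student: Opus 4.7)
My approach is a greedy induction on $|C|$, constructing $C_1$ first and then recursing on the connected components of $G[C]$ left behind after removing $C_1$ and its $G$-neighbourhood. To build $C_1$, I would fix a spanning tree $T$ of $G[C]$ rooted at $v$ and take the first $\theta$ vertices visited by a depth-first traversal of $T$; this yields a connected subset $C_1 \subseteq C$ of size $\theta$ containing $v$ (well-defined since $|C| \geq \theta^2\Delta^2 \geq \theta$).

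Let $F_1 \defeq C_1 \cup N_G(C_1)$, so $|F_1| \leq \theta(\Delta + 1)$, and let $K_1, \ldots, K_m$ denote the connected components of $G[C \setminus F_1]$. Two structural observations drive the recursion: (i)~by connectivity of $G[C]$, every component $K_j$ contains a vertex $v_{K_j}$ adjacent in $G$ to some vertex of $N_G(C_1) \setminus C_1$, and hence at $G$-distance exactly $2$ from $C_1$; and (ii)~two vertices in distinct components of $G[C \setminus F_1]$ cannot share a $G$-edge and so lie at $G$-distance $\geq 2$ apart. Then recursively apply the lemma to each $(G[K_j], v_{K_j})$ to produce a $2$-block-tree of tree size $\ell_j = \lfloor |K_j|/(\theta^2\Delta^2) \rfloor$ in $G[K_j]$ whose first block contains $v_{K_j}$. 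Concatenating $C_1$ with these recursive sub-trees gives the desired $2$-block-tree in $G[C]$: the pairwise distance-$\geq 2$ condition follows from (i), (ii), and the inductive hypothesis; and $G^2$-connectivity of $\bigcup_i C_i$ holds because every first block of a recursive subtree contains its $v_{K_j}$ at $G$-distance $2$ from $C_1$, so $C_1$ itself serves as a hub linking every recursive piece in $G^2$.

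The main obstacle is the counting, namely that $\ell(|C|) \defeq 1 + \sum_j \ell_j \geq \lfloor |C|/(\theta^2\Delta^2)\rfloor$. Vertices are ``wasted'' in two ways: those absorbed into the forbidden region $F_1$ (at most $\theta(\Delta+1)$ of them), and those in components $K_j$ too small to yield any block (i.e.~$|K_j| < \theta^2\Delta^2$, contributing $\ell_j = 0$). The crucial bound is on the number of components: since every $K_j$ touches the external boundary $N_G(C_1)\setminus C_1$, which has size $\leq \theta\Delta$, and each such boundary vertex has at most $\Delta$ edges into $C \setminus F_1$, we obtain $m \leq \theta\Delta^2$. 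Combining this with $\sum_j |K_j| = |C| - |F_1|$ and the elementary inequality $\sum_{j=1}^m \lfloor x_j \rfloor \geq \lfloor \sum_j x_j \rfloor - (m-1)$ should give the target bound once $|C|$ is sufficiently large; the remaining small range $|C| = O(\theta^2\Delta^2)$ is handled by a direct base-case verification using the fact that extracting $C_1$ alone already accounts for $\ell(|C|) \geq 1 = \lfloor|C|/(\theta^2\Delta^2)\rfloor$ whenever $|C| < 2\theta^2\Delta^2$.
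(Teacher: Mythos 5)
Your construction of the blocks and the verification of the $2$-block-tree properties (pairwise distance $\geq 2$, $G^2$-connectivity via $C_1$ as a hub) are sound, but the counting step — which you yourself flag as the main obstacle — does not close, and this is a genuine gap rather than a routine verification. At each level of the recursion you gain exactly one block ($C_1$) but, via the inequality $\sum_{j=1}^m \lfloor x_j\rfloor \geq \lfloor\sum_j x_j\rfloor - (m-1)$, you may lose up to $m-1$ blocks to the floors, and $m$ can be as large as $\theta\Delta(\Delta-1)$. Concretely: take $C_1$ connected of size $\theta$ and attach to its external boundary $\theta\Delta(\Delta-1)$ disjoint paths, each with exactly $\theta^2\Delta^2-1$ vertices. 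Then every component $K_j$ of $G[C\setminus F_1]$ satisfies $\lfloor|K_j|/(\theta^2\Delta^2)\rfloor=0$, so your recursion certifies only $1$ block, while $|C|=\Theta(\theta^3\Delta^4)$ and the target is $\lfloor|C|/(\theta^2\Delta^2)\rfloor=\Theta(\theta\Delta^2)$. The problem is that discarding every component of size below $\theta^2\Delta^2$ wastes up to $\theta^2\Delta^2-1$ vertices \emph{per component}, and nothing in your argument bounds the total number of such discarded components over the whole recursion; "$|C|$ sufficiently large" does not help because the deficit $m-1$ recurs at every level.

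The paper's proof avoids this by running a single global greedy process rather than recursing on components independently, and by discarding only components of size strictly less than $\theta$ (not $\theta^2\Delta^2$): any component that still has $\geq\theta$ vertices is kept alive and will eventually yield further blocks, because the next block is always seeded at a vertex adjacent to the already-removed region. The accounting is then amortised per extracted block: each extraction deletes at most $\theta(\Delta+1)$ vertices as the block plus its neighbourhood, and the components that become too small to survive must be adjacent to that just-deleted neighbourhood, so there are at most $\theta\Delta^2$ of them, each of size at most $\theta-1$, for a total of at most $\theta^2\Delta^2$ vertices charged to each block. To rescue your recursive formulation you would need to strengthen the induction hypothesis (e.g.\ keep processing components of size in $[\theta,\theta^2\Delta^2)$ and charge their eventual waste to the blocks whose removal orphaned them), at which point you have essentially rederived the paper's argument.
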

The above lemma is proved by the technique in \cite[Proposition 16]{FGW22a}, which we shall defer to \Cref{app-A} for completeness. 
We comment that \cite[Proposition 16]{FGW22a} actually proves some additional results, but here in \Cref{lemma:existence-2-block-tree}, we only requires $\ell = \Omega_{\theta,\Delta}(|C|)$ as this is enough for our application.

We also need a bound on the number of 2-block-trees.
\begin{lemma}[{\cite[Lemma 20]{FGW22a}}]\label{lemma:2-block-tree-number-bound}
 Let $\theta \geq 1$ be an integer. Let $G = (V , E)$ be a graph with maximum degree $\Delta$. For any integer
$\ell \geq 1$, any vertex $v\in V$, the number of $2$-block-trees $\{C_1,C_2, \dots ,C_{\ell} \}$ with block size $\theta$ and tree size $\ell$ such
$v\in \cup_{i=1}^{\ell}C_i$ is at most $ (\theta\mathrm{e}^{\theta}\Delta^{\theta+1})^{\ell}$.
\end{lemma}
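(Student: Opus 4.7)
The plan is to bound this count by setting up an injective encoding of each 2-block-tree into a product alphabet whose size matches the stated bound. The backbone of the argument is the classical enumeration fact that the number of connected subgraphs of size $\theta$ in $G$ containing a prescribed vertex is at most $(e\Delta)^{\theta-1}$, which I would invoke via the standard DFS-of-spanning-tree encoding combined with a Stirling-type estimate.

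Next I would form the ``block graph'' $\mathcal{B}$ whose vertices are the $\ell$ blocks and whose edges join two blocks at $G$-distance exactly $2$; by the definition of a 2-block-tree, $\mathcal{B}$ is connected. Fix a canonical spanning tree $\mathcal{T}$ of $\mathcal{B}$ rooted at the block containing $v$ (say the BFS tree, with ties broken by a fixed total order on $V$), and encode each 2-block-tree by a DFS traversal of $\mathcal{T}$. The encoding records: for the root block, its shape as a connected $\theta$-subgraph containing $v$ (at most $(e\Delta)^{\theta-1}$ options); for each non-root block $C_j$ with parent $C_i$ in $\mathcal{T}$, a triple consisting of the exit vertex in $C_i$ ($\theta$ options), the specific length-$2$ path in $G$ from $C_i$ to $C_j$ (at most $\Delta^2$ options, since each endpoint has $\leq \Delta$ neighbours), and the shape of $C_j$ as a connected $\theta$-subgraph anchored at the entry vertex ($(e\Delta)^{\theta-1}$ options). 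A further $e^\ell$-type factor absorbs the enumeration of the rooted-tree shape itself, via a Cayley-style or ballot-type argument on DFS walks.

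Multiplying the alphabets gives per non-root block a factor of at most $\theta\cdot\Delta^2\cdot(e\Delta)^{\theta-1} = \theta\, e^{\theta-1}\, \Delta^{\theta+1}$; combining this with the root block and the tree-shape factor yields at most $(\theta e^{\theta}\Delta^{\theta+1})^\ell$ configurations, as claimed. The main technical obstacle I expect is ensuring injectivity of the encoding: each distinct 2-block-tree must map to a unique tuple. This is handled by canonicalising both the spanning tree of $\mathcal{B}$ and the DFS tie-breaking against a fixed vertex order, and by recording the actual length-$2$ path used (not just the pair of endpoint blocks) so that pairs of blocks joined by multiple length-$2$ paths in $G$ are distinguished, while preserving the $\Delta^2$ per-step bound.
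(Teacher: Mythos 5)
The paper does not actually prove this lemma; it is imported verbatim from \cite[Lemma 20]{FGW22a}, so there is no in-paper proof to compare against. Your encoding is essentially the standard argument behind that result and it does yield the stated bound: form the auxiliary ``block graph'' on connected $\theta$-subsets with adjacency given by $G$-distance exactly $2$ (connectivity of $\bigcup_i C_i$ in $G^2$ together with pairwise distance $\geq 2$ forces every inter-block $G^2$-edge to realise distance exactly $2$, so this block graph restricted to the $C_i$ is connected), bound its maximum degree by $\theta\cdot\Delta^2\cdot(\mathrm{e}\Delta)^{\theta-1}$, and count connected $\ell$-vertex subgraphs containing the root block. Multiplying $(\mathrm{e}\Delta)^{\theta-1}$ choices for the root block by $\left(\mathrm{e}\cdot\theta\Delta^2(\mathrm{e}\Delta)^{\theta-1}\right)^{\ell-1}=(\theta \mathrm{e}^{\theta}\Delta^{\theta+1})^{\ell-1}$ for the rest gives the claim, since $(\mathrm{e}\Delta)^{\theta-1}\leq\theta \mathrm{e}^{\theta}\Delta^{\theta+1}$.

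The one place you should be careful is the sentence claiming that ``a further $\mathrm{e}^{\ell}$-type factor absorbs the enumeration of the rooted-tree shape itself.'' Read literally as a standalone count of rooted tree shapes on $\ell$ nodes, this is false: the Catalan/ballot count of DFS walks is $\Theta(4^{\ell})$, and if you multiply your per-block factor $\theta \mathrm{e}^{\theta-1}\Delta^{\theta+1}$ by $4^{\ell}$ rather than $\mathrm{e}^{\ell}$, the product exceeds $(\theta \mathrm{e}^{\theta}\Delta^{\theta+1})^{\ell}$ for large $\ell$ and small $\theta,\Delta$ (the available slack is only $\theta \mathrm{e}^{\ell}\Delta^{2}$). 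The factor of $\mathrm{e}$ per block is legitimate only when the tree shape and the per-block neighbour choices are counted \emph{jointly}, exactly as in the Fuss--Catalan bound $\frac{1}{\ell}\binom{D\ell}{\ell-1}\leq\frac{(\mathrm{e}D)^{\ell-1}}{2}$ underlying \cite[Lemma 2.1]{borgs2013left} (the same lemma you invoke for single blocks), applied here to the auxiliary block graph with $D=\theta\Delta^2(\mathrm{e}\Delta)^{\theta-1}$. With that repair --- i.e., do not separate the tree-shape enumeration from the degree factor --- your argument is complete and the injectivity concerns you raise (canonical spanning tree, recording the explicit length-$2$ path) are handled correctly.
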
 

Now we prove \Cref{lemma:indset-linear-efficiency}.
\begin{proof}[Proof of \Cref{lemma:indset-linear-efficiency}]
Let $ V_{\+{B}}$ be the set generated by the modified algorithm (see \eqref{eq-ind-simple}) as defined in \eqref{eq:indset-bad-timestamps}. 
Again, $V_{\+{B}} \subseteq V_H$ is a finite subset because the algorithm terminates after a finite number of steps. 
Choose the parameter
\begin{align*}
	\theta \defeq \left \lceil\frac{4(1+\delta)}{\delta} \right \rceil,
\end{align*}
and by this choice, $\Pr{|R| \geq 10^4 (\frac{1+\delta}{\delta})^2 k^9 \Delta^3 \eta} \leq \Pr{|R| \geq 400\theta^2k^9 \Delta^3 \eta}$.
Using \Cref{lemma:indset-randomness-bound}, it holds for any positive integer $\eta$ that
\begin{align*}
	\Pr{|R| \geq 400\theta^2k^9 \Delta^3 \eta} \leq \Pr{ (|V_{\+B}|+1)k^2\Delta \geq 400\theta^2k^9 \Delta^3 \eta } \leq \Pr{|V_{\+B}| \geq 300\theta^2k^7 \Delta^2 \eta  }.
\end{align*}
Hence, it suffices to show
\begin{align*}
	\Pr{|V_{\+B}| \geq 300\theta^2k^7 \Delta^2 \eta  } \leq \tp{\frac{1}{2}}^\eta
\end{align*}
for any integer $\eta\geq 1$. 

Fix an integer $\eta\geq 1$ and assume $|V_{\+B}| \geq 300\theta^2k^7 \Delta^2 \eta$. 
As the set $ V_{\+{B}}$ is non-empty, by \Cref{lemma:indset-connected}, there exists $x \in V_{\+B}$ such that $t^* \in x$. 
Fix such a vertex $x$. 
By \Cref{lemma-vb-linear}, we can find the set $ V_{\+{B}}^{\-{lin}} \subseteq V_{\+B}$ with size $ |V_{\+{B}}^{\-{lin}}| \geq \lfloor \frac{|V_{\+B}|}{2k + 1} \rfloor \geq \lfloor\frac{|V_{\+B}|}{3k} \rfloor \geq \theta^2 (10k^3\Delta)^2 \eta$ containing $x$ and fulfilling the conditions in \Cref{lemma-vb-linear}, 
and it is straightforward to find a subset $U \subseteq V_{\+{B}}^{\-{lin}}$ with size \emph{exactly} $|U| = \theta^2 (10k^3\Delta)^2 \eta$ such that $x \in U$ and the rest of \Cref{lemma-vb-linear} are satisfied by $U$.
By \Cref{lem-max-degree-G-self}, the maximum degree of $G_H^{\-{self}}[U]$ is at most $10k^3\Delta$.
Since $G_H^{\-{self}}[U]$ is a finite connected subgraph in  $G_H^{\-{self}}$, by \Cref{lemma:existence-2-block-tree}, we can find a $2$-block-tree $\{C_1,C_2, \dots ,C_{\eta} \}$ in $G_H^{\-{self}}$ with block size $\theta$ and tree size $\eta$ such that $x \in C_1$ and $C_j \subseteq U \subseteq V_{\+B}$ for all $j\in [\eta]$. 
By \ref{enum:vblin-linear} in \Cref{lemma-vb-linear}, for any distinct $x_1,x_2 \in \cup_{j=1}^\eta C_j$, it holds that $|x_1 \cap x_2| \leq 1$.

The discussion above motivates the following notation. 
For any $x$, let $\+T^{\eta,\theta}_x$ be the set of all 2-block-trees $\{C_1,C_2,\cdots,C_\eta\}$ with block size $\theta$ and tree size $\eta$ in $G_H^{\-{self}}$ such that
\begin{itemize}
	\item $x \in C_1$;
	\item let $ C = \cup_{j=1}^\eta C_j$, then for any $w_1,w_2 \in C$, $|w_1 \cap w_2| \leq 1$.
\end{itemize}
Hence, if $|V_{\+B}|\geq 300\theta^2k^7 \Delta^2 \eta $, then there exists a vertex $x \in V_H$ satisfying $t^* \in v$ together with a 2-block-tree $\{C_1,C_2,\ldots,C_\eta\} \in \+T^{\eta,\theta}_x$ such that $C_j \subseteq V_{\+B}$ for all $j \in [\eta]$.
By a union bound over all possible vertices $x$ and all  2-block-trees in $\+T^{\eta,\theta}_x$, we have 
\begin{align*}
\Pr{|V_{\+B}| \geq 3\theta^2 k\Delta^2 \eta  } \leq \sum_{x \in V_H: t \in x} \sum_{\{C_1,\ldots,C_\eta\} \in \+T^{\eta,\theta}_x} \Pr{\forall j \in [\eta], C_j \subseteq V_{\+B}}.
\end{align*}
Fix a 2-block-tree $\{C_1,\ldots,C_\eta\} \in \+T^{\eta,\theta}_x$. 
By definition, for any $j$ and $\ell$ that $j \neq \ell$, we have $\dist_{G_H^{\-{self}}}(C_j,C_\ell) \geq 2$, and thus for any $x_j \in C_j$ and $x_\ell \in C_\ell$, it holds that $x_j \cap x_\ell = \emptyset$. 
For any $j\in[\eta]$, and any two $w_1, w_2 \in C_j$, it holds that $|w_1 \cap w_2| \leq 1$ by the definition of $\+T^{\eta,\theta}_x$.
This implies that $C_j$ contains at least $\theta (k-\theta)$ distinct timestamps, and hence the 2-block-tree $\{C_1,\cdots,C_\eta\}$ contains at least $\theta (k-\theta) \eta$ distinct timestamps. 
Since the 2-block-tree is a subset of $V_{\+B}$, by~\Cref{lemma:indset-all-1}, every timestamp $t$ in the 2-block-tree must take the corresponding random oracle output $r_t = \perp$.
This gives
\begin{align*}
\Pr{\forall j \in [\eta], C_j \subseteq V_{\+B}}	\leq \tp{\frac{1}{2}}^{\theta(k-\theta)\eta}.
\end{align*}
Next, we count the number of possible 2-block-trees in $\+T^{\eta,\theta}_x$, which can be upper bound by the number of all 2-block-trees $\{C_1,C_2,\ldots,C_\eta\}$ with block size $\theta$ and tree size $\eta$ in $G_H^{\-{self}}$ such that $x \in C_1$. 
By \Cref{lemma:2-block-tree-number-bound} and \Cref{lem-max-degree-G-self}, we have
\begin{align*}
\abs{\+T^{\eta,\theta}_x(U)} \leq (\theta \mathrm{e}^\theta (10 k^3 \Delta)^{\theta+1} )^{\eta}.	
\end{align*}
Hence, we only need to prove that 
\begin{align*}
(\theta \mathrm{e}^\theta (10 k^3 \Delta)^{\theta+1} )^{\eta}	\tp{\frac{1}{2}}^{\theta(k-\theta)\eta} \leq \tp{\frac{1}{2}}^\eta,
\end{align*}
which is equivalent to 
$2^{k-\theta} \geq (2\theta)^{1/\theta} \mathrm{e} (10k^3 \Delta)^{(1+\theta)/\theta}$. 
We derive this as follows. 
Observe the following inequalities
\begin{align}
\delta\geq\frac{2}{\frac{4}{\delta}+3}\geq\frac{2}{\left\lceil\frac{4}{\delta}+3\right\rceil}=\frac{2}{\theta-1}; \label{equ:linear-inequality-1} \\
k\geq\frac{25(1+\delta)^2}{\delta^2}\geq\theta^2\implies\frac{k-\theta}{k}\geq\frac{\theta-1}{\theta}. \label{equ:linear-inequality-2}
\end{align}
Then we have
\begin{align*}
2^{k-\theta}=(2^k)^{\frac{k-\theta}{k}} & \geq\left((100k^3\Delta)^{1+\delta}\right)^{\frac{k-\theta}{k}} \tag{By condition}\\
&\geq\left(10^{1+\frac{2}{\theta-1}}(10k^3\Delta)^{1+\frac{2}{\theta-1}}\right)^{\frac{k-\theta}{k}} \tag{By~\eqref{equ:linear-inequality-1}}\\
&\geq\left(5^{\frac{\theta}{\theta-1}}(10k^3\Delta)^{1+\frac{2}{\theta-1}}\right)^{\frac{k-\theta}{k}}\\
&\geq 5(10k^3\Delta)^{\frac{1+\theta}{\theta}} \tag{By~\eqref{equ:linear-inequality-2}}\\
&\geq (2\theta)^{1/\theta} \mathrm{e}(10k^3\Delta)^{\frac{1+\theta}{\theta}}. \tag{By $\theta\geq 4$}
\end{align*}
This finishes the proof. 
\end{proof}

\section{Hypergraph colouring}\label{section:hyper-colouring}

In this section, we give FPTASes for the number of proper colourings in a hypergraph in the local lemma regime and prove \Cref{thm-h-color} and \Cref{thm-h-color-simple}.  
Given a $k$-uniform hypergraph $H=(V,\+{E})$ with maximum degree $\Delta$, we use $\Omega_H\subseteq [q]^V$ to denote the set of all proper $q$-colourings of $H$, and $Z=\abs{\Omega_H}$ to denote its size. 
Our goal is to approximate $Z$.

We use the edge decomposition scheme in \eqref{eq-edge-decom} to count the number of hypergraph colourings.  Assume $\+{E}=\{e_1,e_2,\dots,e_m\}$. Let $\+{E}_i=\{e_1,e_2,\dots,e_i\}$, $H_i=(V,\+E_i)$. Let $\mu_{i}$ denote the uniform distribution over all proper $q$-colourings of $H_i=(V,\+E_i)$ for each $0\leq i\leq m$, where $\+E_i=\{e_1,e_2,\ldots,e_i\}$. According to \eqref{eq-edge-decom}, approximating $Z_H$ boils down to approximating $\Ex_{z \sim \mu_{i-1,e_i}}[\phi_{e_i}(z)]$  for each $1\leq i\leq m$, where $\phi_{e_i}(z) = 0$ if and only if all $v \in e_i$ take the same value in $z$. Hence, 
\begin{align} \label{eqn:Ex-to-monochromatic}
  \Ex_{z \sim \mu_{i-1,e_i}}[\phi_{e_i}(z)] = \Pr[X \sim \mu_{i-1}]{e_i\text{ is not monochromatic in } X}.
\end{align}


Next, for each $0\leq i\leq m$, it is straightforward to verify the following properties:
\begin{itemize}
    \item $H_i$ is a $k$-uniform hypergraph if $H$ is a $k$-uniform hypergraph;
    \item the maximum degree of $H_i$ is at most the maximum degree of $H$;
    \item $H_i$ is a linear hypergraph if $H$ is a linear hypergraph.
\end{itemize}

When the Lov\'asz local lemma applies,
any edge decomposition scheme is suitably lower bounded, as shown by the next lemma.

\begin{lemma}\label{lemma:h-color-edge-uniform}
  If $\Delta\geq 2$ and $q>(\mathrm{e}\Delta k)^{\frac{1}{k-1}}$, any edge decomposition scheme for $H=(V,\+{E})$ is $\frac{1}{2}$-bounded.
\end{lemma}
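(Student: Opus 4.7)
The plan is to reduce $\frac{1}{2}$-boundedness to an upper bound on the monochromatic-edge probability. Since $\phi_{e_i}(y)$ is the $\{0,1\}$-indicator that $e_i$ is not monochromatic and $\max_{y \in [q]^{e_i}}\phi_{e_i}(y) = 1$ whenever $q \ge 2$, the definition of $b$-boundedness together with identity~\eqref{eqn:Ex-to-monochromatic} reduces the lemma to proving $\Pr[X \sim \mu_{i-1}]{e_i \text{ is monochromatic in } X} \le \frac{1}{2}$ for every $i \in [m]$.

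Next I would set up the local lemma on $H_{i-1}$. Observe that $\mu_{i-1}$ is the uniform distribution on $[q]^V$ conditioned on $\bigwedge_{e' \in \+E_{i-1}}\overline{B_{e'}}$, where $B_{e'} := \{e' \text{ is monochromatic}\}$ with $\Pr{B_{e'}} = q^{1-k}$. Each bad event's dependency neighbourhood satisfies $|\Gamma(B_{e'})| \le k(\Delta - 1)$, since each of the $k$ vertices of $e'$ lies in at most $\Delta - 1$ other hyperedges in $\+E_{i-1}$. With the uniform choice $x(B_{e'}) = \frac{1}{k\Delta}$, verifying~\eqref{llleq} amounts to showing $q^{1-k} \le \frac{1}{k\Delta}\left(1 - \frac{1}{k\Delta}\right)^{k(\Delta-1)}$; using the elementary estimate $(1-1/(k\Delta))^{k(\Delta-1)} \ge 1/\mathrm{e}$ (which follows from $\log(1-x) \ge -x/(1-x)$), this is implied by the hypothesis $q^{k-1} > \mathrm{e}\Delta k$.

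Having established~\eqref{llleq}, I would apply \Cref{lemma:HSS} to the event $A = \{e_i \text{ is monochromatic}\}$. Although $e_i \notin \+E_{i-1}$, an identical counting gives $|\Gamma(A)| \le k(\Delta-1)$. Thus
\begin{align*}
\Pr[X \sim \mu_{i-1}]{A} \;\le\; \Pr{A}\prod_{B \in \Gamma(A)}(1-x(B))^{-1} \;\le\; q^{1-k}\left(1-\tfrac{1}{k\Delta}\right)^{-k(\Delta-1)}.
\end{align*}
The last factor is bounded by $\exp\!\left(\frac{k(\Delta-1)/(k\Delta)}{1 - 1/(k\Delta)}\right) = \exp\!\left(\frac{k\Delta - k}{k\Delta - 1}\right) \le \mathrm{e}$, and combining with $q^{1-k} < 1/(\mathrm{e}\Delta k)$ from the hypothesis yields $\Pr[X \sim \mu_{i-1}]{A} < \frac{1}{k\Delta} \le \frac{1}{4} < \frac{1}{2}$, completing the proof.

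The main obstacle is really just constant-factor bookkeeping. Condition~\eqref{llleq} with the choice $x = 1/(k\Delta)$ is tight up to a factor of $\mathrm{e}$, so one must verify that the slack in $q^{k-1} > \mathrm{e}\Delta k$ precisely absorbs the Haeupler--Saha--Srinivasan correction factor $(1-x)^{-|\Gamma(A)|}$. No conceptual idea beyond a careful direct application of \Cref{lemma:HSS} is required.
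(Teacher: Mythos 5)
Your proof is correct and follows essentially the same route as the paper's: reduce $\tfrac12$-boundedness to bounding the monochromatic probability of $e_i$ under $\mu_{i-1}$, set up the local lemma on $H_{i-1}$ with the uniform product measure and $x(B_{e'})=\tfrac{1}{k\Delta}$, and apply \Cref{lemma:HSS} to the event that $e_i$ is monochromatic. Your write-up is in fact slightly more careful than the paper's, which asserts the verification of~\eqref{llleq} and the neighbourhood count without spelling out the $(1-1/(k\Delta))^{k(\Delta-1)}\ge 1/\mathrm{e}$ estimate.
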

\begin{proof}
  For any ordering of the hyperedges of $H$, fix a hypergraph $H_{i-1}=(V,\+E_{i-1})$.
  Note that the maximum degree of $H_{i-1}$ is at most $\Delta$.
Suppose each vertex takes a colour from $[q]$ uniformly and independently.
For each hyperedge $e \in \+E_{i-1}$, define $B_e$ as the event that $e$ is monochromatic.
Let $x(B_e) = \frac{1}{\Delta k}$ for all $e \in \+E_{i-1}$. Then \eqref{llleq} in \Cref{locallemma} is satisfied. Let $A$ denote the event that all vertices in set $e_i$ are monochromatic.  By~\Cref{lemma:HSS}, 
\begin{align*}
\Pr[X \sim \mu_{i-1}]{e_i\text{ is monochromatic in } X} \leq \tp{\frac{1}{q}}^{k-1}\tp{1-\frac{1}{\Delta k}}^{-(\Delta-1)k} \leq \frac{1}{\Delta k} \leq \frac{1}{2},
\end{align*}
and thus the probability that $e_i$ is not monochromatic is at least $\frac{1}{2}$.
\end{proof}

By \Cref{cor-reduction}, it suffices to give the following sampling algorithms for general hypergraphs and linear hypergraphs, respectively. 

\begin{lemma}\label{lem-sampling-color}
Let $\Delta\geq 2$, $k \geq 20$ and $q\geq 64\Delta^{\frac{3}{k-5}}$ be three integers and  $0 < \gamma < 1$ be a real number.
There exists an algorithm such that given any $k$-uniform hypergraph $H = (V,\+E)$ with  the maximum degree at most $\Delta$, any subset of vertices $S \subseteq V$ such that $\abs{S}\leq k$,  outputs a random $Y \in [q]^{S}$ such that $\DTV{Y}{\mu_{S}} \leq \gamma$, where $\mu$ is the uniform distribution over all proper $q$-colourings in $H$. The time cost of this algorithm is $O(\Delta^6k^{11}(\log^2{\frac{1}{\gamma}})\cdot q^{8\Delta^3k^6\log{\frac{1}{\gamma}}})$. 
It draws at most $8\Delta^2 k^5 \lceil\log \frac{1}{\gamma}\rceil+1$ random variables over a size-$(q+1)$ domain.
\end{lemma}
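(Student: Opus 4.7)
The plan is to apply the coupling towards the past framework from \Cref{sec:marginal-sampler} to a suitable \emph{projected distribution} on $H$ rather than directly to $\mu$, because $\mu$ itself lacks the per-colour marginal lower bounds needed to run \Cref{Alg:appresolve}. Concretely, I would design a projection that maps each vertex's colour into a much smaller ``bucket'' space such that (i) the projected distribution $\nu$ on $H$ still carries enough information to reconstruct a sample from $\mu_S$, (ii) $\nu$ is $\boldsymbol{b}$-marginally lower bounded in the sense of \Cref{definition:margin-lower-bound} with $\mu^{\mathrm{LB}}(\perp) < 1$, so \Cref{condition:sufficient-correctness} holds, and (iii) the one-step update from $\nu^{\sigma_\Lambda}_v$ can be computed by inspecting only a local neighbourhood of $v$ in $H$, so a $\config(t)$ satisfying \Cref{cond:invariant-boundary} can be implemented.

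With the projection in place, I would run \Cref{Alg:appresolve} on the systematic scan Glauber dynamics for $\nu$ once for each of the (at most $k$) variables in $S$, using \emph{shared} global data structures $M$ and $R$ across the $|S|$ invocations. Sharing $M$ and $R$ guarantees that the $|S|$ backward deductions are consistent with one another, i.e.~they extend a single realisation of the chain, and consequently the joint output $\hat{Y}_S$ is distributed as $\nu_S$ up to the truncation error. Given $\hat{Y}_S$, the final sample $Y \in [q]^S$ is obtained by drawing a colouring on $S$ from the conditional distribution $\mu_S(\cdot \mid \hat{Y}_S)$; the projection will be chosen so that this conditional sampling is trivial (a coordinate-wise uniform draw from the bucket indicated by $\hat{Y}_S$). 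Setting the truncation threshold $K = 8\Delta^2 k^5 \lceil \log \frac{1}{\gamma}\rceil$ (so that the total number of oracle calls across the $|S|\le k$ runs is within the stated $8\Delta^2 k^5 \lceil \log \frac{1}{\gamma}\rceil + 1$ budget over a size-$(q+1)$ domain) will yield the required $\DTV{Y}{\mu_S}\le \gamma$ provided the truncation event has probability at most $\gamma$.

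The hard part is the tail bound on $|R|$, which drives both the error and the random-variable count. I would mimic the argument of \Cref{sec:proof-error-ind}: define a witness graph on time-stamped hyperedges, observe via an analogue of \Cref{lemma:indset-connected} that if $|R|$ is large then there is a large connected set $V_{\+B}$ of ``bad'' hyperedge-times in that witness graph, then invoke \Cref{lemma:wg-degree-bound}, \Cref{lemma:big-2-tree} and \Cref{lemma:2-tree-number-bound} to extract a $2$-tree of size $\Theta(\log\frac{1}{\gamma})$ inside $V_{\+B}$, and finally union-bound the probability that every hyperedge in the $2$-tree is ``bad''. The key inequality to verify is that each hyperedge being bad (i.e.~the lower bound oracle returns $\perp$ on every relevant time-stamp) has probability at most roughly $q^{-(k-1)}$ after accounting for the $\log q$ factor per vertex induced by the projection, which together with the degree bound $\mathrm{e}(2\Delta k^2)^2$ of the witness graph and the condition $q \ge 64\Delta^{3/(k-5)}$ closes the geometric sum $(\mathrm{e}D^2 \cdot q^{-(k-1)})^{\eta} \le 2^{-\eta}$. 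The running-time bound then follows as in \Cref{lem-alg-h-2-time}: each of the $|S|\le k$ runs costs $O(\Delta k)$ per entry of $R$ and, since each random variable is drawn over a size-$(q+1)$ domain, the conditional colouring step at the end multiplies by a further $\poly(k,\log q)$ factor, while the $q^{8\Delta^3 k^6 \log\frac{1}{\gamma}}$ term in the stated complexity arises only when this algorithm is later enumerated by \Cref{proposition-deran} (i.e.~it reflects the cost on the derandomisation side, not the random algorithm itself).
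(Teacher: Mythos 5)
Your high-level architecture matches the paper's: project to a smaller spin space to obtain marginal lower bounds, run the truncated CTTP on the projected distribution once per vertex of $S$ with shared global $M$ and $R$, and bound the truncation probability via a witness graph and $2$-trees. However, there are two concrete gaps.

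First, the claim that the final conditional sampling step is ``trivial (a coordinate-wise uniform draw from the bucket indicated by $\hat{Y}_S$)'' is false, and no choice of projection can make it true while keeping marginal lower bounds. Conditioned on the projected configuration $Y=h(X)$, a hyperedge all of whose vertices landed in the same bucket is still not resolved: the constraint that it not be monochromatic survives inside the bucket, so $\mu^{Y}_S$ is not a product distribution and depends on the projected values in a whole neighbourhood of $S$. This is why the paper's $\appmghcoledge$ must run a BFS from $S$ to find a set $V'$ whose boundary hyperedges are all satisfied by the revealed projected values, and then enumerate all colourings in $\otimes_{u\in V'}h^{-1}(\sigma_u)$ to compute $\mu^{\sigma}_S$ exactly (\Cref{Line-color-edge-enum}). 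That enumeration is precisely where the $q^{8\Delta^3k^6\log\frac{1}{\gamma}}$ factor in the stated time bound comes from --- it is the cost of the \emph{randomised} algorithm itself (\Cref{lem-alg-edge-col-time}), not, as you assert, an artefact of the later derandomisation. Without this step your algorithm does not output a sample close to $\mu_S$.

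Second, your tail bound analyses the wrong bad event. For hypergraph colourings a hyperedge triggers recursion not when the lower bound oracle returns $\perp$ on every relevant timestamp, but when there exists a common bucket $j\in[s]$ such that every relevant timestamp returns a value in $\{\perp, j\}$ (the condition in \Cref{Line-color-satisfy}); your all-$\perp$ event is a strict subset, so bounding its probability does not bound the probability of recursion. The correct per-vertex probability is roughly $\psi^{\-{LB}}(\perp)+\psi^{\-{LB}}(j)\approx \frac{5}{4s}$, and after a union bound over $j\in[s]$ and the distinguished timestamp the per-hyperedge probability is about $sk\left(\frac{5}{4s}\right)^{k-1}\le 2k\left(\frac{2}{s}\right)^{k-2}$ (see the proof of \Cref{lemma:color-efficiency}), not $q^{-(k-1)}$. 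Closing the geometric sum then forces the choice $s=\lfloor q^{2/3}\rfloor$ and the condition $s\gtrsim \Delta^{2/(k-2)}$, which is where the hypothesis $q\ge 64\Delta^{3/(k-5)}$ is actually consumed; your sketch never fixes $s$ and so cannot verify this. A further, more minor, omission is that the consistency of the $|S|$ shared-randomness runs (and of the subsequent conditional sampling) with a single realisation of the chain requires its own coupling argument against a forward process, as in \Cref{lemma:edge-col-correct}; it does not follow merely from the assertion that $M$ and $R$ are shared.
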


\begin{lemma}\label{lem-sampling-color-simple}
Let $\delta > 0$ and $0 < \gamma < 1$ be two real numbers.
Let $k \geq \frac{50(1+\delta)^2}{\delta^2}$, $\Delta \geq 2$ and $q$ be integers satisfying 
$q\geq 50\Delta^{\frac{2+\delta}{k-3}}$. 
There exists an algorithm such that given any $k$-uniform linear hypergraph $H = (V,\+E)$ with  the maximum degree at most $\Delta$, any subset of vertices $S \subseteq V$ such that $\abs{S}\leq k$, outputs a random $Y \in [q]^{S}$ such that $\DTV{Y}{\mu_{S}} \leq \gamma$, where $\mu$ is the uniform distribution over all proper $q$-colourings in $H$. The time cost of this algorithm is $O\big((\frac{1+\delta}{\delta})^4\Delta^8k^{21}(\log^2{\frac{1}{\gamma}}) q^{6\cdot 10^4(\frac{1+\delta}{\delta})^2\Delta^4k^{11}\log{\frac{1}{\gamma}}}\big)$. 
It draws at most $6 \cdot 10^4 (\frac{1+\delta}{\delta})^2 \Delta^3 k^{10}   \lceil {\log{\frac{1}{\gamma}}}\rceil+1$ random variables over a size-$(q+1)$ domain.
\end{lemma}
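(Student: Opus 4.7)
The plan is to mirror the proof of \Cref{lem-sampling-ind-simple} (the linear independent-set case) at the colouring level. I would start from the colouring marginal sampler underlying \Cref{lem-sampling-color}---which applies CTTP to a projected colouring distribution, runs $|S|\le k$ parallel CTTP invocations sharing the global tables $M$ and $R$, and then reconstructs a colouring on $S$ from the execution transcript---and modify only two things: the truncation threshold and the tail-bound analysis. First I would raise the truncation threshold to $K \defeq 6 \cdot 10^4 (\tfrac{1+\delta}{\delta})^2 \Delta^3 k^{10} \lceil \log \tfrac{1}{\gamma}\rceil$. The $+1$ random draw and the factor-of-$k$ inflation versus the independent-set linear bound come from, respectively, the single extra colouring draw conditioned on the projected sample and the fact that CTTP is invoked on up to $k$ vertices; the running-time bound $q^{K}\cdot\poly(\Delta,k,\log\tfrac{1}{\gamma})$ then follows from the per-step cost of the sampler (as in \Cref{lem-alg-h-2-time}) and the standard enumerate-all-random-choices bookkeeping from \Cref{proposition-deran}.

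The core work is to bound $\Pr{|R|\ge K}\le \gamma$, and this I would approach in two stages. Stage one replays the colouring analogue of \Cref{lemma:indset-randomness-bound} to reduce $|R|$ to $\poly(k,\Delta)\cdot |V_{\+B}|$ for a set $V_{\+B}\subseteq V_H$ of ``bad'' timestamp-sets in the witness graph $G_H$, and then invokes the colouring analogue of \Cref{lemma:indset-connected} to show that $V_{\+B}$ is connected in $G_H$ and contains a vertex on each CTTP origin. Both of these are structural facts about the CTTP execution that transfer verbatim from the general colouring case. Stage two, following \Cref{sec-linear-ind}, uses (i) linearity of $H$ to extract a subset $V_{\+B}^{\-{lin}}\subseteq V_{\+B}$ of size $\Omega(|V_{\+B}|/k)$ that is connected in the self-neighbourhood powered witness graph $G_H^{\-{self}}$ and whose vertices pairwise intersect in at most one timestamp (the colouring analogue of \Cref{lemma-vb-linear}), and (ii) extracts a $2$-block-tree of block size $\theta\defeq\lceil 4(1+\delta)/\delta\rceil$ and tree size $\eta=\lceil\log(1/\gamma)\rceil$ out of $V_{\+B}^{\-{lin}}$ via \Cref{lemma:existence-2-block-tree}. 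Each block then contains at least $\theta(k-\theta)$ distinct timestamps, and \Cref{lemma:2-block-tree-number-bound} together with the $O(k^3\Delta)$ maximum-degree bound of \Cref{lem-max-degree-G-self} converts this into a union-bound estimate of the form
\[
\Pr{|V_{\+B}|\ge C\,\theta^2 k^8\Delta^2\eta}\le\left(\theta\mathrm{e}^{\theta}(O(k^3\Delta))^{\theta+1}\cdot p^{\theta(k-\theta)}\right)^{\eta},
\]
where $p$ is the per-timestamp probability that the lower-bound oracle returns the value needed to keep the timestamp in $V_{\+B}$.

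The hard part, and the only genuine departure from the independent-set argument, will be controlling $p$ for the projected colouring distribution: unlike the $1/2$ lower bound available in the independent-set case, here the relevant marginal lower bound (furnished by \Cref{lemma:h-color-edge-uniform} applied to the projected distribution used in the colouring sampler) is only $\Theta(1/q)$, so each timestamp contributes a factor of $O(1/q)$ to the estimate, and the hypothesis $q\ge 50\Delta^{(2+\delta)/(k-3)}$ is exactly what is needed to collapse the union bound to $2^{-\eta}\le\gamma$. Explicitly, $k\ge 50(1+\delta)^2/\delta^2$ gives $\delta\ge 2/(\theta-1)$ and $(k-\theta)/k\ge(\theta-1)/\theta$, and plugging these into $q^{k-\theta}\ge C_0 (k^3\Delta)^{\theta+1}$ via the same chain of inequalities as at the end of the proof of \Cref{lemma:indset-linear-efficiency} closes the estimate. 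Combined with \Cref{thm:dtv-truncate} applied to each of the $|S|\le k$ CTTP invocations and a final union bound over $S$, this yields $\DTV{Y}{\mu_S}\le \gamma$ as required.
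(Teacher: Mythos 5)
Your overall architecture matches the paper's: raise the truncation threshold, reduce $|R|$ to $|V_{\+B}^{\-{col}}|$ via the witness graph, exploit linearity through the self-neighbourhood powered witness graph and $2$-block-trees, and close with the same chain of inequalities as in the independent-set case. However, there is a genuine gap in the key probability estimate. You assert that the per-timestamp probability $p$ is $O(1/q)$, coming from a $\Theta(1/q)$ marginal lower bound furnished by \Cref{lemma:h-color-edge-uniform}. This is not how the colouring analysis works. CTTP here runs on the \emph{projected} distribution over $[s]^V$, and by the colouring analogue of the ``all-$\perp$'' lemma (\Cref{lemma:color-all-1}), a witness-graph vertex $x$ joins $V_{\+B}^{\-{col}}$ only if there exist $j\in[s]$ and an exceptional $t_0\in x$ such that $r_{t'}\in\{\perp,j\}$ for all remaining $t'\in x$. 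By local uniformity (\Cref{cor:local-uniformity}, not \Cref{lemma:h-color-edge-uniform}), each such event has probability roughly $2/s$ per timestamp, and one pays an extra union-bound factor of $sk$ over the choices of $j$ and $t_0$. The relevant parameter is therefore the projection arity $s$, which you never choose; the paper takes $s=\lfloor q^{1/2}\rfloor$ in the linear case and must verify $\lfloor q/s\rfloor^k\ge 4\mathrm{e}qs\Delta k$ for this choice. The square root is exactly why the hypothesis reads $q\ge 50\Delta^{(2+\delta)/(k-3)}$ rather than the exponent $(1+\delta)/(k-3)$ you would expect from your $O(1/q)$ claim, so your final inequality $q^{k-\theta}\ge C_0(k^3\Delta)^{\theta+1}$ does not reflect the actual requirement, which is of the form $(s/2)^{(k-\theta-1)(\theta-1)}\gtrsim (k^4\Delta)^{\theta+1}$.

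Two smaller points. First, the timestamp count per block is not $\theta(k-\theta)$: each block element loses one timestamp to the exceptional $t_0$, and one element per block may be the artificial root $\ts(S,0)$, leaving only $(k-\theta)(\theta-1)$ constrained timestamps; this is why the paper takes $\theta=\lceil 6(1+\delta)/\delta\rceil$ rather than your $\lceil 4(1+\delta)/\delta\rceil$. Second, the final correctness step is not ``\Cref{thm:dtv-truncate} applied to each of the $|S|\le k$ invocations and a union bound over $S$'': bounding the total variation distance of the \emph{joint} sample $Y\in[q]^S$ requires a single global coupling of the whole $\appmghcoledge$ execution (including the consistency of the shared tables $M,R$ across invocations and the final conditional colouring draw) against an idealised forward process, which is the content of \Cref{lemma:edge-col-correct}; summing marginal TV errors does not bound the joint TV error.
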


\begin{remark}  \label{rem:perfect-marginal-HC}
  Unlike in \Cref{rem:perfect-marginal-HIS},
  algorithms in \Cref{lem-sampling-color} and \Cref{lem-sampling-color-simple} are not direct truncations of perfect marginal samplers.
  In fact, we will apply CTTP to a projected distribution,
  rather than to the original uniform distribution over all proper hypergraph colourings.
  There are some extra steps after truncating CTTP, which will be explained in more details in \Cref{sec:marginal-projected}.
\end{remark}

\Cref{thm-h-color} and \Cref{thm-h-color-simple} are direct consequences of \Cref{lem-sampling-color} and \Cref{lem-sampling-color-simple}, respectively.

\begin{proof}[Proof of \Cref{thm-h-color} and \Cref{thm-h-color-simple}]
By \Cref{lemma:h-color-edge-uniform}, any edge decomposition scheme is $\frac{1}{2}$-bounded under the conditions of \Cref{thm-h-color} and \Cref{thm-h-color-simple}.
%
Plugging $\gamma=\frac{\eps}{20m}$ into \Cref{lem-sampling-color},
where $m\le \Delta n$ is the number of hyperedges, 
we have an algorithm that draws at most $8\Delta^2 k^5 \lceil\log \frac{20m}{\epsilon}\rceil+1$ random variables over a size-$(q+1)$ domain 
with time cost $O(\Delta^6k^{12}\log^2{\frac{20m}{\epsilon}}\cdot q^{8\Delta^3k^6\log{\frac{20m}{\epsilon}}})$ 
for sampling from the conditional distribution within $\frac{\epsilon}{20m}$ total variation distance.
By \Cref{cor-reduction}, we have a deterministic approximate counting algorithm with running time
\begin{align*}
	T = O\tp{\Delta n\cdot \Delta^6k^{11}\log^2{\frac{20\Delta n}{\epsilon}}\cdot q^{8\Delta^3k^6\log{\frac{20\Delta n}{\epsilon}}} \cdot   (q+1)^{  9\Delta^2 k^5 \log \frac{20\Delta n}{\epsilon}} } = \mathrm{poly}(\Delta k) \tp{\frac{\Delta n}{\epsilon}}^{O(\Delta^3 k^6\log{q})}.
\end{align*}
This proves \Cref{thm-h-color}.
\Cref{thm-h-color-simple} is proved the same way but with  \Cref{lem-sampling-color-simple} invoked. 
The running time of the deterministic approximate counting algorithm is 
\begin{align*}
	T &= O\tp{\Delta n \cdot \tp{\tp{\frac{1+\delta}{\delta}}^4\Delta^8 k^{21} \log^2 \frac{20\Delta n}{\epsilon}} \cdot q^{6\cdot 10^4\tp{\frac{1+\delta}{\delta}}^2\Delta^4k^{11}\log{\frac{20\Delta n}{\epsilon}}} \cdot (q+1)^{7 \cdot 10^4 (\frac{1+\delta}{\delta})^2\Delta^3 k^{10}   \log{\frac{20\Delta n}{\epsilon}}}  }\\
	&= \mathrm{poly}\tp{\frac{(1+\delta)\Delta k}{\delta}}\tp{\frac{\Delta n}{\epsilon}}^{O\tp{ \frac{(1+\delta)^2\Delta^4 k^{11}\log{q}}{\delta^2} }}. 
\end{align*}
This proves \Cref{thm-h-color-simple}.
\end{proof}

\subsection{Projection schemes and projected distribution}\label{section:projection-schemes}
Unlike in the case of hypergraph independent sets, the single-site Glauber dynamics for hypergraph colouring is not necessarily irreducible. 
We use the following ``projection scheme" introduced in \cite{FHY21} to resolve this issue.

\begin{definition}[projection scheme for hypergraph colourings~\cite{FHY21}]\label{definition:projection-scheme}
Let $1\leq s \leq q$ be an integer.
A (balanced) \emph{projection scheme}
 $h:[q]\rightarrow [s]$ satisfies for any $i\in [s]$, $\abs{h^{-1}(i)}=\lfloor \frac{q}{s}\rfloor$ or $\abs{h^{-1}(i)}=\lceil \frac{q}{s}\rceil$.
\end{definition}
 We extend $h$ to colourings of $V$ as well.  For any $X\in [q]^V$, we use $Y=h(X)$ to denote the \emph{projected image}:
\[
    \forall v\in V,\quad Y_v=h(X_v),
\]
i.e., the colouring is projected independently for each vertex.  Also for any subset of vertices $\Lambda \subseteq V$ , we will use a similar notation $Y_{\Lambda}=h(X_\Lambda) = (h(X_v))_{v \in \Lambda}$.

Consider the projection scheme $h(\cdot)$ defined in \Cref{definition:projection-scheme}, where the integer parameter $1\leq s\leq q$ will be fixed later. We can naturally associate it with the following projected distribution.
\begin{definition}[projected distribution]\label{definition:projected-distribution}
The projected distribution $\psi: [s]^V\rightarrow [0,1]$ is the distribution $Y=h(X)$ where $X\sim \mu$, where $\mu$ is the uniform distribution over all proper hypergraph $q$-colourings of $H=(V,\+{E})$.  Formally,
\begin{align*}
\forall \sigma\in [s]^V, \quad 	\psi(\sigma)=\sum\limits_{\tau\in h^{-1}(\sigma)}\mu(\tau).
\end{align*}
\end{definition}

For any $\Lambda\subseteq V$ and $\sigma_{\Lambda}\in [s]^{\Lambda}$, we let $\psi^{\sigma_{\Lambda}}$ denote the distribution over $[s]^V$ obtained from $\psi$ conditional on $\sigma_\Lambda$. 
For any $S \subseteq V$, let $\psi^{\sigma_{\Lambda}}_S$ denote the marginal distribution on $S$ projected from $\psi^{\sigma_{\Lambda}}$. If $S = \{v\}$ only contains a single vertex, we write  $\psi^{\sigma_{\Lambda}}_v$ for simplicity.

We also slightly abuse the notation: for any $\Lambda\subseteq V$ and $\sigma_{\Lambda}\in [s]^{\Lambda}$, we let $\mu^{\sigma_{\Lambda}}$ denote the distribution of  $X\sim \mu$ conditional on $h(X(v))=\sigma_{\Lambda}(v)$ for each $v\in \Lambda$. Note that this conditional distribution differs from the usual conditional distribution $\mu^{\sigma_{\Lambda}}$ when $\sigma_{\Lambda}\in [q]^{\Lambda}$ is a partial colouring.
We shall explain the meaning of $\sigma_\Lambda$ when using $\mu^{\sigma_{\Lambda}}$.
Also, for any $S \subseteq V$, let $\mu^{\sigma_{\Lambda}}_S$ denote the marginal distribution on $S$ projected from $\mu^{\sigma_{\Lambda}}$. If $S = \{v\}$ only contains a single vertex, we write  $\mu^{\sigma_{\Lambda}}_v$ for simplicity.

An important  corollary from \Cref{lemma:HSS} is the ``local uniformity" property for the projected distribution $\psi$ in \Cref{definition:projected-distribution}, which states that for any $\Lambda \subseteq V$, any $v\in V \setminus \Lambda $,  conditional on any partial configuration $\sigma\in [s]^{\Lambda}$, $\psi^{\sigma}_{v}$ is close to the uniform distribution over $[s]$.
\begin{lemma}[local uniformity]\label{cor:local-uniformity}
Let $1\leq  s \leq q$.
If $\lfloor q / s \rfloor^k\geq 4\mathrm{e}qs \Delta k$, then for any $v\in V$, any subset $\Lambda\subseteq V\setminus \set{v}$ and partial configuration $\sigma_{\Lambda}\in [s]^{\Lambda}$, it follows that
\begin{align*}
\forall j\in [s],\quad \frac{\abs{h^{-1}(j)}}{q}\left(1-\frac{1}{4s}\right)\leq\psi^{\sigma_\Lambda}_v(j) \leq  \frac{\abs{h^{-1}(j)}}{q}\left(1+\frac{1}{s}\right).	
\end{align*}
\end{lemma}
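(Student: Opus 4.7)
The plan is to realise $\mu^{\sigma_\Lambda}$ as a product distribution conditioned on ``no monochromatic hyperedge'' and apply \Cref{lemma:HSS}. Define the auxiliary product distribution $\nu^{\sigma_\Lambda}$ on $[q]^V$ that draws each $u\in\Lambda$ uniformly from $h^{-1}(\sigma_\Lambda(u))$ and each $u\in V\setminus\Lambda$ uniformly from $[q]$, and for each hyperedge $e\in\+E$ let $B_e$ denote the bad event that $e$ is monochromatic under $X$. Then $\mu^{\sigma_\Lambda}$ coincides with $\nu^{\sigma_\Lambda}$ conditioned on $\bigwedge_e\overline{B_e}$, so
\[
\psi^{\sigma_\Lambda}_v(j)\;=\;\Pr[\nu^{\sigma_\Lambda}]{h(X_v)=j\mid \bigwedge_{e}\overline{B_e}}.
\]

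I first bound $\Pr[\nu^{\sigma_\Lambda}]{B_e}\leq 1/\lfloor q/s\rfloor^{k-1}$ for every $e$: the worst case occurs when every vertex of $e$ lies in $\Lambda$ and is pinned to the same projected colour. Next, I fix an asymmetric LLL weighting $x:\+B\to[0,1]$ by setting $x(B_e)=x_1:=1/(16s^2\Delta)$ if $v\in e$ and $x(B_e)=x_2:=1/(\mathrm{e}\Delta k)$ otherwise. The hypothesis $\lfloor q/s\rfloor^k\geq 4\mathrm{e}qs\Delta k$ forces $\lfloor q/s\rfloor\geq 2$, so $q\geq 2s$ and hence $p_j:=|h^{-1}(j)|/q\geq 1/(2s)$; it also yields $\lfloor q/s\rfloor^{k-1}\geq 4\mathrm{e}s^2\Delta k$. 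A direct computation then verifies condition~\eqref{llleq} both for edges $e$ with $v\in e$ (where the small $x_1$ is balanced by $(1-x_1)^{\Delta-1}\geq 15/16$) and for edges $e$ with $v\notin e$ (where $x_2$ plays the standard LLL role with $(1-x_2)^{k\Delta}\geq \mathrm{e}^{-2/\mathrm{e}}$).

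With the LLL condition in place, I apply \Cref{lemma:HSS} to $A_v^{(j)}:=\{h(X_v)=j\}$, whose dependency neighbourhood consists of the at most $\Delta$ bad events $B_e$ with $v\in e$, all of weight $x_1$. This directly gives the upper bound
\[
\psi^{\sigma_\Lambda}_v(j)\;\leq\;p_j\,(1-x_1)^{-\Delta}\;\leq\;\frac{p_j}{1-x_1\Delta}\;\leq\;p_j\!\left(1+\tfrac{1}{8s^2}\right)\;\leq\;p_j\!\left(1+\tfrac{1}{s}\right).
\]
Summing the same upper bound over $j'\neq j$ and subtracting from $\sum_{j'}\psi^{\sigma_\Lambda}_v(j')=1$ produces the complementary lower bound $\psi^{\sigma_\Lambda}_v(j)\geq p_j-(1-p_j)/(8s^2)$, which combined with $p_j\geq 1/(2s)$ implies $\psi^{\sigma_\Lambda}_v(j)\geq p_j(1-1/(4s))$ after a short calculation.

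The main obstacle is the calibration of $x_1$: because the target relative error $1/(4s)$ in the lower bound is being compared against a $p_j$ that can be as small as $1/(2s)$, the HSS multiplier $(1-x_1)^{-\Delta}$ must differ from $1$ by $O(1/s^2)$ rather than only $O(1/s)$, which forces $x_1=\Theta(1/(s^2\Delta))$. Such a small $x_1$ cannot on its own satisfy the LLL condition, since the factor $(1-x_1)^{k\Delta}$ would then require $\mathrm{e}^{-k/(8s^2)}=\Omega(1)$, failing for small $s$ and large $k$. The asymmetric weighting circumvents this by carrying the LLL budget with the standard $x_2=\Theta(1/(\Delta k))$ on edges far from $v$ while reserving the tight $x_1$ for the few edges incident to $v$, which are the only ones that appear in the HSS product for $A_v^{(j)}$.
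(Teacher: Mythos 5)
Your proof is correct, and it takes a genuinely different route from the paper's. The paper obtains the statement in one step: it views $\mu^{\sigma_\Lambda}$ as a list hypergraph colouring instance with lists $Q_u=h^{-1}(\sigma_\Lambda(u))$ for $u\in\Lambda$ and $Q_u=[q]$ otherwise, invokes the quoted local-uniformity lemma for list colourings (\Cref{lemma:GLLZ}) with $q_0=\lfloor q/s\rfloor$, $q_1=q$, $r=4sk$ (whose hypothesis $q_0^k\ge \mathrm{e}q_1 r\Delta$ is exactly the assumed $\lfloor q/s\rfloor^k\ge 4\mathrm{e}qs\Delta k$), and sums the per-colour bounds over $h^{-1}(j)$. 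You instead re-derive this special case from first principles using only the tools already in the preliminaries: the product measure conditioned on no monochromatic edge, an asymmetric local lemma weighting with $x_1=1/(16s^2\Delta)$ on edges through $v$ and $x_2=1/(\mathrm{e}\Delta k)$ elsewhere, \Cref{lemma:HSS} for the upper bound, and a complementation trick (summing the upper bounds over $j'\neq j$ together with $p_j\ge 1/(2s)$) for the lower bound. Your diagnosis of why the asymmetry is forced --- the HSS multiplier must be $1+O(1/s^2)$ because the target error $1/(4s)$ is relative to $p_j\approx 1/s$, while the LLL budget must still be carried by $\Theta(1/(\Delta k))$ weights on the remaining edges --- is precisely the mechanism hidden inside the proof of \Cref{lemma:GLLZ} in the cited works, so your argument makes the paper self-contained at the cost of length. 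One small caveat: with your exact constants, the verification of \eqref{llleq} for an edge $e\ni v$ reduces to $\prod_{B'\in\Gamma(B_e)}(1-x(B'))\ge 4/(\mathrm{e}k)$, which holds for all $k\ge 3$ but narrowly fails at $k=2$; this is immaterial for the paper's applications (where $k\ge 20$) and is repaired by slightly enlarging $x_1$, e.g.\ to $1/(12s^2\Delta)$, which still yields $(1-x_1)^{-\Delta}\le 1+1/(11s^2)$ and hence both displayed inequalities.
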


To prove \Cref{cor:local-uniformity}, we need the following lemma for the general list hypergraph colouring problem.  Let $H =
(V , \+{E})$ be a $k$-uniform hypergraph with maximum degree at most $\Delta$. Let $\{Q_v\}_{v\in V}$ be a set of colour lists.  We
say $X \in \otimes_{v\in V}Q_v$ is a proper list colouring if no hyperedge in $\+{E}$ is monochromatic with respect to $X$.  Let $\mu$ denote the uniform distribution of all proper list hypergraph colourings of $H$. 
\begin{lemma}[{\cite[Lemma 7]{GLLZ19}} and {\cite[Lemma 6]{FGW22a}}]\label{lemma:GLLZ}
Suppose $q_0 \leq |Q_v| \leq q_1$ for any $v \in V$. For any  $r \geq k \geq 2$, if $q_0^k \geq \mathrm{e}q_1r\Delta$, then for any $v \in V$ and any colour $c \in Q_v$,
\[
    \frac{1}{|Q_v|}\left(1-\frac{1}{r}\right)\leq\mu_v(c)\leq \frac{1}{|Q_v|}\left(1+\frac{4}{r}\right).
\]
\end{lemma}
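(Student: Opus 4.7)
The plan is to work in the product distribution $P$ on $\bigotimes_{u \in V} Q_u$ in which each $X_u$ is drawn independently and uniformly from $Q_u$, with bad events $B_e := \{\text{$e$ is monochromatic under } X\}$ for $e \in \+E$. Under $P$, $\Pr{B_e} \leq q_1/q_0^k$ and $|\Gamma(B_e)| \leq k(\Delta-1)$. Choosing $x(B_e) = \alpha$ uniformly with $\alpha$ of order $1/(k\Delta)$, the estimate $(1+1/n)^n \leq \mathrm{e}$ together with the hypothesis $q_0^k \geq \mathrm{e}q_1 r \Delta$ and $r \geq k$ makes the LLL condition \eqref{llleq} hold; hence $\mu$ is well defined on a non-empty support, \Cref{lemma:HSS} applies, and $\mu_v(c) = \Pr{X_v = c \mid \bigwedge_e \overline{B_e}}$.

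The key decomposition exploits that under $P$ the variable $X_v$ is independent of $X_{-v}$ and that the event $\+O := \bigwedge_{e' : v \notin e'}\overline{B_{e'}}$ depends only on $X_{-v}$. Conditional on $X_v = c$, the bad event $B_e$ for $e \ni v$ becomes $A_e^c := \{X_u = c\text{ for all } u \in e \setminus \{v\}\}$, and Bayes' rule together with the above independence yields
\begin{align*}
  \mu_v(c) \;=\; \frac{\pi(c)}{|Q_v|\,\bar\pi}, \qquad\text{where}\qquad \pi(c) := \Pr{\bigwedge_{e \ni v}\overline{A_e^c} \,\Big|\, \+O}, \qquad \bar\pi := \frac{1}{|Q_v|}\sum_{c' \in Q_v}\pi(c').
\end{align*}

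Next I would bound $\pi(c)$ from below uniformly in $c$. A union bound combined with \Cref{lemma:HSS} applied in $P$ with bad events restricted to $\{B_{e'} : v \notin e'\}$ (still satisfying LLL by the same calculation) gives
\begin{align*}
  1 - \pi(c) \;\leq\; \sum_{e \ni v}\Pr{A_e^c \mid \+O} \;\leq\; \Delta \cdot \frac{1}{q_0^{k-1}}\,(1-\alpha)^{-(k-1)(\Delta-1)} \;\leq\; \frac{\mathrm{e}\Delta}{q_0^{k-1}} \;\leq\; \frac{1}{r},
\end{align*}
using $\Pr{A_e^c} \leq 1/q_0^{k-1}$, $|\Gamma(A_e^c)| \leq (k-1)(\Delta-1)$ (each non-$v$ vertex of $e$ lies in at most $\Delta-1$ hyperedges that avoid $v$), the inequality $(1+1/n)^n \leq \mathrm{e}$, and finally $q_0^{k-1} \geq \mathrm{e}q_1 r \Delta / q_0 \geq \mathrm{e} r \Delta$ from $q_0 \leq q_1$.

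To conclude, since $\pi(c) \in [1-1/r,\,1]$ for every $c \in Q_v$, the same interval contains $\bar\pi$, and so $\pi(c)/\bar\pi \in [1-1/r,\,1/(1-1/r)]$. Dividing by $|Q_v|$ and using $1/(r-1) \leq 4/r$ for $r \geq 4/3$ (automatic from $r \geq k \geq 2$) gives both $\mu_v(c) \geq (1-1/r)/|Q_v|$ and $\mu_v(c) \leq (1+4/r)/|Q_v|$. The main technical obstacle is the constant-chasing around $\alpha$: one must choose $\alpha$ large enough that the LLL inequality holds under the tight hypothesis $q_0^k \geq \mathrm{e}q_1 r \Delta$, yet small enough that $(1-\alpha)^{-(k-1)(\Delta-1)} \leq \mathrm{e}$. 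A choice such as $\alpha = 1/(k(\Delta-1)+1)$ threads both requirements, and the asymmetric shape $(1-1/r)$ versus $(1+4/r)$ in the conclusion is exactly what the ratio-form decomposition naturally produces.
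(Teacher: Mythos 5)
Your argument is correct: the product-measure setup, the LLL verification with $x(B_e)=\alpha\approx 1/(k\Delta)$, the factorisation $\mu_v(c)=\pi(c)/(|Q_v|\bar\pi)$ via independence of $X_v$ from the constraints avoiding $v$, and the union bound combined with \Cref{lemma:HSS} giving $\pi(c)\ge 1-\mathrm{e}\Delta/q_0^{k-1}\ge 1-1/r$ all check out, including the constant-chasing with $\alpha=1/(k(\Delta-1)+1)$ and the final conversion $1/(r-1)\le 4/r$. Note that the paper itself does not prove this lemma but imports it from \cite{GLLZ19,FGW22a}, and your local-lemma/local-uniformity argument is essentially the proof given in those sources, so there is nothing further to reconcile.
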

We are now ready to prove \Cref{cor:local-uniformity} using \Cref{lemma:GLLZ}.
\begin{proof}
Note that $\psi^{\sigma_\Lambda}_v$ is the marginal distribution induced by a list hypergraph colouring instance with hypergraph $H$ and colour lists satisfying
\[ \forall u \in V,\quad  Q_u=\begin{cases}h^{-1}(\sigma_{\Lambda}(u)) & u\in \Lambda\\ [q] & u\notin \Lambda\end{cases}.\]
Note that the size of the colour list $Q_u$ of each vertex $u\in V$ satisfies 
\[
  \lfloor q/s\rfloor\leq \abs{h^{-1}(\sigma_{\Lambda}(u))}\leq  \abs{Q_v}\leq q 
\]
By setting $q_0=\lfloor q /s \rfloor$, $q_1=q$ and $r = 4sk$ in \Cref{lemma:GLLZ}, we have
\begin{align*}
\forall j\in [s], \quad \frac{\abs{h^{-1}(j)}}{q}\left(1-\frac{1}{4s}\right)\leq \psi^{\sigma_\Lambda}_v(j)\leq \frac{\abs{h^{-1}(j)}}{q}\left(1+\frac{1}{s}\right). &\qedhere
\end{align*}
\end{proof}

\subsection{Marginal sampler for the projected distribution}\label{sec:marginal-projected}
Here, we give the marginal sampler for the projected distribution of hypergraph colourings (\Cref{Alg:marginal-color-sampler-app}).
We want to apply the truncated resolve algorithm, \Cref{Alg:appresolve}.
One requirement for the correctness of \Cref{Alg:appresolve} is the irreducibility of systematic scan Glauber dynamics,
and we achieve this by running it on the projected distribution $\psi$, with a suitably chosen $s$. 
Furthermore, we will specify suitable marginal lower bounds for $\psi$,
and specify the $\config(t)$ subroutine.


Formally, we use the systematic scan Glauber dynamics $(Y_t)_{-T\leq t\leq 0}$ on the projected distribution $\psi$.  By \Cref{cor:local-uniformity}, for any $\Lambda\subseteq V$, any $\sigma_{\Lambda}\in [s]^{\Lambda}$ and any $v\in V\setminus \Lambda$, it holds that
\[
    \forall j\in [s],\quad \psi^{\sigma_{\Lambda}}_v(j)\geq \rho_j \defeq \frac{\abs{h^{-1}(j)}}{q}\left(1-\frac{1}{4s}\right).
\]
Hence,  the distribution $\psi$ has the $(\rho_1,\rho_2,\cdots,\rho_s)$-marginal lower bound. Note that this immediately shows that the systematic scan Glauber dynamics satisfies \Cref{condition:sufficient-correctness} as $\rho_j>0$ for all $j \in [s]$.
Define the lower bound distribution for (projected) hypergraph colouring instances $\psi^{\-{LB}}$ by
\begin{align}\label{eq-def-basic}
	 \psi^{\-{LB}}(\perp) = 1 - \sum\limits_{1\leq i\leq s}\rho_i=\frac{1}{4s} \qquad\text{and}\qquad \forall j\in [s],\quad \psi^{\-{LB}}(j)=\rho_j.
\end{align}
For any $\Lambda\subseteq V$, any $\sigma_{\Lambda}\in [s]^{\Lambda}$ and any $v\in V\setminus \Lambda$,  define the padding distribution $\psi_v^{\-{pad},\sigma_\Lambda}$ over $[s]$ by
\begin{align*}
\forall j\in [s],\quad\psi_v^{\-{pad},\sigma_{\Lambda}}(j)=4s(\psi^{\sigma_\Lambda}_{v}(j)-\rho_j).
\end{align*}

In order to apply \Cref{Alg:appresolve}, 
we still need to specify the $\config(t)$ subroutine for $\psi$.  
Note that the projected distribution $\psi$ is no longer a Gibbs distribution, but we can still get the conditional independence property by constructing $\sigma_{\Lambda}$ via a breadth-first search (BFS). 
%
%
Given a projected configuration $\sigma_{\Lambda}\in [s]^{\Lambda}$ on a subset of vertices $\Lambda\subseteq V$, we say a hyperedge $e\in \+{E}$ is \emph{satisfied} by $\sigma_{\Lambda}$ 
if there are $u,v\in e\cap \Lambda$ such that $\sigma_{\Lambda}(u)\neq \sigma_{\Lambda}(v)$.
If $e$ is satisfied by $\sigma_{\Lambda}$, for all $X\in [q]^V$ such that $\sigma_{\Lambda}=h(X_{\Lambda})$, the hyperedge $e$ cannot be monochromatic with respect to $X$. 
Let $H^{\sigma_{\Lambda}}=(V,\+{E}')$ be the hypergraph obtained from $H$ after removing all hyperedges in $H$ satisfied by $\sigma_{\Lambda}$. 
Our subroutine $\config(t)$ uses BFS to find the connected component $V'$ in $H^{Y_{t-1}(V \setminus \{v_{i(t)}\} )}$ containing the target variable $v=v_{i(t)}$.
Once the component is found, it also finds $\sigma_\Lambda = Y_{t-1}(\Lambda)$ on some $\Lambda\supset V'$ such that $V'$ is disconnected from the rest in $H^{\sigma_{\Lambda}}$.
We show in \Cref{lemma:config-h-col-cor} that this $\sigma_{\Lambda}$ satisfies \Cref{cond:invariant-boundary}.

The pseudocode for $\config(t)$ is given in \Cref{Alg:config-h-col}.
Recall that it requires two oracles $\+B$ and $\+C$ in \Cref{section-scan-GD}.

%

%

%

%


\begin{algorithm}
\caption{$\config(t)$ for (projected) hypergraph colourings } \label{Alg:config-h-col}
  \SetKwInput{KwPar}{Parameter}
 \KwIn{a hypergraph $H=(V,\+{E})$, a set of colours $[q]$, a projection scheme $h: [q] \to [s]$ that defines the projected distribution $\psi$, and an integer $t\leq 0$\;}
\KwOut{a partial configuration $\sigma_{\Lambda}\in [s]^{\Lambda}$ over some $\Lambda\subseteq V\setminus \{v_{i(t)}\}$;}
$v \gets v_{i(t)}, \Lambda \gets \emptyset, \sigma_{\Lambda}\gets \emptyset,V'\gets \set{v}$\;
\While{$\exists e\in \+{E}$ s.t. $e\cap V'\neq \emptyset$, $e\cap (V\setminus V')\neq \emptyset$ and $e$ is not satisfied by $\sigma_{\Lambda}$\label{Line-color-while}}{
      choose such $e$ with the lowest index\; \label{Line-color-choose}
    \If{$\exists j \in [s]$ s.t.  $\forall u \in e \setminus \{v\}$, $\+B( {\upd_u(t)}) \in \{\perp,j\}$\label{Line-color-satisfy}}{
       \ForAll{$u \in e \setminus \{v\}$}{$\Lambda \gets \Lambda \cup \{u\}$ and $\sigma_{\Lambda}(u)\gets \+{C}(u)$ \; \label{Line-color-assign}}
       $V'\gets V'\cup e$ \;\label{Line-color-update}
    }
    \Else{
    	 \ForAll{$u \in e \setminus \{v\}$}{
    	 \If{$\+{B}(\upd_u(t))\neq \perp$}{
    	 $\Lambda \gets \Lambda \cup \{u\}$ and $\sigma_{\Lambda}(u)\gets \+{B}(\upd_u(t))$\;\label{Line-color-assign-2}}}
    }
} 
\Return $\sigma_{\Lambda}$\; \label{Line-color-final-return}
\end{algorithm}


\begin{lemma}\label{lemma:config-h-col-cor}
$\config(t)$ in \Cref{Alg:config-h-col} satisfies \Cref{cond:invariant-boundary}.
\end{lemma}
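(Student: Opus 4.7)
The plan is to verify three properties in turn: termination of \Cref{Alg:config-h-col}, agreement between the returned partial projection $\sigma_\Lambda$ and the chain state $Y_{t-1}$ on $\Lambda$, and the desired conditional independence $\psi_v^{\sigma_\Lambda} = \psi_v^{Y_{t-1}(V\setminus\{v\})}$ with $v = v_{i(t)}$.

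For termination, I would observe that each iteration of the while loop permanently removes the selected hyperedge $e$ from further consideration: in the ``if'' branch $e$ is absorbed into $V'$ and so no longer crosses the cut, while in the ``else'' branch the failure of the condition in \Cref{Line-color-satisfy} forces the existence of $u_1,u_2\in e\setminus\{v\}$ with $\+{B}(\upd_{u_1}(t)), \+{B}(\upd_{u_2}(t))\in[s]$ and $\+{B}(\upd_{u_1}(t))\neq \+{B}(\upd_{u_2}(t))$, both of which are recorded into $\sigma_\Lambda$, thereby making $e$ satisfied by $\sigma_\Lambda$. Since $V$ is finite and $V'$ only grows, the loop must terminate.

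For consistency, I would show $\sigma_\Lambda(u)=Y_{t-1}(u)$ for every $u\in\Lambda$. If $\sigma_\Lambda(u)$ was assigned via $\+{C}(u)$ in \Cref{Line-color-assign}, this follows directly from the definition of the configuration oracle. If instead it was assigned via $\+{B}(\upd_u(t))$ in \Cref{Line-color-assign-2}, then by the construction of the process $\+{P}(T)$ on $\psi$, whenever $r_{\upd_u(t)}\neq\perp$ the update at time $\upd_u(t)$ sets $Y_{\upd_u(t)}(u)=r_{\upd_u(t)}$; since $u\neq v$ and $\upd_u(t)<t$ is by definition the latest update of $u$ weakly before $t$, the value of $Y$ at $u$ is unchanged between $\upd_u(t)$ and $t-1$, yielding $Y_{t-1}(u)=r_{\upd_u(t)}=\sigma_\Lambda(u)$.

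Finally, for the conditional independence, the structural observation is that $V'\setminus\{v\}\subseteq\Lambda$ throughout the execution (every vertex added to $V'$ other than $v$ enters through an ``if'' branch that simultaneously adds it to $\Lambda$), and upon termination every hyperedge crossing the cut between $V'$ and $V\setminus V'$ is satisfied by $\sigma_\Lambda$. Viewing $\psi^{\sigma_\Lambda}$ as the pushforward under $h$ of a list-colouring problem on $H$ where each pinned $u\in\Lambda$ is restricted to the list $h^{-1}(\sigma_\Lambda(u))$, the pinnings force every crossing edge of $V'$ to be automatically non-monochromatic, so such edges impose no constraint and the induced distribution factorises across $V'$ and $V\setminus V'$. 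Hence the $v$-marginal of $\psi^{\sigma_\Lambda}$ depends only on $H[V']$ together with $\sigma_\Lambda|_{V'\setminus\{v\}}$. Applying the same factorisation under the stronger conditioning $Y_{t-1}(V\setminus\{v\})$ reduces $\psi_v^{Y_{t-1}(V\setminus\{v\})}$ to an identical list-colouring marginal on $V'$ with pinnings $Y_{t-1}|_{V'\setminus\{v\}}$, and these pinnings agree with $\sigma_\Lambda|_{V'\setminus\{v\}}$ by the consistency step. I expect the main subtlety here to be a precise justification of the factorisation claim: because $\psi$ is not itself a Gibbs distribution, one has to pass through its list-colouring representation (as is done in the proof of \Cref{cor:local-uniformity}) and verify that deleting edges satisfied by the pinnings does not alter the underlying uniform distribution over proper colourings.
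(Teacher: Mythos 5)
Your proposal is correct and follows essentially the same route as the paper's proof: termination because each hyperedge is either absorbed into $V'$ or becomes satisfied by $\sigma_\Lambda$ and hence can be selected at most once, consistency of $\sigma_\Lambda$ with $Y_{t-1}$ via the definitions of the two oracles, and conditional independence because all hyperedges crossing the cut between $V'$ and $V\setminus V'$ are satisfied and $V'\setminus\{v\}\subseteq\Lambda$, so both conditionings reduce to the same list-colouring marginal on $H[V']$. Your explicit observation that failure of the test in \Cref{Line-color-satisfy} forces two distinct non-$\perp$ values in $e\setminus\{v\}$, both of which get recorded, is a detail the paper states more tersely but is the same argument.
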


\begin{proof}
  The assumptions on the oracles $\+B$ and $\+C$ ensure that
  \begin{align}\label{eq-proof-find-2}
    \forall u \neq v, \qquad \+B(\upd_u(t)) \neq \perp \quad \Longrightarrow \quad  Y_{t-1}(u) = \+B(\upd_u(t)),
  \end{align}  
  and $\+C(u)=Y_{t-1}(u)$.
  We then verify the termination and conditional independence property in \Cref{cond:invariant-boundary}.
%
Note that in the while loop in Lines \ref{Line-color-while}-\ref{Line-color-assign-2}, if the condition in \Cref{Line-color-satisfy} is satisfied, then $e$ is added into $V'$ in \Cref{Line-color-update}; otherwise $e$ is satisfied by $\sigma_{\Lambda}$ after the loop by \Cref{Line-color-assign-2}. 
This shows each $e\in \+{E}$ can be chosen in \Cref{Line-color-choose} at most once, and therefore the while loop eventually terminates.
After the while loop in Lines \ref{Line-color-while}-\ref{Line-color-assign-2} terminates,
the following holds:
\begin{enumerate}
  \item $v\in V'$\label{item:config-h-col-cor-1};
  \item $V' \subseteq \Lambda \cup \{v\}$\label{item:config-h-col-cor-2};
  \item for all $e\in \+{E}$ s.t.~$e\cap V'\neq \emptyset$ and $e\cap (V\setminus V')\neq \emptyset$, $e$ is satisfied by $\sigma_{\Lambda}$.\label{item:config-h-col-cor-3}
\end{enumerate}
Property~\eqref{item:config-h-col-cor-1} holds because $v$ is added to $V'$ in the initialisation, and $V'$ never removes vertices.
Property~\eqref{item:config-h-col-cor-2} holds because if vertices in some $e \in \+E$ are added to $V'$ in \Cref{Line-color-update}, then all vertices in $e \setminus \{v\}$ are added into $\Lambda$ in \Cref{Line-color-assign}.
As the while loop terminates, the condition in \Cref{Line-color-while} no long holds,
which is exactly Property~\eqref{item:config-h-col-cor-3}.

Since $\sigma_{\Lambda}$ is a restriction of $Y_{t-1}(V\setminus\{v\})$ on $\Lambda$,
Property~\eqref{item:config-h-col-cor-1} and Property~\eqref{item:config-h-col-cor-3} imply that conditioned on either $\sigma_{\Lambda}$ or $Y_{t-1}(V\setminus\{v\})$,
the marginal distribution for $v$ is unchanged if we remove all hyperedges crossing both $V'$ and $V\setminus V'$.
Thus, under both conditioning, the marginal distribution for $v$ is the same as its marginal distribution on $H[V']$ given $\sigma_{\Lambda}$ and $Y_{t-1}(V\setminus\{v\})$ restricted to $V'\setminus \{v\}$, respectively.
By Property~\eqref{item:config-h-col-cor-2},
the latter two conditionings restricted to $V'\setminus \{v\}$ are the same.
Thus, $\mu^{\sigma_\Lambda}_v=\mu^{Y_{t-1}(V\setminus\{v\})}_v$.
\end{proof}

Plug $\config(t)$ of \Cref{Alg:config-h-col} into $\appresolve$ of \Cref{Alg:appresolve} and set the threshold $K$ to be $4\Delta^2 k^5 \lceil\log{\frac{1}{\gamma}}\rceil$
This gives \Cref{Alg:marginal-color-sampler-app} for approximately sampling from the marginal distributions on one vertex projected from $\psi$.
%
Recall that as in \Cref{Alg:resolve},
when plugging $\config(t)$, 
we need to replace the oracles $\+B$ and $\+C$ with suitable calls to $\basicsample$ and recursive calls, respectively.

\begin{algorithm}
\caption{$\appmghcol(t, \gamma; M, R)$} \label{Alg:marginal-color-sampler-app}

\SetKwInput{KwData}{Global variables}
 \KwIn{a hypergraph $H = (V,\+E)$, a set of colours $[q]$, a projection scheme $h: [q] \to [s]$ that defines the projected distribution $\psi$, an integer $t \leq 0$ and a real number $0 < \gamma < 1$;}
 \KwData{a map $M: \mathbb{Z}\to [q]\cup\{\perp\}$ and a set $R$;}
\KwOut{a random value in $[s]$;}
\SetKwIF{Try}{Catch}{Exception}{try}{:}{catch}{exception}{}
\uTry{}{
\textbf{if} $M(t) \neq \perp$ \textbf{then} \Return $M(t)$\; \label{Line-color-memoization-1}
\textbf{if} $\basicsample(R, t) \neq \perp$ \textbf{then} $M(t) \gets \basicsample(R, t)$ and \Return $M(t)$\;\label{Line-color-direct-return-1}
$v \gets v_{i(t)}, \Lambda \gets \emptyset, \sigma_{\Lambda}\gets \emptyset,V'\gets \set{v}$\;\label{Line-color-init-1}
\While{$\exists e\in \+{E}$ s.t. $e\cap V'\neq \emptyset$, $e\cap (V\setminus V')\neq \emptyset$ and $e$ is not satisfied by $\sigma_{\Lambda}$\label{Line-color-while-1}}{
       choose such $e$ with the lowest index\; \label{Line-color-choose-1}
    \If{$\exists j \in [s]$ s.t. $\forall u \in e\setminus \set{v} $, $\basicsample({\upd_u(t)};R)\in \{\perp,j\}$\label{Line-color-satisfy-1}}{
        \ForAll{$u \in e\setminus \set{v}$}{$\Lambda \gets \Lambda \cup \{u\}$ and $\sigma_{\Lambda}(u)\gets \appmghcol(\upd_u(t),\gamma;M,R)$ \; \label{Line-color-assign-1}}
       $V'\gets V'\cup e$\;\label{Line-color-update-1}
    }
    \Else{
        $U\gets \{u \in e\setminus \set{v}  \mid  \basicsample({\upd_u(t)};R)\neq \perp\}$\;\label{Line-color-assign-2-1-1.5}
    	 \ForAll{$ u \in U$}{$\Lambda \gets \Lambda \cup \{u\}$ and $\sigma_{\Lambda}(u)\gets  \basicsample({\upd_u(t)};R)$\;\label{Line-color-assign-2-1}}
    }
} 
enumerate all $ X\in \otimes_{u \in \Lambda}h^{-1}(\sigma_u)$ to compute $\mu^{\sigma_{\Lambda}}_{v}$ and then compute $\psi^{\text{pad},\sigma_{\Lambda}}_{v}$ using~\eqref{eq-compute-psi}\;  \label{Line-color-enum}
sample $c\sim \psi^{\text{pad},\sigma_{\Lambda}}_{v}$\; \label{Line-color-sample}
 $M(t) \gets c$ and \Return $M(t)$\; \label{Line-color-final-return-1}
}
\Catch{$|R|\geq 4\Delta^2 k^5 \lceil\log{\frac{1}{\gamma}}\rceil$\label{Line-color-force-return-1}}{ 
\Return $1$\;
}
\end{algorithm}

\Cref{Alg:marginal-color-sampler-app} takes a hypergraph $H=(V,\+E)$, a colour set $[q]$, a projection scheme $h:[q] \to [s]$, an integer $t < 0$, and a parameter $\gamma$ as the input.
To avoid notation cluttering, we drop $(H,[q],h)$ from the input, since these are the same throughout all recursive calls.
We denote the algorithm by $\appmghcol(t,\gamma;M,R)$, where $M$ and $R$ are two global data structures maintained by the algorithm.
When approximately sampling from $\psi_v$, find an integer $t$ such that $-n<t\le 0$ and $v = v_{i(t)}$, 
and then evoke $\appmghcol(t,\gamma;M_0,R_0)$, where $M_0=\perp^{\mathbb{Z}}$ and $R_0=\emptyset$.
Recall that the subroutine \basicsample{} in the algorithm is given in \Cref{Alg:randombit} with the lower bound distribution $\mu^{\-{LB}}$ given by~\eqref{eq-def-basic}.

The correctness of \Cref{Alg:marginal-color-sampler-app} is justified by \Cref{thm:dtv-truncate} and \Cref{lemma:config-h-col-cor}.
In \Cref{Line-color-sample} of \Cref{Alg:marginal-color-sampler-app}, we sample from the padding distribution $\psi^{\text{pad},\sigma_{\Lambda}}_{v}$.
%
%
To implement it, we enumerate all colourings $ X\in \otimes_{u \in \Lambda}h^{-1}(\sigma_u)$ on subset $\Lambda$ to compute $\mu^{\sigma_{\Lambda}}_{v}$ and then compute $\psi^{\text{pad},\sigma_{\Lambda}}_{v}$ by
\begin{align}\label{eq-compute-psi}
\forall j \in [s],\quad	\psi^{\-{pad},\sigma_{\Lambda}}_v(j) = 4s\left(\tp{\sum_{c \in h^{-1}(j) }\mu^{\sigma_{\Lambda}}_v(c)} - \rho_j\right).
\end{align}

\Cref{Alg:marginal-color-sampler-app} is not the algorithm stated in \Cref{lem-sampling-color} and \Cref{lem-sampling-color-simple} as it only samples from the projected value on a single vertex.
In the upcoming \Cref{sec-colouring-set}, we show how to use \Cref{Alg:marginal-color-sampler-app} as a subroutine to build the algorithm for \Cref{lem-sampling-color} and \Cref{lem-sampling-color-simple}.

\subsection{Estimating the marginal distribution on a subset}\label{sec-colouring-set}
We now describe an algorithm such that given a hypergraph $H=(V,\+{E})$ and a subset $S \subseteq V$, it approximately samples from $\mu_S$, where $\mu_S$ is the marginal distribution on $S$ induced by the uniform distribution $\mu$ of all proper $q$-colourings of $H$. 
We then use this algorithm to prove \Cref{lem-sampling-color} and \Cref{lem-sampling-color-simple}. 
Our algorithm builds on \Cref{Alg:marginal-color-sampler-app}, which draws approximate samples from the projected marginal distribution $\psi_v$. 
Suppose we want to sample from the marginal distribution $\mu_S$.  One straightforward approach is that
\begin{itemize}
\item sample $Y\sim \psi$;
\item sample $X\sim \mu$ conditioning on $h(X)=Y$ and output $X_S$.
\end{itemize}
However, the approach above uses too much randomness.
Our sampling algorithm essentially follows it but without sampling the full configurations of $X$ and $Y$.
Note that the process above specifies a joint distribution $(Y,X)\sim \pi$ such that $Y \sim \psi $ and $X \sim \mu^{Y}$.
Consider the following computational problem.
Suppose there is a random sample $Y \sim \psi$ together with an oracle $\+Y$ such that given any $v \in V$, $\+Y$ returns the value of $Y_v$. 
The algorithm needs to draw a random sample $X_S$ using as few oracle queries as possible.
We use the idea in \Cref{Alg:config-h-col}. 
Start from $S$ and use BFS to find a subset $\Lambda \supseteq S$ together with a partial configuration $Y_\Lambda$ such that $X_S$ is independent from $Y_{V \setminus \Lambda} $ in $\pi$ conditional on $Y_{\Lambda}$.
In other words, the partial configuration $Y_\Lambda$ gives enough information to compute $X_S$.
The algorithm for sampling $X_S \sim \mu_S$ is presented in \Cref{Alg:marginal-edge-color-sampler}, where the oracle $\+Y$ mentioned above is implemented by $\appmghcol$ (\Cref{Alg:marginal-color-sampler-app}). 

$\appmghcoledge$ (\Cref{Alg:marginal-edge-color-sampler}) together with subroutines $\appmghcol$ and $\randombit$ maintain two global data structures $M$ and $R$, which are initialised as $M = M_0 = \perp^{\mathbb{Z}}$ and $R = R_0 = \emptyset$ in \Cref{Line-DS-color-init} of $\appmghcoledge$. 
Once  $|R|\geq 4\Delta^2 k^5 \lceil\log{\frac{1}{\gamma}}\rceil$, $\appmghcoledge$ stops immediately and outputs the all-one configuration on $S$.

\begin{algorithm}
\caption{$\appmghcoledge(S,\gamma)$} \label{Alg:marginal-edge-color-sampler}
  \SetKwInput{KwPar}{Parameter}
  \SetKwInput{KwData}{Global variables}
 \KwIn{a hypergraph $H = (V,\+E)$, a set of colours $[q]$, a projection scheme $h: [q] \to [s]$ that defines the projected distribution $\psi$, a subset of $S \subseteq V$ and a real number $0 < \gamma < 1$;}
 \KwData{a map $M: \mathbb{Z}\to [q]\cup\{\perp\}$ and a set $R$;}
\KwOut{a random assignment in $\tau\in [q]^{S}$}
\SetKwIF{Try}{Catch}{Exception}{try}{:}{catch}{exception}{}
$M \gets \perp^{\mathbb{Z}}$ and $R \gets \emptyset$\;\label{Line-DS-color-init}
\uTry{}{
$\sigma\gets \emptyset,V'\gets S$\;\label{Line-color-edge-init}
\ForAll{$u\in S$}{
$\sigma(u)\gets \appmghcol{}(\upd_u(0),\gamma;M,R)$\;\label{Line-color-edge-S}}
\While{$\exists e\in \+{E}$ s.t. $e\cap V'\neq \emptyset$, $e\cap (V\setminus V')\neq \emptyset$ and $e$ is not satisfied by $\sigma$\label{Line-color-edge-while}}{
      choose such $e$ with the lowest index\; \label{Line-color-edge-choose}
    \If{$\exists j \in [s]$ s.t. $\forall u \in e$, $\basicsample({\upd_u(0)};R) \in \{\perp,j\}$\label{Line-color-edge-satisfy}}{
      \ForAll{$ u \in e$}
      {$\sigma(u)\gets \appmghcol(\upd_u(0),\gamma; M,R)$ \; \label{Line-color-edge-assign}}
       $V'\gets V'\cup e$\;\label{Line-color-edge-update}
    }
    \Else{
    	$U\gets \{u \in e  \mid  \basicsample({\upd_u(0)};R)\neq \perp\}$\;\label{Line-color-edge-assign-1.5}
    	 \ForAll{$ u \in U$}
         {$\sigma(u)\gets  \basicsample({\upd_u(0)};R)$\;\label{Line-color-edge-assign-2}}
    }
} 
enumerate all colourings $ X\in \otimes_{u \in V'}h^{-1}(\sigma_u)$ on $V'$ to compute the marginal distribution of $S$ on the sub-hypergraph $H[V']$,
which is equivalent to $\mu^{\sigma}_{S}$\;  \label{Line-color-edge-enum}
Sample $\tau\sim \mu^{\sigma}_{S}$\; \label{Line-color-edge-tau-sample}
\Return $\tau$\; \label{Line-color-edge-final-return}
}
\Catch{$|R|\geq 4\Delta^2 k^5 \lceil\log{\frac{1}{\gamma}}\rceil$\label{Line-color-edge-catch} }{ 
\Return $1^S$\;\label{Line-color-edge-force-return-1} 
}
\end{algorithm}


Since we truncate the algorithm when $|R| \geq 4\Delta^2 k^5 \lceil\log{\frac{1}{\gamma}}\rceil$,  we have the following lemma that bounds the running time of  \Cref{Alg:marginal-edge-color-sampler}. 
\begin{lemma}\label{lem-alg-edge-col-time}
The running time of \Cref{Alg:marginal-edge-color-sampler} is $O(\Delta^6k^{11}\log^2{\frac{1}{\gamma}}\cdot q^{8\Delta^3k^6\log{\frac{1}{\gamma}}})$.
\end{lemma}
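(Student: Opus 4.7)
The plan is to bound the running time of \Cref{Alg:marginal-edge-color-sampler} by decomposing the total work into (i) the non-enumeration work in the main while loop of $\appmghcoledge$, (ii) the non-enumeration work across all recursive invocations of $\appmghcol$, and (iii) the costly enumerations at \Cref{Line-color-enum} of \Cref{Alg:marginal-color-sampler-app} and \Cref{Line-color-edge-enum} of \Cref{Alg:marginal-edge-color-sampler}. The catch at \Cref{Line-color-edge-catch} guarantees that throughout the execution $|R| \leq K := 4\Delta^2 k^5 \lceil \log(1/\gamma)\rceil + O(1)$, and this bound will drive every other estimate.

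For (i), I would first observe that each edge $e \in \+E$ is examined at most once by the main while loop at \Cref{Line-color-edge-while}: after a satisfy iteration the edge $e$ is entirely absorbed into $V'$ and so no longer crosses the cut $(V', V\setminus V')$, while after an else iteration the existence of two distinct non-$\perp$ $\basicsample$ outputs forces $e$ to become satisfied by $\sigma$. Since $|V'|$ grows only in satisfy iterations, and each such iteration invokes $k$ copies of $\appmghcol$ that either drain $R$ or return a memoised value, a charging argument would then give $|V'| = O(\Delta k \cdot K) = O(\Delta^3 k^6 \log(1/\gamma))$, together with an $O(\Delta k \cdot K)$ bound on the non-enumeration work of $\appmghcoledge$ itself.

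For (ii), I would replicate the bookkeeping from the proof of \Cref{lem-alg-h-2-time}: let $A$ be the set of timestamps $t$ for which $\appmghcol(t,\cdot)$ is invoked at least once; by the memoisation at \Cref{Line-color-memoization-1}, only the first invocation at each $t$ does substantive work, and each $t \in A$ corresponds either to an entry $(t, r_t) \in R$ or to the single truncating invocation, so $|A| = O(K)$. Each first invocation performs $O(\Delta k)$ work in its BFS at \Cref{Line-color-while-1} modulo recursion and its line-14 enumeration, and a local analogue of the charging argument above yields $|\Lambda| = O(\Delta k \cdot K) = O(\Delta^3 k^6 \log(1/\gamma))$ in every call. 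Summing over $A$ and adding the work of (i) gives a total non-enumeration cost of $O(\Delta^6 k^{11} \log^2(1/\gamma))$.

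Finally, for (iii), each enumeration runs in time $(q/s)^{|\Lambda|}$ or $(q/s)^{|V'|}$, which by the bounds above is at most $q^{8\Delta^3 k^6 \log(1/\gamma)}$; there are at most $O(|A|)$ enumerations, absorbed into the polynomial factor. Combining the three contributions yields the claimed running time. The main technical obstacle is the charging argument that ties the growth of $V'$ and of the local $\Lambda$'s back to $|R|$, because a call to $\basicsample$ may return a previously drawn (hence free) value; making this precise requires arguing that each absorption of an edge into $V'$ (or $\Lambda$) is paid either by a fresh draw of $\basicsample$ or by a previously recorded draw already charged to $R$, and that the intertwining between the $\appmghcoledge$-level BFS and the $\appmghcol$-level BFSes does not cause double counting in the charging.
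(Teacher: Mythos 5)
Your proposal is correct and follows essentially the same route as the paper's proof: bound the number of distinct invocations of $\appmghcol$ by $|R|+1$ via the truncation threshold, use memoisation to make repeat calls $O(1)$, bound $|V'|$ and hence $|\Lambda|\le k\Delta|V'|$ through $|R|$, and charge the dominant cost to the $q^{|\Lambda|+1}$ enumerations. The charging concern you raise is resolved exactly as in the paper's \Cref{lem-color-lambda-bound}: within a single call the vertices of $V'$ have distinct timestamps, each of which deposits an entry in the global $R$, so $|V'|\le|R|$ regardless of whether a $\basicsample$ query is fresh or cached.
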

To prove \Cref{lem-alg-edge-col-time}, we need the following lemma.
\begin{lemma}\label{lem-color-lambda-bound}
  For any hypergraph $k$-uniform $H=(V,\+{E})$ with the maximum degree at most $\Delta$, any integer $t < 0$ and  any $0<\gamma\leq \frac{1}{2}$, any $M:\mathbb{Z} \to [q] \cup \{\perp\}$ and any set $R$, let $\Lambda$ be either
  \begin{itemize}
      \item the set of assigned variables in $\sigma$ in  \Cref{Line-color-enum} of the execution $\appmghcol{}(t,\gamma;M,R)$;
      \item the set of assigned variables in $\sigma$ in \Cref{Line-color-edge-enum} of the execution of $\appmghcoledge{}(S,\gamma)$,
  \end{itemize}  
  it holds that
\[
    \abs{\Lambda}\leq k\Delta\cdot \abs{V'}\leq 4\Delta^3 k^6\left\lceil\log{\frac{1}{\gamma}}\right\rceil.
\]
\end{lemma}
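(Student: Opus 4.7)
The plan is to establish the two inequalities $|\Lambda|\leq k\Delta\cdot|V'|$ and $k\Delta\cdot|V'|\leq 4\Delta^3k^6\lceil\log(1/\gamma)\rceil$ separately, via a direct structural analysis of the while loops combined with the global bookkeeping of $R$.

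For the first inequality, I would show that each iteration of the while loop processes a distinct hyperedge. When the condition at \Cref{Line-color-satisfy-1} (or \Cref{Line-color-edge-satisfy}) succeeds, the hyperedge $e$ is added to $V'$, after which $e\cap(V\setminus V')=\emptyset$ precludes it from being chosen again. When the condition fails, there must exist two vertices $u_1,u_2\in e$ whose $\basicsample$ values are distinct non-$\perp$ colours, and copying these into $\sigma$ at \Cref{Line-color-assign-2-1} (or \Cref{Line-color-edge-assign-2}) makes $e$ satisfied by $\sigma$, again excluding it from the while condition in future iterations. Since each processed hyperedge is incident to some vertex of $V'$ at the moment of processing and $V'$ only grows, the total number of iterations is bounded by the number of hyperedges incident to the final $V'$, which by the degree bound is at most $\Delta|V'|$. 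Each iteration adds at most $k-1\leq k$ vertices to $\Lambda$, giving $|\Lambda|\leq k\Delta|V'|$.

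For the second inequality, I would bound $|V'|$ in terms of the size of the global $R$. Because \Cref{Line-color-enum} or \Cref{Line-color-edge-enum} is reached without entering the catch block, we have $|R|<4\Delta^2k^5\lceil\log(1/\gamma)\rceil$. The key claim is that for every $u\in V'$ the timestamp $\upd_u(t)$ (or $\upd_u(0)$ in the $\appmghcoledge$ case) appears as a first coordinate of some pair in $R$. Since $\upd_u(\cdot)$ is injective on vertices for a fixed time, distinct $u\in V'$ give distinct timestamps, yielding $|V'|\leq|R|$ and hence $k\Delta|V'|\leq 4\Delta^3k^6\lceil\log(1/\gamma)\rceil$. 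To prove the claim, I would first establish the auxiliary invariant that $M(s)\neq\perp$ always implies $(s,\cdot)\in R$: both assignment sites for $M(s)$, namely \Cref{Line-color-direct-return-1} and \Cref{Line-color-final-return-1}, are guarded by a preceding call to $\basicsample(s;R)$, which either retrieves or inserts a pair at timestamp $s$, and $R$ is monotonically growing. Using this invariant, every call $\appmghcol(s,\gamma;M,R)$ ensures $(s,\cdot)\in R$ upon return, regardless of whether it short-circuits via the memoization at \Cref{Line-color-memoization-1} or proceeds further. The claim then follows by applying this to the initial vertex $v=v_{i(t)}$ (handled by \Cref{Line-color-direct-return-1} with $\upd_v(t)=t$), to each $u\in S$ in $\appmghcoledge$ (handled by \Cref{Line-color-edge-S}), and to each $u$ newly added to $V'$ in either while loop (whose $\basicsample(\upd_u(t);R)$ call occurred in the relevant condition check).

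The main obstacle is the careful global bookkeeping across the shared data structures $M$ and $R$, in particular handling the case in $\appmghcoledge$ where $\appmghcol$ returns immediately from the memoization branch at \Cref{Line-color-memoization-1} without itself calling $\basicsample$. The auxiliary invariant on $M$ above is designed precisely to close this gap, and its verification requires a short induction over the execution order of the $\appmghcol$ calls.
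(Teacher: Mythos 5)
Your proposal is correct and follows essentially the same route as the paper's proof: bound $|\Lambda|$ by $k\Delta|V'|$ via the hyperedges incident to $V'$, and bound $|V'|$ by $|R|$ because every vertex entering $V'$ has its timestamp recorded in $R$ through a $\basicsample$ call, after which the truncation threshold closes the argument. Your extra care with the memoization invariant on $M$ is a detail the paper leaves implicit, but it does not change the structure of the argument.
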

\begin{proof}
We only show the case for \Cref{Alg:marginal-color-sampler-app} and the case for \Cref{Alg:marginal-edge-color-sampler} holds through a similar argument. The size of the set $V'$ increases from during the while loop.  By the condition in \Cref{Line-color-while-1}, we have for all $u\in \Lambda$, there exists $e\in \+{E}$ and $w\in V'$ such that $u,w\in e$. This shows that
\[
    \abs{\Lambda}\leq k\Delta\cdot \abs{V'}.
\] 
Moreover, note that by \Cref{Line-color-direct-return-1} and \Cref{Line-color-satisfy-1} we have $\abs{R}\geq \abs{V'}$.
Therefore by the truncation condition at \Cref{Line-color-force-return-1} we have  
\[  
    \abs{V'}\leq 4\Delta^2 k^5\left\lceil\log{\frac{1}{\gamma}}\right\rceil.
\]
Combining the two inequalities proves the lemma.
\end{proof}

\begin{proof}[Proof of \Cref{lem-alg-edge-col-time}]
  Consider the whole process of $\appmghcoledge(S,\gamma)$. Let $A$ be the set of all $t_0$ such that $\appmghcol(t_0,\gamma;M,R)$ (for some $M$ and $R$) is executed at least once. For any $t_0 \in A$, one of the following two events must happen (1) the whole algorithm terminates in \Cref{Line-color-force-return-1}; or (2) $(t_0,r_{t_0}) \in R_F$, where $R_F$ is the final $R$ when whole algorithm terminates. Moreover, there is at most one $t^* \in A$ making the first event happen.  Hence $|A| \leq |R_F| + 1=O\left(\Delta^2 k^5\log{\frac{1}{\gamma}}\right)$. 
For any $t_0 \in A$, once $\appmghcol(t_0,\gamma;M,R)$ returns, $M(t_0)\neq \perp$.
Any further calls to $\appmghcol(t_0,\gamma;M,R)$ would not execute beyond \Cref{Line-color-memoization-1} of \Cref{Alg:marginal-color-sampler-app} and would have cost $O(1)$ time.
Consider the recursion tree of the whole execution.
All subsequent recursive calls are leaf nodes.
We attribute the time cost of the subsequent recursive calls to its parent,
and this way the total time cost becomes the total cost of all first calls.
Assuming the cost of each recursive call is $O(1)$,
let $T^*$ be the upper bound of the running time of both
\begin{itemize}
  \item $\appmghcol(t,\gamma;M,R)$ of \Cref{Alg:marginal-color-sampler-app}, and
  \item $\appmghcoledge(S,\gamma)$ of \Cref{Alg:marginal-edge-color-sampler}.
\end{itemize}
Then 
the total running time of can be bounded by $(|A|+1)T^*$. 
We then analyze the time cost of the first item for \Cref{Alg:marginal-color-sampler-app}.

Note that the cost of the while loop in Lines \ref{Line-color-while-1}-\ref{Line-color-assign-2-1} of \Cref{Alg:marginal-color-sampler-app}
is at most a constant multiple of the number of executions of \Cref{Line-color-assign-1} and \Cref{Line-color-assign-2-1}.
If the condition in \Cref{Line-color-satisfy-1} is satisfied, then $e$ is added into $V'$ in \Cref{Line-color-update-1}; otherwise $e$ is satisfied by $\sigma$ after the loop by \Cref{Line-color-assign-2-1}. This shows that each $e\in \+{E}$ can be chosen in \Cref{Line-color-choose-1} at most once. 
Each time \Cref{Line-color-assign-1} or \Cref{Line-color-assign-2-1} executes,
some $u$ is added to $\Lambda$ due to an hyperedge $e\ni u$.
Thus it happens at most $\Delta$ times for each $u\in\Lambda$.
By \Cref{lem-color-lambda-bound}, $\abs{\Lambda}\le O\left(\Delta^3 k^6\log{\frac{1}{\gamma}}\right)$.
Thus the total cost of the while loop is at most $O\left(\Delta^4 k^6\log{\frac{1}{\gamma}}\right)$. 
Moreover, in \Cref{Line-color-enum},
we enumerate $q^{\abs{\Lambda}+1}$ possible configurations,
each of which takes $O(\abs{\Lambda}\Delta)$ time to check.
The total time cost of \Cref{Line-color-enum} is
\begin{align*}
	O(\abs{\Lambda}\Delta q^{\abs{\Lambda}+1})=O\tp{\Delta^4k^6\log{\frac{1}{\gamma}}\cdot q^{8\Delta^3k^6\log{\frac{1}{\gamma}}}}.
\end{align*}
The same time cost bound holds for \Cref{Alg:marginal-edge-color-sampler} by the same argument.
Therefore it suffices to take
\begin{align*}
  T^* = O\tp{\Delta^4k^6\log{\frac{1}{\gamma}}\cdot q^{8\Delta^3k^6\log{\frac{1}{\gamma}}}}.
\end{align*}
Since $|A| = O(\Delta^2 k^5 \log \frac{1}{\gamma})$, the total running time of $\appmghcoledge(S,\gamma)$ is at most $O(\Delta^6k^{11}\log^2{\frac{1}{\gamma}}\cdot q^{8\Delta^3k^6\log{\frac{1}{\gamma}}})$.
The lemma follows.
\end{proof}

The next lemma shows the correctness of \Cref{Alg:marginal-edge-color-sampler}.  Let $\+E_{\-{trun}} = \+E_{\-{trun}}(4\Delta^2 k^5 \lceil\log{\frac{1}{\gamma}}\rceil)$ be the event that the truncation in \Cref{Line-color-edge-force-return-1} of \Cref{Alg:marginal-edge-color-sampler} occurs.  
Similar to \Cref{thm:dtv-truncate},
we bound the total variation distance between the output of \Cref{Alg:marginal-color-sampler-app} and the target distribution $\mu_S$ by $\Pr{\+E_{\-{trun}}}$.
\begin{lemma}\label{lemma:edge-col-correct}
Let $H=(V,\+E)$ be a $k$-uniform hypergraph with the maximum degree at most $\Delta$, $[q]$ be a set of colours, and $h: [q] \to [s]$ be a projection scheme that defines the projected distribution $\psi$. 
Let $S\subseteq V$ be a subset of vertices and $\gamma > 0$ be a real number.  
Let $X_S$ be the output of $\appmghcoledge(S,\gamma)$.
If $\lfloor q / s \rfloor^k\geq 4\mathrm{e}qs \Delta k$, it holds that $\DTV{X_S}{\mu_S} \leq \Pr{\+E_{\-{trun}}}$, where $\+E_{\-{trun}} = \+E_{\-{trun}}(4\Delta^2 k^5 \lceil\log{\frac{1}{\gamma}}\rceil)$ and $\mu$ is the uniform distribution over all proper $q$-colourings of $H$.
\end{lemma}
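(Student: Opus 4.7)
The plan is to mirror the strategy of \Cref{thm:dtv-truncate}: couple $\appmghcoledge(S,\gamma)$ with an untruncated ``ideal'' variant that samples exactly from $\mu_S$, and observe that the coupling succeeds precisely off the event $\+E_{\-{trun}}$.

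First I would define the ideal process $\appmghcoledge^{\infty}(S)$ to be identical to \Cref{Alg:marginal-edge-color-sampler} except that every invocation of $\appmghcol(\cdot,\gamma;M,R)$ is replaced by $\resolve_\infty(\cdot;M,R)$ (built from the $\config(t)$ of \Cref{Alg:config-h-col}), and the catch at \Cref{Line-color-edge-catch} is removed. The hypothesis $\lfloor q/s\rfloor^k\ge 4\mathrm{e}qs\Delta k$ combined with \Cref{cor:local-uniformity} gives $\psi^{\-{LB}}(\perp)=\frac{1}{4s}<1$, so \Cref{condition:sufficient-correctness} holds; \Cref{theorem:resolve-cor} then ensures each call of $\resolve_\infty$ terminates a.s.\ and returns an exact sample from the corresponding projected marginal of $\psi$, consistently across queries thanks to the shared global state $(M,R)$.

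Next I would establish correctness of the ideal process. By repeating the argument of \Cref{lemma:config-h-col-cor} for the outer loop, upon exit of the while loop at \Cref{Line-color-edge-while} we have $S\subseteq V'\subseteq \Lambda\cup S$ and every hyperedge $e\in\+E$ with $e\cap V'\neq\emptyset$ and $e\setminus V'\neq\emptyset$ is satisfied by $\sigma$. Hence, conditioning on $\sigma$, all such crossing hyperedges are non-monochromatic and may be deleted, so $X_{V'}$ is independent of $X_{V\setminus V'}$ under $\mu$ conditioned on $h(X_{V'})=\sigma$. This makes $\mu^{\sigma}_S$ equal to the marginal on $S$ of the list colouring distribution on the sub-hypergraph $H[V']$ with colour lists $h^{-1}(\sigma_u)$, which is exactly what \Cref{Line-color-edge-enum} computes by enumeration. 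Combining with the chain rule
\[
\mu_S(\cdot)=\sum_{\sigma\in[s]^{\Lambda}}\psi_{\Lambda}(\sigma)\,\mu^{\sigma}_S(\cdot),
\]
and the fact that $\sigma$ produced by the ideal process is exactly $\psi_\Lambda$-distributed, the output of $\appmghcoledge^{\infty}(S)$ is distributed as $\mu_S$.

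Finally I would couple the two processes by sharing all random choices: the $\basicsample$ outputs drawn from $\psi^{\-{LB}}$; the uniform reals $(U_t)$ used to simulate the padding distribution $\psi^{\-{pad},\sigma_\Lambda}_v$ at \Cref{Line-color-sample} as in the proof of \Cref{theorem:resolve-cor}; and the final sample from $\mu^{\sigma}_S$ at \Cref{Line-color-edge-tau-sample}. Under this coupling, the two executions proceed in lockstep unless the guard $|R|\ge 4\Delta^2k^5\lceil\log(1/\gamma)\rceil$ triggers in the truncated algorithm, i.e.\ unless $\+E_{\-{trun}}$ occurs. The coupling lemma then yields $\DTV{X_S}{\mu_S}\le \Pr{\+E_{\-{trun}}}$. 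The main obstacle is step two: carefully justifying that the BFS-produced $V'$ really separates $S$ from the rest under $\mu$ given $\sigma$, so that enumeration over $\otimes_{u\in V'}h^{-1}(\sigma_u)$ computes $\mu^{\sigma}_S$ rather than only an approximation to it. This reduces to the ``deletion of satisfied hyperedges'' principle already used in \Cref{lemma:config-h-col-cor}, extended from a single site $v$ to the whole set $S$.
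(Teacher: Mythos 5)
Your overall architecture is the same as the paper's: define an untruncated ideal version of $\appmghcoledge$, show it outputs an exact sample from $\mu_S$, and couple it with the truncated algorithm using shared randomness so that the two disagree only on $\+E_{\-{trun}}$. Your first and third steps, and your separation argument (BFS produces a $\sigma_\Lambda$ whose crossing hyperedges are all satisfied, so the enumeration over $\otimes_{u\in V'}h^{-1}(\sigma_u)$ computes $\mu^{\sigma}_S$ exactly), match the paper's \Cref{lemma-app-A} and the final coupling step.

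The gap is in your middle step. You write that each call of $\resolve_\infty$ returns an exact sample from the projected marginal ``consistently across queries thanks to the shared global state $(M,R)$'', and then apply the chain rule $\mu_S(\cdot)=\sum_{\sigma}\psi_\Lambda(\sigma)\mu^\sigma_S(\cdot)$ as if $\sigma$ were exactly $\psi_\Lambda$-distributed. But $\Lambda$ is random and is grown adaptively using the very values being revealed, and \Cref{theorem:resolve-cor} only asserts the correctness of the \emph{marginal} law of a single call of $\resolve_\infty$; it does not state that a family of adaptively chosen calls sharing $(M,R)$ is \emph{jointly} distributed as the restriction of a single $Y\sim\psi$. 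Establishing that joint consistency is the actual content of the paper's proof: it introduces a forward algorithm $\+A$ (\Cref{Alg:idea}) fed by the full configuration $Y_0^{\+P(T)}$ of the finite-$T$ projected dynamics, shows its output tends to $\mu_S$ as $T\to\infty$ (\Cref{lemma-app-A}), shows via \Cref{lemma-app-C} that for each finite $T$ the backward process $\+B_T(S)$ is coupled perfectly with $\+A$ because the proof of \Cref{theorem:resolve-cor} identifies every call $\resolve_T(\upd_u(0);M,R)$ with the value $Y_0(u)$ of the \emph{same} realisation of the chain, and only then passes to $T=\infty$ via \eqref{eq-app-inf}--\eqref{eq-cor-app}. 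To make your argument rigorous you would need to reproduce this finite-$T$ detour, or otherwise prove directly that $\{\resolve_\infty(\upd_u(0);M,R)\}_{u\in V}$ is jointly distributed as a sample of $\psi$; as written, that claim is asserted rather than proved.
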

The proof of \Cref{lemma:edge-col-correct} is similar to the proof of \Cref{thm:dtv-truncate}. 
The main difference is that the equivalent to \Cref{Alg:appresolve}, $\appmghcol$, is called multiple times in \Cref{Alg:marginal-edge-color-sampler}.  
Intuitively, we use $\appmghcol$ to access a full configuration $Y \in [s]^V$, where $Y$ approximately follows the projected distribution $\psi$.  We may query the values of $Y_u$ for different vertices $u$.  
The answers returned by $\appmghcol$ for different $u$ should be consistent with each other.  
Since this proof is somewhat repetitive to that of \Cref{thm:dtv-truncate},
we delay the proof of \Cref{lemma:edge-col-correct} to \Cref{sec-proof-colour-correct}.

We also have the following lemma that bounds the probability that the size of $R$ becomes too large.
\begin{lemma}\label{lemma:color-efficiency}
Let $H=(V,\+E)$ be a $k$-uniform hypergraph with the maximum degree at most $\Delta$, $[q]$ be a set of colours, and $h: [q] \to [s]$ be a projection scheme that defines the projected distribution $\psi$.  
Let $S\subseteq V$ be a subset of vertices and $\gamma > 0$ be a real number.  
If $k\geq 20$, $q\geq 4s$, $\lfloor q / s \rfloor^k\geq 4\mathrm{e}qs \Delta k$, $s \geq 6\Delta^{\frac{2}{k-2}}$ and $\abs{S}\leq k$, 
then upon the termination of $\appmghcoledge{}(S,\gamma)$, it holds that 
\begin{align}
    \Pr{\abs{R}\geq 4\Delta^2 k^5 \cdot \eta}\leq \cdot 2^{-\eta}
\end{align}
where $\eta=\lceil\log\frac{1}{\gamma}\rceil$. In particular, this shows that 
\[
    \Pr{\+E_{\-{trun}}}\leq \gamma.
\]
\end{lemma}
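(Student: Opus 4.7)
The plan is to mimic the witness-graph and $2$-tree argument used in the proof of \Cref{lemma:indset-efficiency}, adapting it to the projected colouring setting. Concretely, I would reuse the witness graph $G_H = (V_H, E_H)$ from \Cref{definition:witness-graph-indset} (whose maximum degree is at most $2\Delta k^2 - 2$ by \Cref{lemma:wg-degree-bound}) and define the bad set $V_{\+B} \subseteq V_H$ as the collection of $\ts(e, t)$ such that, during the execution of $\appmghcoledge(S, \gamma)$, the BFS of some underlying $\appmghcol(t, \gamma; M, R)$ call (or of \Cref{Alg:marginal-edge-color-sampler} itself when $t = \upd_u(0)$ for some $u \in S$) processes the hyperedge $e$ and the condition at \Cref{Line-color-satisfy-1} (respectively \Cref{Line-color-edge-satisfy}) is triggered, thereby spawning recursive calls. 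A line-by-line adaptation of \Cref{lemma:indset-randomness-bound} yields $\abs{R} \le O(\abs{V_{\+B}} \cdot k^2 \Delta)$, so it suffices to prove $\Pr{\abs{V_{\+B}} \ge C \eta} \le 2^{-\eta}$ for an appropriate constant $C$ depending polynomially on $k$ and $\Delta$. The in-particular statement $\Pr{\+E_{\-{trun}}} \le \gamma$ then follows immediately by setting $\eta = \lceil \log\frac{1}{\gamma}\rceil$.

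Next, I would lift the connectivity argument of \Cref{lemma:indset-connected}: every element of $V_{\+B}$ is traceable, through the recursion tree, to an initial call spawned either in \Cref{Line-color-edge-S} or during the outer BFS of $\appmghcoledge$. Hence $V_{\+B}$ is contained in the union of at most $O(k^2\Delta)$ connected subgraphs of $G_H$, each anchored at an initial witness vertex. Applying \Cref{lemma:big-2-tree} to the largest such component, whenever $\abs{V_{\+B}} \ge C\eta$ we obtain a $2$-tree in $G_H$ of size $\eta$ whose anchor is one of those $O(k^2\Delta)$ initial witness vertices.

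The probability estimate replacing \Cref{lemma:indset-all-1} is the core step. For any $x = \ts(e, t) \in V_H$, the event $x \in V_{\+B}$ forces two conditions: (a) $r_t = \perp$, otherwise $\basicsample$ at \Cref{Line-color-direct-return-1} short-circuits the call at time $t$; and (b) there exists $j \in [s]$ with $r_{\upd_u(t)} \in \{\perp, j\}$ for every $u \in e \setminus \{v_{i(t)}\}$, which is exactly the condition tested at \Cref{Line-color-satisfy-1}. Using $\psi^{\-{LB}}(\perp) = \frac{1}{4s}$ together with $\rho_j \le \frac{5}{4s}$ (which holds under $q \ge 4s$, since $\abs{h^{-1}(j)} \le \lceil q/s \rceil \le \frac{5q}{4s}$), a union bound over $j \in [s]$ gives
\[
\Pr{x \in V_{\+B}} \le \frac{1}{4s} \cdot s \cdot \left(\frac{1}{4s} + \frac{5}{4s}\right)^{k-1} = \frac{1}{4}\left(\frac{3}{2s}\right)^{k-1}.
\]
Because non-adjacent vertices in $G_H$ have disjoint timestamp sets, the events $\{x \in V_{\+B}\}_{x \in T}$ along a $2$-tree $T$ involve disjoint random variables and are mutually independent. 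Combining \Cref{lemma:2-tree-number-bound} with a union bound over anchors yields
\[
\Pr{\abs{V_{\+B}} \ge C\eta} \le O(k^2\Delta) \cdot \frac{1}{2}\left(\mathrm{e} \cdot (2\Delta k^2)^2 \cdot \frac{1}{4}\left(\frac{3}{2s}\right)^{k-1}\right)^\eta,
\]
and the hypotheses $k \ge 20$ and $s \ge 6\Delta^{2/(k-2)}$ are precisely calibrated so that the bracketed quantity drops below $\frac12$ with enough slack to absorb the $O(k^2\Delta)$ anchor prefactor and still produce the required $2^{-\eta}$ tail.

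The main obstacle I foresee is the third step above: rigorously verifying that the event $x \in V_{\+B}$ depends only on $\{r_{t'} : t' \in \ts(e, t)\}$, so that events indexed by distinct vertices of a $2$-tree are genuinely independent. Care is required because the BFS in \Cref{Alg:marginal-color-sampler-app} selects which edge to process based on already-resolved randomness recorded in $M$ and $R$; one has to argue that, whenever $e$ is the edge eventually processed at time $t$, the decision to process $e$ is a deterministic function of the hypergraph structure together with randomness strictly outside $\ts(e, t)$, so it does not surreptitiously correlate $V_{\+B}$-membership across $2$-tree-separated witness vertices.
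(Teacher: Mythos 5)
Your overall strategy --- witness graph, connectivity, $2$-tree extraction, independence of implied events, union bound --- is the same as the paper's, but two steps as written have genuine gaps. The first is the per-vertex probability bound. You write $x=\ts(e,t)$ and charge the event $x\in V_{\+B}$ with (a) $r_t=\perp$ and (b) a common $j$ for the $k-1$ timestamps $\upd_u(t)$, $u\in e\setminus\{v_{i(t)}\}$. But a witness vertex $x$ is a \emph{set} of $k$ timestamps: the time $t_0$ of the $\appmghcol$ call that actually processes $e$ is neither determined by $x$ (many $t$ give the same $\ts(e,t)$) nor necessarily an element of $x$ --- the BFS in \Cref{Alg:marginal-color-sampler-app} processes edges $e$ that merely intersect the grown set $V'$, so $v_{i(t_0)}$ need not lie in $e$, in which case $t_0\notin\ts(e,t_0)$, condition (b) constrains all $k$ timestamps of $x$, and your condition (a) constrains a timestamp \emph{outside} $x$, which would destroy the independence across the $2$-tree that you need later. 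The event that is both implied by $x\in V^{\-{col}}_{\+B}$ and measurable with respect to $\{r_{t'}:t'\in x\}$ is the one in \Cref{lemma:color-all-1}: there \emph{exist} $t_0\in x$ and $j\in[s]$ such that $r_{t'}\in\{\perp,j\}$ for all $t'\in x\setminus\{t_0\}$. Bounding its probability requires a union bound over $t_0\in x$ as well as over $j$, i.e.\ a factor $sk$ rather than your $\tfrac{1}{4s}\cdot s$, giving $sk\bigl(\tfrac{\lceil q/s\rceil}{q}(1+\tfrac1s)+\tfrac{1}{4s}\bigr)^{k-1}\le 2k(2/s)^{k-2}$. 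The extra factor $k$ is absorbed by the final calibration, but your stated bound does not account for all the ways $x$ can join the bad set.

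The second gap is connectivity. You keep the unmodified witness graph $G_H$ and assert that $V_{\+B}$ splits into at most $O(k^2\Delta)$ connected subgraphs, each anchored at an initial witness vertex, then extract a $2$-tree from the largest one. This needs proof: when $S\notin\+E$, the recursion trees rooted at the calls $\appmghcol(\upd_u(0),\gamma)$ for different $u\in S$, and the branches spawned by the outer BFS of \Cref{Alg:marginal-edge-color-sampler} through edges meeting $V'$ only inside $S$, are glued together only through $S$ itself, which is not a vertex of $G_H$. The paper instead adds the artificial vertex $\ts(S,0)$ (\Cref{definition:witness-graph-color}), proves that $V^{\-{col}}_{\+B}\ni\ts(S,0)$ is connected in $G_H^S$ (\Cref{lemma:color-connected}), and anchors a single $2$-tree at $\ts(S,0)$, which avoids both your extra union bound over anchors and the loss incurred by passing to the largest component; your variant can be repaired, but the component count and the anchoring must be established rather than asserted. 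Finally, the independence worry you flag at the end is aimed at the wrong target: one does not need $x\in V_{\+B}$ to \emph{be} a function of $\{r_{t'}:t'\in x\}$, only that it \emph{implies} an event that is; mutual independence of those implied events over a $2$-tree then follows from the disjointness of the timestamp sets, exactly as in the proof of \Cref{lemma:indset-efficiency}.
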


\Cref{lemma:color-efficiency} is proved in \Cref{sec:proof-error-color}. We are now ready to prove \Cref{lem-sampling-color}. 
\begin{proof}[Proof of \Cref{lem-sampling-color}]
  Let $s=\left\lfloor q^{\frac{2}{3}}\right\rfloor$.
  As $q\geq 64\Delta^{\frac{3}{k-5}}$, we have
  \begin{align*}
    q&\geq 4s, & s&\geq 16\Delta^{\frac{2}{k-5}}-1\geq 6\Delta^{\frac{2}{k-2}},
  \end{align*}
  and by $k\geq 20$,
  \[
    \left\lfloor\frac{q}{s}\right\rfloor\geq 4\Delta^{\frac{1}{k-5}} - 1\geq 2(k\Delta)^{\frac{1}{k-5}}.
  \]
  Therefore
  \[
    \left\lfloor\frac{q}{s}\right\rfloor^{k}\geq 2^{k-5}(k \Delta) \left\lfloor \frac{q}{s} \right\rfloor^5 \geq 2^{11} \cdot 4\mathrm{e}k\Delta\cdot \left\lfloor\frac{q}{s}\right\rfloor^5\geq 4\mathrm{e}qs \Delta k,
  \]
  where the second inequality is by $k\geq 20$, and the last inequality is by $8\left\lfloor\frac{q}{s}\right\rfloor^2\geq s$ and $8\left\lfloor\frac{q}{s}\right\rfloor^3\geq q$ from $s=\left\lfloor q^{\frac{2}{3}}\right\rfloor$.
  Therefore the conditions in \Cref{lemma:color-efficiency} are met.
We use $\appmghcoledge(S,\gamma)$ to sample from the distribution $\mu_S$. By \Cref{lem-alg-edge-col-time}, the running time is $O(\Delta^6k^{11}\log^2{\frac{1}{\gamma}}\cdot q^{8\Delta^3k^6\log{\frac{1}{\gamma}}})$. 
It is straightforward to verify that \Cref{Alg:marginal-edge-color-sampler} uses at most $8\Delta^2 k^5 \left\lceil\log{\frac{1}{\gamma}}\right\rceil+1$ random variables with domain size at most $q+1$.
By combining \Cref{lemma:edge-col-correct} and \Cref{lemma:color-efficiency}, the output is a random sample that is $\gamma$-close to $\mu_v$ in the total variation distance.
This proves the lemma.
\end{proof}

\subsection{Truncation error of hypergraph colouring}\label{sec:proof-error-color}
We now prove \Cref{lemma:color-efficiency}.

We need some notations from \Cref{section:hyper-indset}.
Recall $\ts{}(e,t)$ in \eqref{eq:timestamp} for each $e\in \+{E}$ and $t\leq 0$. 
Also recall the witness graph $G_H=(V_H,E_H)$ in \Cref{definition:witness-graph-indset}. 
Note that the maximum degree of $G_H$ is bounded by $2\Delta k^2-2$ by \Cref{lemma:wg-degree-bound}. 

Fix a $k$-uniform hypergraph $H=(V,\+{E})$ with the maximum degree at most $\Delta$ together with a set $[q]$ of colours, a projection scheme with parameter $s$, a subset of vertices $S \subseteq V$, and a real number $\gamma > 0$ satisfying the condition in \Cref{lemma:color-efficiency}. 
We analyse the algorithm $\appmghcoledge{}(S,\gamma)$.

As we want to resolve the marginal distribution for a set $S$ of vertices,
we modify the definition of $G_H$ to obtain another graph $G_H^{S}=(V_H^S, E_H^S)$ by adding a new vertex representing the last update times of all $v\in S$ before $0$ and corresponding edges.

\begin{definition}[modified witness graph]\label{definition:witness-graph-color}
  For a hypergraph $H=(V,\+E)$ and a subset of vertices $S\subseteq V$,
  if $S \in \+E$, define the modified witness graph $G^S_H= G_H$;
  otherwise, let $\ts(S,0)=\set{\upd_u(0)\mid u\in S}$, and define the modified witness graph $G^S_H=(V^S_H,E^S_H)$ with respect to $S$ as follows:
\begin{itemize}
    \item $V^S_H=V_H\cup \{\ts(S,0)\}$,
    \item $E^S_H=E_H\cup \{(x,\ts(S,0))\mid x\in V_H\land x\cap \ts(S,0)\neq \emptyset\}$,
\end{itemize}
where $G_H=(V_H,E_H)$ is defined in  \Cref{definition:witness-graph-indset}.
\end{definition}

Note that in the modified graph $G_H^{S}$, there exists a vertex  $\ts(S,0)$ for the subset $S$.
Let $e_H(\ts(S,0))=S$.
We can then define $N_{\-{self}}(\ts(S,0)),N_{\-{out}}(\ts(S,0))$ and $N(\ts(S,0))$ as in \eqref{eq-def-ng}.
That is, $N_{\-{self}}(\ts(S,0)) = \emptyset$ and $N_{\-{out}}(\ts(S,0)) = N(\ts(S,0)) = \{ w \in V_H^S \mid (w,\ts(S,0))\in E_H^S\}$;
And for all $w \in N(\ts(S,0))$, it holds that $\ts(S,0) \in N_{\-{out}}(w)$.

The next result follows from the proof of \Cref{lemma:wg-degree-bound}. 
\begin{corollary} \label{lemma:modified-wg-degree-bound}
If $\abs{S}\leq k$, then the maximum degree of $G_H^S$ is bounded by $2\Delta k^2 -1$.
\end{corollary}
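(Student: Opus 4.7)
The plan is to prove this by adapting and extending the counting argument used in the proof of \Cref{lemma:wg-degree-bound}. There are two cases to handle: the degrees of original vertices $x \in V_H$, and the degree of the new vertex $\ts(S,0)$.

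For any original vertex $x \in V_H$, its degree in $G_H^S$ equals its degree in $G_H$ plus at most $1$ (accounting for a possible new edge to $\ts(S,0)$ if $x \cap \ts(S,0) \neq \emptyset$). By \Cref{lemma:wg-degree-bound}, the degree of $x$ in $G_H$ is at most $2\Delta k^2 - 2$, so its degree in $G_H^S$ is at most $2\Delta k^2 - 1$. This handles all original vertices.

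The main work is bounding the degree of $\ts(S,0)$. I would count $N(\ts(S,0)) = \{y \in V_H \mid y \cap \ts(S,0) \neq \emptyset\}$ by the same two-step strategy as in \Cref{lemma:wg-degree-bound}. First, if $y = \ts(e',t')$ satisfies $y \cap \ts(S,0) \neq \emptyset$, then there exist $u \in S$ and $v \in e'$ with $\upd_u(0) = \upd_v(t')$; because distinct vertices have distinct update times under the systematic scan, this forces $u = v \in e' \cap S$. Hence $e'$ must be incident to some vertex of $S$, giving at most $|S| \cdot \Delta \leq k\Delta$ choices for $e'$. Second, for each such $e'$, the argument from \Cref{lemma:wg-degree-bound} applies verbatim: letting $j_{\min}$ and $j_{\max}$ be the smallest and largest indices of timestamps in $\ts(S,0)$ that update vertices of $e'$, we obtain $t_1' \in (t_{j_{\min}} - n, t_{j_{\max}}]$, and in this interval there are at most $2k-1$ update times of vertices in $e'$. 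Each valid $t_1'$ determines $\ts(e',t')$ uniquely.

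Combining the two bounds yields $|N(\ts(S,0))| \leq k\Delta \cdot (2k-1) = 2\Delta k^2 - k\Delta \leq 2\Delta k^2 - 1$, since $k,\Delta \geq 1$. Together with the bound for original vertices, this establishes the claimed maximum degree bound of $2\Delta k^2 - 1$ for $G_H^S$. There is no real obstacle here; the result follows by careful bookkeeping that reuses the interval argument from \Cref{lemma:wg-degree-bound}, with the only new observation being that $\ts(S,0)$ can be treated as an ``honorary'' $\ts(e,t)$ with $|S| \leq k$ playing the role of $|e| = k$ (and with no self-neighbours, since $e_H(\ts(S,0)) = S$ is not a hyperedge in $\+E$).
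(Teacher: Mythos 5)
Your proof is correct and follows essentially the same route as the paper's: bound the degree of original vertices by the old bound plus one, and bound the degree of $\ts(S,0)$ by combining the count of at most $|S|\Delta\le k\Delta$ incident hyperedges with the $(2k-1)$-choices interval argument from \Cref{lemma:wg-degree-bound}. The only cosmetic difference is that the paper explicitly splits the case $S\in\+E$ (where $G_H^S=G_H$ and the bound is immediate) from $S\notin\+E$, whereas you treat the latter case directly; this is harmless since the former case is trivial by definition.
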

\begin{proof}
Note that compared to $G_H$, the degree of all vertices $x\in V_H=V^S_H\setminus \{\ts(S,0)\}$ increases by at most one and is therefore bounded by $2\Delta k^2 -1$ by \Cref{lemma:wg-degree-bound}. 

It then remains to bound the degree of $\ts(S,0)$. We have two cases:
\begin{itemize}
    \item If $S=e$ for some $e\in \+{E}$, then following the proof of \Cref{lemma:wg-degree-bound}, $|N_{\-{out}}(x)|=|N_{\-{out}}(\ts(e,0))|\leq (2k-1)(\Delta-1)k$ and $|N_{\-{self}}(x)|\leq 2k-1$. This together shows $|N(x)|\leq 2k^2(\Delta-1)+2k\leq 2k^2\Delta-1$.
    \item Otherwise, $|N_{\-{self}}(x)|=0$. Following the proof of \Cref{lemma:wg-degree-bound}, we have $|N_{\-{out}}(x)|\leq k\Delta\cdot (2k-1)\leq 2k^2\Delta-1$. This shows $|N(x)|\leq  2k^2\Delta-1$.
\end{itemize}
Combining the above proves the lemma.
\end{proof}

%
%
%
%
%
%
Fix an integer $t_0 \leq 0$.
$\appmghcol(t_0,\gamma;M,R)$ can only be evoked by:
\begin{enumerate}
  \item $\appmghcoledge(S,\gamma)$ in \Cref{Line-color-edge-S} when it processes $S$, and in this case, it holds that $t_0= \upd_u(0)$ for some $u \in S$; or \label{item:trigger1}
  \item $\appmghcoledge(S,\gamma)$ in \Cref{Line-color-edge-assign} when it processes $e \in \+E$, and in this case, it holds that  $t_0= \upd_u(0)$ for some $u \in e$; or \label{item:trigger2}
  \item $\appmghcol(t_1,\gamma;M,R)$ in \Cref{Line-color-assign-1} when it processes $e \in \+E$, and in this case, it holds that $t_0 < t_1 \leq 0$ and  $t_0= \upd_u(t_1)$ for some $u \in e$. \label{item:trigger3}
\end{enumerate}
In the first case, we say an instance of $\appmghcol(t_0,\gamma;M,R)$ is triggered by $\ts{}(S,0) \in V_H^S$.
In the second case, we say such an instance is triggered by $\ts{}(e,0) \in V_H^S$. 
In the last case, we say such an instance  is triggered by $\ts{}(e,t_1) \in V_H^S$.
Conversely, for any $x \in V_H^S$ in the witness graph, we say $x$ triggers some instance of \appmghcol{} if any of the above happens with the trigger being $x$ during the whole process.

Define the following random subset of vertices in $G^S_H$: 
\begin{align}\label{eq:color-bad-timestamps}
    V^{\-{col}}_{\+{B}}=\{x\in V_H^S\mid \text{ $x$  triggers some instance of \appmghcol }\},
\end{align}
where the randomness of $ V^{\-{col}}_{\+{B}}$ comes from the random variables $\{r_t\}_{t\leq 0}$ and the samples drawn from padding distribution in \Cref{Line-color-sample} of $\appmghcol$.
Since \Cref{Line-color-edge-S} always happens, it holds that
\begin{align*}
	\ts(S,0) \in  V^{\-{col}}_{\+{B}}.
\end{align*}
After \Cref{Alg:marginal-edge-color-sampler} terminates, it generates sets $V^{\-{col}}_{\+{B}}$  and $R$.
Similar to \Cref{lemma:indset-randomness-bound}, the following lemma shows that the size of $R$ can be upper bounded in terms of $|V^{\-{col}}_{\+B}|$.
\begin{lemma}\label{lemma:color-randomness-bound}
If $\abs{S}\leq k$, then $\abs{R}\leq 2\Delta k^3|V^{\-{col}}_{\+B}|$. 
\end{lemma}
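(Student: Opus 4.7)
\smallskip

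The plan is to follow the template of \Cref{lemma:indset-randomness-bound}: bound $|R|$ by controlling, on one hand, the number of first invocations of $\appmghcol$, and, on the other, the number of $\basicsample$ queries each of these invocations (and the outer routine $\appmghcoledge$) can make. The main complication compared with the independent-set case is twofold: (a) the while loop inside $\appmghcol$ is a BFS over edges meeting $V'$, so it can process edges that are \emph{not} incident to the current variable $v_{i(t_0)}$; and (b) the routine $\appmghcoledge$ has its own while loop that directly invokes $\basicsample$.

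Concretely, I fix the random tape -- the samples $(r_t)_{t\le 0}$ together with the padding draws at \Cref{Line-color-sample} -- so that both algorithms run deterministically, and let $A$ be the set of timestamps at which $\appmghcol$ is invoked. Mirroring the independent-set argument, the first invocation of each $t_0\in A$ is caused by a unique trigger $f(t_0)\in V^{\-{col}}_{\+B}$, and $t_0\in f(t_0)$; since $|f(t_0)|\le k$, this yields $|A|\le k|V^{\-{col}}_{\+B}|$. Every new pair added to $R$ arises from a $\basicsample$ call of one of three kinds: (i) the initial call at \Cref{Line-color-direct-return-1} of a first invocation of $\appmghcol$, contributing at most $|A|$ entries; (ii) a call inside the while loop of some $\appmghcol(t_0,\gamma)$, contributing $k-1$ per processed edge; and (iii) a call inside the while loop of $\appmghcoledge(S,\gamma)$, contributing $k$ per processed edge. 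The $\basicsample$ queries used to define $U$ in \Cref{Line-color-assign-2-1-1.5} (and the analogous \Cref{Line-color-edge-assign-1.5}) duplicate those of the preceding check and so need not be counted twice.

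The heart of the argument is bounding the number of edges processed by each while loop. Every iteration handles one edge $e$ meeting $V'(t_0)$, in one of two ways: \emph{type-(a)}, where $e$ is joined to $V'$ and recursive $\appmghcol$ calls are triggered -- so that $\ts(e,t_0)$ lies in $V^{\-{col}}_{\+B}$ -- or \emph{type-(b)}, where $\sigma$ is extended to satisfy $e$. Type-(b) edges do not lie in $V^{\-{col}}_{\+B}$ and cannot be charged directly; the trick is that $V'(t_0)$ grows only through type-(a) edges, so $|V'(t_0)|\le 1+k\,E_A(t_0)$, with $E_A(t_0)$ the number of type-(a) edges at $t_0$. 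Combined with the degree bound $\Delta$, the total number of edges processed in $\appmghcol(t_0,\gamma)$ is at most $\Delta(1+k\,E_A(t_0))$. Summing over $t_0\in A$ and using $\sum_{t_0\in A}E_A(t_0)\le|V^{\-{col}}_{\+B}|$ bounds category (ii) by $(k-1)\Delta(|A|+k|V^{\-{col}}_{\+B}|)$. An identical argument for $\appmghcoledge$, now with $|V'(0)|\le k+k\,E_{A,0}$ (since $|S|\le k$) and $E_{A,0}\le|V^{\-{col}}_{\+B}|$, bounds category (iii) by $\Delta k^2(1+|V^{\-{col}}_{\+B}|)$.

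Putting the three bounds together with $|A|\le k|V^{\-{col}}_{\+B}|$ and the fact that $\ts(S,0)\in V^{\-{col}}_{\+B}$ (so $|V^{\-{col}}_{\+B}|\ge 1$) gives $|R|\le k|V^{\-{col}}_{\+B}|+2(k-1)\Delta k|V^{\-{col}}_{\+B}|+2\Delta k^2|V^{\-{col}}_{\+B}|\le 2\Delta k^3|V^{\-{col}}_{\+B}|$, where the last inequality uses $k\ge 20$ and $\Delta\ge 2$. The main obstacle I expect is the careful accounting of type-(b) edges, which pay in randomness but do not themselves populate $V^{\-{col}}_{\+B}$; they must be absorbed through the inequality $|V'|\le 1+k\,E_A$, which converts their cost back into the type-(a) budget already paid for by $V^{\-{col}}_{\+B}$.
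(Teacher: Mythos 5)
Your overall strategy (bound the number of first invocations of $\appmghcol$ by $k|V^{\-{col}}_{\+B}|$, then bound the number of $\basicsample$ queries each invocation and the outer routine can make) is a reasonable transplant of the independent-set argument, and you correctly identified the new difficulty: the BFS loop can process edges far from $v_{i(t_0)}$. But the step you use to resolve it — $\sum_{t_0\in A}E_A(t_0)\leq|V^{\-{col}}_{\+B}|$ — is not justified and is false in general. Each type-(a) processing of an edge $e$ in $\appmghcol(t_0,\gamma)$ does place $\ts(e,t_0)$ into $V^{\-{col}}_{\+B}$, but the map $(e,t_0)\mapsto\ts(e,t_0)$ need not be injective across invocations: if the same edge $e$ is processed as type-(a) in two invocations $\appmghcol(t_0,\gamma)$ and $\appmghcol(t_0',\gamma)$ with no vertex of $e$ updated between $t_0'$ and $t_0$, then $\ts(e,t_0)=\ts(e,t_0')$ is a \emph{single} vertex of the witness graph (recall $V_H$ is not a multiset), yet your sum counts it twice. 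This is entirely possible because $\sigma_\Lambda$ and $V'$ are local to each invocation, so an edge satisfied or absorbed in one invocation is processed afresh in another, and the $\basicsample$ values at the shared timestamps force the same type-(a) outcome. Since your category-(ii) bound multiplies $\sum_{t_0}E_A(t_0)$ by $(k-1)\Delta k$, even an average multiplicity of $2$ already breaks the constant $2\Delta k^3$, so this is not a slack-absorbable technicality within your accounting.

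The underlying quantity you actually need to control is the number of \emph{distinct} timestamps ever passed to $\basicsample$, and repeated processings of the same witness-graph vertex contribute nothing new to $R$ — a saving your call-counting does not exploit. The paper's proof is organised precisely to avoid this: it charges each element $(t,r_t)\in R$ directly to a vertex $y\in V^{\-{col}}_{\+B}$ at distance at most one in $G_H^S$ from some $x\ni t$ (using the invariant that every $u\in V'$ has $\upd_u(t_0)$ covered by some element of $V^{\-{col}}_{\+B}$), and then bounds the fibre of this charging by $(\text{degree of }G_H^S)\times k\leq 2\Delta k^3$. Because that count ranges over distinct witness-graph vertices and distinct timestamps inside them, the multiplicity problem never arises. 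To repair your proof you would essentially have to switch to this charging scheme, or else prove a multiplicity bound on $(e,t_0)\mapsto\ts(e,t_0)$ over type-(a) processings, which I do not see how to do cheaply.
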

\begin{proof}
We first arbitrarily fix values of all random variables ${(r_t)}_{t \leq 0}$ and the outcomes of samples drawn from the padding distribution in \Cref{Line-color-sample} in $\appmghcol$.
Then, the whole algorithm is deterministic.
We show that the lemma always holds.

We claim that for each $(t,r_t)\in R$, there exist $x,y\in V^{S}_H$ (possibly $x=y$) such that $t\in x$, $y\in V^{\-{col}}_{\+B}$
and $x\cap y\neq\emptyset$.
Namely either $x=y$ or $(x,y)\in E_H^S$.
For any $(t,r_t)\in R$, we define a map $f(t,r_t) = x \in  V^{\-{col}}_{\+B}$,
where $x$ is chosen arbitrarily from those satisfying the claim,
such as the lexicographically first. 
Since $|S| \leq k$,  by \Cref{lemma:modified-wg-degree-bound}, the maximum degree of $G^S_H$ is at most $2\Delta k^2 - 1$.
As $\abs{x}\le k$ there are at most $2\Delta k^3$ different $(t,r_t)\in R$ such that $f(t,r_t)$ maps to the same $y$. Hence,
\begin{align*}
	\abs{R}\leq |V^{\-{col}}_{\+B}|2\Delta k^3.
\end{align*}


To prove the claim, we first show the following result for $\appmghcoledge$ (\Cref{Alg:marginal-edge-color-sampler})  and $\appmghcol$ (\Cref{Alg:marginal-color-sampler-app}):
\begin{align}\label{eq-induction-app}
  \forall u \in V', \exists\, x \in V^{\-{col}}_{\+B} \text{~s.t.~} \upd_u(t_0) \in x,
\end{align}
where $t_0 = 0$ for $\appmghcoledge(S,\gamma)$ and $t_0 = t$ for $\appmghcol(t, \gamma;M,R)$.
\begin{itemize}
  \item For $\appmghcoledge(S,\gamma)$, If $u \in S$, we can take $x = \ts(S,0)$. 
    If $u \in V' \setminus S$, then $u$ must be added into $V'$ when the algorithm is processing $e \in \+E$ in \Cref{Line-color-edge-update}. 
    Then $\ts(e,0) \in V^{\-{col}}_{\+B}$ by \Cref{Line-color-edge-assign} and $\upd_u(0) \in \ts(e,0)$. 
  \item For $\appmghcol( t, \gamma;M,R)$. Note that $\appmghcol( t, \gamma;M,R)$ itself must be triggered by some $x^* \in V^{\-{col}}_{\+B}$ and $t = \upd_{v_{i(t)}}(t) \in x^*$.
    If $u = v_{i(t)}$,~\eqref{eq-induction-app} holds for $x = x^*$.
    If $u \in V' \setminus \{v_{i(t)}\}$, then $u$ must be added into $V'$ when the algorithm is processing $e \in \+E$ in \Cref{Line-color-update-1}.  
    Then $\ts(e,t) \in V^{\-{col}}_{\+B}$ by \Cref{Line-color-assign-1} and $\upd_u(t) \in \ts(e,t)$. 
\end{itemize}

Now we turn to show $x,y\in V^{S}_H$ in the claim exist.
A pair $(t,r_t)$ can be added into $R$ only through the call of $\randombit(R,t)$.
There are 3 cases.
\begin{itemize}
  \item It happens in \Cref{Line-color-edge-satisfy} of \Cref{Alg:marginal-edge-color-sampler} when some edge $e\in \+{E}$ is processed and for some $u\in e$, $t=\upd_u(0)$. It holds that $e \cap V' \neq \emptyset$. 
    Let $x=\ts(e,0) $ so that $t \in x$. 
    Fix an arbitrary $w \in e \cap V'$. By \eqref{eq-induction-app}, there is $y \in V^{\-{col}}_{\+B}$ such that $\upd_w(0) \in y$. Since $\upd_w(0) \in x$, $x \cap y \neq \emptyset$.
  \item It happens in \Cref{Line-color-direct-return-1} of \Cref{Alg:marginal-color-sampler-app} in  $\appmghcol(t,\gamma;M,R)$.
    Note that $v_{i(t)} \in V'$ in  $\appmghcol(t,\gamma;M,R)$. Hence, by \eqref{eq-induction-app}, there is $y \in V^{\-{col}}_{\+B}$ such that $ t = \upd_{v_{i(t)}}(t) \in y$. We can take $x = y$.
  \item It happens in \Cref{Line-color-satisfy-1} of \Cref{Alg:marginal-color-sampler-app} when some edge $e\in \+{E}$ is processed in an instance of $\appmghcol{}(t_0,\gamma;M,R)$ for some $t_0 \geq t$. 
    It holds that $e \cap V' \neq \emptyset$. Let $x=\ts(e,t_0) $ so that $t \in x$. Fix an arbitrary $w \in e \cap V'$. 
    By \eqref{eq-induction-app}, there is $y \in V^{\-{col}}_{\+B}$ such that $\upd_w(t_0) \in y$.  Since $\upd_w(t_0) \in x$, $x \cap y \neq \emptyset$.
\end{itemize}
Finally, we remark that we do not need to consider \Cref{Line-color-edge-assign-1.5} / \Cref{Line-color-edge-assign-2} in \Cref{Alg:marginal-edge-color-sampler} nor \Cref{Line-color-assign-2-1-1.5} / \Cref{Line-color-assign-2-1} in \Cref{Alg:marginal-color-sampler-app}, because if $\basicsample{}(t;R)$ is evoked in these lines, it must have been evoked in \Cref{Line-color-edge-satisfy}  of \Cref{Alg:marginal-edge-color-sampler} or \Cref{Line-color-satisfy-1} of \Cref{Alg:marginal-color-sampler-app} already.
This finishes the proof.
\end{proof}

\Cref{lemma:color-randomness-bound} shows that to upper bound the probability that $|R|$ is large, it suffices to upper bound the probability that $|V^{\-{col}}_{\+B}|$ is large.
The following lemma says that $V^{\-{col}}_{\+{B}}$ is connected on $G^S_H$, and the proof is also similar to the proof of \Cref{lemma:indset-connected}.
\begin{lemma}\label{lemma:color-connected}
$\ts(S,0) \subseteq V^{\-{col}}_{\+B}$ and 
the subgraph in $G^S_H$ induced by $V^{\-{col}}_{\+B}$ is connected.
\end{lemma}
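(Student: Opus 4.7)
The plan is to mirror closely the proof of \Cref{lemma:indset-connected}, with $\ts(S,0)$ playing the role of the ``root'' instead of a single timestamp $t^*$.

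\textbf{Part 1: $\ts(S,0)\in V^{\-{col}}_{\+B}$.} This is immediate from the algorithm. Upon the initial invocation $\appmghcoledge(S,\gamma)$, \Cref{Line-color-edge-S} of \Cref{Alg:marginal-edge-color-sampler} is always executed and triggers a call to $\appmghcol(\upd_u(0),\gamma;M,R)$ for each $u\in S$. By definition (\textbf{\ref{item:trigger1}} in the classification of triggers preceding \eqref{eq:color-bad-timestamps}), these calls are triggered by $\ts(S,0)$, so $\ts(S,0)\in V^{\-{col}}_{\+B}$. (Here I am reading the statement's $\subseteq$ as the intended $\in$, consistent with $V^{\-{col}}_{\+B}\subseteq V_H^S$ being a set of vertices of the modified witness graph.)

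\textbf{Part 2: connectivity of $G^S_H[V^{\-{col}}_{\+B}]$.} First arbitrarily fix the values of all random variables $(r_t)_{t\le 0}$ and the outcomes of the samples drawn from the padding distribution in \Cref{Line-color-sample} of \Cref{Alg:marginal-color-sampler-app}. Under such fixing, $\appmghcoledge(S,\gamma)$ and all its recursive calls become deterministic, and the set $V^{\-{col}}_{\+B}$ is a fixed finite set (finiteness follows from termination, which is in turn guaranteed because the truncation in \Cref{Line-color-force-return-1}/\Cref{Line-color-edge-catch} forces a stop once $|R|$ is too large). List its elements $V^{\-{col}}_{\+B}=\{x^{(1)},x^{(2)},\ldots,x^{(\ell)}\}$ in the chronological order in which they first trigger an $\appmghcol$ call during the execution. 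By Part 1 and the fact that \Cref{Line-color-edge-S} runs at the very beginning of $\appmghcoledge(S,\gamma)$, we have $x^{(1)}=\ts(S,0)$.

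I will prove by induction on $j$ that there is a path $(y_1,\dots,y_m)$ in $G^S_H$ with $y_1=\ts(S,0)$, $y_m=x^{(j)}$, and all $y_i\in V^{\-{col}}_{\+B}$. The base case $j=1$ is trivial. For the inductive step, consider the first trigger of $x^{(j)}$. It falls into one of cases \textbf{\ref{item:trigger2}} or \textbf{\ref{item:trigger3}} in the trigger classification (case \textbf{\ref{item:trigger1}} is exactly the base case).
In case \textbf{\ref{item:trigger2}}, $x^{(j)}=\ts(e,0)$ for some $e\in\+E$ handled in \Cref{Line-color-edge-assign} of \Cref{Alg:marginal-edge-color-sampler}; the while loop guarantees $e\cap V'\ne\emptyset$ at that moment, and every vertex of $V'$ was put there either by $S$ itself or by a preceding execution of \Cref{Line-color-edge-update}. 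So some $w\in e\cap V'$ is accompanied by a $y\in V^{\-{col}}_{\+B}$ with $\upd_w(0)\in y$ (either $y=\ts(S,0)$, or $y=\ts(e',0)$ where $e'$ caused $w$ to join $V'$). Since $\upd_w(0)\in x^{(j)}\cap y$, we have $\{x^{(j)},y\}\in E^S_H$, and $y=x^{(i)}$ for some $i<j$; the induction hypothesis gives a path to $y$, which we extend by $x^{(j)}$.
In case \textbf{\ref{item:trigger3}}, $x^{(j)}=\ts(e,t_1)$ was created inside some outer call $\appmghcol(t_1,\gamma;M,R)$ processing edge $e$; that outer call itself was triggered by some $x^{(i)}$ with $i<j$ (it cannot be the first ever call, which is handled by case \textbf{\ref{item:trigger2}} or \textbf{\ref{item:trigger1}}), and $t_1\in x^{(i)}$; again the local condition $e\cap V'\ne\emptyset$ inside that inner while loop lets me pick $w\in e\cap V'$ whose $\upd_w(t_1)$ lies in some $y=x^{(i')}\in V^{\-{col}}_{\+B}$ with $i'<j$ and also lies in $x^{(j)}$, hence $\{x^{(j)},y\}\in E^S_H$ and I extend the inductive path by $x^{(j)}$.

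\textbf{Main obstacle.} The only delicate point is handling case \textbf{\ref{item:trigger3}} cleanly: I must argue that whenever the while loop inside $\appmghcol$ adds a new edge $e$ to the pool (thereby creating $\ts(e,t_1)$), the certifying vertex $w\in e\cap V'$ already has its ``time-$t_1$'' update covered by some previously registered element of $V^{\-{col}}_{\+B}$. This requires a small auxiliary invariant, proved by a parallel induction on the nested structure of the recursive calls: every $u\in V'$ maintained in any active call $\appmghcol(t,\gamma;\cdot,\cdot)$ or in $\appmghcoledge(S,\gamma)$ satisfies $\upd_u(t_0)\in x$ for some $x\in V^{\-{col}}_{\+B}$, with $t_0=t$ (resp.\ $t_0=0$). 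This invariant is exactly the content of \eqref{eq-induction-app} already established in the proof of \Cref{lemma:color-randomness-bound}, so I can just invoke it. With that invariant in hand, cases \textbf{\ref{item:trigger2}} and \textbf{\ref{item:trigger3}} both reduce to checking that two vertices of $V^S_H$ with a common timestamp are adjacent in $G^S_H$, which is by \Cref{definition:witness-graph-color}.
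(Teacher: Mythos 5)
Your proposal is correct and follows essentially the same argument as the paper: fix all randomness, order the elements of $V^{\-{col}}_{\+B}$ by the time they join, and induct to build a path back to $\ts(S,0)$ using shared timestamps to certify adjacency in $G^S_H$. The only difference is presentational — where the paper re-derives, via case analysis on how $w$ entered $V'$ and on the parent caller, the fact that $\upd_w(t_1)$ is already covered by an earlier element of $V^{\-{col}}_{\+B}$, you instead invoke the invariant \eqref{eq-induction-app} established in the proof of \Cref{lemma:color-randomness-bound}, which is a legitimate shortcut (the covering element does join strictly earlier, by the timing in that invariant's proof).
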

\begin{proof}
Fix arbitrary random choices of all random variables $(r_t)_{t \leq 0}$ and outcomes of samples drawn from the padding distribution in \Cref{Line-color-sample} in $\appmghcol$. 
Then the algorithm is deterministic. We prove that the lemma always holds. Note that $\ts(S,0) \in  V^{\-{col}}_{\+B}$ is trivial and we only need to prove the second property.

Consider the whole execution of $\appmghcoledge(S,\gamma)$. 
During the algorithm, we say $x \in V_H^S$ joins $V^{\-{col}}_{\+B}$ once the condition in~\eqref{eq:color-bad-timestamps} is met by $x$.
%
Note that all vertices join $V^{\-{col}}_{\+B}$ in order during the execution of the algorithm. 
Let $V^{\-{col}}_{\+B} = \{x^{(1)},x^{(2)},\ldots,x^{(\ell)}\}$, where $x^{(i)}$ is the $i$-th vertex that joins the set $V^{\-{col}}_{\+B}$. Note that $x^{(1)}=\ts(S,0)$.
This $\ell$ is finite because $\appmghcoledge( S,\gamma)$ terminates within a finite number of steps.
We show that for any $i \in [\ell]$, there exists a path $P=(y_1,y_2,\ldots,y_m)$ in $G^S_H$ such that 
\begin{itemize}
	\item $y_1=\ts(S,0)$ and $y_m = x^{(i)}$;
	\item for all $1 \leq j \leq m$, $y_j \in V^{\-{col}}_{\+B}$.
\end{itemize}
This result immediately proves the lemma.

We prove the result by induction on index $i$. The base case trivially holds by taking $P=(\ts(S,0))$.

Fix an integer $1 <k \leq \ell$.
Suppose the result holds for all $x^{(i)}$ for $i < k$. We prove the result for $x^{(k)}$. We have the following cases:
\begin{enumerate}
    \item If $x^{(k)}$ joins $V^{\-{col}}_{\+B}$ in $\appmghcoledge(S,\gamma;M,R)$ when some edge $e\in \+{E}$ is processed, then we have $x^{(k)}=\ts(e,0)$.  By the condition in \Cref{Line-color-edge-while}, there must exist some $w\in V' \cap e$. The set $V'$ is initially set as $S$ in \Cref{Line-color-edge-init} and is updated only in \Cref{Line-color-edge-update} after recursive calls. We have two further cases:
    \begin{enumerate}
      \item If $w\in S$, it suffices to take $P=(x^{(1)},x^{(k)})$.
      \item Otherwise $w$ must have been added into $V'$ in \Cref{Line-color-edge-update} after some instance is triggered in \Cref{Line-color-edge-assign}, 
        which implies that there exists $e'\in \+{E}$ and $i < k$ such that $w\in e'$ and $x^{(i)} = \ts(e',0)$. 
        By the induction hypothesis, there is a path $P= (y_1,y_2,\ldots,y_{m'})$ for $x^{(i)}$ with $y_{m'} = x^{(i)}$. Note that $\upd_w(0)\in x^{(k)}=\ts(e,0)$ and $\upd_w(0)\in x^{(i)}=\ts(e',0)$. 
        It suffices to take the path $P'=(P,x^{(k)})$. 
    \end{enumerate}
  \item If $x^{(k)}$ joins $V^{\-{col}}_{\+B}$ in $\appmghcol(t_0,\gamma;M,R)$ when some edge $e\in \+{E}$ is processed, 
    then we have $x^{(k)}=\ts(e,t_0)$. 
    By the condition in \Cref{Line-color-while-1}, there must exist some $w\in V'\cap e$. 
    Here the set $V'$ is initially set as $\set{v_{i(t_0)}}$ in \Cref{Line-color-init-1} and is updated only in \Cref{Line-color-update-1} after some recursive calls are triggered in \Cref{Line-color-assign-1}. 
    We have two further cases:
    \begin{enumerate}
      \item If $w=v_{i(t_0)}$, we consider what parent evoked $\appmghcol(t_0,\gamma;M,R)$.
        \begin{enumerate}
          \item It is evoked by $\appmghcoledge(S,\gamma)$ when the latter processes some $e'\in \+{E}$ or $e'= S$.
            Then $w\in e'$ and there exists $i < k$ such that $x^{(i)} = \ts(e',0)$. 
            By the induction hypothesis, there is a path $P= (y_1,y_2,\ldots,y_{m'})$ for $x^{(i)}$ with $y_{m'} = x^{(i)}$. 
            Because $x^{(i)}$ triggers a recursive call of $\appmghcol{}(t_0,\gamma;M,R)$, $t_0\in x^{(i)}$.
            Also, $t_0=\upd_w(t_0)\in x^{(k)}=\ts(e,t_0)$.
            It suffices to take the path $P'=(P,x^{(k)})$. 
          \item It is evoked by $\appmghcol(t_1,\gamma;M,R)$ for some $t_1<t_0$ when the latter processes some $e'\in \+{E}$.
            Then $w\in e'$ and there exists $i < k$ such that  and $x^{(i)} = \ts(e',t_1)$. 
            By the induction hypothesis, there is a path $P= (y_1,y_2,\ldots,y_{m'})$ for $x^{(i)}$ with $y_{m'} = x^{(i)}$. 
            Note that $t_0\in x^{(i)}$ because $x^{(i)}$ triggers a recursive call of $\appmghcol{}(t_0,\gamma;M,R)$  and $t_0=\upd_w(t_1)\in x^{(k)}=\ts(e,t_0)$.
            It suffices to take the path $P'=(P,x^{(k)})$. 
        \end{enumerate}
      \item Otherwise $w$ must have been added into $V'$ in \Cref{Line-color-update-1} after some recursive calls, 
        which implies that there exists $e'\in \+{E}$ and $i < k$ such that $w\in e'$ and $x^{(i)} = \ts(e',t_0)$. 
        By the induction hypothesis, there is a path $P= (y_1,y_2,\ldots,y_{m'})$ for $x^{(i)}$ with $y_{m'} = x^{(i)}$. 
        Note that $\upd_w(t_0)\in x^{(k)}=\ts(e,t_0)$ and $\upd_w(t_0)\in x^{(i)}=\ts(e',t_0)$.
        It suffices to take $P'=(P,x^{(k)})$. 
    \end{enumerate}
\end{enumerate}
This finishes the induction proof.
\end{proof} 

Next, we show the following property for the set $V^{\-{col}}_{\+B}$.

\begin{lemma}\label{lemma:color-all-1}
For all $x \in V^{\-{col}}_{\+B}\setminus \set{\ts(S,0)}$, there exists $j\in [s]$ and $ t_0\in x$ such that $r_{t'} = \perp$ or $r_{t'}=j$ for all $t' \in x\setminus \set{t_0}$.
\end{lemma}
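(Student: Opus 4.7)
\medskip

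\noindent\textbf{Proof plan.}
The plan is to examine how a vertex $x\in V^{\-{col}}_{\+B}\setminus\{\ts(S,0)\}$ can possibly come to trigger a recursive invocation of $\appmghcol$, and read off the claimed property directly from the branching condition that was satisfied at that moment. Recall the three triggering cases enumerated just before \eqref{eq:color-bad-timestamps}: case (1) is invocation from $\appmghcoledge$ at \Cref{Line-color-edge-S}, which corresponds uniquely to $x=\ts(S,0)$, and is therefore excluded by hypothesis. Hence for any $x\in V^{\-{col}}_{\+B}\setminus\{\ts(S,0)\}$, either
\begin{itemize}
  \item (\emph{case 2}) $x=\ts(e,0)$ for some $e\in\+{E}$, triggered by $\appmghcoledge(S,\gamma)$ at \Cref{Line-color-edge-assign}; or
  \item (\emph{case 3}) $x=\ts(e,t_1)$ for some $e\in\+{E}$ and $t_1<0$, triggered by $\appmghcol(t_1,\gamma;M,R)$ at \Cref{Line-color-assign-1}.
\end{itemize}

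First I would handle case (2). The triggering precisely requires that the $\mathtt{if}$-condition at \Cref{Line-color-edge-satisfy} be true, i.e.\ there exists $j\in[s]$ such that $\basicsample(\upd_u(0);R)\in\{\perp,j\}$ for every $u\in e$. By the definition of $\basicsample$ (\Cref{Alg:randombit}) this means $r_{\upd_u(0)}\in\{\perp,j\}$ for every $u\in e$. Since $x=\ts(e,0)=\{\upd_u(0):u\in e\}$, the property holds for every $t'\in x$, so in particular it holds on $x\setminus\{t_0\}$ for any choice of $t_0\in x$.

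Next I would handle case (3), which is the slightly trickier one because the scan site $v=v_{i(t_1)}$ is excluded from the branching test at \Cref{Line-color-satisfy-1}. The triggering condition guarantees $j\in[s]$ with $r_{\upd_u(t_1)}\in\{\perp,j\}$ for every $u\in e\setminus\{v\}$. Noting that the timestamps $\{\upd_u(t_1):u\in e\}$ are pairwise distinct (each timestamp updates exactly one variable, and the $k$ vertices of $e$ are distinct), we have
\[
x\setminus\{t_1\}=\{\upd_u(t_1):u\in e\setminus\{v\}\},
\]
so setting $t_0:=t_1\in x$ yields $r_{t'}\in\{\perp,j\}$ for every $t'\in x\setminus\{t_0\}$, as desired. (Incidentally, the fact that $\appmghcol(t_1,\gamma;M,R)$ reaches \Cref{Line-color-satisfy-1} at all implies $r_{t_1}=\perp$, via \Cref{Line-color-direct-return-1}, but this stronger fact is not needed for the lemma.)

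\medskip
The main subtlety, and the only place requiring a moment of care, is the bookkeeping step in case (3) that identifies $x\setminus\{t_1\}$ with the set $\{\upd_u(t_1):u\in e\setminus\{v\}\}$; this is what pins down the correct ``missing'' timestamp $t_0=t_1$ and explains why the lemma only claims the property on $x\setminus\{t_0\}$ rather than on all of $x$. Aside from this, the proof is a direct reading of the two branching conditions \Cref{Line-color-edge-satisfy} and \Cref{Line-color-satisfy-1} in the algorithms.
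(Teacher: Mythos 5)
Your proof is correct and follows essentially the same route as the paper's: both arguments exclude the $\ts(S,0)$ case, split on whether $x$ triggered a call from $\appmghcoledge$ (\Cref{Line-color-edge-assign}) or from a parent $\appmghcol$ (\Cref{Line-color-assign-1}), and read the claimed property directly off the branching conditions in \Cref{Line-color-edge-satisfy} and \Cref{Line-color-satisfy-1}. Your extra bookkeeping identifying $x\setminus\{t_1\}$ with $\{\upd_u(t_1):u\in e\setminus\{v\}\}$ is a harmless refinement the paper leaves implicit (and, like the paper, it tacitly assumes $v\in e$; when $v\notin e$ the condition covers all of $x$ and any choice of $t_0$ works).
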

\begin{proof}
Fix  $x \in V^{\-{col}}_{\+B}\setminus \set{\ts(S,0)}$. Then $x$ joins $V^{\-{col}}_{\+B}$ in the following two cases.
\begin{enumerate}
    \item $x = \ts{}(e,0)$ triggers an instance of $\appmghcol(t_0,\gamma;M,R)$ for some $t_0 \leq 0$ when $\appmghcoledge(S,\gamma)$ processes some hyperedge $e \in \+E$ in \Cref{Line-color-edge-assign}.
      Thus, $e$ must satisfy the condition in \Cref{Line-color-edge-satisfy}, which shows that there exists $j\in [s]$ such that $r_{t'} = \perp$ or $r_{t'}=j$ for all $t' \in x$.
    \item $x = \ts{}(e,t_0)$ triggers an instance of $\appmghcol(t_1,\gamma;M,R)$ for some $t_1 < t_0$ when $\appmghcol(t_0,\gamma;M,R)$ processes some hyperedge $e \in \+E$ in \Cref{Line-color-assign-1}.
      Thus, $e$ must satisfy the condition in \Cref{Line-color-satisfy-1}, which shows that there exists some $j\in [s]$ such that $r_{t'} = \perp$ or $r_{t'}=j$ for all $t' \in x\setminus \set{t_0}$.
\end{enumerate}
This finishes the proof.
\end{proof}

Recall the definition of $2$-tree in \Cref{definition:2-tree}.
We are now ready to prove \Cref{lemma:color-efficiency}.
\begin{proof}[Proof of \Cref{lemma:color-efficiency}]
By \Cref{lemma:color-randomness-bound}, it suffices to show $\Pr{\abs{V^{\-{col}}_{\+B}}\geq 2\Delta k^2\cdot \eta}\leq 2^{-\eta}$ for each $\eta\geq 1$ as
\begin{align*}
	\Pr{|R| \geq 4\Delta^2 k^5 \cdot \eta} \leq \Pr{|V^{\-{col}}_{\+B}|2 \Delta k^3 \geq 4\Delta^2 k^5 \cdot \eta} = \Pr{|V^{\-{col}}_{\+B}| \geq 2\Delta k^2 \cdot \eta} \leq \tp{\frac{1}{2}}^{\eta}.
\end{align*}

Fix $\eta \geq 1$. 
Assume $\abs{V^{\-{col}}_{\+B}}\geq 2\Delta k^2\cdot \eta$.
Note that $V^{\-{col}}_{\+B}$ is finite because $\appmghcol$ terminates within a finite number of steps.
By \Cref{lemma:modified-wg-degree-bound}, \Cref{lemma:color-connected}, and \Cref{lemma:big-2-tree} there exists a $2$-tree $T\subseteq V^{\-{col}}_{\+B}$ of size $i$ such that $\ts(S,0)\in T$. 
For each $\eta\geq 1$, denote by $\+{T}^{\eta}_S$ the set of $2$-trees $T$ in $G^S_{H}$ of size $\eta$ such that $\ts(S,0)\in T$.
Then by 
a union bound, we have
\begin{equation*}
 \Pr{\abs{V^{\-{col}}_{\+B}}\geq 2 \Delta k^2\cdot \eta} \leq  \sum\limits_{T\in \+{T}^{\eta}_S}\Pr{T\subseteq V^{\-{col}}_{\+B}}.
\end{equation*}

By \Cref{lemma:color-all-1}, the event $T\subseteq V^{\-{col}}_{\+B}$ implies for each $x\in T\setminus \set{\ts(S,0)}$, there exists $t_0\in x$ and $j\in [s]$ such that $r_{t'} = \perp$ or $r_{t'}=j$ for all $t' \in x \setminus \{t_0\}$. 
Due to $\lfloor q / s \rfloor^k\geq 4\mathrm{e}qs \Delta k$ and \Cref{cor:local-uniformity}, for any $x\in T\setminus \set{\ts(S,0)}$, this happens with probability at most 
\[
sk\left(\frac{\left\lceil\frac{q}{s}\right\rceil}{q}\left(1+\frac{1}{s}\right)+\frac{1}{4s}\right) ^{k-1}\leq sk\left(\frac{5}{4s}\left(1+\frac{1}{s}\right)+\frac{1}{4s}\right)^{k-1} \leq 2k\left(\frac{2}{s}\right)^{k-2},
\]
where we obtain the first term by a union bound  over all possible $t_0\in x$ and $j\in [s]$ and noting $q\geq 4s$, and the second inequality is by $s\geq 6\Delta^{\frac{2}{k-2}}\geq 6$.

Since for any $2$-tree $T\subseteq V^S_H$, the timestamps in vertices of $T$ are pairwise disjoint,
and the event above are all mutually independent.
Thus, we have $\Pr{T\subseteq V^{\-{col}}_{\+B}}\leq (2k)^{(\abs{T}-1)}(2/s)^{(k-2)\cdot (\abs{T}-1)}$ and by \Cref{lemma:modified-wg-degree-bound} and \Cref{lemma:2-tree-number-bound}, it holds that 
\begin{align*}
     \sum\limits_{T\in \+{T}^{\eta}_S}\Pr{T\subseteq V^{\-{col}}_{\+B}} &\leq  \frac{(4\mathrm{e}\Delta^2k^4)^{\eta-1}}{2}\cdot (2k)^{\eta-1}\cdot \tp{\frac{2}{s}}^{(k-2)(i-1)}\\
       &\leq \frac{1}{2} \tp{\frac{8\mathrm{e}\Delta^2 k^5}{(s/2)^{k-2}} }^{\eta-1} \\
       &\leq   2^{-\eta},
\end{align*}
where the last inequality holds because $s \geq 2\left(8\mathrm{e}\Delta^2 k^5\right)^{\frac{1}{k-2}}$ from $k\geq 20$ and $s\geq 6\Delta^{\frac{2}{k-2}}$.
\end{proof}

\subsection{Improved bounds for linear hypergraphs}\label{sec-linear-color}
We now give a marginal sampler for linear hypergraphs, and prove \Cref{lem-sampling-color-simple}. 
Let $\delta > 0$ be a constant $k \geq \frac{50(1+\delta)^2}{\delta^2}$, and $q\geq 50\Delta^{\frac{2+\delta}{k-3}}$.
Given as inputs a linear $k$-uniform hypergraph $H=(V,\+E)$ with the maximum degree $\Delta$, a set of vertices $S\subseteq V$ that $\abs{S}\leq k$, and a parameter $\gamma>0$, 
The algorithm is almost the same, except that we replace the truncation condition in \Cref{Line-color-edge-force-return-1} of \Cref{Alg:marginal-edge-color-sampler} and \Cref{Line-color-force-return-1} of \Cref{Alg:marginal-color-sampler-app} with
\begin{align}\label{eq-color-simple}
|R| \geq 3\cdot 10^4 \tp{\frac{1+\delta}{\delta}}^2 \Delta^3 k^{10}  \left \lceil \log{\frac{1}{\gamma}}\right \rceil 
\end{align}

%

Like what we have done for linear hypergraph independent sets in \Cref{sec-linear-ind}, much of the analysis of the general hypergraph colourings can be applied to the linear case. 
We also reuse some other proved results from \Cref{sec-linear-ind}. 

The running time of the modified algorithm is bounded the same way as \Cref{lem-alg-edge-col-time}, whose proof we shall omit again.
\begin{lemma}\label{lem-time-linear-color}
The running time of the modified algorithm is 
\begin{align*}
	O\left(\left(\frac{1+\delta}{\delta}\right)^4\Delta^8k^{21}\log^2{\frac{1}{\gamma}} q^{6\cdot 10^4(\frac{1+\delta}{\delta})^2\Delta^4k^{11}\log{\frac{1}{\gamma}}}\right).
\end{align*}	
\end{lemma}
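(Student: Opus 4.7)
The plan is to mirror the proof of \Cref{lem-alg-edge-col-time} almost verbatim, with the only substantive change being the substitution of the enlarged truncation threshold
\[
K \;:=\; 3\cdot 10^4\tp{\tfrac{1+\delta}{\delta}}^2\Delta^3 k^{10}\left\lceil\log\tfrac{1}{\gamma}\right\rceil
\]
from~\eqref{eq-color-simple} in place of $4\Delta^2 k^5\lceil\log\tfrac{1}{\gamma}\rceil$. Since neither the algorithmic structure nor any local per-call analysis depends on the particular value of the threshold, the same three-step accounting from the general case will carry through; the work is in tracking how the larger $K$ propagates through the various bounds.

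First I would restate the analogue of \Cref{lem-color-lambda-bound} with the new threshold. The identical argument—each hyperedge contributes at most $k$ vertices to $\Lambda$ per element of $V'$, and the truncation condition enforces $|V'|\le |R|\le K$—yields $|\Lambda|\le k\Delta\,|V'|\le k\Delta K=O\tp{(\tfrac{1+\delta}{\delta})^2\Delta^4 k^{11}\log\tfrac{1}{\gamma}}$, replacing the $4\Delta^3 k^6\log\tfrac{1}{\gamma}$ bound used in the general case.

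Next I would bound the set $A$ of distinct timestamps on which $\appmghcol$ is ever executed by $|A|\le |R_F|+1\le K+1=O\tp{(\tfrac{1+\delta}{\delta})^2\Delta^3 k^{10}\log\tfrac{1}{\gamma}}$, using the same dichotomy that each first call either triggers the exception or contributes a new pair to $R$, while subsequent calls on the same timestamp return in $O(1)$ time by the memoisation in \Cref{Line-color-memoization-1}. The per-call cost $T^*$ (assuming recursive calls are free) still splits into the while loop, whose cost is at most $O(\Delta|\Lambda|)=O\tp{(\tfrac{1+\delta}{\delta})^2\Delta^5 k^{11}\log\tfrac{1}{\gamma}}$, and the enumeration step in \Cref{Line-color-enum} (or \Cref{Line-color-edge-enum} in \Cref{Alg:marginal-edge-color-sampler}), which iterates over $q^{|\Lambda|+1}$ configurations each checked in $O(|\Lambda|\Delta)$ time; this is the dominant factor. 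Multiplying $|A|\cdot T^*$ and absorbing ceilings into a factor of $2$ in the exponent of $q$ (using $\lceil\log\tfrac{1}{\gamma}\rceil\le 2\log\tfrac{1}{\gamma}$ for $\gamma<\tfrac{1}{2}$) gives the claimed
\[
O\tp{\tp{\tfrac{1+\delta}{\delta}}^4\Delta^8 k^{21}\log^2\tfrac{1}{\gamma}\cdot q^{6\cdot 10^4(\frac{1+\delta}{\delta})^2\Delta^4 k^{11}\log\frac{1}{\gamma}}}.
\]

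There is no genuine obstacle; the only bookkeeping subtlety is making sure the two independent occurrences of $(\tfrac{1+\delta}{\delta})^2$—one from $|A|$ and one from $|\Lambda|$—combine to give the outer $(\tfrac{1+\delta}{\delta})^4$ prefactor, while only a single $(\tfrac{1+\delta}{\delta})^2$ appears in the exponent of $q$ (since only $|\Lambda|$ enters there), and that the constant $3\cdot 10^4$ from $K$ becomes $6\cdot 10^4$ in the exponent of $q$ after the $\lceil\cdot\rceil$ slack.
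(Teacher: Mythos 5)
Your proposal is correct and follows exactly the route the paper intends: the paper itself omits this proof, stating only that it "is bounded the same way as" \Cref{lem-alg-edge-col-time}, and your accounting — $|V'|\le K$, $|\Lambda|\le k\Delta K$, $|A|\le K+1$, per-call cost dominated by the $q^{|\Lambda|+1}$ enumeration, total $|A|\cdot T^*$ — reproduces that argument with the enlarged threshold from \eqref{eq-color-simple} and lands on the stated bound with the correct splitting of the $(\frac{1+\delta}{\delta})$, $\Delta$, and $k$ powers between the prefactor and the exponent of $q$.
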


Next, we bound the truncation error.

\begin{lemma}\label{lemma:color-linear-efficiency}
Denote $\eta=\lceil\log\frac{1}{\gamma}\rceil$. 
If the projection scheme $[q]\to[s]$ satisfies (a) $q\geq 4s$, (b) $s \geq 6\Delta^{\frac{1+1/(1+2/\delta)}{k}}$, and (c) $\lfloor q / s \rfloor^k\geq 4\mathrm{e}qs \Delta k$, 
then upon the termination of the modified algorithm, the size of $R$ satisfies
\begin{align}
    \Pr{\abs{R}\geq  3\cdot 10^4 \tp{\frac{1+\delta}{\delta}}^2 \Delta^3 k^{10} \cdot \eta}\leq 2^{-\eta}.
\end{align}
\end{lemma}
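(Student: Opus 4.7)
The proof should follow the same strategy as the general case (\Cref{lemma:color-efficiency}) while incorporating the linearity refinement used for hypergraph independent sets in \Cref{sec-linear-ind}. First, by \Cref{lemma:color-randomness-bound} we have $|R|\leq 2\Delta k^3 |V^{\-{col}}_{\+B}|$, so the task reduces to proving $\Pr{|V^{\-{col}}_{\+B}|\geq 1.5\cdot 10^4 (\frac{1+\delta}{\delta})^2 \Delta^2 k^7 \eta}\leq 2^{-\eta}$ for the set $V^{\-{col}}_{\+B}$ associated with the modified truncation threshold. The parameter $\theta \defeq \lceil 4(1+\delta)/\delta\rceil$ will play the same balancing role as in the proof of \Cref{lemma:indset-linear-efficiency}.

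Next I would define the self-neighbourhood powered version $(G^S_H)^{\-{self}}$ of the modified witness graph $G^S_H$ from \Cref{definition:witness-graph-color}, in direct analogy with \Cref{definition:self-neighbourhood-witness-graph-indset}. Its maximum degree is at most $10k^3\Delta$ by the same counting as \Cref{lem-max-degree-G-self}, using \Cref{lemma:modified-wg-degree-bound} in place of \Cref{lemma:wg-degree-bound}. Using the connectivity of $V^{\-{col}}_{\+B}$ in $G^S_H$ from \Cref{lemma:color-connected} and an analogue of \Cref{lemma-vb-linear} (whose proof only uses connectivity and the degree bound on $N_{\-{self}}$, both of which still hold when the ``anchor'' vertex is $\ts(S,0)$), I would extract a subset $V^{\-{lin}}\subseteq V^{\-{col}}_{\+B}$ of size at least $\lfloor |V^{\-{col}}_{\+B}|/(2k+1)\rfloor$, containing $\ts(S,0)$, connected in $(G^S_H)^{\-{self}}$, and satisfying $|x_1\cap x_2|\leq 1$ for any distinct $x_1,x_2\in V^{\-{lin}}$ (the linearity property, which uses that $H$ is linear to control $N_{\-{out}}$).

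Then I would invoke \Cref{lemma:existence-2-block-tree} to obtain, inside $V^{\-{lin}}$, a $2$-block-tree $\{C_1,\dots,C_\eta\}$ in $(G^S_H)^{\-{self}}$ of block size $\theta$ and tree size $\eta$, anchored so that $\ts(S,0)\in C_1$. The key bound for each block is: by linearity, $\bigcup_{x\in C_j}x$ contains at least $\theta k - \binom{\theta}{2}\geq \theta(k-\theta)$ distinct timestamps, hence at least $\theta(k-1) - \binom{\theta}{2}$ distinct timestamps remain even after removing one ``exception'' timestamp $t_0^x$ per vertex permitted by \Cref{lemma:color-all-1}. After a union bound over the $(sk)^\theta$ choices of $(j_x,t_0^x)_{x\in C_j}$ and applying \Cref{cor:local-uniformity} to bound $\Pr{r_t\in\{\perp,j\}}\leq 2/s$ for each distinct such timestamp (which only gets smaller when a timestamp is shared), the probability of a single block being bad is at most $(sk)^\theta(2/s)^{\theta(k-1)-\binom{\theta}{2}}$. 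Since different blocks are at distance $\geq 2$, their timestamps are disjoint and the events are independent; combined with the count $(\theta e^\theta (10k^3\Delta)^{\theta+1})^\eta$ from \Cref{lemma:2-block-tree-number-bound}, a union bound gives
\[
\Pr{|V^{\-{col}}_{\+B}|\geq C\theta^2\Delta^2 k^7 \eta}\leq \Big((sk)^\theta(2/s)^{\theta(k-1)-\binom{\theta}{2}}\,\theta e^\theta(10k^3\Delta)^{\theta+1}\Big)^\eta.
\]

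The main obstacle will be verifying that the right-hand side is at most $2^{-\eta}$ under the conditions (a)--(c) on the projection scheme. This reduces to an inequality of the form $s^{\theta(2k-3-\theta)/2}\geq 2^{\theta(k-1)-\binom{\theta}{2}+1}\cdot(sk)^\theta\cdot \theta e^\theta (10k^3\Delta)^{\theta+1}/s^\theta$, i.e.\ essentially $s^{k-3}\gtrsim \Delta^{(\theta+1)/\theta}\cdot k^{O(1)}$, which is consistent with the standing hypotheses $s\geq 6\Delta^{(1+1/(1+2/\delta))/k}$ and $q\geq 50\Delta^{(2+\delta)/(k-3)}$ via an argument paralleling the estimates~\eqref{equ:linear-inequality-1}--\eqref{equ:linear-inequality-2} at the end of the proof of \Cref{lemma:indset-linear-efficiency}. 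The combinatorial part closely mirrors \Cref{lemma:indset-linear-efficiency}; the genuinely new work is tracking the $(2/s)$-per-timestamp bound carefully in the presence of the one ``exception'' timestamp per vertex introduced by \Cref{lemma:color-all-1}, and choosing $\theta$ so that $\theta(k-\theta)$ is large enough to absorb the $(sk)^\theta$ union-bound loss.
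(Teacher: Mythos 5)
Your proposal reproduces the correct overall architecture of the paper's proof (reduction via \Cref{lemma:color-randomness-bound}, the self-neighbourhood powered modified witness graph, extraction of a linear connected subset, $2$-block-trees, local uniformity, union bound), and your idea of bounding a block's probability by counting distinct timestamps in $\bigcup_{x\in C_j}x$ rather than via the paper's chain rule is fine in principle. However, there is a genuine gap in your treatment of the artificial vertex $\ts(S,0)$, and it is not a cosmetic one. The linearity guarantee $|x_1\cap x_2|\leq 1$ in the analogue of \Cref{lemma-vb-linear} can only be established for pairs in $V_{\+B}^{\-{lin}}\setminus\{\ts(S,0)\}$, because $S$ is an arbitrary subset of at most $k$ vertices and there is no control on how it meets the hyperedges of $H$ (this is exactly the caveat \ref{enum:vbcol-linear} in \Cref{lemma-vb-linear-1}). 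Moreover, \Cref{lemma:color-all-1} explicitly excludes $\ts(S,0)$: nothing forces the random values at its timestamps to lie in $\{\perp,j\}$, so those timestamps contribute probability $1$, not $2/s$. Your per-block count of "$\theta(k-1)-\binom{\theta}{2}$ useful distinct timestamps" silently assumes all $\theta$ members of a block are useful and pairwise almost-disjoint; for the block containing $\ts(S,0)$ the honest count is only about $(\theta-1)(k-\theta)$.

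This matters quantitatively. Since the union-bound cost $\theta\mathrm{e}^\theta(10k^3\Delta)^{\theta+1}$ is paid per block and the bound must hold already at $\eta=1$ (where the only block is the deficient one containing $\ts(S,0)$), the final inequality you must verify is of the form $(s/2)^{k-\theta-1}\gtrsim \Delta^{(\theta+1)/(\theta-1)}$, not $\Delta^{(\theta+1)/\theta}$ as you wrote. With your choice $\theta=\lceil 4(1+\delta)/\delta\rceil$ the exponent comparison $(1+\tfrac{1}{1+2/\delta})(1-\tfrac{\theta+1}{k})\geq 1+\tfrac{2}{\theta-1}$ fails: the surplus on the left is only $1+\Theta(\delta^2)$ before the $(1-\tfrac{\theta+1}{k})$ correction, while that correction (with $k$ as small as $50(1+\delta)^2/\delta^2$) costs a factor $1-\Theta(\delta)$, which dominates for small $\delta$. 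This is precisely why the paper takes $\theta=\lceil 6(1+\delta)/\delta\rceil$ here rather than the $\lceil 4(1+\delta)/\delta\rceil$ used for independent sets: the loss of one effective block member forces a larger $\theta$. To repair your argument you need to (i) restrict the linearity and the $\{\perp,j\}$ guarantee to block members other than $\ts(S,0)$, (ii) redo the per-block probability with only $\theta-1$ contributing members, and (iii) increase $\theta$ accordingly so that the final inequality closes under hypotheses (b) and $k\geq 50(1+\delta)^2/\delta^2$.
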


Assuming this for now, we prove \Cref{lem-sampling-color-simple}. 
\begin{proof}[Proof of \Cref{lem-sampling-color-simple}]
Choose $s=\lfloor q^{\frac{1}{2}}\rfloor$, and then verify the three conditions of \Cref{lemma:color-linear-efficiency}. 

\begin{itemize}
  \item[(a)] Because $k\geq 50(1+\delta)^2/\delta^2\geq 50>3$, it holds that $q\geq 50\Delta^{(2+\delta)/(k-3)}\geq50>16$, and hence $q>4\sqrt{q}\geq 4s$. 
  \item[(b)] This is derived as follows.  
  \begin{gather*}
    s\geq\sqrt{q}-1>7\Delta^{\frac{1+\delta/2}{(k-3)}}-1\geq 7\Delta^{\frac{1+1/(1+2/\delta)}{k-3}}-1\geq 7\Delta^{\frac{1+1/(1+2/\delta)}{k}}-1\geq 6\Delta^{\frac{1+1/(1+2/\delta)}{k}}.
  \end{gather*}
  \item[(c)] For all $k\geq 50$, we have
  \begin{gather*}
  k^{\frac{1+1/(1+2/\delta)}{k-3}}\leq k^{\frac{2}{k-3}}<1.2<2. 
  \end{gather*}
  This gives
  \begin{gather*}
    \left\lfloor\frac{q}{s}\right\rfloor\geq 3\times 2\times\Delta^{\frac{1+1/(1+2/\delta)}{k-3}}> 3(k\Delta)^{\frac{1+1/(1+2/\delta)}{k-3}}, 
  \end{gather*}
  and hence
  \begin{gather*}
    \left\lfloor\frac{q}{s}\right\rfloor^{k}\geq 3^{k-3}(k \Delta) \left\lfloor \frac{q}{s} \right\rfloor^3 \geq 3^{40} \cdot 4\mathrm{e}k\Delta\cdot \left\lfloor\frac{q}{s}\right\rfloor^3\geq 4\mathrm{e}qs \Delta k. 
  \end{gather*}
\end{itemize}

We use the modified $\appmghcoledge(S,\gamma)$ to sample from the distribution $\mu_S$. By \Cref{lem-time-linear-color}, the running time is $O((\frac{1+\delta}{\delta})^4\Delta^8k^{21}\log^2{\frac{1}{\gamma}} q^{6\cdot 10^4(\frac{1+\delta}{\delta})^2\Delta^4k^{11}\log{\frac{1}{\gamma}}})$. 
It is straightforward to verify that \Cref{Alg:marginal-edge-color-sampler} uses at most $6\cdot 10^4 \tp{\frac{1+\delta}{\delta}}^2 \Delta^3 k^{10}  \left \lceil \log{\frac{1}{\gamma}}\right \rceil +1$ random variables with domain size at most $q+1$.
By combining \Cref{lemma:edge-col-correct} and \Cref{lemma:color-efficiency}, the output distribution is $\gamma$-close to $\mu_S$. 
This proves the lemma.
\end{proof}

The rest of this section is dedicated to the proof of \Cref{lemma:color-linear-efficiency}. 
Recall from \Cref{sec-linear-ind} that we introduce the self-neighbourhood powered witness graphs to make use of linearity under the hypergraph independent set setting. 
We provide an analogous definition regarding the modified witness graph (\Cref{definition:witness-graph-color}) whilst dealing with hypergraph colourings. 
%

\begin{definition}\label{definition:self-neighbourhood-witness-graph-color}
Let $G^S_H=(V^S_H,E^S_H)$ be the modified witness graph in \Cref{definition:witness-graph-color}.
The self-neighbourhood powered witness graph $G_H^{S,\-{self}} = (V^S_H, E_H^{S,\-{self}})$ is  defined on the same vertex set and the edge set $E_H^{S,\-{self}} = E^S_H \cup E'$ such that
\begin{align*}
E' = \{\{x,y\} \mid   (\exists w \in V^S_H \text{ s.t. } w \in N_{\-{self}}(x) \land w \in N(y)) \lor (\exists w \in V^S_H \text{ s.t. } w \in N(y) \land w \in N_{\-{self}}(x)) \}.	
\end{align*}
\end{definition}  

Observe that $G_H^{S,\-{self}}$ modifies $G_H^{\-{self}}$ by only adding the new vertex $\ts(S,0)$ and some additional edges that connects $\ts(S,0)$ with several other vertices, if $S\notin\+E$. 

The next lemma is an analogue of \Cref{lem-max-degree-G-self}.
\begin{lemma}\label{lem-max-degree-G-self-mod}
If $\abs{S}\leq k$, then the maximum degree of $G_H^{S,\-{self}}$ is at most $10k^3\Delta - 1$.
\end{lemma}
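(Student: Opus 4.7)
The plan is to reuse the proof of \Cref{lem-max-degree-G-self} essentially verbatim on $G_H^{S,\-{self}}$ once I check that the two structural ingredients driving that proof survive the addition of the auxiliary vertex $\ts(S,0)$. The earlier argument relied on (a) a uniform bound on the degree in the underlying witness graph, supplied by \Cref{lemma:wg-degree-bound}, and (b) the bound $|N_{\-{self}}(x)| \leq 2k-2$ for every vertex $x$. The direct analogue of (a) for the modified witness graph is already established by \Cref{lemma:modified-wg-degree-bound}, giving $|N(x)| \leq 2\Delta k^2 - 1$ for all $x \in V_H^S$.

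The only delicate point is extending (b) to the new vertex $\ts(S,0)$, which I will handle by a case split. If $S \in \+E$, then by \Cref{definition:witness-graph-color} we have $G_H^S = G_H$, hence $G_H^{S,\-{self}} = G_H^{\-{self}}$, and the claim follows directly from \Cref{lem-max-degree-G-self}. If $S \notin \+E$, then since $e_H(\ts(S,0)) = S$ is shared by no other vertex of $V_H^S$, one has $N_{\-{self}}(\ts(S,0)) = \emptyset$; combined with \Cref{lemma:wg-degree-bound} applied to the remaining $x \in V_H \subseteq V_H^S$ (whose self-neighbour structure is unaffected by the inclusion of $\ts(S,0)$, because $e_H(\ts(S,0)) = S$ does not coincide with any $e_H(x) \in \+E$), this yields $|N_{\-{self}}(x)| \leq 2k-2$ uniformly on $V_H^S$.

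With these two ingredients in hand, I will carry out the same bookkeeping as in \Cref{lem-max-degree-G-self} at each $x \in V_H^S$: direct neighbours in $G_H^S$ contribute at most $2\Delta k^2 - 1$, and every edge of $E'$ incident to $x$ is witnessed by either a self-neighbour of $x$ together with one of its $G_H^S$-neighbours, or a $G_H^S$-neighbour of $x$ together with one of its self-neighbours. Each of these two types contributes at most $(2k-2)(2\Delta k^2 - 1)$ edges. Summing gives an overall bound of $(2\Delta k^2 - 1)(1 + 2(2k-2)) = (2\Delta k^2 - 1)(4k-3) \leq 10k^3\Delta - 1$ for $k \geq 2$, which is exactly the required inequality.

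The main ``obstacle'' is almost cosmetic: it is the case analysis for $N_{\-{self}}(\ts(S,0))$, which is precisely where the linearity-of-the-argument is threatened by the fact that $\ts(S,0)$ is the sole vertex whose underlying set $S$ may fail to lie in $\+E$. Once that case is dispatched, no further work beyond the counting of \Cref{lem-max-degree-G-self} is needed.
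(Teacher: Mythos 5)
Your proof is correct and follows essentially the same route as the paper's: bound the degree in $G_H^{S,\-{self}}$ by the $G_H^S$-degree (at most $2\Delta k^2-1$ by \Cref{lemma:modified-wg-degree-bound}) plus the $E'$-edges, each witnessed through a self-neighbour, giving at most $2\cdot 2k\cdot(2\Delta k^2-1)$ more. Your explicit case split verifying $N_{\-{self}}(\ts(S,0))=\emptyset$ when $S\notin\+E$ is a small extra care step the paper leaves implicit, but it does not change the argument.
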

\begin{proof}
For any $x \in V_H^S$, by \Cref{lemma:wg-degree-bound}, $|N_{\-{self}}(x)| \leq 2k$ and by \Cref{lemma:modified-wg-degree-bound}, $N(x) \leq 2\Delta k^2 - 1$. Hence, the maximum degree of $G_H^{S,\-{self}}$ is at most 
$
2\Delta k^2 - 1 + 2 \times 2k \times (2\Delta k^2 - 1) \leq  10k^3\Delta - 1.
$
\end{proof}

We also have the following lemma as a counterpart of \Cref{lemma-vb-linear}. 
\begin{lemma}\label{lemma-vb-linear-1}
Given a $k$-uniform linear hypergraph $H = (V,\+E)$ with the maximum degree $\Delta$, 
and a subset of vertices $S\subseteq V$, 
let $G^S_H=(V^S_H, E^S_H)$ be the modified witness graph with respect to $S$. 
Let $V^{\-{col}}_{\+B} \subseteq V^S_H$ be a finite subset containing $\ts(S,0) \in V^{\-{col}}_{\+B}$ and connected in $G_H^S$.
Then, there exists $V_{\+B}^{\-{lin}} \subseteq V^{\-{col}}_{\+B} $ such that 
\begin{enumerate}[label = (C\arabic*)]
	\item $\ts(S,0) \in V_{\+B}^{\-{lin}} $ and $|V_{\+B}^{\-{lin}}| \geq \lfloor\frac{|V_{\+B}|}{2k+1}\rfloor$, \label{enum:vbcol-size}
	\item the induced subgraph $G_H^{S,\-{self}}[V_{\+B}^{\-{lin}}]$ is connected, and \label{enum:vbcol-connected}
	\item for any two distinct vertices $x_1,x_2 \in V_{\+B}^{\-{lin}}\setminus \set{\ts(S,0)}$, it holds that $|x_1 \cap x_2| \leq 1$. \label{enum:vbcol-linear}
\end{enumerate}
\end{lemma}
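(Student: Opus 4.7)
The plan is to adapt the greedy BFS construction used in the proof of \Cref{lemma-vb-linear} to the colouring setting, with the only essential modification being that the procedure is seeded with the special vertex $\ts(S,0)$ rather than with an arbitrary root. Property \ref{enum:vbcol-linear} excludes $\ts(S,0)$ from the linearity requirement, which is crucial since $\ts(S,0)$ can have intersections larger than $1$ with its neighbours whenever $S\notin\+E$.

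Concretely, let $G\defeq G_H^S[V^{\-{col}}_{\+B}]$, which is connected by hypothesis. I will initialise $V_{\+B}^{\-{lin}}\gets \emptyset$, $\Lambda\gets \emptyset$ and run the following greedy loop until $V^{\-{col}}_{\+B}\setminus\Lambda=\emptyset$: in the first iteration pick $y\gets \ts(S,0)$; in each subsequent iteration pick an arbitrary $y\in \Gamma_G(\Lambda)\setminus \Lambda$ (which exists since $G$ is connected and $\Lambda\neq V^{\-{col}}_{\+B}$); then add $y$ to $V_{\+B}^{\-{lin}}$ and add $\{y\}\cup(V^{\-{col}}_{\+B}\cap N_{\-{self}}(y))$ to $\Lambda$. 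The loop terminates in finitely many steps since $|\Lambda|$ strictly increases while being bounded by the finite set $V^{\-{col}}_{\+B}$.

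For \ref{enum:vbcol-size}, by \Cref{lemma:wg-degree-bound} applied to vertices in $V_H$ and by the argument inside the proof of \Cref{lemma:modified-wg-degree-bound} applied to $\ts(S,0)$, we have $|N_{\-{self}}(y)|\leq 2k$ for every candidate $y$; thus at most $2k+1$ vertices are removed from $V^{\-{col}}_{\+B}\setminus \Lambda$ per iteration, giving $|V_{\+B}^{\-{lin}}|\geq \lfloor |V^{\-{col}}_{\+B}|/(2k+1)\rfloor$, and $\ts(S,0)\in V_{\+B}^{\-{lin}}$ by construction. For \ref{enum:vbcol-connected}, I will induct on the iteration count: after the first iteration $V_{\+B}^{\-{lin}}=\{\ts(S,0)\}$ is trivially connected; in subsequent iterations the newly added $y$ is $G$-adjacent to some $w\in\Lambda$, and $w$ was placed into $\Lambda$ either as an element of $V_{\+B}^{\-{lin}}$ (so $y$ is $G_H^S$-adjacent, hence $G_H^{S,\-{self}}$-adjacent, to $w\in V_{\+B}^{\-{lin}}$), or as a self-neighbour of some $w'\in V_{\+B}^{\-{lin}}$ (so $y$ is $G_H^{S,\-{self}}$-adjacent to $w'$ by \Cref{definition:self-neighbourhood-witness-graph-color}).

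For \ref{enum:vbcol-linear}, fix distinct $x_1,x_2\in V_{\+B}^{\-{lin}}\setminus\{\ts(S,0)\}$ with $x_1$ chosen before $x_2$. Since $V^{\-{col}}_{\+B}\cap N_{\-{self}}(x_1)$ was placed into $\Lambda$ right after $x_1$, we have $x_2\notin N_{\-{self}}(x_1)$. Because $x_1\neq \ts(S,0)$, we have $e_H(x_1)\in \+E$, and likewise $e_H(x_2)\in\+E$. Therefore either $x_1$ and $x_2$ are non-adjacent in $G_H^S$ (so $x_1\cap x_2=\emptyset$), or $x_2\in N_{\-{out}}(x_1)$, in which case $e_H(x_1)$ and $e_H(x_2)$ are distinct hyperedges of the linear hypergraph $H$; linearity then forces $|e_H(x_1)\cap e_H(x_2)|\leq 1$, hence $|x_1\cap x_2|\leq 1$. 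The main subtlety, and the only genuine departure from the independent-set analysis, is making sure the special vertex $\ts(S,0)$ is the chosen seed so that it lies in $V_{\+B}^{\-{lin}}$ and yet is exempt from \ref{enum:vbcol-linear}; everything else is a direct transcription of the procedure in \Cref{lemma-vb-linear}.
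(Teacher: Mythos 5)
Your proposal is correct and takes essentially the same route as the paper: the paper's own proof simply reruns the greedy construction of \Cref{lemma-vb-linear} seeded at $\ts(S,0)$, notes that every vertex of $G_H^S$ has at most $2k$ self-neighbours, and excludes the artifact $\ts(S,0)$ from the linearity claim exactly as you do. You have merely written out the details that the paper leaves implicit, and they all check out (in particular, your case analysis for \ref{enum:vbcol-linear} correctly reduces to $|e_H(x_1)\cap e_H(x_2)|\le 1$ via linearity of $H$).
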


\begin{proof}
The proof is almost the same as the proof of \Cref{lemma-vb-linear}:
we apply the explicit construction algorithm there starting from the vertex $\ts(S,0)$. 
To see \ref{enum:vbcol-size}, note that the number of self-neighbours is at most $2k$ for any vertex in $G_H^S$. 
\ref{enum:vbcol-connected} follows from the construction and the same argument as \ref{enum:vblin-connected} in \Cref{lemma-vb-linear}.
\ref{enum:vbcol-linear} also follows the same argument as \ref{enum:vblin-linear} in \Cref{lemma-vb-linear}. 
We remark that the artifact $\ts(S,0)$ needs to be excluded as there is no control on how the set $S$ intersects with the hyperedges in $H$. 
\end{proof}


Now we prove \Cref{lemma:color-linear-efficiency}. 
Recall from the last section the toolkit of $2$-block-trees, including \Cref{definition:2-block-tree}, \Cref{lemma:existence-2-block-tree}, and \Cref{lemma:2-block-tree-number-bound}. 

\begin{proof}[Proof of \Cref{lemma:color-linear-efficiency}]
Let $ V^{\-{col}}_{\+{B}}$ be the set generated by the modified algorithm (see \eqref{eq-color-simple}) as defined in \eqref{eq:color-bad-timestamps}. 
Again, $ V^{\-{col}}_{\+{B}} \subseteq V^S_H$ is a finite subset because the algorithm terminates after a finite number of steps. 
Define a parameter
\begin{align*}
	\theta \defeq \left \lceil\frac{6(1+\delta)}{\delta} \right \rceil, \end{align*}
and by this choice, $\Pr{|R| \geq 3\cdot 10^4 (\frac{1+\delta}{\delta})^2 k^{10} \Delta^3 \eta} \leq \Pr{|R| \geq 600\theta^2k^{10} \Delta^3 \eta}$.
Using  \Cref{lemma:color-randomness-bound}, it holds for any positive integer $\eta$ that 
\begin{align*}
	\Pr{|R| \geq 600\theta^2k^{10} \Delta^3 \eta} \leq \Pr{ |V^{\-{col}}_{\+{B}}|2k^3\Delta \geq 600\theta^2k^{10} \Delta^3 \eta } \leq \Pr{|V^{\-{col}}_{\+{B}}| \geq 300\theta^2k^7 \Delta^2 \eta  }.
\end{align*}
Hence, it suffices to show
\begin{align*}
	\Pr{|V^{\-{col}}_{\+{B}}| \geq 300\theta^2k^7 \Delta^2 \eta  } \leq \tp{\frac{1}{2}}^\eta
\end{align*}
for any integer $\eta \geq 1$. 

Fix an integer $\eta \geq 1$, and assume 
$|V^{\-{col}}_{\+{B}}|\geq 300\theta^2k^7 \Delta^2 \eta$, by \Cref{lemma:color-connected} and \Cref{lemma-vb-linear-1}, we can find the set $ V_{\+{B}}^{\-{lin}} \subseteq V^{\-{col}}_{\+{B}}$ with size
$ |V_{\+{B}}^{\-{lin}}| \geq \lfloor \frac{|V^{\-{col}}_{\+{B}}|}{2k + 1} \rfloor \geq \lfloor\frac{|V^{\-{col}}_{\+{B}}|}{3k} \rfloor \geq \theta^2 (10k^3\Delta)^2 \eta$
such that $\ts(S,0) \in V_{\+{B}}^{\-{lin}}$ and the conditions in \Cref{lemma-vb-linear-1} get fulfilled.
Moreover, it is straightforward to find a subset $U \subseteq V_{\+{B}}^{\-{lin}}$ with size exactly $|U| = \theta^2 (10k^3\Delta)^2 \eta$ such that $\ts(S,0) \in U$ and the rest of \Cref{lemma-vb-linear-1} are satisfied by $U$.
By \Cref{lem-max-degree-G-self-mod}, the maximum degree of $G_H^{S,\-{self}}[U]$ is at most $10k^3\Delta$.
Since $G_H^{S,\-{self}}[U]$ is a finite connected subgraph in  $G_H^{S,\-{self}}$, by \Cref{lemma:existence-2-block-tree}, we can find a $2$-block-tree $\{C_1,C_2, \dots ,C_{\eta} \}$ in $G_H^{S,\-{self}}$ with block size $\theta$ and tree size $\eta$
such that $\ts(S,0) \in C_1$ and $C_j \subseteq U \subseteq V^{\-{col}}_{\+{B}}$ for all $j\in [\eta]$. 
By \Cref{enum:vbcol-linear} in \Cref{lemma-vb-linear-1}, for any distinct $x_1,x_2 \in (\cup_{j=1}^\eta C_j)\setminus \set{\ts(S,0)}$, it holds that $|x_1 \cap x_2| \leq 1$.

Define by $\+T^{\eta,\theta}_S$ the set of all $2$-block-trees $\{C_1,C_2,\ldots,C_\eta\}$ with block size $\theta$ and tree size $\eta$ in graph $G_H^{S,\-{self}}$ such that
\begin{itemize}
	\item $\ts(S,0) \in C_1$;
	\item let $ C = (\cup_{j=1}^\eta C_j)\setminus \set{\ts(S,0)}$, then for any $w_1,w_2 \in C$, $|w_1 \cap w_2| \leq 1$.
\end{itemize}
Hence, if $|V^{\-{col}}_{\+{B}}|\geq 300\theta^2k^7 \Delta^2 \eta $, then there exists a 2-block-tree $\{C_1,C_2,\ldots,C_\eta\} \in \+T^{\eta,\theta}_S$ such that $C_j \subseteq V^{\-{col}}_{\+{B}}$ for all $j \in [\eta]$.
By a union bound over all 2-block-trees in $\+T^{\eta,\theta}_S$, we have 
\begin{align*}
\Pr{|V^{\-{col}}_{\+{B}}| \geq 3\theta^2 k\Delta^2 \eta } \leq \sum_{\{C_1,\ldots,C_\eta\} \in \+T^{\eta,\theta}_S} \Pr{\forall j \in [\eta], C_j \subseteq V^{\-{col}}_{\+{B}}}.
\end{align*}

Fix a 2-block-tree $\{C_1,\ldots,C_\eta\} \in \+T^{\eta,\theta}_S$. 
By definition, for any $j$ and $\ell$ that $j \neq \ell$, we have $\dist_{G_H^{S,\-{self}}}(C_j,C_\ell) \geq 2$, 
and thus for any $x_j \in C_j$ and $x_\ell \in C_\ell$, it holds that $x_j \cap x_\ell = \emptyset$. 
For any $j\in[\eta]$, and any two $w_1, w_2 \in C_j \setminus \{\ts(S,0)\}$, it holds that  $|w_1 \cap w_2| \leq 1$ by the definition of $\+T^{\eta,\theta}_S$.
Let $C_j=\{e^j_1,e^j_2,\dots,e^j_{\theta}\}$. 
Without loss of generality, assume $\ts(S,0)=e^j_{\theta}$ if $\ts(S,0) \in C_j$. 
For each $\ell\in[\theta]$, define 
\begin{align}\label{eq-def-Bjl}
\+{B}^{j}_{\ell}: \text{There exists } x\in [s] \text{ and } y\in e^{j}_{\ell}
\text{ s.t., for all } t' \in e^{j}_{\ell}\setminus (\set{y }\cup \set{\ts(S,0)}), \text{ either } r_{t'} = \perp\text{or } r_{t'}=x.
\end{align}
We then have
\begin{align*}
&\Pr{\forall j \in [\eta], C_j \subseteq V^{\-{col}}_{\+{B}}}\\
(\text{By \Cref{lemma:color-all-1} and chain rule})\quad \leq &\prod\limits_{1\leq j\leq \eta}\prod\limits_{1\leq \ell\leq \theta} \Pr{\+{B}^{j}_{\ell}\mid \bigwedge\limits_{\substack{(j',\ell'):\\j'<j \lor (j'=j\land \ell'<\ell)}}\+{B}^{j'}_{\ell'}}\\
(\text{By $e^{j}_{\ell} \cap e^{j'}_{\ell'}=\emptyset$ for $j\neq j'$ })\quad = & \prod\limits_{1\leq j\leq \eta}\prod\limits_{1\leq \ell\leq \theta} \Pr{\+{B}^{j}_{\ell}\mid\bigwedge\limits_{\ell'<\ell} \+{B}^{j}_{\ell'}}\\
(\star)\quad \leq & \prod\limits_{1\leq j\leq \eta}\prod\limits_{1\leq \ell\leq \theta-1} \left(sk\cdot \left(\frac{\left\lceil\frac{q}{s}\right\rceil}{q}\left(1+\frac{1}{s}\right)+\frac{1}{4s}\right) ^{k-\theta }\right) \\
(\text{By $q\geq 4s$ and $s \geq 6$})\quad  \leq &(sk)^{(\theta-1)\eta}  \left(\frac{2}{s}\right)^{\eta(k-\theta)(\theta-1)}\leq (2k)^{(\theta - 1)\eta} \left(\frac{2}{s}\right)^{\eta(k-\theta - 1)(\theta-1)}
\end{align*}
The inequality $(\star)$ is due to
(1) a union bound according to the definition in~\eqref{eq-def-Bjl}; 
(2) the local uniformity property in \Cref{cor:local-uniformity}; and 
(3) \ref{enum:vbcol-linear} in \Cref{lemma-vb-linear-1} for each $e_{\ell}^j$ where $j\in[\theta-1]$.

Next, we count the number of possible 2-block-trees in $\+T^{\eta,\theta}_S$,
which can be upper bound by the number of all 2-block-trees $\{C_1,C_2,\ldots,C_\eta\}$ with block size $\theta$ and tree size $\eta$ in $G_H^{S,\-{self}}$ such that $\ts(S,0) \in C_1$. 
By \Cref{lemma:2-block-tree-number-bound} and \Cref{lem-max-degree-G-self-mod}, we have
\begin{align*}
\abs{\+T^{\eta,\theta}_S(U)} \leq  (\theta \mathrm{e}^\theta (10 k^3 \Delta)^{\theta+1} )^{\eta}.	
\end{align*}
Hence, we only need to prove that 
\begin{align*}
(\theta \mathrm{e}^\theta (20 k^4 \Delta)^{\theta+1} )^{\eta}	\tp{\frac{2}{s}}^{(k-\theta-1)(\theta-1)\eta} \leq \tp{\frac{1}{2}}^\eta\end{align*}
which is equivalent to 
\begin{align*}
	\left(\frac{s}{2}\right)^{(k-\theta-1)(\theta-1)} \geq 2\theta \mathrm{e}^\theta (20 k^4\Delta)^{\theta+1} \quad \Longleftrightarrow \quad 	\left(\frac{s}{2}\right)^{(k-\theta-1)} \geq (2\theta)^{1/(\theta-1)} \mathrm{e}^{\theta/(\theta-1)} (20k^4 \Delta)^{(\theta+1)/(\theta-1)}.
\end{align*}
We derive this as follows. 
Observe the following inequalities
\begin{align}
k \geq \frac{50(1+\delta)^2}{\delta^2} \geq \theta^2-1 \implies k-\theta-1 \geq (1-1/(\theta-1))k, \label{equ:linearcol-ineq1} \\
k \geq 50 \implies (120k^4)^{2/k}<2.3<3. \label{equ:linearcol-ineq2}
\end{align}
Then we have
\begin{align*}
  s &\geq 6\Delta^{\frac{1+1/(1+2/\delta)}{k}} \tag{By condition}\\
  &\geq 2\left(120k^4\right)^{\frac{2}{k}}\Delta^{\frac{1+1/(1+2/\delta)}{k}} \tag{By~\eqref{equ:linearcol-ineq2}}\\
  &\geq 2\left(120k^4\Delta\right)^{\frac{1+1/(1+2/\delta)}{k}}\\
  &\geq 2\left(6^{\frac{\theta-1}{k(\theta - 2)}} (20k^4 \Delta)^{\frac{\theta+1}{{k(\theta - 2)}}}\right)\tag{Use $6+\frac{6}{\delta}\leq\theta\leq7+\frac{6}{\delta}$}\\
  &\geq 2\left(6\left(20k^4\right)^{\frac{\theta+1}{\theta-1}}\right)^{\frac{1}{k-\theta-1}}. \tag{By~\eqref{equ:linearcol-ineq1}}
\end{align*}
The desired inequality then follows by noticing that $\theta\geq 6$. 
%
%
\end{proof}

\subsection{Proof of \texorpdfstring{\Cref{lemma:edge-col-correct}}{Lemma 6.11}}\label{sec-proof-colour-correct}
Now we prove \Cref{lemma:edge-col-correct}, verifying the correctness of \Cref{Alg:marginal-edge-color-sampler}.
We follow the same proof strategy as in \Cref{sec:marginal-sampler}.
We will first define a forward version marginal sampler $\+A$ in \Cref{Alg:idea} for any finite $T$,
and then give its backward counterpart $\+B$ in \Cref{Alg:B}.
Using the same randomness, $\+A$ and $\+B$ are perfectly coupled.
\Cref{Alg:marginal-edge-color-sampler} is the same as $\+B$ taking $T\rightarrow\infty$ and with truncation.
\Cref{lemma:edge-col-correct} then follows in a similar manner as in \Cref{thm:dtv-truncate}.

For the rest of this subsection, we fix a $k$-uniform hypergraph $H=(V,\+{E})$ with the maximum degree at most $\Delta$ together with a set of colours $[q]$, a projection scheme $h$ with parameter $s$, a subset of vertices $S \subseteq V$ and a real number $\gamma > 0$ satisfying the condition in \Cref{lemma:edge-col-correct}.  Fix an integer $T \geq 0$. 
For the forward marginal sampler,
recall that the systematic scan Glauber dynamics on the distribution $\psi$ is defined as follows:
\begin{itemize}
	\item let $Y_{-T} \in [s]^V$ be an arbitrary feasible configuration;
	\item for each $t$ from $-T+1$ to $0$, the transition $Y_{t-1} \to Y_{t}$ is defined as follows:
	\begin{enumerate}
		\item let $w = v_{i(t)}$, where $i(t) = (t \text{ mod } n) + 1$,
          and let $Y_{t}(u) = Y_{t-1}(u)$ for all $u \neq w$; 
		\item sample $r_t \sim \psi^{\-{LB}}$, if $r_t \neq \perp$, then let $Y_t(w) = r_t$; otherwise, let
		\begin{align*}
			&\sigma_{\Lambda} = \config(t)
		\end{align*}
		where $\config \text{ is defined in }$ \Cref{Alg:config-h-col}.
		\item sample $Y_t(w) \sim \psi^{\-{pad},\sigma_{\Lambda}}_{w}$;
	\end{enumerate}
	\item output $Y_0$.
\end{itemize}

Here we sample from the padding distribution by the method in \eqref{eq:proof:resolve-coupling} using a sequence of real random variables $\{U_t\}_{-T<t\le 0}$.

Denote by $Y_0^{\+P(T)} \in [s]^V$ the output of the above process. 
We will always consider $T \geq n$. It holds that for any $u \in V$, $Y_0^{\+P(T)}(u) = Y_{t}(v_{i(t)})$, where $t = \upd_u(0)$. 

Given $Y_0^{\+P(T)}$ generated by the process $\+P(T)$,
what we are really interested in is a sample from $\mu$ (instead of $\psi$) conditioned on $Y_0^{\+P(T)}$.
To this end, 
we introduce the algorithm $\+A\left( S, Y_0, \{r_t\}_{-T< t \leq 0}\right)$, described in \Cref{Alg:idea},
that takes a configuration $Y_0$ and a set $\{r_t\}_{-T< t \leq 0}$ as its input.
This algorithm is similar to \Cref{Alg:config-h-col} in that it uses BFS to find a boundary so that $S$ is independent from the outside,
and in addition it enumerates all possibilities inside to generate a sample.
Let $X_S^{(T)} = \+A\left(S, Y_0^{\+P(T)},  \{r_t\}_{-T< t \leq 0}\right)$,
where $Y_0^{\+P(T)}$ and $\{r_t\}_{-T< t \leq 0}$ are generated by $\+P(T)$.
Note that $\left\{X_S^{(T)}\right\}_{T\ge n}$ is an infinite sequence of random variables.
The next lemma shows that it converges to the desired marginal distribution.

\begin{algorithm}
\caption{$\+A\left( S, Y_0, \{r_t\}_{-T< t \leq 0}\right)$} \label{Alg:idea}
  \SetKwInput{KwPar}{Parameter}
 \KwIn{a hypergraph $H = (V,\+E)$, a set of colours $[q]$, a projection scheme $h: [q] \to [s]$ that defines the projected distribution $\psi$, a subset $S \subseteq V$, a configuration $Y_0 \in [s]^V$, and random variables $r_t \in [s] \cup \{\perp\}$ for $-T \leq t\leq 0$;}

\KwOut{$\sigma \in [s]^{\Lambda}$ for some $S \subseteq \Lambda \subseteq V$}
\SetKwIF{Whenever}{}{}{whenever}{do}{}{}{}
$\sigma\gets Y_0(S)$ and $V'\gets S$\; 
\While{$\exists e\in \+{E}$ s.t. $e\cap V'\neq \emptyset$, $e\cap (V\setminus V')\neq \emptyset$ and $e$ is not satisfied by $\sigma$}{
      choose such $e$ with the lowest index\;\label{Lina-A-pick} 
    \If{$\exists j \in [s]$ s.t. $\forall u \in e$, $r_{\upd_u(0)} \in \{\perp,j\}$}{
       \ForAll{$u \in e$} {$\sigma(u)\gets Y_0(u)$ \label{line-app-1}\;}
       $V'\gets V'\cup e$\label{line-app-2}\;
    }
    \Else{
        $U\gets \{u \in e \mid r_{\upd_u(0)}\neq \perp\}$\;
    	 \ForAll{$ u \in U$}{$\sigma(u)\gets  r_{{\upd_u(0)}}$\;}
    }
} 
enumerate all colourings $ X\in \otimes_{u \in V'}h^{-1}(\sigma_u)$ on $V'$ to compute the marginal distribution of $S$ on the sub-hypergraph $H[V']$,
which is equivalent to $\mu^{\sigma}_{S}$\;  
sample $X_S\sim \mu^{\sigma}_{S}$\; 
\Return $X_S$\;
\end{algorithm}

\begin{lemma}\label{lemma-app-A}
If  $\lfloor q / s \rfloor^k\geq 4\mathrm{e}qs \Delta k$, then $\DTV{X_S^{(T)}}{\mu_S} = 0$ as $T \to \infty$.
\end{lemma}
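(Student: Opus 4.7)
The plan is to combine the convergence of the systematic scan Glauber dynamics on $\psi$ with a conditional independence argument that justifies the BFS-based local sampling inside $\+A$. First I would verify irreducibility of the chain on $\psi$ under the hypothesis $\lfloor q/s\rfloor^k\geq 4\mathrm{e}qs\Delta k$: by \Cref{cor:local-uniformity} one has $\psi_v^{\sigma_\Lambda}(j)>0$ for every $j\in[s]$ and every feasible partial configuration $\sigma_\Lambda$, so any two feasible configurations in $[s]^V$ can be joined by a sequence of single-site updates that remain feasible. \Cref{thm-convergence} then yields $\DTV{Y_0^{\+P(T)}}{\psi}\to 0$ as $T\to\infty$.

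Next I would establish the following invariant at the end of the while loop of $\+A(S,Y_0^{\+P(T)},\{r_t\})$: $S\subseteq V'\subseteq \Lambda$; the assignment $\sigma(u)$ agrees with $Y_0(u)$ for every $u\in\Lambda$; and every hyperedge $e\in\+E$ with $e\cap V'\neq\emptyset$ and $e\cap(V\setminus V')\neq\emptyset$ is satisfied by $\sigma$. The agreement $\sigma(u)=Y_0(u)$ follows from the forward rule in $\+P(T)$, which enforces $Y_0(u)=r_{\upd_u(0)}$ whenever $r_{\upd_u(0)}\neq\perp$, so both possible assignments made by $\+A$ (copying $Y_0(u)$ directly or copying $r_{\upd_u(0)}$) agree with $Y_0(u)$. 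The third clause is precisely the loop's stopping condition, and termination is guaranteed since each hyperedge can be chosen at most once.

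The crucial step is then a conditional independence identification: under $\mu^\sigma$, every hyperedge of $H$ either lies entirely in $V'$, lies entirely in $V\setminus V'$, or is already satisfied by $\sigma$ (hence automatically by any extension $X$ consistent with $\sigma$). Consequently $\mu^\sigma$ factorises as a product of independent distributions on $X_{V'}$ and $X_{V\setminus V'}$; in particular, since $S\subseteq V'$, the marginal of $X_S$ is unchanged by the additional conditioning on $h(X_{V\setminus \Lambda})=Y_0(V\setminus\Lambda)$, yielding $\mu_S^\sigma=\mu_S^{Y_0}$. Therefore the conditional distribution of $X_S^{(T)}$ given $Y_0^{\+P(T)}$ is exactly $\mu_S^{Y_0^{\+P(T)}}$, regardless of the particular $\{r_t\}$ driving the BFS. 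Combining this with the pushforward identity $\sum_{Y_0}\psi(Y_0)\mu_S^{Y_0}=\mu_S$ and the contraction of total variation under mixtures gives $\DTV{X_S^{(T)}}{\mu_S}\leq \DTV{Y_0^{\+P(T)}}{\psi}\to 0$. The main delicate point is justifying the factorisation of $\mu^\sigma$ cleanly; this reduces to a straightforward partition of the hyperedges of $H$ into those internal to $V'$, those internal to $V\setminus V'$, and those crossing the boundary (which are automatically satisfied by the invariant).
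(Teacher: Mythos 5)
Your proof is correct and follows essentially the same route as the paper's: termination of the BFS loop, the invariant that $\sigma=Y_0|_\Lambda$ with all boundary hyperedges of $V'$ satisfied, the resulting identity $\mu_S^\sigma=\mu_S^{Y_0}$ via conditional independence, and then convergence of $Y_0^{\+P(T)}$ to $\psi$ combined with the data-processing inequality for the kernel $Y\mapsto\mu_S^Y$. The only difference is presentational — you make the factorisation of $\mu^\sigma$ across $V'$ and $V\setminus V'$ and the final mixture/contraction step explicit, where the paper leaves them implicit.
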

\begin{proof}
  Fix $Y =  Y_0^{\+P(T)} \in [s]^V$ and $\{r_t\}_{-T< t \leq 0}$ generated by $\+P(T)$.
We first show that the output $X_S = X_S^{(T)}$ follows the distribution $\mu^{Y}_S$. 
For each hyperedge $e \in \+E$, if $e$ is picked in \Cref{Lina-A-pick} , then after the loop, either $e \subseteq V'$ or $e$ is satisfied by $\sigma$, and thus $e$ cannot be picked in \Cref{Lina-A-pick} again. 
The algorithm $\+A$ always terminates.
Suppose the final $\sigma$ is defined on a subset $\Lambda \subseteq V$, namely $\sigma \in [s]^{\Lambda}$.
Note that $X$ is sampled from $\mu^{\sigma}_S$. We show that the configuration $\sigma$ satisfies:
\begin{itemize}
	\item $\sigma = Y_{\Lambda}$ and $S \subseteq V' \subseteq \Lambda$;
	\item for all $e \in \+E$ such that $e \cap V' \neq \emptyset$ and $e \cap (V \setminus V') \neq \emptyset$, $e$ is satisfied by $\sigma$.
\end{itemize}
If these two properties are true, then conditioned on $\sigma$, all constraints on hyperedges in the boundary of $V'$ are satisfied by $\sigma$, which implies $\mu^Y_S = \mu^\sigma_S$.
This is similar to the argument in \Cref{lemma:config-h-col-cor}.

Note that $S \subseteq V'$ because $V'$ is initialised as $S$ and it does not remove vertices.
The fact $ V' \subseteq \Lambda$ follows from \Cref{line-app-1} and \Cref{line-app-2}.
Since both $Y$ and $\{r_t\}$ are generated by $\+P(T)$,
for any $u \in V$, if $r_{\upd_u(0)} \neq \perp$, then $Y_u  = r_{\upd_u(0)}$.
Hence, it is straightforward to see  $\sigma = Y_{\Lambda}$. 
The second property follows from the fact that the while-loop has terminated.

By \Cref{lemma:config-h-col-cor}, $\config$ (\Cref{Alg:config-h-col}) satisfies \Cref{cond:invariant-boundary}.
The process $\+P(T)$ faithfully simulates the systematic scan Glauber dynamics on $\psi$.
By \Cref{cor:local-uniformity}, the systematic scan Glauber dynamics on $\psi$ is irreducible, which implies that $Y_0^{\+P(T)}$ follows the distribution $\psi$ as $T \to \infty$. Since $X_S^{(T)} \sim \mu^{Y_0^{\+P(T)}}_S$, $\DTV{X_S^{(T)}}{\mu_S} = 0$ as $T \to \infty$.
\end{proof}

The algorithm $\+A$ uses too much randomness and next we give its backwards version $\+B$ that achieves the same output distribution.
Fix the initial configuration $Y_{-T}$ of $\+P(T)$ as an arbitrary feasible configuration, say $Y_{-T}(u) = 1$ for all $u \in V$.
Given the description of the distribution $\psi$, an integer $T \geq n$ and a subset $S \subseteq V$, 
the algorithm $\+B$ returns a random variable that follows the same distribution as $X_S = \+A\left(S, Y_0^{\+P(T)}, \{r_t\}_{-T< t \leq 0}\right)$. 

To construct $\+B$, we need yet another algorithm $\+C=\+C_T(t;M,R)$,
which is to plug \Cref{Alg:config-h-col} as $\config$ into $\resolve_T(t;M,R)$ (\Cref{Alg:resolve}) in this context.
As before, $\+C$ maintains two global data structures $M$ and $R$, initialised as $\perp^{\mathbb{Z}}$ and $\emptyset$ respectively.
All recursive calls of $\+C$ access the same $M$ and $R$.
The algorithm $\+C$ uses $\basicsample{}(t;R)$ to draw random samples from $\psi^{\-{LB}}$,
and samples from the padding distribution by the method in \eqref{eq:proof:resolve-coupling}. 
Given the random variables $\{r_t, U_t\}_{-T< t \leq 0}$,
$\+C$ and $\basicsample{}$ become deterministic.
We define $C \in [s]^V$ where $C(u) = \+C_T(\upd_u(0);M,R)$ for all $u \in V$.
Note that $C$ is a function of $\{r_t, U_t\}_{-T< t \leq 0}$,
and is thus a random vector.

\begin{lemma}\label{lemma-app-C}
  Use the same random variables $\{r_t, U_t\}_{-T< t \leq 0}$ in $\+C_T(\upd_u(0);M,R)$ for all $u \in V$ and in $\+P(T)$
  to generate $C$ and $Y_0$, respectively.
  The distribution of $C$ is the same as that of $Y_0$.
\end{lemma}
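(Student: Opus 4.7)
The plan is to mimic the coupling argument used in the proof of \Cref{theorem:resolve-cor}, suitably adapted to the present setting where $\+C_T$ is invoked once for every $u\in V$ while the global data structures $M$ and $R$ are shared across these calls. First I would couple $\+P(T)$ with the entire sequence of calls $\{\+C_T(\upd_u(0);M,R)\}_{u\in V}$ by having them consume the same randomness $\mathbf r=(r_t)_{-T<t\le 0}$ and $\mathbf U=(U_t)_{-T<t\le 0}$: the $r_t$'s drive $\basicsample$ and the corresponding step of $\+P(T)$, and the $U_t\in[0,1)$'s realise padding draws from $\psi^{\-{pad},\sigma_\Lambda}_v$ via the quantile rule of \eqref{eq:proof:resolve-coupling}. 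Once $(\mathbf r,\mathbf U)$ is fixed, both the forward trajectory $(Y_t)_{-T\le t\le 0}$ and every recursive computation of $\+C_T$ become deterministic.

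Next I would introduce the notion of consistency between $(M,R)$ and $(\mathbf r,\mathbf U)$ exactly as in the proof of \Cref{theorem:resolve-cor}: (i) $M(\ell)\neq\perp \Longrightarrow M(\ell)=Y_\ell(v_{i(\ell)})$, and (ii) $(\ell,r'_\ell)\in R\Longrightarrow r'_\ell=r_\ell$. The initial state $(M,R)=(\perp^{\mathbb Z},\varnothing)$ is trivially consistent. Then, by induction on $t$ from $-T$ upward, I would verify the following invariant: if $(M,R)$ is consistent prior to a call $\+C_T(t;M,R)$, then upon returning, the output equals $Y_t(v_{i(t)})$ and $(M,R)$ remains consistent. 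The inductive step is structurally identical to the one in \Cref{theorem:resolve-cor}; the only new ingredient is that the $\config(t)$ subroutine plugged into $\+C_T$ is now \Cref{Alg:config-h-col}, which by \Cref{lemma:config-h-col-cor} satisfies \Cref{cond:invariant-boundary}. This is exactly the hypothesis needed for the induction to go through.

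The key observation is that the induction applies \emph{across} all top-level calls, not just within a single one, because each call only appends to $M$ and $R$ (never overwrites an existing non-$\perp$ entry of $M$ nor removes entries from $R$), and each appended value is precisely what the coupling with $(\mathbf r,\mathbf U)$ would predict. Consequently, after completing all $|V|$ top-level calls, we have $C(u)=\+C_T(\upd_u(0);M,R)=Y_{\upd_u(0)}(v_{i(\upd_u(0))})=Y_0(u)$ for every $u\in V$, so $C=Y_0$ as deterministic functions of $(\mathbf r,\mathbf U)$. Since the two are equal pointwise under the coupling, they share the same distribution.

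The main obstacle, as I see it, is the bookkeeping for consistency across the sequence of top-level calls indexed by $u$, rather than within a single call: one must argue that successive invocations cannot corrupt the invariant. This reduces to noting that $M$ is only modified on entries that were $\perp$ (Lines for $M(t)\gets\cdots$ in $\resolve_T$) and $R$ is only grown, together with the fact that every such write is either an echo of $r_\ell$ or a padding draw governed by $U_\ell$, both of which match what $\+P(T)$ would produce at the corresponding time. Once this monotonicity is spelled out, the rest of the argument is a direct reuse of the proof of \Cref{theorem:resolve-cor}.
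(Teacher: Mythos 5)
Your proposal is correct and follows essentially the same route as the paper: the paper's proof simply invokes \Cref{lemma:config-h-col-cor} to verify \Cref{cond:invariant-boundary} and then appeals to the coupling argument from the proof of \Cref{theorem:resolve-cor} with shared randomness $(\mathbf r,\mathbf U)$. Your additional bookkeeping about the consistency invariant surviving across the multiple top-level calls (since $M$ and $R$ are only ever appended to) is exactly the detail the paper leaves implicit.
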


\begin{proof}
By \Cref{lemma:config-h-col-cor}, the subroutine $\config$ satisfies \Cref{cond:invariant-boundary}.
Following the proof of \Cref{theorem:resolve-cor},
once we used the same random variables $\{r_\ell, U_\ell\}_{-T< t \leq 0}$,
$\+C$ and the $\+P(T)$ are perfectly coupled and the lemma follows.
\end{proof}

\Cref{lemma-app-C} says that we can use $\+C$ to simulate $Y_0$ in $\+P(T)$. 
We may also use $\basicsample$ to access $\{r_t\}_{-T< t \leq 0}$ ``on demand''.
The algorithm $\+B=\+B_T(S)$ (\Cref{Alg:B}) takes $S$ as an input,
and is the same as $\+A$ except that 
any access to $Y_0$ is replaced by a call to $\+C$, 
and any access to $\{r_t\}_{-T< t \leq 0}$ is replaced by a call to $\basicsample$.
Similar to \Cref{Alg:marginal-edge-color-sampler}, $\+B$ together with its subroutines maintains two global data structures $M$ and $R$, 
which are initialised respectively as $M_0=\perp^{\mathbb{Z}}$ and $R_0=\emptyset$.
\begin{algorithm}
\caption{$\+B_T(S)$} \label{Alg:B}
  \SetKwInput{KwPar}{Parameter}
 \KwIn{a hypergraph $H = (V,\+E)$, a set of colours $[q]$, a projection scheme $h: [q] \to [s]$ that defines the projected distribution $\psi$, a subset $S \subseteq  V$, an integer $T \leq -n$ and a parameter $\gamma>0$;}
\SetKwInput{KwData}{Global variables}
 \KwData{a map $M: \mathbb{Z}\to [q]\cup\{\perp\}$ and a set $R$;}
\KwOut{a random assignment in $\tau\in [q]^{S}$}
$M \gets \perp^{\mathbb{Z}}$ and $R \gets \emptyset$\;
$\sigma\gets \emptyset,V'\gets S$\;
$\sigma(u)\gets \+C_T(\upd_u(0);M,R)\text{ for all }u\in S$\;
\While{$\exists e\in \+{E}$ s.t. $e\cap V'\neq \emptyset$, $e\cap (V\setminus V')\neq \emptyset$ and $e$ is not satisfied by $\sigma$}{
      choose such $e$ with the lowest index\; 
    \If{$\exists j \in [s]$ s.t. $\forall u \in e$, $\basicsample({\upd_u(0)};R) \in \{\perp,j\}$}{
       \ForAll{$u \in e $} {$\sigma(u)\gets \+C_T(\upd_u(0);M,R)$ \;}
       $V'\gets V'\cup e$\;
    }
    \Else{
        $U\gets \{u \in e\mid \basicsample({\upd_u(0)};R)\neq \perp\}$\;
    	 \ForAll{$ u \in U$}{$\sigma(u)\gets  \basicsample({\upd_u(0)};R)$\;}
    }
} 
enumerate all colourings $ X\in \otimes_{u \in V'}h^{-1}(\sigma_u)$ on $V'$ to compute the marginal distribution of $S$ on the sub-hypergraph $H[V']$,
which is equivalent to $\mu^{\sigma}_{S}$\;  
sample $X_S\sim \mu^{\sigma}_{S}$\;\label{Line-B-last-sample}
\Return $X_S$\; 
\end{algorithm}

%
Denote by $B_T(S)$ to the output of the algorithm $\+B_T(S)$.
Recall that $X^{(T)}_S$ denotes the output of $\+A\left( S, Y_0^{\+P(T)}, \{r_t\}_{-T< t \leq 0}\right)$.
Suppose $\lfloor q / s \rfloor^k\geq 4\mathrm{e}qs \Delta k$.
By \Cref{lemma-app-C}, it holds that 
\begin{align}\label{eq-app-eq}
\forall T \geq n, \quad 	\DTV{B_T(S)}{X^{(T)}_S} = 0.
\end{align}
By \Cref{cor:local-uniformity}, if $\lfloor q / s \rfloor^k\geq 4\mathrm{e}qs \Delta k$, \Cref{condition:sufficient-correctness} is satisfied. 
By \Cref{lemma:config-h-col-cor}, the subroutine $\config$ satisfies \Cref{cond:invariant-boundary}.
Following the same proof of \Cref{theorem:resolve-cor},
one can verify that in this case $\+B_{\infty}(S)$ terminates with probability 1 and its output $B_{\infty}(S)$ satisfies 
\begin{align}\label{eq-app-inf}
	\lim_{T \to \infty} \DTV{B_{\infty}(S)}{B_T(S)} = 0.
\end{align}
Note that 
\begin{align*}
  \limsup_{T \to \infty} \DTV{\mu_S}{B_T(S)} &\leq \limsup_{T \to \infty} \DTV{\mu_S}{X^{(T)}_S} + \limsup_{T \to \infty} \DTV{X^{(T)}_S}{B_T(S)} \\
  \tag{by \Cref{lemma-app-A} and~\eqref{eq-app-eq}}	&= 0.
\end{align*}
Combining the above with \eqref{eq-app-inf}, we have
\begin{align}\label{eq-cor-app}
  \DTV{B_{\infty}(S)}{\mu_S} \leq \limsup_{T \to \infty} \DTV{B_{\infty}(S)}{B_T(S)} + \limsup_{T \to \infty} \DTV{\mu_S}{B_T(S)} = 0.
\end{align}
Now, we can prove \Cref{lemma:edge-col-correct}.
\begin{proof}[Proof of \Cref{lemma:edge-col-correct}]
  Let $K = 4\Delta^2 k^5 \lceil\log{\frac{1}{\gamma}}\rceil$ be the truncation threshold of $|R|$ in \Cref{Alg:marginal-edge-color-sampler}.
  We couple $\appmghcoledge(S,\gamma)$ and the algorithm $\+B_\infty(S)$ by using the same random choices $\{r_{\ell},U_{\ell}\}_{\ell \leq 0}$,
  where $\{U_\ell\}_{\ell \leq 0}$ are used to realise the padding distribution as in \eqref{eq:proof:resolve-coupling}.
  Furthermore, we use the same uniformly at random real number $U^* \in [0,1]$ 
  to realise both \Cref{Line-color-edge-tau-sample} of \Cref{Alg:marginal-edge-color-sampler} and \Cref{Line-B-last-sample} of $\+B_\infty(S)$.
%

Let $X_S$ be output of \Cref{Alg:marginal-edge-color-sampler}. 
Similar to the proof of \Cref{thm:dtv-truncate}, $\+B_\infty(S)$ and \Cref{Alg:marginal-edge-color-sampler} couple perfectly unless truncation happens.
Thus,
\begin{align*}
  \Pr[\-{coupling}]{B_\infty(S) \neq X_S}\leq \Pr{\+E_{\-{trun}}(K)},
\end{align*}
where $B_\infty(S)$ denotes the output of the algorithm $\+B_\infty(S)$,
and the the second $\mathbf{Pr}$ refers to \Cref{Alg:marginal-edge-color-sampler}.
The random variable $B_\infty(S)$ is well-defined since $\+B_\infty(S)$ terminates with probability $1$. 
Then,
\begin{align*}
	\DTV{X_S}{\mu_S} &\leq  \DTV{X_S}{B_{\infty}(S)} + \DTV{B_{\infty}(S)}{\mu_S} \\
    \tag{\text{by \eqref{eq-cor-app}}} \quad&= \DTV{X_S}{B_{\infty}(S)}\\
	&\leq \Pr[\-{coupling}]{B_{\infty}(S) \neq X_S} \leq \Pr{\+E_{\-{trun}}(K)}. \qedhere
\end{align*}
\end{proof}

\section{Derandomising random scan Glauber dynamics using CTTP}\label{sec:derandomise-random-scan}

In this section, we elaborate how one can derandomise the MCMC algorithm based on the random scan Glauber dynamics for Gibbs distributions over bounded-degree graphs. 

\subsection{CTTP for random scan Glauber dynamics}
Let $\+S:\mathbb{Z}\to V$ be an (infinite) scan sequence and let $T>0$.
We consider the following Glauber dynamics with scan sequence $\+S$.

\begin{center}
  \begin{tcolorbox}[=sharpish corners, colback=white, width=1\linewidth]
  	\begin{center}
	\textbf{\emph{The Glauber dynamics $\+{P}^{\+S}(T)$}}
  	\end{center}
  	\vspace{6pt}
   \begin{itemize}
	\item Initialize $X^{\+S}_{-T} \in [q]^V$ as an arbitrary feasible configuration.
	\item For $t=-T+1,-T+2,\ldots,0$,  the configuration $X^{\+S}_{t}$ is constructed as follows:
	\begin{enumerate}
		\item[(a)] pick $v = \+S(t)$,  and let $X^{\+S}_{t}(u) \gets X^{\+S}_{t-1}(u)$ for all $u \neq v$; 
		\item[(b)] draw $r_t \sim \mu^{\mathrm{LB}}$ independently, and  let $X^{\+S}_t(v) \gets r_t$ if $r_t \neq \perp$; otherwise, 
\begin{align}\label{eq-findconfig-fixed-scan}
\sigma_\Lambda \gets \config^\+S(t), \quad\text{(by accessing $X^{\+S}_{t-1}$ and $\+{R}_{t-1}\defeq (r_s)_{-T < s < t}$)} 
\end{align} 
		and draw $X^\+S_t(v) \sim \mu^{\mathrm{pad},\sigma_\Lambda}_{v}$ independently.
	\end{enumerate}
\end{itemize}
  \end{tcolorbox} 
\end{center}

Compared with the systematic scan Glauber dynamics $\+P(T)$ defined in \Cref{sec:marginal-sampler}, the only difference is that the vertex to update is chosen according to $\+S$ rather than \eqref{eq:i-scan}. 
Furthermore, for any $u\in V$ and integer $t$, denote by $\upd^{\+S}_u(t)$ the last time in $\+S$ before $t$ at which $u$ is updated, i.e.
\begin{align}\label{eq:update-time-fixed-scan}
\upd^{\+S}_u(t)\defeq \max\{s\leq t \mid \+S(s)=u\},
\end{align}
If no such $s$ exists, let $\upd^\+S_u(t)=-T$.
With such replacement in mind, we can define $\config^\+S(t)$, $\resolve^{\+S}_T(t; M,R)$, and $\appresolve^\+S(v, K; M,R)$, analogously to those in \Cref{sec:marginal-sampler}. 
An  analogue of \Cref{cond:invariant-boundary} with random or fixed scan  is stated as follows.
\begin{condition}\label{cond:invariant-boundary-fixed-scan}
With probability 1 over the choices of ${\+S}$,
  the procedure $\config^{\+S}(t)$ terminates and returns $\sigma_\Lambda\in[q]^\Lambda$ satisfying that $\mu_v^{\sigma_\Lambda}=\mu_v^{X_{t-1}(V\setminus\{v\})}$ for $v=v_{\+S(t)}$.
\end{condition}

The correctness of $\resolve^{\+S}_T$ is due to the next theorem, whose proof is almost identical to that of \Cref{theorem:resolve-cor}.

\begin{theorem}\label{theorem:resolve-cor-fixed-scan}
  Let $\mu$ be a distribution over $[q]^V$, $T\ge 0$ be an integer, $\+S:\mathbb{Z}\to V$ be a (fixed or random) scan sequence and 
  $(X^\+S_t)_{-T\le t\le 0}$ be generated by the process $\+{P}^\+S(T)$ whose $\config^\+S(t)$ subroutine satisfies \Cref{cond:invariant-boundary-fixed-scan}.
  For any $-T\le t \leq 0$, the followings hold:
  \begin{itemize}
  \item
  For any fixed scan sequence $\+S:\mathbb{Z}\to V$,
  $\resolve_{T}^{\+S}(t)$ terminates in finite steps and returns a sample identically distributed as $X^{\+S}_t(\+S(t))$;
  \item
  Suppose further \Cref{condition:sufficient-correctness} holds. 
  If for randomly generated sequence $\+S$ such that $\+P^\+S(T)$ converges to $\mu$ as $T\to \infty$,  
  then $\resolve^{\+S}_{\infty}(t)$ terminates with probability $1$, and  returns a sample distributed as $\mu_v$ where $v=\+S(t)$.
  \end{itemize}
\end{theorem}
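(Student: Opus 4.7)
The plan is to follow the template of the proof of \Cref{theorem:resolve-cor} essentially verbatim, with the syntactic replacements $v_{i(t)} \to \+S(t)$ and $\upd_u(t) \to \upd_u^{\+S}(t)$, and the semantic replacements $\+P(T) \to \+P^{\+S}(T)$ and $\config(t) \to \config^{\+S}(t)$. Everything in the original argument goes through on a per-realisation basis, as long as we condition on a realisation of $\+S$ and work with the same $(r_\ell)$ and $(U_\ell)$ throughout.

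For the first item, $\resolve_T^{\+S}(t)$ terminates in finite steps because the time argument strictly decreases in every recursive call and the recursion halts once $t \le -T$. For distributional correctness I would set up the same grand coupling as in the original: draw the lower-bound outcomes $(r_\ell)_{-T < \ell \le 0}$ and the reals $(U_\ell)_{-T < \ell \le 0}$ once, use them in both $\+P^{\+S}(T)$ and $\resolve_T^{\+S}(t)$, and induct on $t$ from $-T$ to $0$ to show that whenever the global state $(M, R)$ is consistent with $(\mathbf{r}, \mathbf{U})$, the output of $\resolve_T^{\+S}(t; M, R)$ equals $X_t^{\+S}(\+S(t))$. The induction uses \Cref{cond:invariant-boundary-fixed-scan} exactly as \Cref{cond:invariant-boundary} is used in the original proof to equate the two marginal distributions at each update.

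For the second item the only new ingredient is almost-sure termination of $\resolve_\infty^{\+S}(t_0)$ under the random scan. In the original proof this used the event $\+B_t = \{r_s \neq \perp \text{ for all } s \in [t-n+1, t]\}$, which under systematic scan ensures that every vertex is refreshed to a non-$\perp$ value in that window, so no recursion can penetrate below $t - n$. For a random scan I would replace $\+B_t$ by a cover-and-refresh event $\+B_{[s, s+L]}$ asserting that every $v \in V$ appears in $\+S|_{[s, s+L]}$ at least once \emph{and} carries $r_\ell \neq \perp$ at each such appearance. The structural consequence (no recursive call reaches strictly below $s$) is identical. The main obstacle is quantitative: one must extract, from the convergence assumption on $\+P^{\+S}(T)$, a window length $L$ and a positive constant $p > 0$ with $\Pr{\+B_{[s, s+L]}} \ge p$ uniformly in $s$. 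For the common case where $\+S$ is i.i.d.\ uniform on $V$, a coupon-collector bound with $L = \Theta(n \log n)$ suffices, giving $p \ge (1 - \mu^{\-{LB}}(\perp))^L \cdot \Omega(1)$; more generally, convergence of $\+P^{\+S}(T)$ forces recurrence of visits to every vertex along typical sample paths of $\+S$, which is enough to extract such a window. Since disjoint windows are independent in their lower-bound randomness (and, for i.i.d.\ scans, also in the scan itself), one obtains $\Pr{\+E_L} \le (1 - p)^L$ as before for the event that a recursive call reaches too far into the past.

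Finally, the triangle-inequality argument at the end of \Cref{theorem:resolve-cor} carries over unchanged: the coupling lemma gives $\DTV{\resolve_T^{\+S}(t)}{\resolve_\infty^{\+S}(t)} \to 0$ as $T \to \infty$, the convergence hypothesis gives $\DTV{X_{T, t}^{\+S}(v)}{\mu_v} \to 0$, and the finite-$T$ correctness from the first item combines these two to deliver $\DTV{\resolve_\infty^{\+S}(t)}{\mu_v} = 0$ for $v = \+S(t)$.
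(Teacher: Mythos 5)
Your proposal is correct and matches the paper's approach: the paper in fact omits the proof entirely, stating only that it is ``almost identical'' to that of \Cref{theorem:resolve-cor}, and your adaptation — the per-realisation coupling for the first item, and the replacement of the deterministic $n$-step refresh window by a cover-and-refresh window (every vertex appears in the window and every $r_\ell$ there is non-$\perp$) for almost-sure termination under a random scan — supplies the missing details correctly, since every time in such a window is an appearance of some vertex, so any call landing in the window returns immediately and any call from above it cannot skip past a vertex's last refreshed appearance. The only soft spot is the fully general ``random $\+S$ such that $\+P^{\+S}(T)$ converges'' case, where you assert rather than prove that a window length $L$ and a uniform positive probability $p$ can be extracted; this is rigorous for the i.i.d.\ uniform scan actually used in \Cref{sec:derandomise-random-scan}, and the paper itself offers nothing more.
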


On top of this, taking $\+S$ as an infinite random sequence whose each entry is chosen from $V$ uniformly and independently at random gives \Cref{Alg:appresolve-random}. This is the random scan analogue of \Cref{Alg:appresolve}.
\begin{algorithm}
\caption{$\appresolverandom(v, K; M,R)$} \label{Alg:appresolve-random}
  \SetKwInput{KwPar}{Parameter}
  \SetKwInput{KwData}{Global variables}
  \SetKwIF{Try}{Catch}{Exception}{try}{:}{catch}{exception}{}
\KwIn{a variable $v\in V$ and $K \geq 0$}
\KwData{a map $M:\mathbb{Z}\to[q]\cup\{\perp\}$ and a set $R$;}
\KwOut{a random value in $[q]\cup\{\perp\}$;}
initialize $M\gets\perp^{\mathbb{Z}}$ and $R\gets\emptyset$\;
Generate infinite scan sequence $\+S:\mathbb{Z}\to V$ where each $\+S(i)$ is chosen from $V$ uniformly and independently at random\;\label{Line:appresolve-random-gen}
\uTry{}{
\Return $\resolve^\+S_{\infty}(\upd^\+S_v(0);M,R)$\label{alg:appresolve-fixed-scan-line-return}\;
}
\Catch{$|R| \geq K$\label{alg:appresolve-fixed-scan-line-catch}}{ 
\Return $\perp$\;
}
\end{algorithm}

 Denote by $\+E_{\-{trun}}(K)$ as the event $\appresolverandom(v,K)=\perp$. We have the following theorem that says this is precisely the error for $\appresolverandom(v,K)$ sampling from $\mu_v$. This can be viewed as the random scan Glauber dynamics analogue for \Cref{thm:dtv-truncate}.

\begin{theorem}\label{thm:dtv-truncate-random-scan}
Let $\mu$ be a distribution over $[q]^V$. 
Assume  \Cref{condition:sufficient-correctness} and \Cref{cond:invariant-boundary-fixed-scan}.
For $v\in V$, $K\ge 0$, 
and $Y=\appresolverandom(v,K)$, 
it holds that  $\DTV{Y}{\mu_v}= \Pr{\+E_{\-{trun}}(K)}$.
\end{theorem}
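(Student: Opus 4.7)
The plan is to mirror the proof of \Cref{thm:dtv-truncate} for the systematic scan version, treating the randomly drawn scan sequence $\+S$ as part of the common randomness. First I would construct a coupling between $\appresolverandom(v,K)$ and the untruncated process $\resolve^{\+S}_{\infty}(\upd^\+S_v(0); M, R)$ by using the same $\+S$ (drawn at \Cref{Line:appresolve-random-gen}), the same samples $(r_t)_{t \le 0}$ from the lower bound distribution $\mu^{\-{LB}}$ (accessed via $\basicsample$), and the same uniform variables $(U_t)_{t \le 0}$ used to realise the padding distribution as in \eqref{eq:proof:resolve-coupling}. Under this coupling, every random choice that $\resolve^{\+S}_\infty$ makes is mirrored by $\appresolverandom(v,K)$, so both procedures produce the same output unless the truncation event $\+E_{\-{trun}}(K)$ at \Cref{alg:appresolve-fixed-scan-line-catch} is triggered.

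Next I would invoke the second bullet of \Cref{theorem:resolve-cor-fixed-scan} to conclude that $\resolve^{\+S}_{\infty}(\upd^\+S_v(0))$ terminates with probability $1$ and outputs a sample distributed exactly as $\mu_v$. The required premises are \Cref{condition:sufficient-correctness} and \Cref{cond:invariant-boundary-fixed-scan}, both given, plus the fact that for almost every realisation of $\+S$ the chain $\+P^{\+S}(T)$ converges to $\mu$ as $T \to \infty$. The last point follows by essentially repeating the irreducibility argument from the proof of \Cref{theorem:resolve-cor}: under \Cref{condition:sufficient-correctness} there is some $c_0 \in [q]$ with $\mu^{\-{LB}}(c_0) > 0$, the constant configuration with every coordinate equal to $c_0$ is reachable via one-step transitions from any feasible configuration (and vice versa), and since every vertex is visited infinitely often under a uniformly random scan with probability $1$, the resulting random-update chain is irreducible and aperiodic, so it converges to its unique stationary distribution~$\mu$.

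Combining, the coupling lemma gives $\DTV{Y}{\mu_v} \le \Pr{\+E_{\-{trun}}(K)}$. For the matching lower bound, I observe that the only way $\appresolverandom$ can output $\perp$ is through the catch branch, so $\Pr{Y = \perp} = \Pr{\+E_{\-{trun}}(K)}$, whereas $\mu$ is supported on $[q]^V$ and thus $\mu_v(\perp) = 0$. Hence $\DTV{Y}{\mu_v} \ge \Pr{Y = \perp} - \mu_v(\perp) = \Pr{\+E_{\-{trun}}(K)}$, and together with the upper bound this yields equality. The main place where a little care is needed is the convergence of $\+P^{\+S}(T)$ for random $\+S$, but this reduces to standard properties of the random-update Glauber dynamics under the marginal lower bound, so given \Cref{theorem:resolve-cor-fixed-scan} the remainder of the proof is essentially mechanical.
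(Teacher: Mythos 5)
Your proposal is correct and follows essentially the same route as the paper's proof: a coupling of $\appresolverandom(v,K)$ with the untruncated $\resolve^{\+S}_{\infty}$ via shared $(r_\ell)$ and $(U_\ell)$ (and the common $\+S$), the coupling lemma together with \Cref{theorem:resolve-cor-fixed-scan} for the upper bound, and the observation $\Pr{Y=\perp}=\Pr{\+E_{\-{trun}}(K)}$ with $\mu_v(\perp)=0$ for the matching lower bound. Your explicit verification that $\+P^{\+S}(T)$ converges for almost every random scan sequence is a welcome extra detail that the paper leaves implicit in the hypothesis of \Cref{theorem:resolve-cor-fixed-scan}.
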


\begin{proof}[Proof]
Suppose that sampling from the padding distribution $\mu_{v_{i(\ell)}}^{\-{pad},\sigma_{\Lambda}}$ in $\resolve^{\+S}_T$ is realised in the same way as in~\eqref{eq:proof:resolve-coupling}, 
using a sequence of real numbers $U_\ell \in [0,1)$ chosen uniformly and independently at random for each $\ell\le 0$.

We apply the following coupling between $\resolve^{\+S}_{\infty}(\upd^S_v(0))$ and $\appresolverandom{}(v,K)$:
\begin{itemize}
	\item the two processes use the same random choices for $(r_{\ell})_{\ell \leq 0}$ and $(U_{\ell})_{\ell \leq 0}$.
\end{itemize}
\sloppy
One can verify that if  $\+E_{\-{trun}}(K)$ does not occur, then the two processes $\resolve^\+S_{\infty}(t)$ and $\appresolverandom{}(v,K)$ are coupled perfectly. 
By the coupling lemma and \Cref{theorem:resolve-cor-fixed-scan}, 
\begin{align*}
\DTV{Y}{\mu_v} \leq \Pr{\+E_{\-{trun}}(K)}.
\end{align*}
Note that $\Pr{Y = \perp} =  \Pr{\+E_{\-{trun}}(K)}$ and $\mu_v(\perp) = 0$. Therefore, we have
\begin{align*}
\DTV{Y}{\mu_v} \geq \Pr{Y = \perp} - \mu_v(\perp) = \Pr{\+E_{\-{trun}}(K)}.
\end{align*}
Combining the two inequalities proves the theorem.
\end{proof}

\subsection{Derandomising CTTP for random scan Glauber dynamics}
The rest of this section strives to show how one can derandomise \Cref{Alg:appresolve-random}. 
Specifically, 
\begin{theorem}\label{thm:derandomise-random-scan}
 Let $\mu$ be a $q$-spin Gibbs distribution with the maximum degree of the underlying graph $\Delta$.
  The distribution of the output of $\appresolverandom(v,K)$ can be deterministically computed in $\poly(q^K,\Delta^K)$ time.
\end{theorem}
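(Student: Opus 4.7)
The plan is to introduce a combinatorial \emph{witness tree} object that captures all the information needed to reproduce one execution of $\appresolverandom(v,K)$, enumerate all such trees, and sum their probability-weighted outputs. A witness tree will be a rooted tree whose root carries label~$v$; each node~$x$ carries a vertex label $u(x)\in V$ and an outcome $r(x) \in [q] \cup \{\perp\}$ representing its lower-bound draw; when $r(x)=\perp$, the node additionally carries a padding outcome in $[q]$ and, for each neighbor~$w$ of $u(x)$ in the underlying graph, either a child (edge-labeled by the corresponding neighbor) or a non-$\perp$ color recording the cached lower-bound value that terminated the recursion along that neighbor. Finally, we equip the tree with a partition of its nodes into \emph{memoization classes}, where two nodes lie in the same class iff the algorithm would assign them the same $t$-argument. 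Since at most one element of~$R$ is consumed per memoization class and $|R|\leq K$ in every non-truncated run, the tree has $O(K)$ nodes; a direct count then yields $\Delta^{O(K)}$ rooted shapes and $q^{O(K)}$ label assignments, giving a total of $\poly(q^K,\Delta^K)$ witness trees.

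The first step is a recoverability lemma: every witness tree deterministically specifies the return value of $\appresolverandom(v,K)$ (including whether truncation fires). This is an induction on the tree that mirrors the pseudocode of $\resolve^{\+{S}}_\infty$: at each node, the lower-bound outcome either terminates the call or forces a descent into the neighbors, and in the latter case the tree directly encodes which neighbors were expanded (via children) versus pruned (via cached lower-bound colors), as well as the padding outcome to be combined with them via $\mu_v^{\mathrm{pad},\sigma_\Lambda}$. The lemma then gives a well-defined output-map from witness trees to $[q]$, together with an indicator of whether the execution reaches the truncation threshold, reducing the theorem to computing, for each tree~$W$, the probability $p(W)$ that a random execution produces~$W$.

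The main obstacle is exactly this probability computation. The lower-bound and padding contributions factor locally as explicit products over the nodes and missing-child slots of~$W$, drawn from the fixed distributions $\mu^{\-{LB}}$ and $\mu^{\mathrm{pad},\cdot}_v$, and these are immediately computable from the tree data. The delicate factor is the scan probability: summing over all scan sequences $\+{S}$ the weight of the event that the induced recursion follows exactly~$W$'s shape, vertex labels, and memoization classes. The plan is to show this factor also decomposes locally --- at each fresh recursive edge, the random scan performs an independent geometric search for the target neighbor, and at each prescribed memoization coincidence the scan must hit a specific prior time --- so that combining the per-edge contributions yields a closed form that depends only on combinatorial data of~$W$ and not on absolute times; in particular, the $1/n$ factors coming from individual scan steps are expected to cancel against counting factors over consistent time assignments, which is why the final enumeration cost depends only on $\Delta$ and $q$. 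Once this scan factor is established, enumerating witness trees, evaluating $p(W)$ and the output for each, and aggregating gives the output distribution of $\appresolverandom(v,K)$ in $\poly(q^K,\Delta^K)$ time, proving \Cref{thm:derandomise-random-scan}.
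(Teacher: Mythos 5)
Your overall architecture --- enumerate witness trees of size $O(K)$, show each tree determines the output, compute the probability of each tree, and aggregate --- matches the paper's. But the step you yourself flag as ``the main obstacle'', the scan-sequence probability, is exactly where your plan does not go through as stated, and it is the heart of the actual proof. You propose that the scan factor ``decomposes locally'' into per-edge contributions from ``independent geometric searches'', with the $1/n$ factors cancelling against time-assignment counts. The searches for different pending vertices are not independent: they share a single scan sequence, and the order in which the pending vertices are first hit is itself random and changes the pending set at every subsequent step. The observation that actually works is that, conditioned on the subsequence of scan times landing in the current active set $A$, the vertex hit is uniform over $A$; so each tree node contributes a factor $\abs{A}^{-1}$ where $A$ is the active set \emph{at the moment that node is created}. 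Since $A$ depends on the order in which the tree was grown, $\Pr{\tau^v=\tau}$ is a sum over all admissible leaf-addition orders of products of $\abs{A}^{-1}\cdot\mu^{\mathrm{LB}}(r)$ terms, and the paper computes it by a dynamic program over leaf removals, recovering the active set of each smaller tree deterministically via a dependency-DAG reconstruction. A closed-form per-edge product of the kind you describe does not exist, so this part of your argument is a genuine gap rather than a routine verification.

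A second, smaller gap concerns your ``memoization classes''. You attach them to the witness tree as extra data, but your enumeration count ($\Delta^{O(K)}$ shapes times $q^{O(K)}$ labels) does not account for them; if the partition were a free parameter, the number of objects to enumerate could exceed $\poly(q^K,\Delta^K)$. The paper avoids this by proving that the sharing structure is a deterministic function of the labelled tree: no two nodes at the same depth carry the same vertex label, so each $\perp$-node's dependency on a neighbour $u$ resolves uniquely to the shallowest node labelled $u$ strictly deeper than it. You would need an analogous determinacy lemma (or a bound on the number of admissible partitions) before your counting of witness trees, and hence your running-time claim, is valid.
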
 

\begin{remark}
The above should be treated as a generic theorem. 
Depending on the system being studied, one needs to bound the probability of visiting more than $K$ vertices during the backward deduction in order to control the error introduced by truncation. 
Typically, for fixed $q$ and $\Delta$, if taking $K=O(\log n)$ ensures the desired error bound, then the running time is polynomial in $n$. This is exactly the error bound we need for derandomising systematic scan Glauber dynamics using CTTP. That being said, for Gibbs distributions, derandomising random scan Glauber dynamics using CTTP is no harder than systematic scan Glauber dynamics. 
\end{remark}
\subsubsection{Witness tree}

We begin with introducing the structure that we ``count'' on. 
Fix the underlying graph $G=(V,E)$ for now. 
\begin{definition}[witness tree]\label{def:witness-tree}
  A \emph{witness tree} $\tau$ is a finite rooted tree together with, on each of its vertices, a vertex label from $V$ and a random choice label from $[q]\cup\{\perp\}$ where the children of some vertex in $\tau$ with vertex label $u\in V$ receive vertex labels from $N(u)$. 
\end{definition}

Hereinafter, we generally use the letters $u,v,w,\cdots$ to represent the vertices in the original graph, and $\alpha,\beta,\gamma\cdots$ for the vertices in the witness tree (and another combinatorial structure arising from the witness tree later). 

Fix an infinite scan sequence $\mathcal{S}$ and the set of random choices $\{r_t\}_{t\in\mathbb{Z}_{\leq 0}}$ (\emph{for the lower bound distribution}) in advance. 
Without loss of generality, the vertex $v$ we start is at timestamp $0$. 
We present the following algorithm (\Cref{Alg:glauber-witness}) for constructing a witness tree from the scan sequence $\+S$.
\begin{algorithm}
  \caption{Glauber dynamics witness tree} \label{Alg:glauber-witness}
    \SetKwInput{KwPar}{Parameter}
  \KwIn{a graph $G$, a scan sequence $\mathcal{S}$ and random choices $\{r_t\}_{t\in\mathbb{Z}_{\leq 0}}$, and a size threshold $K$} 
  \KwOut{an associated witness tree $\tau$}
  construct a single-vertex tree $\tau$ with the root labeled $(v,r_0)$\;
  \If{$r_0\neq\perp$}{\Return $\tau$ and $\emptyset$\;}
  $A\gets N(v)$\;
  $t\gets 0$\;
  \While{$A\neq\emptyset$ and $|\tau|<K$}{
      decrease $t$ till $u:=\mathcal{S}(t)\in A$\;
      $A\gets A-\{u\}$\;
      find the deepest vertex $\alpha'$ in $\tau$ with label $(u',\cdot)$ such that $u\in N(u')$; tie-breaking by lexicographical order of the vertex label\; \label{Alg:glauber-witness-line-tree}
      insert, as a child of $\alpha'$, a fresh vertex with label $(u,r_t)$\;
      \If{$r_t=\perp$}{$A\gets A\cup N(u)$\;}
  }
  \Return $\tau$ and $A$\; 
\end{algorithm}

\begin{remark}[Compare to the Moser-Tardos witness trees]
    The witness trees defined in \Cref{def:witness-tree} and constructed by \Cref{Alg:glauber-witness} , apart from the additional random choice labels, are the same as the witness trees constructed in the analysis of the Moser-Tardos algorithm for algorithmic Lov\'{a}sz Local Lemma~\cite{MT10}. Moreover, \Cref{Alg:glauber-witness} resembles the procedure for constructing witness trees out of an execution log for the Moser-Tardos algorithm. This is because this specific combinatorial structure succinctly captures a ``local total ordering", which is a chronological total ordering between all neighbouring vertices/events encountered in $\appresolverandom(v, K)$ or the Moser-Tardos algorithm. 
\end{remark}

The construction of the witness tree in \Cref{Alg:glauber-witness} only depends on the input graph $G$, the scan sequence $\+S$ and the random choices $\{r_t\}_{t\in \mathbb{Z}_{\leq 0}}$. 
In other words, it is independent of the implementation of the algorithm as long as it simulates the Glauber dynamics. 
In our setting, the random choices are samples from the lower bound distribution $\mu^{\mathrm{LB}}$, where a vertex no more requires any recursion if it samples anything other than $\perp$ from $\mu^{\mathrm{LB}}$. 
The set $A$ collects the vertices that are needed in order to deduce the correct marginal of current unresolved vertices, including the starting vertex. 
When the construction halts with $A=\emptyset$, the marginal of the root only depends on samples from the padding distribution for those receiving $\perp$ from the lower bound distribution. 
This motivates the definition of a full witness tree. 
\begin{definition}
  A witness tree $\tau$ constructed in \Cref{Alg:glauber-witness} is called \emph{full}, if the set $A$ is empty upon the termination of the algorithm. 
\end{definition}

We then collect some simple properties of the above process. 
First, inside the \textbf{while}-loop, one of the neighbours of the current vertex $u$ needs to be visited in prior so that $u$ can join the set $A$, and hence:
\begin{observation}
\Cref{Alg:glauber-witness} is well-defined.
Specifically, such vertex in \Cref{Alg:glauber-witness-line-tree} always exists. 
\end{observation}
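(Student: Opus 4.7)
The proof plan is to establish the invariant that at every iteration of the while loop, each element of the set $A$ has at least one neighbour in $G$ whose vertex label appears on some node currently in the tree $\tau$. Once this invariant is verified, the existence of $\alpha'$ in Line~\ref{Alg:glauber-witness-line-tree} follows immediately, since upon selecting $u\in A$, some neighbour $u'$ of $u$ is guaranteed to be present in $\tau$, and one can take any (e.g.\ the deepest, breaking ties lexicographically) such occurrence.

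I would argue the invariant by induction on the number of iterations of the while loop. For the base case, after initialisation the tree $\tau$ contains exactly the root labelled $v$, and the set $A$ is initialised as $N(v)$; clearly every $u\in A=N(v)$ has $v\in N(u)$ with $v$ present in $\tau$. For the inductive step, suppose the invariant holds at the start of some iteration. The iteration picks $u\in A$ (which is possible since $A\neq\emptyset$ by the loop condition), removes $u$ from $A$, and inserts a new node labelled $(u,r_t)$ into $\tau$. After this insertion, $u$ itself is in $\tau$. The set $A$ may then be augmented by $N(u)$ (in the case $r_t=\perp$); every newly added vertex $w\in N(u)$ satisfies $u\in N(w)$ and $u$ is now in $\tau$, so the invariant is preserved for the new elements. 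For the elements of $A$ that were already there and were not removed, their witnessing neighbours in $\tau$ from the previous iteration remain in $\tau$ (the algorithm never removes nodes from $\tau$), so the invariant continues to hold.

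The only subtlety worth addressing is that the \textbf{while} loop selects a particular $u\in A$ by scanning backwards in time along $\+S$; one should check that this selection is well-defined, i.e.\ that the backwards scan indeed encounters some element of $A$. Since $\+S$ is (almost surely) a surjection onto $V$ when generated uniformly at random as in Line~\ref{Line:appresolve-random-gen} of \Cref{Alg:appresolve-random}, every vertex of $V$ is visited infinitely often by $\+S$ at negative times, so the decrement of $t$ terminates with some $u=\+S(t)\in A$. Combining this with the invariant yields the observation. This is a routine bookkeeping argument and no substantive obstacle is anticipated; the main care needed is simply making explicit that the tree never shrinks and that additions to $A$ are always neighbours of the most recently inserted tree node.
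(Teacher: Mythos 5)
Your proof is correct and follows essentially the same reasoning as the paper, which justifies the observation in one line by noting that a vertex can only join $A$ after one of its neighbours has been inserted into $\tau$; your invariant-plus-induction argument simply makes this explicit. The extra remark about the backwards scan terminating (because the random scan sequence almost surely visits every vertex infinitely often) is a reasonable additional point that the paper leaves implicit.
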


We then argue that there is no need for further tie-breaking among vertices of the same level sharing the same vertex label in \Cref{Alg:glauber-witness-line-tree}. 
\begin{lemma} \label{lem:witness-tree-no-same-level}
No two distinct vertices $\alpha$ and $\beta$ of the same level in $\tau$ share the same vertex label. 
\end{lemma}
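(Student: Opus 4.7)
The plan is to establish a strictly stronger claim: for any vertex label $u \in V$, the vertices in $\tau$ carrying label $(u,\cdot)$ have strictly increasing depths when listed in their order of insertion by \Cref{Alg:glauber-witness}. The lemma then follows immediately, since vertices at the same depth in $\tau$ would have to be inserted at distinct times and therefore must carry distinct vertex labels.

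The strengthened claim reduces to the following single step, applied inductively. Suppose $\alpha \in \tau$ is the most recently inserted vertex with label $(u,\cdot)$, inserted at depth $d$ while processing timestamp $t_\alpha$, and let $\beta$ be the very next vertex carrying label $(u,\cdot)$, inserted at some later iteration corresponding to timestamp $t_\beta < t_\alpha$. Then $\mathrm{depth}(\beta) > d$. To prove this I would trace the evolution of the set $A$: immediately after $\alpha$'s insertion the algorithm removes $u$ from $A$, so for the next $u$-labeled vertex $\beta$ to be inserted, $u$ must reappear in $A$ in between, and inspection of the algorithm shows that the only way this can happen is through the execution of the branch $A \gets A \cup N(w)$ triggered by the insertion of some intermediate vertex $\gamma$ with label $(w,\perp)$ where $w \in N(u)$.

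At the moment $\gamma$ is inserted, $\alpha$ is already present in $\tau$ at depth $d$ with label $u \in N(w)$, so $\alpha$ is a valid candidate for the role of $\gamma$'s parent in \Cref{Alg:glauber-witness-line-tree}. The algorithm selects the deepest such candidate, hence $\mathrm{depth}(\gamma) \geq d+1$. Next, when $\beta$ is inserted, $\gamma$ is still in $\tau$ at depth $\geq d+1$ with label $w \in N(u)$, so $\gamma$ is a candidate parent for $\beta$; again taking the deepest candidate we conclude $\mathrm{depth}(\beta) \geq d+2 > d$, as required. Iterating this step gives the strengthened claim, and hence the lemma.

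The only subtle point, and the one worth double-checking carefully in the write-up, is the bookkeeping of $A$: one has to verify that no other mechanism in \Cref{Alg:glauber-witness} can reintroduce $u$ into $A$, and that the ``deepest candidate'' rule indeed forces $\gamma$ (and then $\beta$) to lie below $\alpha$. This is where the lexicographic tie-breaking mentioned in \Cref{Alg:glauber-witness-line-tree} is in principle relevant, but since the argument above only relies on the depth being the deepest among candidates, tie-breaking among equal-depth candidates does not affect the conclusion.
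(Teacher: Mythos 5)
Your proposal is correct and follows essentially the same route as the paper's proof: both identify the intermediate vertex $\gamma$ with label $(w,\perp)$, $w\in N(u)$, that must reintroduce $u$ into $A$ between the two insertions, and both use the ``deepest candidate'' parent rule twice to force $\mathrm{depth}(\gamma)\ge d+1$ and then $\mathrm{depth}(\beta)\ge d+2$. The only cosmetic issue is the sentence deducing the lemma from the strengthened claim (distinct insertion times do not by themselves give distinct labels); the intended deduction is that equal labels force strictly different depths, so same-level vertices must have different labels.
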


\begin{proof}
Given a vertex $\alpha$ in the tree, let $t(\alpha)$ be the variable $t$ in the algorithm when $\alpha$ is inserted. 
Suppose $\alpha$ and $\beta$ receive the same vertex label $u$ and $t(\alpha)<t(\beta)\leq 0$. 
As soon as $\beta$ gets added in the tree, the vertex $u$ is no more present in the set $A$. 
However, as $t$ decreases, $\alpha$ joins the tree with the same vertex label.
This means there must be some time $t(\alpha)<t<t(\beta)$ when the algorithm visits some $u'$ with $u\in N(u')$, detects the random choice being $\perp$, and introduces $u$ into the set $A$. 
Let $\gamma$ be the vertex in the tree created at this step. 
By definition, $\gamma$ has label $(u',\perp)$ and must be deeper than $\beta$ in the tree. 
Then as the algorithm progresses to time $t(\alpha)$, any parent that $\alpha$ can choose must be either the same or deeper level of $\gamma$. 
This implies that $\alpha$ must be deeper than $\beta$ in the tree. 
\end{proof}

Finally, note that if the tree is full, then the algorithm terminates with empty $A$. 
This implies the following observation. 
\begin{observation} \label{lem:witness-tree-full}
If $\tau$ is a full witness tree, then it holds that: for any vertex $v\in V$ in the original graph $G$, if there exists a vertex $\alpha\in\tau$ labelled $(v,\perp)$, then for every $u\in N(v)$, there must be some other vertex $\beta\in\tau$ with vertex label $u$ that is deeper than $\alpha$. 
\end{observation}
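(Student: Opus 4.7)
My plan is to trace the fate of the neighbourhood $N(v)$ inside the set $A$ maintained by \Cref{Alg:glauber-witness}. The key observations are that $A$ is the \emph{only} bookkeeping for ``pending'' vertices, that $u$ can leave $A$ only by being inserted into $\tau$, and that every time such an insertion happens the algorithm picks the deepest currently valid parent.

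First I would fix the vertex $\alpha\in\tau$ with label $(v,\perp)$ and any $u\in N(v)$, and look at the iteration of the \textbf{while}-loop in which $\alpha$ is created. By the branch after inserting $\alpha$, the algorithm executes $A\gets A\cup N(v)$, so immediately after this step $u\in A$. Now use the hypothesis that $\tau$ is full: the loop exits with $A=\emptyset$, so there must be a later iteration (i.e.\ some time $t'<t(\alpha)$) at which $u$ is removed from $A$. Inspecting \Cref{Alg:glauber-witness}, the only place where an element leaves $A$ is the line $A\gets A-\{u\}$, which is reached precisely when the algorithm decreases $t$ until $\+S(t')=u$ and then immediately inserts a new tree vertex $\beta$ with label $(u,r_{t'})$. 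This already gives the existence of some $\beta\in\tau$ with vertex label $u$ created after $\alpha$.

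It remains to argue that $\beta$ is strictly deeper than $\alpha$. By \Cref{Alg:glauber-witness-line-tree}, the parent $\alpha'$ of $\beta$ is the deepest vertex in the current $\tau$ whose vertex label $u'$ satisfies $u\in N(u')$. Since $u\in N(v)$ means $v\in N(u)$, the vertex $\alpha$ (which has label $v$ and already lies in $\tau$ at this moment because $t(\alpha)>t'$) is a candidate for $\alpha'$. Consequently $\mathrm{depth}(\alpha')\geq\mathrm{depth}(\alpha)$, and therefore $\mathrm{depth}(\beta)=\mathrm{depth}(\alpha')+1>\mathrm{depth}(\alpha)$, as claimed.

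I do not anticipate a real obstacle here; the statement is essentially a bookkeeping consequence of the definitions. The only subtle point to be careful about is to make sure that when the algorithm removes $u$ from $A$ at time $t'$, the vertex $\alpha$ has genuinely been inserted by that moment (so that it is eligible to serve as $\alpha'$), which is immediate from $t'<t(\alpha)$ and the fact that tree vertices, once added, are never removed. Together with \Cref{lem:witness-tree-no-same-level}, this also implicitly ensures that $\beta\neq\alpha$, completing the proof.
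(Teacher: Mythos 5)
Your proof is correct and fills in, with full detail, exactly the argument the paper only sketches in one sentence before stating the observation: fullness forces $u$ to leave $A$ after $\alpha$ inserts it, the only removal mechanism creates a vertex $\beta$ labelled $u$, and the deepest-eligible-parent rule in \Cref{Alg:glauber-witness-line-tree} (with $\alpha$ already present as a candidate parent) forces $\mathrm{depth}(\beta)>\mathrm{depth}(\alpha)$. The only cosmetic point is that when $\alpha$ is the root it is created before the \textbf{while}-loop (with $A\gets N(v)$ at initialisation rather than $A\gets A\cup N(v)$), but the argument is unchanged.
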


Imagine there are three updates taking place at vertices $w,w,z$ chronologically where $w\in N(z)$. 
The spin for the first update on $w$ gets overwritten by the second one, which the afterward update on $z$ depends on. 
This motivates us to extract the actual dependency out from the witness tree, which yields a directed acyclic graph (DAG). 
We formalise the construction of such dependency DAG as below. 
(In the algorithm below, as $W$ and $\tau$ shares the same vertex set, we may use the same variable to refer to the vertex in either structures, depending on the context.)
\begin{algorithm}
\caption{Witness dependency graph} \label{Alg:glauber-dependency}
  \SetKwInput{KwPar}{Parameter}
\KwIn{witness tree $\tau$ with a fixed root} 
\KwOut{a dependency DAG}
$B\gets\emptyset$\;
initialise $W$ with $\tau$'s vertex set\;
\ForEach{vertex $\alpha$ labelled $(v,r)$ in the tree such that $r=\perp$}{
\ForEach{$u\in N(v)$}{
find the shallowest vertex $\beta$ labelled $u$ that is deeper than $\alpha$ in $\tau$\; \label{Alg:glauber-dependency-line-find}
\eIf{such $\beta$ exists}{
add an arc $\beta\to\alpha$ into $W$\;
}
{
$B\gets B+u$\;
}
}
}
\Return $W$ and $B$\;
\end{algorithm}

The above process is deterministic; that is, the algorithm outputs a unique DAG for a given witness tree. 
Observe that \Cref{lem:witness-tree-no-same-level} implies the $\beta$ found in \Cref{Alg:glauber-dependency-line-find}, if exists, is always unique. 
Moreover, such $\beta$ always exists if $\tau$ is full due to \Cref{lem:witness-tree-full}.
This observation can be formalised and generalised as follows. 

\begin{lemma} \label{lem:dependency-reconstruction}
Suppose $\tau$ is a tree returned by \Cref{Alg:glauber-witness} with the set $A$. 
Then with input $\tau$, the set $B$ returned by \Cref{Alg:glauber-dependency} is the same as $A$. 
\end{lemma}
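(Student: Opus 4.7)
The plan is to prove $A \subseteq B$ and $B \subseteq A$ separately, both reducing to one structural observation about the relative depths of tree vertices inserted at different algorithmic times. Note that in Algorithm~\ref{Alg:glauber-witness} the variable $t$ strictly decreases, so each tree vertex has a well-defined insertion time $t(\alpha) \leq 0$, and ``$\beta$ inserted after $\alpha$'' means $t(\beta) < t(\alpha)$.

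The key observation is the following \emph{depth-monotonicity property}: if $\alpha$ and $\beta$ are tree vertices of $\tau$ with vertex labels $v$ and $w$ satisfying $w \in N(v)$, and $t(\beta) < t(\alpha)$, then $\beta$ is strictly deeper than $\alpha$ in $\tau$. Indeed, when $\beta$ is inserted, the algorithm picks its parent to be the deepest existing vertex labelled with some neighbour of $w$. Since $\alpha$ already exists and its vertex label $v$ lies in $N(w)$, the parent of $\beta$ has depth at least $\operatorname{depth}(\alpha)$, hence $\operatorname{depth}(\beta) > \operatorname{depth}(\alpha)$.

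For the inclusion $A \subseteq B$: fix $u \in A$ at termination and let $\alpha$ be the tree vertex whose creation triggered the \emph{last} addition of $u$ to $A$. Such an $\alpha$ exists and has random-choice label $\perp$ with its vertex label $v$ satisfying $u \in N(v)$ (when $t(\alpha) = 0$, this is the root with $r_0 = \perp$; otherwise $\alpha$ was added in the $\perp$-branch of the loop). After the insertion of $\alpha$, the scan sequence never visits $u$ while $u \in A$ — otherwise $u$ would be picked, removed, and a subsequent re-addition would contradict maximality. Consequently, no tree vertex labelled $u$ is inserted at any $t < t(\alpha)$, and any $u$-labelled vertex $\gamma$ in $\tau$ satisfies $t(\gamma) > t(\alpha)$. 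Applying the depth-monotonicity property with the roles reversed, $\gamma$ is strictly shallower than $\alpha$. Hence the search on Line~\ref{Alg:glauber-dependency-line-find} of Algorithm~\ref{Alg:glauber-dependency} for a vertex labelled $u$ deeper than $\alpha$ fails and $u$ is placed into $B$.

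For the reverse $B \subseteq A$: let $u \in B$, witnessed by $\alpha$ labelled $(v, \perp)$ with $u \in N(v)$ and no $u$-labelled vertex deeper than $\alpha$ in $\tau$. When $\alpha$ was created, Algorithm~\ref{Alg:glauber-witness} executed $A \gets A \cup N(v)$, inserting $u$ into $A$. Suppose for contradiction that $u \notin A$ at termination. Then $u$ must have been picked and processed at some time $t < t(\alpha)$, producing a tree vertex $\gamma$ labelled $u$. Depth-monotonicity then gives $\operatorname{depth}(\gamma) > \operatorname{depth}(\alpha)$, contradicting the hypothesis on $\alpha$. The main subtlety I expect to manage is in the $A \subseteq B$ direction, namely pinning down the ``last responsible'' $\alpha$ and arguing via the bookkeeping of $A$ that $u$ is never re-added after $\alpha$; everything else reduces cleanly to the single depth-monotonicity lemma.
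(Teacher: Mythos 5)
Your proof is correct, but it takes a genuinely different route from the paper's. The paper proves \Cref{lem:dependency-reconstruction} by induction on the size of the witness tree: it peels off the last-inserted leaf $\gamma$, applies the inductive hypothesis to the smaller tree, and tracks how the set $B$ changes when $\gamma$ is re-attached. You instead prove the two inclusions $A\subseteq B$ and $B\subseteq A$ directly, each reduced to an explicit \emph{depth-monotonicity} lemma (a later-inserted vertex whose label neighbours that of an earlier vertex must be strictly deeper, because the parent-selection rule in \Cref{Alg:glauber-witness-line-tree} always attaches to the deepest eligible vertex). That same structural fact is the hidden engine of the paper's induction --- it is what guarantees the new leaf sits below every vertex carrying a neighbouring label --- but you isolate it as a standalone lemma and dispense with the induction. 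Your version buys two things: it avoids the (implicit, unstated) point that the tree with its last leaf removed is itself a valid output of \Cref{Alg:glauber-witness}, and it makes the bookkeeping of when $u$ enters and leaves $A$ fully explicit, including the truncation case where the loop stops at $|\tau|\ge K$ with $A$ nonempty. The paper's induction, in turn, is shorter and generalises more mechanically to the subsequent claim that the DAG captures the correct dependencies. One small point worth stating in a polished write-up: depth-monotonicity uses the symmetry $w\in N(v)\iff v\in N(w)$, which holds here because $G$ is undirected.
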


\begin{proof}
Proof by induction. 
If $\tau$ is of size $1$, then the lemma holds trivially. 
Assume this lemma holds for all witness trees of size at most $L-1$. 
Let $\tau$ be a witness tree of size $L$. 
Consider the last vertex $\gamma$ that is added into $\tau$ during \Cref{Alg:glauber-witness}, whose label is $(v,r)$. 
If $r=\perp$, then the addition of $\gamma$ in \Cref{Alg:glauber-witness} updates $A\gets A\cup N(v)-\{v\}$. 
The rule of adding $\gamma$ into $\tau$ ensures that $\gamma$ is below any vertex with vertex label from $\gamma$. 
Therefore, in \Cref{Alg:glauber-dependency} any $\alpha$ that causes $v$ to join $B$ in the $L-1$ case now finds such $\gamma$, and hence $v$ is absent in $B$. 
However, any vertex below $\gamma$ does not have any vertex label in $N(v)$, causing \Cref{Alg:glauber-dependency-line-find} fails for all vertices in $N(v)$. 
This introduces $N(v)$ into $B$. 
Hence, $A$ and $B$ agree with each other. 
The case that $r\neq\perp$ is identical to the argument above but without introducing $N(v)$. 

This lemma then follows by the principle of induction. 
\end{proof}

A similar induction also shows that the DAG constructed captures the correct dependency introduced during the recursion of the original algorithm. 
Moreover, it can be shown that
\begin{lemma}\label{lem:full-dependency}
Suppose $\tau$ is a full witness tree, and $W$ is the DAG constructed by \Cref{Alg:glauber-dependency} with input $\tau$. 
Then any vertex in $W$ with in-degree $0$ has a non-$\perp$ random choice label. 
\end{lemma}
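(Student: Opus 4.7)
The plan is to establish the contrapositive: every vertex $\alpha$ in $W$ whose random choice label is $\perp$ must have positive in-degree. Fix such an $\alpha$ carrying label $(v,\perp)$. I will exploit the fullness of $\tau$ through \Cref{lem:witness-tree-full}, which supplies, for every neighbour $u\in N(v)$ of $v$ in the underlying graph $G$, a witness-tree vertex with vertex label $u$ sitting strictly deeper than $\alpha$. This is precisely the condition that the inner \textbf{foreach}-loop of \Cref{Alg:glauber-dependency} tests at \Cref{Alg:glauber-dependency-line-find}, so for every $u\in N(v)$ the search succeeds, some $\beta$ is returned, and an arc $\beta\to\alpha$ is inserted into $W$.

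To make the counting clean I will invoke \Cref{lem:witness-tree-no-same-level}, which ensures that at each depth at most one vertex can carry any given vertex label, so that the ``shallowest $\beta$ labelled $u$ deeper than $\alpha$'' is unambiguously defined. Consequently distinct choices of $u\in N(v)$ pick out distinct $\beta$'s, and $\alpha$ acquires exactly $|N(v)|$ incoming arcs in $W$. Whenever $v$ has at least one neighbour in $G$ this already gives in-degree $\ge 1$, and the contrapositive yields the lemma.

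The only subtlety worth flagging is the degenerate case $N(v)=\emptyset$. Here \Cref{Alg:glauber-witness} terminates immediately after inserting the root, because the set $A$ is initialised as $N(v)=\emptyset$ as soon as $r_0=\perp$, so no genuine dependency analysis is ever triggered and the isolated-vertex situation can be handled trivially outside the statement. I do not anticipate any real obstacle: the lemma is essentially a direct unfolding of the definition of \Cref{Alg:glauber-dependency} once \Cref{lem:witness-tree-full} is at hand, with all the technical work already done by the earlier structural lemmas on witness trees.
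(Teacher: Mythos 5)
Your proof is correct and follows the same underlying idea as the paper's: fullness of $\tau$ forces the search in \Cref{Alg:glauber-dependency-line-find} to succeed for every $u\in N(v)$, so each $\perp$-labelled vertex receives at least one incoming arc. The only difference is the supporting fact you cite: the paper routes through \Cref{lem:dependency-reconstruction} (the returned set $B$ equals the set $A$ of \Cref{Alg:glauber-witness}, which is empty for a full tree), whereas you invoke \Cref{lem:witness-tree-full} directly; the two are interchangeable here, and your appeal to \Cref{lem:witness-tree-no-same-level} for uniqueness of the chosen $\beta$ is harmless though not needed for the in-degree bound. Your remark about the degenerate case $N(v)=\emptyset$ is a fair catch — an isolated root with label $(v,\perp)$ would in fact have in-degree $0$, an edge case the paper's proof silently ignores — and excluding it explicitly, as you do, is reasonable.
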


\begin{proof}
By $\tau$ is a full witness tree, the set $A$ returned by \Cref{Alg:glauber-witness} is empty. By \Cref{lem:dependency-reconstruction}, \Cref{Alg:glauber-dependency-line-find} always succeeds for vertices with random choice labels $\perp$. 
Thus, they have in-degree at least $1$. 
\end{proof}

\subsubsection{Derandomisation}

Harnessed with the witness tree and dependency DAG, we are now able to carry out the derandomisation efficiently. 
Assume each vertex in the infinite scan sequence $\mathcal{S}$ is independently drawn uniformly at random, and the random choices are drawn independently subject to the lower bound distribution. 
Rather than enumerating over all possible scan sequences and the random choices, we turn to enumerating the witness tree defined above in order to derandomise the random scan Glauber dynamics. 
However, different witness trees may emerge from the process with different probabilities, and this should be coped with first. 

Denote the witness tree generated by the process starting at $v$ as $\tau^v$. 
We provide an algorithm that, given a witness tree $\tau$, computes the probability of a witness tree $\tau$ emerging, namely $\Pr{\tau^v=\tau}$. 
This can be done by the following dynamic programming. 
\begin{itemize}
    \item Boundary case: If $\tau$ consists merely the root, then enumerate over all possible states $r$ of the lower bound distribution (including $\perp$), and set $f(\tau)=\Pr{\tau^v=\tau}=\mu^{\mathrm{LB}}(r)$. 
    \item Recursion: For each tree $\tau$ of size $K$, suppose the probabilities of all size-$(K-1)$ and degree-$\Delta$ trees have been computed. 
    Set $f(\tau)\gets 0$. 
    Enumerate over all its leaves and do the following: 
    \begin{itemize}
        \item Suppose the current leaf is labelled $(u,r)$. Remove the current leaf from $\tau$ to obtain $\tau'$. 
        \item Run \Cref{Alg:glauber-dependency} on $\tau'$ to obtain $B$. 
        \item If $v\in B$, then accumulate the current probability by
        \[
        f(\tau)\gets f(\tau)+\Pr{\tau^v=\tau'}\cdot|B|^{-1}\cdot\mu^{\mathrm{LB}}(r).
        \]
    \end{itemize}
    Then let $\Pr{\tau^v=\tau}=f(\tau)$. 
\end{itemize}

Now fix a \emph{full} witness tree $\tau$. 
Conditional independence together with \Cref{lem:full-dependency} implies that 
conditioned on such tree generated, the marginal of the spin 
conditioned on such tree generated, the marginal of the spin then depends solely on the DAG generated based on $\tau$.
Thus it is left for us to compute $\Pr{\sigma(v)=i\mid\tau^v=\tau}$. 

Recall that the algorithm needs to sample from the padding distribution for every internal vertex of $\tau$, and the dependency is captured by the DAG generated by \Cref{Alg:glauber-dependency}.  
We then enumerate over all possible configurations of the internal nodes other than the root, which receives a fixed spin $i$, and compute the probability of one such configuration by a simple traversal of the DAG in topological order. 

\sloppy
Assembling both parts, the following algorithm computes the output distribution of $\appresolverandom(v,K)$. 
\begin{itemize}
\item Enumerate, in the order of size, \emph{all} possible trees $\tau$ of size $\leq K$ and degree $\Delta$ with root labeled $v$, and for each of them, do the followings:
\begin{itemize}
    \item Compute $\Pr{\tau^v=\tau}$. 
    \item If $\tau$ is full, then for each spin $i$, compute $\Pr{\sigma(v)=i\mid\tau^v=\tau}$. 
        Accumulate the marginal of spin $i$ at $v$ by $\Pr{\tau^v=\tau}\Pr{\sigma(v)=i\mid\tau^v=\tau}$. 
\end{itemize}
\end{itemize}

\subsubsection{Running-time analysis}

We finally analyse the time efficiency of the above algorithm, and conclude \Cref{thm:derandomise-random-scan}. 
The key is to bound the number of possible subtrees of a certain size in a given graph. 
We need the following lemma. 
\begin{lemma}[\text{\cite[Lemma 2.1]{borgs2013left}}]\label{lem:subtree-bound}
  Let $G = (V, E)$ be a graph with maximum degree $\Delta$ and $v \in V$ be a vertex.  Then the number of subtrees of $G$ with $\ell$ nodes containing $v$ is at most $\frac{(\mathrm{e}\Delta)^{\ell-1}}{2}$.
\end{lemma}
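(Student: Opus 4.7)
My plan is to prove \Cref{lem:subtree-bound} by a standard generating-function argument combined with Lagrange inversion and a Stirling-type estimate to extract the explicit constant.

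First, I would set up the recursion. Let $a_\ell$ denote the maximum, over all graphs $G$ of maximum degree at most $\Delta$ and vertices $v \in V(G)$, of the number of subtrees of $G$ of size $\ell$ containing $v$. View each such subtree $T$ as rooted at $v$ and condition on the set of children of $v$ in $T$: there are $d \le \Delta$ of them, chosen from among the at most $\Delta$ neighbours of $v$, and the subtrees hanging off each child $u$ are themselves subtrees of a subgraph of maximum degree at most $\Delta$ rooted at $u$. This yields
\begin{equation*}
a_\ell \;\le\; \sum_{d=0}^{\Delta} \binom{\Delta}{d} \sum_{\substack{\ell_1,\dots,\ell_d \ge 1 \\ \ell_1 + \dots + \ell_d = \ell-1}} \prod_{i=1}^{d} a_{\ell_i}, \qquad \ell \ge 2,
\end{equation*}
with $a_1 = 1$.

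Second, I would pass to generating functions. Set $A(x) := \sum_{\ell \ge 1} a_\ell x^\ell$. The recursion gives the functional inequality $A(x) \le x\bigl(1 + A(x)\bigr)^{\Delta}$. By coefficient domination it suffices to bound the coefficients of the unique power-series solution $y(x)$ of $y = x(1+y)^{\Delta}$. Applying Lagrange inversion to $y = x \phi(y)$ with $\phi(y) = (1+y)^{\Delta}$ yields
\begin{equation*}
[x^\ell]\,y(x) \;=\; \frac{1}{\ell}\,[y^{\ell-1}]\,(1+y)^{\Delta \ell} \;=\; \frac{1}{\ell}\binom{\Delta \ell}{\ell - 1}.
\end{equation*}
Hence $a_\ell \le \frac{1}{\ell}\binom{\Delta \ell}{\ell - 1}$.

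Third, I would convert this into the explicit form stated in the lemma. Using $\binom{\Delta\ell}{\ell-1} \le \frac{(\Delta\ell)^{\ell-1}}{(\ell-1)!}$ together with Stirling's inequality $(\ell-1)! \ge \sqrt{2\pi(\ell-1)}\bigl((\ell-1)/e\bigr)^{\ell-1}$, a routine simplification gives
\begin{equation*}
\frac{1}{\ell}\binom{\Delta\ell}{\ell-1} \;\le\; (e\Delta)^{\ell-1} \cdot \frac{e}{\ell\sqrt{2\pi\ell}}\cdot\Bigl(\tfrac{\ell}{\ell-1}\Bigr)^{\ell-1},
\end{equation*}
and the factor $\frac{e}{\ell\sqrt{2\pi\ell}}\bigl(\tfrac{\ell}{\ell-1}\bigr)^{\ell-1}$ is bounded by $1/2$ for all $\ell \ge 2$ (since $(\ell/(\ell-1))^{\ell-1} \to e$ while $1/(\ell\sqrt{\ell})$ decays faster). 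The small cases $\ell = 2,3$ I would verify by a direct count: there are at most $\Delta$ subtrees of size $2$ and at most $\binom{\Delta}{2}+\Delta(\Delta-1)$ subtrees of size $3$, both comfortably below the claimed bound.

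The main obstacle is purely the third step: extracting the \emph{exact} constant $1/2$ rather than some other $O(1)$ constant, since the Lagrange-inversion bound $\frac{1}{\ell}\binom{\Delta\ell}{\ell-1}$ is tight in order but the conversion to the $(e\Delta)^{\ell-1}$ form loses a $\Theta(1/\ell^{3/2})$ factor that must be carefully balanced against the constant. Once this Stirling calibration is set up and the first few $\ell$ are handled by inspection, the remainder is straightforward.
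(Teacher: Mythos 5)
Your proposal is correct, but note that the paper does not actually prove \Cref{lem:subtree-bound} at all: it is imported wholesale by citation to Borgs, Chayes, Kahn and Lov\'asz, so any proof you give is necessarily "a different route from the paper". What you supply is the standard self-contained derivation: decompose a subtree rooted at $v$ into the pieces hanging off its children, which gives the recursion $a_\ell \le \sum_d \binom{\Delta}{d}\sum_{\ell_1+\cdots+\ell_d=\ell-1}\prod_i a_{\ell_i}$, hence $A(x)\le x(1+A(x))^{\Delta}$ coefficientwise; Lagrange inversion for $y=x(1+y)^{\Delta}$ then yields $a_\ell\le\frac{1}{\ell}\binom{\Delta\ell}{\ell-1}$ (this is in effect the comparison with the infinite $\Delta$-branching tree that underlies the original argument), and Stirling converts this into the stated $(\mathrm{e}\Delta)^{\ell-1}/2$ form. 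The decomposition, the injectivity of the map from a subtree to (children set, pieces), and the coefficient-domination induction are all sound, and the argument covers both readings of "subtree" (tree subgraphs dominate connected induced subgraphs). Two small calibration points: your parenthetical claim that the correction factor $\frac{\mathrm{e}}{\ell\sqrt{2\pi\ell}}\bigl(\tfrac{\ell}{\ell-1}\bigr)^{\ell-1}$ is at most $\tfrac12$ for \emph{all} $\ell\ge 2$ is false at $\ell=2$ (it is about $0.77$ there), but your direct counts for $\ell=2,3$ close exactly this hole, and for $\ell\ge 3$ (certainly $\ell\ge 4$, using $\bigl(\tfrac{\ell}{\ell-1}\bigr)^{\ell-1}\le \mathrm{e}$) the factor is indeed below $\tfrac12$, so the proof goes through for all $\ell\ge 2$. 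Separately, the lemma as transcribed in the paper is vacuously false at $\ell=1$ (one subtree versus a bound of $\tfrac12$); both the cited source and your argument should be read as asserting the bound for $\ell\ge 2$, exactly as in the analogous $2$-tree counting lemma quoted earlier in the paper.
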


The outer loop of the algorithm, namely enumerating over all possible witness trees of size $\leq K$, is done by noticing that all possible $\tau$ satisfy $\abs{V(\tau)}\leq K$. 
Note that considering only the vertex labels, all possible witness trees exist as a subtree of an infinite rooted tree with root labelled $v$, 
and all $\abs{N(u)}$ children of a node labelled $u$ receive distinct labels from the set $N(u)$. 
Therefore by \Cref{lem:subtree-bound}, there are at most $(\mathrm{e}\Delta)^{K}$ possible witness trees (without distinguishing the random choice labels).
We can then enumerate all possible random choice labels, resulting in at most $((q+1)\mathrm{e}\Delta)^{K}$ possibilities. 
Hence, there are $\poly(q^K,\Delta^K)$ possibilities to enumerate. 

For each of them, the first step where we compute $\Pr{\tau^v=\tau}$, which takes $\poly(q^K,\Delta^K)$ time due to dynamic programming that ranges over all possible smaller witness trees than the one being processed. 
The second step, computing $\Pr{\sigma(v)=i\mid\tau^v=\tau}$, requires us to enumerate all possible spins from the support of padding distribution, and each enumeration requires one topological traversal. 
This, in total, is also $\poly(q^K,\Delta^K)$. 
\Cref{thm:derandomise-random-scan} then follows.

\section{Concluding remarks}

In this paper, we propose a new framework for derandomising MCMC algorithms.
We introduce a method called {coupling towards the past} (CTTP) for evaluating a small amount of variables in their stationary states without simulating the entire chain,
which gives light-weight samplers that can draw from marginal distributions. 
Under strong enough marginal lower bound guarantees, 
CTTP terminates in logarithmic steps with high probability.
This provides a direct-sum style decomposition of the randomness used in MCMC sampling: 
a marginal sampler that can draw one random variable using only $O(\log n)$ random bits 
is extracted from an existing machinery for generating $n$ jointly distributed random variables using $O(n\log n)$ random bits.
A direct consequence to this is that, 
derandomising such marginal sampler becomes easy in polynomial time by a brute-force enumeration of all possible random choices,
which gives efficient deterministic counting algorithms via standard self-reductions.

As concrete applications, we obtain efficient deterministic approximate counting algorithms for hypergraph independent sets and hypergraph colourings, in regimes matching the state-of-the-art achieved by randomised counting/sampling algorithms. 

The current work makes the first step towards the goal of derandomising general Markov chain Monte Carlo algorithms. 
We summarize some challenges to the current framework which may lead to interesting future directions:

\begin{itemize}
    \item 
%

      Our current CTTP method relies crucially on the marginal lower bound being non-trivial, namely \Cref{condition:sufficient-correctness}.
      This condition seems necessary for our current implementation (see \Cref{rem:CTTP}),
      and it restricts problems our method can be applied to.
      For example, for graph colourings, no such lower bound holds,
      and yet efficient deterministic algorithms exist \cite{LSS22}.
      Not only that, the algorithm in \cite{LSS22} requires $q>2\Delta$,
      where $q$ is the number of colours and $\Delta$ is the maximum degree of the graph.
      This is close to the condition $q>(11/6-\epsilon)\Delta$, where $\epsilon\approx 10^{-5}$ is a small constant, for the best randomised algorithms \cite{chen2019improved}.
      It would be very interesting to find alternative ways to implement CTTP without marginal lower bounds to match the bounds from other methods.
      One idea is to expand $\perp$ to a set of possible values, similar to bounding chains \cite{huber2004perfect},
      and yet for this modified method to be efficient it apparently requires a condition of the order $q=\Omega(\Delta^2)$.

      In fact, our CTTP method implies perfect sampling algorithms,
      and the bound $q=\Omega(\Delta^2)$ (for the modified implementation) matches other general purpose perfect sampling algorithms \cite{huber2004perfect,FGY22,AJ22} applied to graph colourings.
      With refined techniques specific to this problem,
      the bounding chain approach can be made efficient under the condition $q \ge (8/3+o(1))\Delta$ \cite{BC20,JSS21}.
      It would be interesting to explore if these refined techniques can help our method as well.
    \item 
      Even if marginal lower bounds exist,
      our method still has a constant or lower order term gap in the conditions comparing to randomised algorithms.
      This is true for the two applications we consider in this paper,
      as is for problems such as estimating the partition function of the hardcore model.
      In the latter problem, efficient deterministic algorithms \cite{Wei06} work all the way up to the computational complexity transition threshold,
      and actually predate randomised counterparts with matching bounds \cite{ALO20,CLV21,chen2021rapid,anari2022entropic,CE22,CFYZ22}.
      However, none of these results use coupling techniques, which our method crucially relies on.
      An interesting direction is to find CTTP matching these results,
      and one potential lead is through finding more refined couplings.
    \item
    Our current direct-sum style proof for the MCMC sampling, 
    effectively decomposes an $O(n\log n)$-step Markov chain to an $O(\log n)$-step marginal sampler, 
    whose random choices are enumerable in polynomial time.
    It is then a challenge to derandomise those MCMC algorithms with significantly higher mixing time bounds to have, 
    for example, low-cost marginal samplers for graph matchings or bipartite perfect matchings.
    \item 
    So far, all known deterministic approximate counting algorithms suffer from running time bounds whose exponents depend on the parameters of the problem and/or the instance.
    It is still wide open to give a deterministic approximate counting algorithm with a polynomial running time where the exponent of the polynomial is an absolute constant, 
    especially when the instance is close enough to the critical threshold. 
    We hope our framework of derandomising MCMC can be combined with some classical derandomisation techniques, e.g.~$k$-wise independence, or their variants, 
    to make progress towards such a breakthrough.
\end{itemize}

\addtocontents{toc}{\protect\setcounter{tocdepth}{0}}

\ifdoubleblind
\else{
\section*{Acknowledgement}
This project has received funding from the European Research Council (ERC) under the European Union's Horizon 2020 research and innovation programme (grant agreement No. 947778). We would like to thank Hongyang Liu for the discussion in the beginning of the project.
}

\addtocontents{toc}{\protect\setcounter{tocdepth}{1}}









\bibliographystyle{alpha}
\bibliography{refs.bib}

\appendix

\section{Construction of 2-block-tree}\label{app-A}
\begin{proof}[Proof of \Cref{lemma:existence-2-block-tree}]
The lemma is proved by a greedy algorithm similar to Algorithm~4 in~\cite{FGW22a}.
We can assume $|C| \geq \theta$ as otherwise the lemma holds trivially by setting $\ell = 0$.
Let $G_C=G[C]$ be the connected subgraph of $G$ induced by $C$.
For any $u  \in C$, we use $\Gamma_{G_C}(u)$ to denote the neighbourhood of $u$ in $G_C$.
For any $\Lambda \subseteq C$, define $\Gamma_{G_C}(\Lambda)=\{u \in C \setminus \Lambda \mid \exists w \in \Lambda \text{ s.t. } u \in \Gamma_{G_C}(w) \}$.

Let $\ell = 0$ and $R = C$. The algorithm repeats the following process until $R = \emptyset$:
\begin{enumerate}
	\item $\ell \gets \ell + 1$, if $\ell = 1$, let $u = v$, if $\ell > 1$, let $u$ be an arbitrary vertex in $\Gamma_{G_C}(C \setminus R)$;
	\item find an arbitrary connected component $C_{\ell}$ in $G[R]$ satisfying $|C_\ell| = \theta$ and $u \in C_{\ell}$;
	\item $R \gets R \setminus (C_{\ell} \cup \Gamma_{G_C}(C_{\ell}))$; 
	\item for all connected components $G'=(V',E')$ in $G[R]$ with $|V'| < \theta$, let $R \gets R\setminus V'$.
\end{enumerate}  
Output the set $\{C_1,C_2,\ldots,C_{\ell}\}$.


We first show that the above algorithm is valid. The vertex $u$ in Line (1) can be found because if $\ell = 1$, $u = v$; if $\ell > 1$, we know that $R \neq \emptyset$ and $R \neq C$ (some vertices are deleted in previous steps), since $G_C$ is connected, $\Gamma_{G_C}(C \setminus R)$ is not empty as otherwise $R \neq \emptyset$ and $C \setminus R \neq \emptyset$ are disconnected, which contradicts with the fact $G_C$ is connected.
We then prove that the component $C_\ell$ can be found in Line (2). Note that $u \in R$. If $\ell = 1$, $C_{\ell}$ exists because $|C| \geq \theta $ and $G_C$ is connected. If $\ell > 1$, by Line (4), we know that $R$ is a set of connected components, and each component has a size at least $\theta$, and thus $C_{\ell}$ exists.

We next show that the output is a $2$-block-tree in graph $G$. It is straightforward to see that each $C_i$ has size $\theta$ and is connected in graph $G_C$, and thus is connected in graph $G$.
Note that after we found $C_i$, we remove all vertices in $\Gamma_{G_C}(C_i)$ in Line (3). Hence, for any $C_i$ and $C_j$ with $i \neq j$, $\dist_{G_C}(C_i,C_j) \geq 2$. Since $C_i,C_j \subseteq C$, it holds that $\dist_{G}(C_i,C_j) \geq 2$.
Finally, fix $2 \leq i \leq \ell$. Consider the $i$-th iteration. Let $u_{i}$ be the vertex picked in Line (1), we show that there exists $w \in C_1\cup C_2\cup\ldots\cup C_{i-1}$ such that $\dist_G(w,u_{i}) = 2$, which implies that $\bigcup_{j=1}^{i}C_i$ is connected on $G^2$. 
Let $R$ denote the set $R$ at the beginning of the $i$-th iteration.
Since $u_i \in \Gamma_{G_C}(C \setminus R)$, we know that in graph $G_C$, $u_\ell$ has a neighbour in $v' \in C \setminus R$, which is one of the vertices deleted in Line (3) or Line (4). 
By Line (3), we know that for any $j < i$, when $C_j$ is removed from $R$, $\Gamma_{G_C}(C_j)$ is also removed, which implies for any $j < i$, $v' \notin C_j$. Next, we show that $v'$ cannot in any component removed in Line (4), which implies $v'$ must in $\Gamma_{G_C}(C_j)$ for some $j < i$. Hence, there is a vertex $w \in C_j$ such that $\dist_{G_C}(w,u_i) = 2$. 
Note that both $w,u_i \in C$. It holds that $\dist_{G}(w,u_i) = 2$.
Suppose $v' \in V'$ for some $|V'| < \theta$ in Line (4). Note that $u_i$ and $v'$ are adjacent, and $u_i \in R$ belongs to a component with size at least $\theta$, which implies $|V'|$ also belongs to a component with size at least $\theta$, this implies a contradiction.
We remark that we proved a stronger result:  any prefix $\{C_1,C_2,\ldots,C_{i}\}$ for $i \in [\ell]$ is a 2-block-tree in $G$.

Finally, we bound the size of the output 2-block-tree. In Line (3), we remove at most $\theta (\Delta + 1)$ vertices. For each $V'$ in Line (4), we claim that there exists $w \in V'$ such that $w$ is adjacent to $\Gamma_{G_C}(C_\ell)$ in Line (3). Before the removal of $C_\ell \cup \Gamma_{G_C}(C_\ell)$, $V'$ belongs to another component $V'' \supset V'$ such that $|V''| \geq \theta$. 
However, after the removal, $V' \subseteq V''$ becomes a component of size $<\theta$, which implies some vertices adjacent to $V'$ must be removed. This proves the claim.
Note that $|V'| \leq \theta - 1$.
Hence the number of vertices removed in Line (4) is at most $(\theta - 1) \cdot  \theta\Delta^2$.
The total number of vertices removed in each iteration is at most $\theta (\Delta + 1) + (\theta - 1) \cdot  \theta\Delta^2 \leq \theta^2\Delta^2$, where the inequality holds because $\theta \geq 1$ and $\Delta \geq 2$.
Hence, the algorithm outputs a 2-block-tree with tree size $\ell \geq \lfloor |C|/(\theta^2\Delta^2) \rfloor$.
If $\ell > \lfloor |C|/(\theta^2\Delta^2) \rfloor$, we just take the prefix of length $\lfloor |C|/(\theta^2\Delta^2) \rfloor$.
\end{proof}






\section{Deterministic counting via derandomising the AJ algorithm} \label{sec:AJ}

In this appendix, we apply the generic derandomising argument in \Cref{sec-reduction} to the Anand-Jerrum (AJ) algorithm \cite{AJ22}, and hence provides deterministic approximate counting algorithms for spin systems on sub-exponential neighbourhood growth graphs. 
We start with some basic definitions. 

A spin system is specified by a tuple $\mathcal{S}=(G=(V,E),[q],{\bm h},{\bm A})$, 
where $G$ is a graph, $[q]$ is the spins that each vertex may take, ${\bm h}\in\mathbb{R}_{\geq 0}^{q}$ is the \emph{external field} of the system, and ${\bm A}\in\mathbb{R}_{\geq 0}^{q\times q}$ is the \emph{interaction matrix}. 
A \emph{configuration} $\sigma\in[q]^V$ assigns each vertex a spin amongst $[q]$. 
The \emph{weight} of each configuration $\sigma$ is defined by
\[
w(\sigma):=\prod_{v\in V}{\bm h}(\sigma_v)\prod_{(u,v)\in E}{\bm A}(\sigma_u,\sigma_v). 
\]
The \emph{partition function} of the system $Z(\mathcal{S})$ is the sum of weight over all possible configurations, 
and the \emph{Gibbs distribution} $\mu$ is defined where the probability that each configuration is drawn is proportional to its weight. 
Namely, 
\[
Z(\mathcal{S}):=\sum_{\sigma\in [q]^V}w(\sigma)
\qquad\text{ and }\qquad
\mu(\sigma):=\frac{w(\sigma)}{Z(\mathcal{S})}.
\]
The spin system captures a lot of other counting problems. 
For example, when the external field is an all-$1$ vector and the interaction matrix has $0$'s on the diagonal and $1$'s off the diagonal, the partition function counts the number of proper $q$-colourings of the underlying graph. 

It is useful to consider the same spin system but with some vertices pinned to take some certain spins. 
A configuration $\sigma \in [q]^V$ is called feasible if $w(\sigma) > 0$.
For $\Lambda \subset V$, we say a \emph{partial configuration} $\tau_{\Lambda}$ over the subset $\Lambda$ is feasible, if it can be extended to a feasible configuration by assigning all other vertices with the spins.
Denote by $\mu^{\sigma_{\Lambda}}$ the distribution over $[q]^V$ obtained from the Gibbs distribution $\mu$ conditional on the partial configuration $\sigma_\Lambda$. 
For any $S \subseteq V$, denote by $\mu^{\sigma_{\Lambda}}_S$ the marginal distribution on $S$ projected from $\mu^{\sigma_{\Lambda}}$. 
If $S = \{v\}$ is a singleton, we abbreviate the notation a bit as $\mu^{\sigma_{\Lambda}}_v$ for simplicity.
\begin{definition}[marginal lower bound]
A distribution is said to take a marginal lower bound $b$, if for any $\Lambda \subset V$, $v\notin \Lambda$, any feasible partial configuration $\sigma_\Lambda$ on $\Lambda$, and any spin $i\in[q]$, it holds that
\[
\text{either }\qquad
\mu_{v}^{\sigma_{\Lambda}}(i)\geq b
\qquad\text{ or }\qquad
\mu_{v}^{\sigma_{\Lambda}}(i)=0.
\]
\end{definition}

To give a precise characterisation where the deterministic counting algorithm works, we need the definition of the \emph{strong spatial mixing}, a notion that is of great interest in the study of spin systems. 
\begin{definition}[strong spatial mixing]
Let $\delta: \mathbb{N} \to \mathbb{R}_{\geq 0}$ be a non-increasing function.
A spin system $\+S =(G = (V,E),[q],{\bm h},{\bm A})$ is said to exhibits \emph{strong spatial mixing} with rate $\delta$, 
if for any vertex subset $\Lambda \subseteq V$, any other vertex $v \notin \Lambda$, and any two feasible partial configurations $\sigma,\tau \in [q]^{\Lambda}$ with $\ell = \min \{ \dist_G(u,v) \mid u \in \Lambda \text{ and } \sigma_u \neq \tau_u \}$,
\begin{align*}
	\DTV{\mu^\sigma_v}{\mu^\tau_v} \leq \delta(\ell).
\end{align*}
In particular, if $\delta(\ell)= \alpha \exp (-\beta \ell)$ for some constants $\alpha,\beta > 0$, the spin system $\+S$ then is said to exhibit the strong spatial mixing with exponential decay. 
\end{definition}
In the context of this paper, the decay rate is always exponential. 
We shall abbreviate the term ``strong spatial mixing with exponential decay'' just as SSM. 

We also need some assumptions on the graph. 
Let $G=(V,E)$ be a graph. 
For any vertex $v \in V$, any integer $\ell > 0$, let 
\begin{align*}
S_{\ell}(v) = \{u \in V \mid \dist_G(u,v) = \ell\}
\end{align*}
be the \emph{sphere} of radius $\ell$ centred at $v$, where $\dist_G(u,v)$ denotes the length of the shortest path between $u$ and $v$ in the graph $G$.
\begin{definition}[sub-exponential neighbourhood growth]
Let $s: \mathbb{N} \to \mathbb{N}$ be a sub-exponential growth function, namely, satisfying $s(\ell) = \exp(o(\ell))$.\footnote{The notation $s(\ell) = \exp(o(\ell))$ means for any $c > 1$, there exists an integer $N$ such that for all $\ell \geq N$, $s(\ell) < c^\ell$.}
A graph $G=(V,E)$ is said to have \emph{sub-exponential neighbourhood growth} (SENG) within rate $s$, if $|S_{\ell}(v)| \leq s(\ell)$ for any vertex $v$ and any integer $\ell > 0$.
\end{definition}
We remark that the maximum degree of a SENG graph $G$ is $\Delta \leq s(1) = O(1)$. 
In statistical mechanism, SENG graphs are ubiquitous; for example, the $d$-dimensional integer lattice $\mathbb{Z}^d$ is heavily studied. 

We now state the result for spin systems. 
\begin{condition}\label{cond-spin-requirement}
A tuple $(q,\delta,s)$, where $\delta(\ell)$ and $s(\ell)$ are two functions, is said to satisfy this condition with constant $L=L(q,\delta,s)$, if
\begin{equation}\label{equ-spin-requirement}
    2eq(1+s(\ell))\delta(\ell)\leq 1\text{\qquad holds for all }\ell\geq L.  
\end{equation} 
\end{condition}
    
\begin{theorem}\label{thm-spin-main}
There exists a deterministic algorithm such that,
for any $q\geq 2$, $\delta$ and $s$ satisfying \Cref{cond-spin-requirement} with constant $L$, 
and for any $q$-spin system on SENG graph with growth rate $s$ exhibiting SSM with decay rate $\delta$,
the algorithm outputs an estimate to the partition function of the spin system with multiplicative error $(1\pm\varepsilon)$ in time $O((\frac{n}{b\varepsilon})^{O(s(L)\log q)})$, where $b$ is the marginal lower bound. 
\end{theorem}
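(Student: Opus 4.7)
The high-level strategy is to cast the AJ recursive perfect sampler of~\cite{AJ22} as a low-randomness marginal sampler in the random-oracle model of \Cref{def-sampling-alg}, and then invoke the generic derandomisation and self-reduction machinery from \Cref{sec-reduction}. The marginal lower bound $b$ ensures that the all-ones (or any fixed feasible) configuration $\sigma \equiv c$ is a $b$-bounded vertex decomposition scheme in the sense of~\eqref{eq-vtx-decom}, so by \Cref{cor-reduction} it suffices to give, for every $v\in V$ and every pinning $\sigma_{<i}$, a randomised procedure that outputs a value within total variation distance $b\varepsilon/(10n)$ from $\mu_v^{\sigma_{<i}}$ using $O(\log(n/(b\varepsilon)))$ random variables of domain size at most $q^{O(s(L))}$. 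Plugging such a sampler into \Cref{proposition-deran} and \Cref{cor-reduction} yields deterministic running time $(n/(b\varepsilon))^{O(s(L)\log q)}$, matching the claimed bound.

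For the sampler itself, fix the recursion radius $\ell = L$ given by \Cref{cond-spin-requirement}. To sample $X_v \sim \mu_v$ (with some pinning), draw, over a domain of size at most $q^{O(s(L))}$, a single random variable that realises a grand coupling of $\mu_v^{\tau}$ across \emph{every} feasible boundary configuration $\tau$ on the sphere $S_L(v)$. By the definition of SSM, this coupling succeeds with probability at least $1-\delta(L)$, in which case $X_v$ is determined without ever examining the boundary. On failure, the algorithm recursively samples $X_u$ for each $u\in S_L(v)$ — at most $s(L)$ vertices — and then finishes sampling $X_v$ from the correctly pinned conditional. As in \Cref{Alg:marginal-edge-color-sampler}, a global memoisation map enforces that any vertex revisited across different branches receives a consistent sample, so the cost measured in drawn random variables coincides with the number of distinct vertices explored.

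The crucial analytical step is a tail bound on the size of the explored recursion tree. Model the exploration as a branching process: each explored vertex $v$ has at most $s(L)$ children (its sphere $S_L(v)$), and it branches (rather than terminating via a successful coupling) with probability at most $\delta(L)$. A witness-tree argument in the spirit of \Cref{sec:proof-error-color} shows that if more than $K$ random variables are used then the exploration contains a rooted subtree $T$ of size $K$ in which every non-root node corresponds to a failed coupling. Using \Cref{lem:subtree-bound}, the number of such rooted subtrees in a graph of branching $s(L)$ is at most $(\mathrm{e}\,s(L))^{K}$, and a thinning / $2$-tree step analogous to \Cref{lemma-vb-linear-1} extracts an \emph{independent} witness sub-collection on which the failure events are genuinely independent. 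A union bound then gives
\[
\Pr{\#\text{random variables used} > K}\ \leq\ \bigl(\mathrm{e}(1+s(L))\delta(L)\bigr)^{K}\ \leq\ (2q)^{-K},
\]
where the last inequality is exactly \Cref{cond-spin-requirement}. Choosing $K = \Theta(\log(n/(b\varepsilon)))$ makes this bound at most $b\varepsilon/(10n)$, which is the error required by \Cref{cor-reduction}.

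The main obstacle will be cleanly realising the ``grand coupling across all boundary pinnings'' so that a single draw at each node suffices and so that the failure probability is genuinely controlled by $\delta(L)$ under the accumulated pinning of earlier recursive calls — the pinning drift is precisely the difficulty the introduction attributes to the AJ algorithm. The way out is to observe that strong (as opposed to weak) spatial mixing allows arbitrary boundary pinnings, so the SSM bound $\delta(L)$ is invariant under the pinning accumulated along the recursion path; formalising this, and matching it with the independence needed for the witness-tree union bound (hence the $(1+s(L))$ slack in \Cref{cond-spin-requirement}), is where the bulk of the work lies.
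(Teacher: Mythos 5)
Your overall architecture coincides with the paper's: truncate the Anand--Jerrum recursion at a budget of $K$ random draws, bound the output bias by the truncation probability (the paper's \Cref{cor-aj-correctness}), and feed the resulting low-randomness marginal sampler into \Cref{proposition-deran} and \Cref{cor-reduction} via a $b$-bounded vertex decomposition. The genuine divergence is in the tail bound on the recursion size. The paper (\Cref{aj-running-time-whp}, \Cref{lem-gwb-terminate}) dominates the count of active recursion instances by a biased random walk that jumps up by $D=s(\ell)$ with probability $p\le q\delta(\ell)$ and down by $1$ otherwise, and bounds its survival time by counting $(D+1)$-Dyck paths via Fuss--Catalan numbers. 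You instead observe that the set of \emph{failed} (branching) nodes forms a rooted subtree of the recursion tree, count such subtrees with \Cref{lem:subtree-bound}, and union-bound using the independence of the fresh coupling draws. Your route is arguably cleaner and avoids the Dyck-path combinatorics; both yield $K=O(s(L)\log(1/\varepsilon'))$ and hence the same final running time. Note also that the $2$-tree thinning you invoke is unnecessary here: unlike the hypergraph witness graphs, distinct recursion nodes already use disjoint fresh randomness, so the failure events are independent without any pruning.

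Three accounting slips should be repaired, though none affects the final bound. First, your witness of size $K$ cannot consist of $K$ failed nodes: if $K$ nodes are explored, only about $K/s(L)$ of them branched (each branching node spawns up to $s(L)$ children, and leaves are precisely the successful couplings), so the exponent in your union bound is the number of failed nodes $F\ge (K-1)/s(L)$, giving $\Pr{F\ge f}\le(\mathrm{e}\,s(L)\,q\,\delta(L))^{f}\le 2^{-f}$ and hence $K=\Theta(s(L)\log(n/(b\varepsilon)))$ rather than $\Theta(\log(n/(b\varepsilon)))$. Second, the per-node coupling failure probability guaranteed by SSM is $p_v^0\le q\,\delta(L)$ (the paper's \Cref{lem-aj-zoi-length}), not $\delta(L)$; the factor $q$ is exactly what the $q$ in \Cref{cond-spin-requirement} absorbs. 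Third, each node draws from a domain of size $q+1$ (the lower-bound distribution over $\{0,1,\dots,q\}$) plus one padding draw over $[q]$, not $q^{O(s(L))}$. With these corrections the enumeration cost is $(q+1)^{O(s(L)\log(n/(b\varepsilon)))}=(n/(b\varepsilon))^{O(s(L)\log q)}$ as claimed. Your closing worry about pinning drift is correctly resolved exactly as you say: strong (rather than weak) spatial mixing makes the bound $\delta(L)$ uniform over the pinnings accumulated along the recursion, which is precisely what \Cref{lem-aj-zoi-length} encodes.
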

In practice, the interaction matrix and the external field are fixed and considered as constants, so that the actual input only contains the graph itself. 
With the assumption of SENG graphs, the marginal lower bound is a constant $b=b({\bm h},{\bm A},s(1))$ independent of the input graph. 

\subsection{Implications of \texorpdfstring{\Cref{thm-spin-main}}{Theorem B.5}}

As one of the important corollaries of \Cref{thm-spin-main}, we provide an FPTAS for the number of (proper) colourings in SENG graphs. 
\begin{theorem} \label{thm-spin-colouring}
Let $q\geq 3$ and $\Delta \geq 3$ be two constants satisfying  $q \geq (\frac{11}{6}-\epsilon_0)\Delta$, where $\epsilon_0 > 10^{-5}$ is a universal constant.
There is an FPTAS for the number of proper $q$-colourings in SENG graphs with maximum degree at most $\Delta$.
\end{theorem}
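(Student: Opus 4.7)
The plan is to realise proper $q$-colourings as a $q$-spin system and invoke the generic \Cref{thm-spin-main}. Concretely, one sets the external field ${\bm h}\equiv 1$ and the interaction matrix ${\bm A}$ to be the $q\times q$ matrix with $0$'s on the diagonal and $1$'s off the diagonal, so that the Gibbs distribution $\mu$ coincides with the uniform distribution over proper $q$-colourings and $Z(\+S)$ counts them. With this realisation fixed, it remains to verify the three hypotheses of \Cref{thm-spin-main}, namely a marginal lower bound~$b$ independent of the input graph, an SSM decay rate $\delta$, and a growth rate $s$ such that $(q,\delta,s)$ satisfies \Cref{cond-spin-requirement}.

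First I would record the marginal lower bound. For any feasible partial configuration $\sigma_\Lambda$ and any vertex $v\notin\Lambda$, the conditional distribution $\mu_v^{\sigma_\Lambda}$ is supported on the set $[q]\setminus\sigma_\Lambda(N(v)\cap\Lambda)$ of colours not yet used by pinned neighbours, and it is at least $\tfrac{1}{q}$ on each colour in this support by averaging over the colouring of the unpinned neighbours (any colouring of them refines the support further). Hence $\mu_v^{\sigma_\Lambda}(i)\in\{0\}\cup[\tfrac{1}{q},1]$, and the system is $\tfrac{1}{q}$-marginally lower bounded with $b=\tfrac{1}{q}$, a constant when $q$ and $\Delta$ are both bounded constants.

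Next I would import the SSM ingredient. Under the assumption $q\geq(\tfrac{11}{6}-\epsilon_0)\Delta$, the proper $q$-colouring distribution on any graph of maximum degree at most $\Delta$ is known to exhibit SSM with exponential decay, i.e.\ $\delta(\ell)\leq C\alpha^{-\ell}$ for some constants $C=C(q,\Delta)$ and $\alpha=\alpha(q,\Delta)>1$ depending only on $q$ and $\Delta$; this is the crux of the argument, and I would cite the state-of-the-art SSM result for colourings at the Vigoda-type threshold (the descendants of Vigoda's flip analysis combined with spectral independence techniques). With $s(\ell)=\exp(o(\ell))$ coming from the SENG assumption and the exponential decay of $\delta$, the product $s(\ell)\delta(\ell)\to 0$ as $\ell\to\infty$, so there exists a constant $L=L(q,\Delta,s)$ such that $2\mathrm{e}q(1+s(\ell))\delta(\ell)\leq 1$ for all $\ell\geq L$; in particular \Cref{cond-spin-requirement} holds.

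Finally I would plug these into \Cref{thm-spin-main}: feeding the $q$-spin realisation, the marginal lower bound $b=\tfrac{1}{q}$, the SSM rate $\delta$, and the growth rate $s$, the theorem yields a deterministic $(1\pm\varepsilon)$-approximation to $Z(\+S)$ in time $O\bigl((n/(b\varepsilon))^{O(s(L)\log q)}\bigr)=(n/\varepsilon)^{O(1)}$, since $q$, $\Delta$, $s(L)$ are all constants under our assumptions. This is an FPTAS and would complete the proof. The main obstacle is the second step: establishing SSM at the threshold $q\geq(\tfrac{11}{6}-\epsilon_0)\Delta$ on arbitrary graphs of maximum degree $\Delta$ -- the marginal lower bound and the condition \eqref{equ-spin-requirement} are essentially bookkeeping once SSM is in hand, but SSM at this exact threshold is a delicate, recent result whose invocation must be done carefully (in particular to ensure the constants $C$ and $\alpha$ depend only on $q$ and $\Delta$, not on the size of the graph).
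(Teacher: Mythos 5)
Your overall architecture (realise colourings as a spin system, verify the hypotheses of \Cref{thm-spin-main}, conclude) matches the paper's, but the crucial middle step does not go through as you describe it. You propose to \emph{cite} a strong spatial mixing result for proper $q$-colourings at the threshold $q\geq(\tfrac{11}{6}-\epsilon_0)\Delta$ on arbitrary bounded-degree graphs. No such standalone SSM result is known at this threshold: what is known there is rapid mixing of the flip/Glauber dynamics \cite{chen2019improved} and, more usefully, \emph{spectral independence} with a constant $\eta=\eta(q,\Delta)$ \cite{Liu21}. Direct SSM results for colourings on general graphs live at much weaker thresholds (roughly $q\geq 2\Delta$ or $q>\alpha^*\Delta$ on triangle-free graphs). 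The paper closes this gap by a chain you are missing: spectral independence plus marginal boundedness gives approximate tensorisation of entropy and hence optimal temporal mixing of the Glauber dynamics under arbitrary pinning (\Cref{thm-clv-si-at}, from \cite{CLV21}), and optimal temporal mixing is then converted to SSM \emph{using the SENG assumption} via \cite{DSVW04} (\Cref{thm-dsvw04}). Note that SENG is therefore used twice --- once to get SSM at all, and once inside \Cref{thm-spin-main} --- which is precisely why the theorem is restricted to SENG graphs rather than all graphs of maximum degree $\Delta$. Your closing caveat ("whose invocation must be done carefully") correctly senses the danger, but the fix is not more careful citation; it is a different derivation.

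A secondary, more minor issue: your marginal lower bound $\mu_v^{\sigma_\Lambda}(i)\geq\tfrac1q$ "by averaging over the colouring of the unpinned neighbours" is not correct as stated. Conditioning on a colouring $\tau$ of the unpinned neighbours, the colour $i$ may be blocked by $\tau$ even if no pinned neighbour uses it, so the averaging only yields $\mu_v^{\sigma_\Lambda}(i)\geq\tfrac1q\Pr{i\text{ remains available}}$, and bounding that probability away from zero requires an extra (standard but nontrivial) argument. The conclusion --- a constant marginal lower bound $b=b(q,\Delta)>0$ when $q\geq\Delta+2$ --- is true and is all that is needed, but the one-line justification is wrong.
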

Prior to this work, an FPTAS for the number of proper $q$-colourings on a graph of maximum degree $\Delta$ is known only when $q\geq 2\Delta$ \cite{LSS22}. 
This result pushes forward the condition where FPTAS exists to that for FPRAS \cite{chen2019improved}, albeit only on SENG graphs. 


Specialised to colouring lattices graphs, one can prove SSM in a more refined regime than applying general theorems for colourings. 
The known SSM results, together with our main theorem for spin systems \Cref{thm-spin-main}, imply the following theorem. 
\begin{theorem}
There is an FPTAS for the number of proper $q$-vertex-colourings for any finite subgraph of the following lattice graphs: 
\begin{itemize}
\item $\mathbb{Z}^2$ lattice, if $q\geq 6$; \hfill (cf. \cite{AMMVB05})
\item $\mathbb{Z}^3$ lattice, if $q\geq 10$; \hfill (cf. \cite{GMP05})
\item Triangular lattice, if $q\geq 10$; \hfill (cf. \cite{GMP05})
\item Honeycomb lattice, if $q\geq 5$; \hfill (cf. \cite{GMP05})
\item Kagome lattice, if $q\geq 5$. \hfill (cf. \cite{Jal09})
\end{itemize}
\end{theorem}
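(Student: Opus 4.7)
The plan is to view this theorem as a direct corollary of \Cref{thm-spin-main}, which already gives a generic FPTAS for any $q$-spin system on a SENG graph exhibiting SSM with exponential decay. The proper $q$-colouring problem on a graph $G$ corresponds to the spin system with external field $\bm{h}=\mathbf{1}$ and interaction matrix $\bm{A}=\bm{J}-\bm{I}$ (ones off-diagonal, zeros on-diagonal), so the partition function of this spin system is exactly the number of proper $q$-colourings. Since we will restrict to lattices of bounded degree, a standard argument shows that the Gibbs distribution is $b$-marginally lower bounded for some constant $b=b(q,\Delta)>0$ whenever $q\ge\Delta+1$, which is satisfied under every assumption in the statement. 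Hence the only two things that need to be checked for each lattice are (i) that it satisfies the sub-exponential neighbourhood growth assumption, and (ii) that SSM with exponential decay holds in the quoted regime of $q$.

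For (i), I would simply note that each of the lattices listed (namely $\mathbb{Z}^2$, $\mathbb{Z}^3$, the triangular, honeycomb, and Kagome lattices) has polynomial neighbourhood growth. Concretely, there is a constant $d$ depending only on the lattice type and a constant $C$ such that $|S_\ell(v)|\le C\ell^d$ for every vertex $v$ and every $\ell\ge 1$; this is immediate from the geometric embedding of each lattice in $\mathbb{R}^2$ or $\mathbb{R}^3$. Since polynomial growth is trivially sub-exponential, every finite subgraph inherits the same upper bound and condition SENG holds with the corresponding growth rate $s(\ell)=C\ell^d$.

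For (ii), I would invoke the SSM theorems from the references listed next to each item: Achlioptas--Molloy--Moore--Van Bussel~\cite{AMMVB05} for $\mathbb{Z}^2$ with $q\ge 6$; Goldberg--Martin--Paterson~\cite{GMP05} for $\mathbb{Z}^3$ with $q\ge 10$, for the triangular lattice with $q\ge 10$, and for the honeycomb lattice with $q\ge 5$; and Jalsenius~\cite{Jal09} for the Kagome lattice with $q\ge 5$. In every case the rate is $\delta(\ell)=\alpha e^{-\beta\ell}$ for constants $\alpha,\beta>0$ depending only on $(q,\text{lattice type})$. Combined with the polynomial growth $s(\ell)=C\ell^d$, the product $2eq(1+s(\ell))\delta(\ell)=O(\ell^d e^{-\beta\ell})$ tends to zero as $\ell\to\infty$, so \Cref{cond-spin-requirement} is satisfied with some finite constant $L=L(q,\text{lattice type})$.

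Finally, I would apply \Cref{thm-spin-main} to the spin system defined above on the restricted lattice subgraph. Because $q$, the lattice type (hence $d$, $C$, $\alpha$, $\beta$, and $L$), and the marginal lower bound $b$ are all constants independent of the input graph $G$, the bound $O((n/(b\varepsilon))^{O(s(L)\log q)})$ becomes $(n/\varepsilon)^{O(1)}$, i.e.\ polynomial in both $n$ and $1/\varepsilon$, giving the claimed FPTAS. No single step is a serious obstacle; the only point requiring care is to record that each cited SSM result is indeed of the strong (as opposed to weak) spatial mixing flavour and holds uniformly in the finite subgraph of the infinite lattice, which is what all the cited works establish.
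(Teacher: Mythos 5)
Your proposal is correct and follows exactly the route the paper intends: the theorem is stated as a direct consequence of \Cref{thm-spin-main} combined with the cited strong spatial mixing results for each lattice, with the sub-exponential (indeed polynomial) neighbourhood growth and the constant marginal lower bound $b(q,\Delta)$ for $q\ge\Delta+1$ checked exactly as you describe. Your verification of \Cref{cond-spin-requirement} via $2\mathrm{e}q(1+s(\ell))\delta(\ell)=O(\ell^d e^{-\beta\ell})\to 0$ is the only substantive step, and it is handled correctly.
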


\subsection{Truncated AJ algorithm: proof of \texorpdfstring{\Cref{thm-spin-main}}{}}

\begin{algorithm}  
  \caption{$\ssmstr{}_T(\+S,(\Sigma,\sigma),v,\ell)$} \label{Alg:bounded-ssms}
  \KwIn{a spin system $\+S=(G,[q],{\bm h},{\bm A})$, a set of vertices $\Sigma\subseteq V$ with a configuration $\sigma\in \Omega_{\Sigma}$, a vertex to sample $v\notin \Sigma$, and a distance $\ell\in \mathbb{N}$}
  \KwOut{the partial configuration passed in with a spin at
$v$: $(\Sigma, \sigma) \oplus (v, i)$ for some $i \in [q]$.}
\SetKwIF{Try}{Catch}{Exception}{try}{:}{catch}{exception}{}
\uTry{}{
Decrease the global timer $T\gets T-1$\;
%
\For{$i\in [q]$}{
    $p_{v}^i\gets \min_{\tau\in \Omega_{S_{\ell}\setminus \Sigma}}\mu^{\sigma\oplus \tau}(i)$\;
}
$p_{v}^0\gets 1-\sum_{i\in [q]}p_{v}^i$\;
Sample a random value $X\in \{0,1,\dots,q\}$ with $\Pr{X=i}=p_v^i$ for each $0\leq i\leq q$\;
\If{$X=0$}{
\If{$T>0$}{
  $(\rho_1,\rho_2,\dots,\rho_q)\gets \bdsplittr{}(\+S,(\Sigma,\sigma),v,\ell,(p_v^{0},p_v^{1},p_{v}^2,\dots,p_{v}^q))$\;
  Sample a random value $Y\in [q]$ with $\Pr{Y=i}=\rho_i$ for each $1\leq i\leq q$\;  
}
\Return $((\Sigma,\sigma)\oplus(v,Y))$\;
}
\Else
{
    \Return $((\Sigma,\sigma)\oplus(v,X))$\;
}
}
\Catch{$T<0$}{
  Terminate all instances of $\ssmstr{}$ and $\bdsplittr{}$ routines and set the vertex the first call is on to a random feasible colour\; \label{alg-line-truncate} 
}
\end{algorithm} 

\begin{algorithm}  
  \caption{$\bdsplittr{}(\+S,(\Sigma,\sigma),v,\ell,(p_v^0,p_v^1,p_v^2,\dots,p_v^q))$} \label{Alg:bdsplit}
  \KwIn{a spin system $\+S=(G,[q],{\bm h},{\bm A})$, a set of vertices $\Sigma\subseteq V$ with a configuration $\sigma\in \Omega_{\Sigma}$, a vertex to sample $v\notin \Sigma$, a distance $\ell\in \mathbb{N}$, and a probability distribution $(p_v^0,p_v^1,p_v^2,\dots,p_v^q)$}
  \KwOut{a probability distribution $(\rho_1,\rho_2,\dots,\rho_q)$}
    Give $S_{\ell}(v)\setminus \Sigma$ an arbitrary ordering $S_{\ell}(v)\setminus \Sigma=\{w_1,w_2,\dots,w_m\}$\;
    $(\Sigma',\sigma')\gets (\Sigma,\sigma)$\;
    \For{$1\leq j\leq m$}{
        $(\Sigma',\sigma')\gets \ssmstr{}(\+S,(\Sigma',\sigma'),w_j,\ell)$\; \label{alg-line-subroutine}
    }
    \For{$i\in [q]$}{
    $\rho_i\gets (\mu^{\sigma'}_{v}(i)-p_v^i)/p_{v}^0$\;
    }
    \Return $(\rho_1,\rho_2,\dots,\rho_q)$\;
\end{algorithm} 

To derandomise the AJ algorithm, we truncate it \`{a} la \Cref{sec:approx-resolve}. 
This results in \Cref{Alg:bounded-ssms} and \Cref{Alg:bdsplit}. 
The algorithm maintains a global timer $T$ shared by all copies of the process. 
We use a subscript to represent what the timer is initially set to, e.g., $\ssmstr_{T}$ or $\ssmstr_{\infty}$. 
For simplicity in notation, as the input spin system $\+S$ and the parameter $\ell$ are never changed throughout the algorithm, we drop it from the list of parameters. 

We remark that this is slightly different from the ``bounded'' variant appeared in \cite{AJ22} as that truncates the algorithm by a certain depth instead of the number of calls. 
The correctness of the untruncated algorithm is summarised as follows. 
\begin{theorem}[{\cite[Theorem 5.3]{AJ22}}] \label{thm-aj-correctness}
Let $\mathcal{S}$ be a spin system exhibiting SSM, $\mu$ be its Gibbs distribution, and $\ell\geq 1$ be an integer. 
If the untruncated AJ algorithm, i.e., $\ssmstr{}_{\infty}((\Sigma,\sigma),v)$ with the global timer set to $\infty$, terminates with probability $1$, 
then it generates a spin of $v$ subject to the correct conditional distribution upon terminating, i.e., 
\[
\Pr{\ssmstr{}_{\infty}((\Sigma,\sigma),v)=i}=\mu_{v}^{\sigma}(i), 
\]
providing the partial configuration $(\Sigma,\sigma)$ is feasible.
\end{theorem}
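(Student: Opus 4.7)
The plan is to unroll the top-level call of $\ssmstr_\infty((\Sigma,\sigma),v,\ell)$ according to the branching on the first random variable $X$, reduce the output distribution to one involving the random sphere configuration $\sigma'$ produced by the recursive $\bdsplittr$ routine, and then close the induction on the recursion structure using the termination assumption.

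First, I would write down the decomposition obtained by conditioning on the first sample $X$ at the root:
\[
\Pr{\ssmstr_\infty((\Sigma,\sigma),v,\ell) = i} \;=\; p_v^i \;+\; p_v^0\,\Ex\!\left[\frac{\mu_v^{\sigma\oplus\sigma'}(i) - p_v^i}{p_v^0}\right],
\]
where $\sigma'\in[q]^{S_\ell(v)\setminus\Sigma}$ is the random configuration returned by $\bdsplittr$, drawn from whatever distribution the recursive calls induce. With probability $p_v^i$ the algorithm outputs $i$ immediately, and with probability $p_v^0$ it enters $\bdsplittr$, obtains $\sigma'$, and then draws $Y$ from the padding distribution $\rho_i(\sigma')=(\mu_v^{\sigma\oplus\sigma'}(i)-p_v^i)/p_v^0$, which is a genuine probability vector because $p_v^i\le\mu_v^{\sigma\oplus\sigma'}(i)$ by the very definition of $p_v^i$ as a pointwise minimum over sphere extensions. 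The chain rule for conditional distributions gives
\[
\mu_v^\sigma(i) \;=\; \sum_{\tau\in[q]^{S_\ell(v)\setminus\Sigma}} \mu_{S_\ell(v)\setminus\Sigma}^\sigma(\tau)\cdot\mu_v^{\sigma\oplus\tau}(i),
\]
so it will suffice to prove $\sigma'\sim \mu_{S_\ell(v)\setminus\Sigma}^\sigma$.

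Second, I would analyse $\bdsplittr$, which enumerates $S_\ell(v)\setminus\Sigma=\{w_1,\dots,w_m\}$ in a fixed order and sets $\sigma'_{w_j}$ by the recursive call $\ssmstr_\infty((\Sigma\cup\{w_1,\dots,w_{j-1}\},\,\sigma\oplus\sigma'_{w_{<j}}),\,w_j,\ell)$. Another application of the chain rule yields
\[
\mu_{S_\ell(v)\setminus\Sigma}^\sigma(\tau) \;=\; \prod_{j=1}^m \mu_{w_j}^{\sigma\oplus\tau_{<j}}(\tau_j),
\]
so if each inner invocation correctly samples from the single-site marginal $\mu_{w_j}^{\sigma\oplus\tau_{<j}}$, then $\sigma'$ follows the correct joint distribution and substituting back into the decomposition from Step~1 produces exactly $\mu_v^\sigma(i)$, as desired.

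Third, to legitimise the recursive appeal, I would formalise the induction using the hypothesis that $\ssmstr_\infty$ terminates almost surely. Let $N$ be the total number of $\ssmstr_\infty$ invocations in the execution tree; by hypothesis $N<\infty$ a.s. Because termination of the outer call forces termination of every inner call within the same realised execution, each inner call inherits the termination-with-probability-$1$ property (with the same underlying random bits), and strictly smaller $N$-value. I would therefore perform strong induction on $N$: the base case $N=1$ is when $X\neq 0$ at the root, which is consistent with the decomposition in Step~1; for $N>1$, the induction hypothesis applies to every inner call and Steps~1--2 finish the proof. To handle measurability cleanly, I would couple $\ssmstr_\infty$ with a truncated $\ssmstr_K$ sharing the same randomness, bound their total variation distance by $\Pr{N>K}$, prove correctness for $\ssmstr_K$ conditioned on $\{N\le K\}$, and pass $K\to\infty$.

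The main obstacle is the unbounded recursion depth: there is no a priori bound on the recursion tree, so a naive induction on depth fails. The correct resolution is to run the induction on the almost-surely-finite quantity $N$, using the fact that the termination hypothesis at the outer call implies the same at every inner call on the same sample path, and to use the truncation/coupling limit only to convert almost-sure termination into the convergence of output distributions. Once this bookkeeping is in place, the proof is simply the two chain-rule identities above combined with the algorithmic decomposition $\mu_v^\sigma = p_v^i + p_v^0\rho_i$ applied at each node of the tree.
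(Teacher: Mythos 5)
A preliminary remark: the paper does not prove this statement at all --- it is imported as a black box from \cite[Theorem 5.3]{AJ22} --- so there is no in-paper proof to compare against. The closest in-paper analogue is the proof of \Cref{theorem:resolve-cor}, which resolves the corresponding limiting issue by coupling the backward deduction to the explicitly convergent finite forward chain $\+P(T)$; that route is not available here, since the AJ recursion does not correspond to a Markov chain. Your Steps 1--2 are correct and are the right skeleton: the root decomposition into lower-bound mass $p_v^i$ plus padding mass, combined with the two chain-rule identities, reduces the theorem to the claim that each inner call of $\ssmstr{}_{\infty}$ samples its single-site conditional marginal exactly.

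The genuine gap is in Step 3, where you justify that claim. ``Strong induction on $N$'' is not well-founded as stated: $N$ is a random variable of the whole execution, not a parameter of a subinstance, and correctness does not hold conditionally on values of $N$ --- conditioning on few invocations biases the output towards the lower-bound distribution (e.g.\ conditioned on $N=1$ the root outputs $i$ with probability $p_v^i/(1-p_v^0)$, not $\mu_v^\sigma(i)$), so your intermediate statement ``correctness of $\ssmstr{}_K$ conditioned on $\{N\le K\}$'' is false for finite $K$. Moreover, the coupling only yields $\DTV{Y_K}{Y_\infty}\le \Pr{N>K}\to 0$, which shows $Y_K$ converges to $Y_\infty$ but does not identify the common limit as $\mu_v^\sigma$; asserting $\DTV{Y_K}{\mu_v^\sigma}\le\Pr{N>K}$ instead would presuppose the theorem. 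The standard repair keeps your architecture but changes the induction statement: prove by induction on the budget $K$ the sub-probability domination $\Pr{\ssmstr{}_{\infty}((\Sigma,\sigma),v)=i \text{ within at most } K \text{ invocations}}\le \mu_v^\sigma(i)$ for every feasible instance --- your root decomposition and sphere chain rule give exactly this once the budget constraints on the inner calls are dropped so that they can be bounded by the inductive hypothesis --- then let $K\to\infty$ to obtain $\Pr{\ssmstr{}_{\infty}=i}\le\mu_v^\sigma(i)$ for every $i$, and finally use almost-sure termination, which forces both sides to sum to $1$, to upgrade all inequalities to equalities.
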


As a corollary of the above theorem, the truncated variant outputs a spin close to the correct distribution, the distance of which depends on the probability that \Cref{alg-line-truncate} is executed. 
\begin{corollary} \label{cor-aj-correctness}
Let $\mathcal{S}$ be a spin system exhibiting SSM, $\mu$ be its Gibbs distribution, and $T,\ell\geq 1$ be integers. If the untruncated AJ algorithm $\ssmstr{}_{\infty}((\Sigma,\sigma),v)$ terminates with probability $1$. 
Then, suppose $Y$ is the distribution of the output of $\ssmstr{}_{T}((\Sigma,\sigma),v)$ with the global timer initially set to $T$, it holds that
\[
  \DTV{Y}{\mu_{v}^{\sigma}}\leq\Pr{\text{\Cref{alg-line-truncate} is executed}}.
\]
\end{corollary}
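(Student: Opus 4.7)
The plan is to mirror the proof of \Cref{thm:dtv-truncate} from \Cref{sec:approx-resolve}, which handled the analogous truncation for \resolve{}. The key insight is that \Cref{Alg:bounded-ssms} differs from the untruncated $\ssmstr_\infty$ only in that it aborts once the global timer $T$ expires; apart from that, both procedures are driven by exactly the same sequence of random choices (samples of $X$ at each recursive call, drawn according to the distributions $(p_v^0,p_v^1,\ldots,p_v^q)$ computed from the SSM estimates).

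First I would set up a grand coupling between $\ssmstr_T((\Sigma,\sigma),v)$ and $\ssmstr_\infty((\Sigma,\sigma),v)$ in the standard way: whenever either procedure needs to draw a random value $X\in\{0,1,\ldots,q\}$ at a given recursive invocation, we use a common source of randomness (for instance, a uniform $U\in[0,1)$ attached to that invocation, with $X$ defined by the usual inverse-CDF construction applied to whichever distribution the procedure is currently using). Because the distribution $(p_v^0,\ldots,p_v^q)$ at any given invocation is determined solely by the spin system, the current partial configuration, and the radius $\ell$ — none of which depend on the timer $T$ — the two procedures make identical decisions at every step, up until the moment (if ever) that $\ssmstr_T$ triggers \Cref{alg-line-truncate}.

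Consequently, conditioned on the event $\+E_{\mathrm{trun}}$ that \Cref{alg-line-truncate} is \emph{not} executed during the run of $\ssmstr_T$, the two procedures produce exactly the same output. Applying the coupling lemma, together with \Cref{thm-aj-correctness} which gives $\ssmstr_\infty((\Sigma,\sigma),v)\sim\mu_v^\sigma$, yields
\[
  \DTV{Y}{\mu_v^\sigma}\;\le\;\Pr{\ssmstr_T\neq\ssmstr_\infty}\;\le\;\Pr{\text{\Cref{alg-line-truncate} is executed}}.
\]
The only subtlety is that $\ssmstr_\infty$ must be a well-defined random variable in the first place, but this is exactly guaranteed by the hypothesis that $\ssmstr_\infty((\Sigma,\sigma),v)$ terminates with probability $1$, so its output distribution is unambiguously defined and the coupling argument goes through.

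I do not expect any genuine obstacle: the proof is a direct transcription of the argument of \Cref{thm:dtv-truncate}, with $\resolve_T/\resolve_\infty$ replaced by $\ssmstr_T/\ssmstr_\infty$ and $\+E_{\mathrm{trun}}(K)$ replaced by the event that the global timer expires. The one bookkeeping matter to handle carefully is that $\ssmstr$ and $\bdsplit$ call each other recursively and share the same timer $T$, so when writing out the coupling one must index the random variables by the recursive call stack (e.g.\ by the sequence of $(v,\ell)$ arguments together with call order) rather than by a linear time coordinate as in the Markov-chain setting; but this is a purely notational matter and does not affect the logic.
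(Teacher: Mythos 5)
Your proposal is correct and matches the paper's argument in substance: the paper also runs the truncated and untruncated algorithms on a shared source of randomness, observes that they coincide on the complement of the truncation event, and then derives the bound by decomposing $\mu_v^\sigma(j)$ and $Y(j)$ via the law of total probability and summing over $j$ — which is precisely the coupling lemma written out explicitly. Your appeal to \Cref{thm-aj-correctness} for the identity $\ssmstr_\infty((\Sigma,\sigma),v)\sim\mu_v^\sigma$ and your remark that termination with probability $1$ is what makes the untruncated output a well-defined random variable are both exactly the ingredients the paper uses.
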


\begin{proof}
For simplicity, let $\mathcal{A}$ be the truncated algorithm $\ssmstr{}_{T}((\Sigma,\sigma),v)$ and $\mathcal{A}'$ be the untruncated algorithm $\ssmstr{}_{\infty}((\Sigma,\sigma),v)$. 
Denote the bad event that \Cref{alg-line-truncate} is executed by $\mathcal{B}$.
For any spin $j\in[q]$, by \Cref{thm-aj-correctness}, 
\[
\mu_{v}^{\sigma}(j)=\Pr{\mathcal{A}'=j\mid\neg\mathcal{B}}\Pr{\neg\mathcal{B}}+\Pr{\mathcal{A}'=j\mid\mathcal{B}}\Pr{\mathcal{B}}, 
\]
and by definition,  
\[
Y(j)=\Pr{\mathcal{A}=j\mid\neg\mathcal{B}}\Pr{\neg\mathcal{B}}+\Pr{\mathcal{A}=j\mid\mathcal{B}}\Pr{\mathcal{B}}. 
\]
Naturally, $\Pr{\mathcal{A}'=j\mid\neg\mathcal{B}}=\Pr{\mathcal{A}=j\mid\neg\mathcal{B}}$, which gives
\begin{align*}
\left|\mu_{v}^{\sigma}(j)-Y(j)\right|&=\Pr{\mathcal{B}}\cdot\left|\Pr{\mathcal{A}'=j\mid\mathcal{B}}-\Pr{\mathcal{A}=j\mid\mathcal{B}}\right|\\
&\leq \Pr{\mathcal{B}}\cdot\left(\Pr{\mathcal{A}'=j\mid\mathcal{B}}+\Pr{\mathcal{A}=j\mid\mathcal{B}}\right)    
\end{align*}
Therefore, 
\[
\DTV{Y}{\mu_{v}^{\sigma}}=\frac{1}{2}\sum_{j\in[q]}\left|\mu_{v}^{\sigma}(j)-Y(j)\right|\leq\Pr{\mathcal{B}}. \qedhere
\]
\end{proof}

To analyse the probability of truncation, we treat the algorithm as a branching process. 
There are two possible scenarios each time when an iteration of $\ssmstr{}$ is invoked: either this iteration vanishes without invoking any iteration, or the algorithm calls $\bdsplittr{}$ which leads to the creation of at most $\Delta^\ell$ new copies of $\ssmstr{}$. 
With the presence of strong spatial mixing, the branching process is likely to terminate in finite time, which is captured by the following lemma. 
\begin{lemma} \label{aj-running-time-whp}
Suppose $(q,\delta,s)$ is a tuple satisfying \Cref{cond-spin-requirement} with constant $L$. 
Let $0<\varepsilon'<1$ be a real, and $T=\frac{1+s(L)}{\log 2}\cdot\log\frac{2}{\varepsilon'}$.  
Given a $q$-spin system $\mathcal{S}=(G,[q],{\bm h},{\bm A})$ as an input, where $\mathcal{S}$ exhibits SSM with decay rate $\delta$ and $G$ is a SENG graph with rate $s$, 
the probability that \Cref{alg-line-truncate} in $\ssmstr{}_{T}((\Sigma,\sigma),v)$ is executed is bounded from above by $\varepsilon'$ if the global timer is initially set to $T$. 
\end{lemma}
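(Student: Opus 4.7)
The plan is to view the untruncated recursion $\ssmstr_\infty$ as a subcritical branching process and derive an exponential-tail bound on its total number of recursive calls; the truncation error is then exactly the probability that this total exceeds $T$. First I would use SSM to bound the per-call ``failure'' probability uniformly. At a call of $\ssmstr$ with inputs $(\Sigma,\sigma,v,\ell)$, the symbol $0$ is sampled with probability $p_v^0 = 1-\sum_i \min_\tau \mu^{\sigma\oplus\tau}(i)$; fixing any feasible $\tau^*\in\Omega_{S_\ell\setminus\Sigma}$ and applying SSM gives $\min_\tau \mu^{\sigma\oplus\tau}(i)\geq \mu^{\sigma\oplus\tau^*}(i)-\delta(\ell)$, so summing over $i$ yields $p_v^0\leq q\delta(\ell)$. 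Taking $\ell=L$ throughout and invoking \Cref{cond-spin-requirement} gives $p:=q\delta(L)\leq 1/(2e(1+s(L)))$. Upon failure $\bdsplittr$ triggers at most $|S_L(v)\setminus\Sigma|\leq s(L)$ child calls, so each call's offspring count is stochastically dominated by a random variable $D\in\{0,s(L)\}$ with $\Pr{D=s(L)}=p$, whose mean $\mathbb{E}[D]=p\,s(L)\leq 1/(2e)<1$ makes the dominating branching process strictly subcritical.

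Next I would bound the total population $M$ of this dominating process via its BFS exploration. Setting $Q_0=1$ and $Q_k=Q_{k-1}-1+D_k$ with the $D_k$ i.i.d.\ copies of $D$, the event $\{M\geq T\}$ implies $\sum_{k=1}^T(D_k-1)\geq 0$, so for any $t>0$
\[
\Pr{M\geq T}\leq \bigl(\mathbb{E}[e^{t(D-1)}]\bigr)^T = \bigl((1-p)e^{-t}+p\,e^{t(s(L)-1)}\bigr)^T.
\]
Minimising the base over $t$ at $t^*=\frac{1}{s(L)}\log\frac{1-p}{p(s(L)-1)}$ yields the value $\frac{s(L)}{s(L)-1}\bigl(p(s(L)-1)\bigr)^{1/s(L)}(1-p)^{(s(L)-1)/s(L)}$, which after substituting the bound $p\leq 1/(2e(1+s(L)))$ and simplifying via $\log(1+x)\leq x$ is at most $\exp(-\log 2/s(L))$. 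Thus $\Pr{M\geq T}\leq e^{-T\log 2/s(L)}$, and the choice $T=\frac{1+s(L)}{\log 2}\log(2/\varepsilon')$ comfortably forces this at most $\varepsilon'$, completing the argument.

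The principal obstacle will be executing the MGF optimisation cleanly enough to obtain the claimed $(1+s(L))/\log 2$ coefficient: a naive Hoeffding bound on $\sum(D_k-1)$ only gives the weaker $T=\Theta(s(L)^2\log(1/\varepsilon'))$, so the delicate cancellation between the $(ps(L))^{1/s(L)}$ factor (supplied by \Cref{cond-spin-requirement}) and the $\frac{s(L)}{s(L)-1}(1-p)^{(s(L)-1)/s(L)}$ factor (coming from the MGF geometry) is what produces the correct linear-in-$s(L)$ dependence. A secondary but necessary bookkeeping point is that the offspring counts along the actual recursion are not literally i.i.d.\ --- they depend on the running partial configuration $(\Sigma,\sigma)$ at the time of each call --- but since our bound $p_v^0\leq p$ is uniform in $(\Sigma,\sigma,v)$, a standard step-by-step coupling justifies passing to the i.i.d.\ dominating sequence $(D_k)$; moreover when $S_L(v)\setminus\Sigma$ is empty the branching degenerates trivially, which only strengthens the stochastic domination and hence the tail bound.
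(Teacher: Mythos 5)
Your proposal is correct and reaches the same quantitative conclusion, but it replaces the paper's key tail estimate with a genuinely different argument. Both proofs share the first step: use SSM to get the uniform bound $p_v^0\le q\,\delta(L)\le p:=\tfrac{1}{2\mathrm{e}(1+s(L))}$ (this is exactly \Cref{lem-aj-zoi-length}) and then stochastically dominate the recursion by a walk that jumps $+s(L)$ with probability $p$ and $-1$ otherwise, with the uniformity of $p$ over all partial configurations justifying the coupling to an i.i.d.\ sequence — the paper does precisely this in the proof of \Cref{aj-running-time-whp}. Where you diverge is in bounding the survival probability of that walk: the paper (\Cref{lem-gwb-terminate}) enumerates the exact termination-time distribution via $(D+1)$-Dyck paths and the Fuss--Catalan numbers, bounds the binomial coefficient by $(\mathrm{e}(1+D))^N$, and sums a geometric series with ratio $\gamma=\mathrm{e}(1+D)p\le 1/2$; you instead apply a Chernoff/MGF bound to the exploration walk and optimise the exponent. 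These are two faces of the same estimate — your optimised base $\tfrac{D}{D-1}\bigl(p(D-1)\bigr)^{1/D}(1-p)^{(D-1)/D}$ raised to the power $D+1$ is, up to lower-order factors, the paper's $\mathrm{e}(1+D)p$ — so both hinge on the same inequality from \Cref{cond-spin-requirement}. The trade-off is that the Dyck-path route needs the Fuss--Catalan formula but then only the crude bound $\binom{n}{k}\le(\mathrm{e}n/k)^k$, whereas your route is more standard but requires the delicate optimisation you flag: your chain of estimates via $\log(1+x)\le x$ closes the inequality $\tfrac{D}{D-1}\bigl(\tfrac{D-1}{D+1}\bigr)^{1/D}\le \mathrm{e}^{1/D}$ only for $D=s(L)\ge 3$, so the cases $s(L)\in\{1,2\}$ need a separate (easy, and numerically comfortable) check, and the degenerate case $s(L)=1$ needs the geometric-variable argument rather than the formula for $t^*$. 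With those small edge cases patched, your argument is a valid alternative proof.
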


One of the main ingredients for the above argument is to bound the length of ZOI, which is also the probability for branching. 
This is done by the following lemma \cite[Lemma 4.2]{AJ22}. 
\sloppy
\begin{lemma} \label{lem-aj-zoi-length}
Let $\mathcal{S}$ be a $q$-spin system exhibiting SSM with rate $\delta$. 
In the execution of $\ssmstr{}_T((\Sigma,\sigma),v)$ where the partial assignment $\sigma$ over $\Sigma$ is feasible, it holds that
\[
p_{v}^0\leq q\cdot\delta(\ell). 
\]
\end{lemma}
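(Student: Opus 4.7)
The plan is to bound $p_v^0$ by expanding it as a sum of $q$ marginal differences, each of which is controlled by strong spatial mixing. Concretely, pick any fixed feasible boundary configuration $\tau^{\ast}\in\Omega_{S_\ell(v)\setminus\Sigma}$ (this exists since $\sigma$ is feasible). Since $\mu_v^{\sigma\oplus\tau^{\ast}}$ is a probability distribution on $[q]$, $\sum_{i\in[q]}\mu_v^{\sigma\oplus\tau^{\ast}}(i)=1$, and therefore
\[
p_v^0 \;=\; 1-\sum_{i\in[q]}p_v^i \;=\; \sum_{i\in[q]}\left(\mu_v^{\sigma\oplus\tau^{\ast}}(i)-\min_{\tau\in\Omega_{S_\ell(v)\setminus\Sigma}}\mu_v^{\sigma\oplus\tau}(i)\right).
\]
Each summand is nonnegative, so the expansion makes sense, and it rewrites the branching probability purely in terms of how much the marginal at $v$ fluctuates as the boundary configuration on $S_\ell(v)\setminus\Sigma$ varies.

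For each $i\in[q]$, choose a minimiser $\tau^{(i)}:=\arg\min_{\tau}\mu_v^{\sigma\oplus\tau}(i)$. Then the corresponding summand is $\mu_v^{\sigma\oplus\tau^{\ast}}(i)-\mu_v^{\sigma\oplus\tau^{(i)}}(i)$, the pointwise difference of two marginal distributions of $v$ under conditionings that disagree only on (a subset of) $S_\ell(v)\setminus\Sigma$. By definition of $S_\ell(v)$, every such disagreement vertex is at graph distance exactly $\ell$ from $v$, so the minimum disagreement distance is at least $\ell$. Strong spatial mixing at rate $\delta$ then gives
\[
\mu_v^{\sigma\oplus\tau^{\ast}}(i)-\mu_v^{\sigma\oplus\tau^{(i)}}(i) \;\le\; \DTV{\mu_v^{\sigma\oplus\tau^{\ast}}}{\mu_v^{\sigma\oplus\tau^{(i)}}} \;\le\; \delta(\ell),
\]
using the standard fact that for any two distributions $p,q$ and any event $A$, $|p(A)-q(A)|\le \DTV{p}{q}$.

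Summing the inequality over the $q$ spins yields $p_v^0\le q\cdot\delta(\ell)$, which is exactly the claim. The main subtlety worth verifying is that the reference configuration $\tau^{\ast}$ and each minimiser $\tau^{(i)}$ yield feasible joint configurations with $\sigma$; this is inherited from feasibility of $\sigma$ together with the assumption that the minima in the definition of $p_v^i$ are taken over $\Omega_{S_\ell(v)\setminus\Sigma}$, which already enforces feasibility. No further analysis is needed, since once SSM is applied termwise the bound is immediate — this is a short calculational lemma rather than a step where an obstacle is expected.
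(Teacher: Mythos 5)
Your proof is correct: expanding $p_v^0=\sum_{i\in[q]}\bigl(\mu_v^{\sigma\oplus\tau^*}(i)-\min_\tau\mu_v^{\sigma\oplus\tau}(i)\bigr)$ against a fixed reference boundary and bounding each term by the total variation distance between two conditionings that disagree only at distance $\ell$ is exactly the intended argument. The paper does not reproduce a proof here (it cites \cite[Lemma 4.2]{AJ22}), and your termwise application of SSM is the standard route to this bound.
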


Given the exponential tail bound, \Cref{thm-spin-main} can be proved. 

\begin{proof}[Proof of \Cref{thm-spin-main}]
Each random number drawn in the algorithm can take $\{0,1,\cdots,q\}$, and hence the domain is of size $q+1$. 
The worst case running time is $O(qT)$ which is straightforward to show; 
note that the cost of \Cref{alg-line-subroutine} in $\bdsplittr{}$ is amortised into each subroutine. 
The number of random numbers drawn throughout the process is at most $2T$. 
The conditions in \Cref{cor-reduction} get fulfilled if we set $\varepsilon'=\frac{b\varepsilon}{10n}$ where $b$ is the marginal lower bound, which gives $T=\frac{1+s(L)}{\log 2}\cdot\log\frac{20n}{b\varepsilon}$. 
The theorem then follows by applying \Cref{cor-reduction}. 
\end{proof}

\subsection{Exponential tail on running time: proof of \texorpdfstring{\Cref{aj-running-time-whp}}{}}

In this section we give a proof of \Cref{aj-running-time-whp}. 
Each time the algorithm recurses into $\bdsplittr{}$, it creates at most $s(\ell)$ new copies of the routine $\ssmstr{}$. 
Such branching happens with probability $p_{v}^0$ which is at most $q\cdot\delta(\ell)$ due to \Cref{lem-aj-zoi-length}. 
This leads us to analysing a Markov process that stochastically dominates the actual branching process. 

Consider the following discrete Markov chain $(X_t)$ where $X_t\in\mathbb{Z}_{\geq 0}$. 
At the beginning, $X_0=1$. 
The chain has an absorbing barrier at $0$, and for other $X_t>0$, the transition is given by
\begin{equation} \label{equ-gwb-def}
X_{t+1}\leftarrow\begin{cases}
X_t+D & \text{with probability } p; \\
X_t-1 & \text{with probability } 1-p. 
\end{cases}
\end{equation}
Intuitively, if $pD<0.99$, in expectation the random walk moves towards the absorbing barrier, and the process terminates in constant time. 
But for analysing the truncated algorithm, a tail bound on the event that the process does not terminate for a long time is required. 
This is shown as the next lemma. 
\begin{lemma} \label{lem-gwb-terminate}
Suppose $2\mathrm{e}(1+D)p\leq 1$ and $0<\varepsilon'<1$. 
Let $T=\frac{1+D}{\log 2}\cdot\log \frac{2}{\varepsilon'}$. 
With probability at most $\varepsilon'$, the process $(X_t)$ defined by (\ref{equ-gwb-def}) does not terminate in $T$ rounds.  
\end{lemma}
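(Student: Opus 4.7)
The plan is to reduce the tail on the survival time of $(X_t)$ to a tail bound on a Binomial distribution, which can then be dispatched by an elementary Chernoff-style argument.

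First, since $0$ is absorbing, the event ``not terminated by time $T$'' is precisely $\{X_T\geq 1\}$. If $k$ denotes the (random) number of ``$+D$'' steps among the first $T$ transitions, then $k\sim\mathrm{Bin}(T,p)$ and, whenever the walk survives, $X_T=1+k(D+1)-T$. Hence survival forces $k\geq T/(D+1)$, so
\[
  \Pr{\text{not terminated by time } T}\leq \Pr{\mathrm{Bin}(T,p)\geq m}, \qquad m:=\lceil T/(D+1)\rceil.
\]

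Second, I would bound this Binomial tail by a geometric-series estimate. The hypothesis $2\mathrm{e}(1+D)p\leq 1$ plays two roles simultaneously. On the one hand, for $k\geq m$, the ratio of consecutive summands $\binom{T}{k+1}p^{k+1}/\binom{T}{k}p^k=p(T-k)/(k+1)$ is at most $p(D+1)\leq 1/(2\mathrm{e})<1/2$, so the entire tail sum $\sum_{k\geq m}\binom{T}{k}p^k(1-p)^{T-k}$ is at most $2\binom{T}{m}p^m$. On the other hand, combined with the standard estimate $\binom{T}{m}\leq(\mathrm{e}T/m)^m\leq(\mathrm{e}(D+1))^m$, the same hypothesis forces $\binom{T}{m}p^m\leq(\mathrm{e}(D+1)p)^m\leq 2^{-m}$. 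Together these give the simple closed-form bound $\Pr{\mathrm{Bin}(T,p)\geq m}\leq 2^{1-m}$.

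Finally, substituting the prescribed $T=\frac{1+D}{\log 2}\log\frac{2}{\varepsilon'}$ gives $T/(D+1)=\log(2/\varepsilon')/\log 2$, hence $2^{1-m}\leq 2\cdot 2^{-T/(D+1)}=\varepsilon'$, which is exactly the desired bound.

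The only point requiring any real care is the calibration of constants: the precise factors $2$ and $\mathrm{e}$ in the hypothesis $2\mathrm{e}(1+D)p\leq 1$ are tuned so that the geometric-series ratio \emph{and} the Chernoff factor $\mathrm{e}(D+1)p$ simultaneously land at $1/2$. This joint base-$2$ behaviour is exactly what produces the matching factor $1/\log 2$ in the stated value of $T$; a coarser moment-generating-function argument (for instance, Markov applied directly to $2^{\mathrm{Bin}(T,p)}$) leaks an extra multiplicative factor and fails to recover the stated constant for small~$\varepsilon'$.
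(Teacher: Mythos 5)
Your proof is correct, and it takes a genuinely different route from the paper's. The paper enumerates the exact first-passage time distribution: termination at round $(D+1)N+1$ corresponds to a $(D+1)$-Dyck path, whose count is the Fuss--Catalan number $\frac{1}{DN+1}\binom{(D+1)N}{N}$, and the tail is then the sum $\sum_{i\ge N} w_i$ of these exact probabilities, bounded by the same $\binom{n}{k}\le(\mathrm{e}n/k)^k$ estimate and a geometric series with ratio $\gamma=\mathrm{e}(1+D)p\le 1/2$. You instead observe that, since $0$ is absorbing, survival to time $T$ forces the endpoint identity $X_T=1+k(D+1)-T\ge 1$ and hence $k\ge T/(D+1)$ up-steps, reducing everything to a binomial upper tail; the positivity constraint on the whole path (which is what the Dyck-path count encodes) is simply discarded as an upper bound. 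This buys two things: you avoid the Fuss--Catalan machinery entirely, and you sidestep a small gap in the paper's argument, namely that the identity $\Pr{\text{not terminated by }T}=\sum_{i\ge N}w_i$ implicitly uses almost-sure termination (which does hold here by the negative drift, but is not verified). The only point you should make explicit is that ``$k\sim\mathrm{Bin}(T,p)$'' requires coupling the absorbed chain with an unrestricted i.i.d.\ sequence of $\{+D,-1\}$ increments, since after absorption the transitions of $(X_t)$ are no longer governed by \eqref{equ-gwb-def}; on the survival event the two processes agree for all $T$ steps, which is all you use. The constants work out identically in both arguments: $p(T-k)/(k+1)\le p(D+1)\le 1/(2\mathrm{e})$ controls the geometric ratio, $(\mathrm{e}(D+1)p)^m\le 2^{-m}$ controls the leading term, and $m\ge T/(D+1)=\log_2(2/\varepsilon')$ closes the bound.
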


We need to use the following notion of the generalised Dyck path. 
\begin{definition}[$(D+1)$-Dyck path]
A sequence $a_1,a_2,\cdots,a_{k}\in\{+D,-1\}$ forms a $(D+1)$-Dyck path of length $k$, 
if
\[
\forall j, \sum_{i=1}^{j}a_i\geq 0\text{\qquad and \qquad}\sum_{i=1}^{k}a_i=0. 
\]
\end{definition}
The number of $(D+1)$ Dyck path of given length is known to be the Fuss-Catalan number (see, for example, \cite{Ava08}). 
\begin{lemma}
The number of $(D+1)$-Dyck paths of length $(D+1)N$ is $\frac{1}{DN+1}\binom{(D+1)N}{N}$. 
\end{lemma}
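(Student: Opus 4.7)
The plan is to derive this Fuss--Catalan identity via a bijection followed by the Dvoretzky--Motzkin cycle lemma. A $(D{+}1)$-Dyck path of length $(D{+}1)N$ must contain exactly $N$ up-steps $+D$ and $DN$ down-steps $-1$ (so that it returns to $0$). I will first transform such paths into sequences with a simple positivity property that the cycle lemma can count, and then perform the counting by an orbit argument.

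First I would set up the bijection. Given a $(D{+}1)$-Dyck path $(a_1,\dots,a_{(D+1)N})$, append $a_{(D+1)N+1} := -1$ and then reverse-and-negate to obtain $t_j := -a_{(D+1)N+2-j}$ for $1 \le j \le (D+1)N+1$. The resulting $(t_1,\dots,t_{(D+1)N+1})$ is a sequence of $DN+1$ steps equal to $+1$ and $N$ steps equal to $-D$, with total sum $+1$. A short calculation shows $T_j := t_1+\cdots+t_j = s_{(D+1)N+1-j}+1$, where $s_i$ is the $i$-th prefix sum of the original Dyck path; since $s_i \geq 0$, every $T_j \geq 1$. Conversely, any such sequence $(t_1,\dots,t_{(D+1)N+1})$ with all prefix sums $\ge 1$ must have $t_1 = +1$ (since $T_1 = t_1 \ge 1$ and $t_1 \in \{+1,-D\}$), so reverse-and-negate yields a sequence ending in $-1$ whose truncation to the first $(D{+}1)N$ entries is a $(D{+}1)$-Dyck path. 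Thus the bijection is established, reducing the count to the number of $(t_1,\dots,t_{(D+1)N+1}) \in \{+1,-D\}^{(D+1)N+1}$ with the prescribed multiplicities and all prefix sums strictly positive.

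Next I would invoke the Dvoretzky--Motzkin cycle lemma: for any integer sequence of length $n$ with entries $\le 1$ and total sum $k > 0$, exactly $k$ of its $n$ cyclic rotations have all prefix sums strictly positive. Here each $t_j \in \{+1,-D\}$ satisfies $t_j \le 1$ and the total sum is $+1$, so for every such multiset arrangement exactly one of its $(D{+}1)N+1$ cyclic shifts has all strictly positive prefix sums.

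The counting step then requires showing that each cyclic orbit has full size $(D{+}1)N+1$. If some arrangement had period $p \mid (D{+}1)N+1$ strictly less than the length, then $p$ would divide both the number of $-D$ steps and the number of $+1$ steps, hence $p \mid \gcd(N, DN+1) = 1$, forcing $p=1$; but this contradicts that the sequence contains both step types. So every orbit has size exactly $(D{+}1)N+1$, giving
\[
\#\{\text{good sequences}\} \;=\; \frac{1}{(D{+}1)N+1}\binom{(D{+}1)N+1}{N} \;=\; \frac{((D{+}1)N)!}{N!\,(DN+1)!} \;=\; \frac{1}{DN+1}\binom{(D{+}1)N}{N},
\]
which concludes the proof. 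The main obstacle I anticipate is verifying the bijection cleanly (keeping the index arithmetic straight between the Dyck path, the reversed-negated sequence, and their prefix sums) and the orbit-size argument; the cycle lemma itself and the final algebraic simplification are routine.
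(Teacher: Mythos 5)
The paper does not actually prove this lemma; it treats the Fuss--Catalan count as known and cites the literature (\cite{Ava08}). So your proposal supplies a self-contained derivation rather than paralleling an argument in the paper, and the derivation is essentially correct. The bijection is sound: your index computation $T_j=s_{(D+1)N+1-j}+1$ is right, the inverse direction works because $T_1=t_1\ge 1$ forces $t_1=+1$, and the Dvoretzky--Motzkin cycle lemma with total sum $k=1$ applies since all entries of $t$ are at most $1$; the final simplification $\frac{1}{(D+1)N+1}\binom{(D+1)N+1}{N}=\frac{1}{DN+1}\binom{(D+1)N}{N}$ is correct. The one slip is in the orbit-size argument: if a sequence of length $n=(D+1)N+1$ has minimal period $p<n$, it is the number of repetitions $n/p$ of the fundamental block, not $p$ itself, that divides the number of $+1$ steps and the number of $-D$ steps (for instance, length $6$ with block $(+1,+1,-D)$ repeated twice has two $-D$ steps, which $p=3$ does not divide). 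Replacing $p$ by $n/p$ repairs the argument immediately: $n/p$ divides $\gcd(N,DN+1)=1$, so $p=n$ and every orbit has full size; even more directly, a period-$p$ sequence has total sum divisible by $n/p$, and the sum here is $1$. With that correction your proof is complete, and it is a reasonable alternative to simply citing the Fuss--Catalan literature as the paper does.
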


\begin{proof}[Proof of \Cref{lem-gwb-terminate}]
If the process $(X_t)$ terminates exactly at $(T+1)$-th round, namely $X_{T}=1$ and $X_{T+1}=0$, then it must hold that the sequence $\{X_{t+1}-X_{t}\}_{t=0}^{T-1}$ forms a dyke path, and $X_{T+1}=0$. 
Obviously $T$ is a multiple of $D+1$. 
Let $N:=T/(D+1)$.
Then there are $N$ ``move-ups'', each with probability $p$, and $DN+1$ ``move-downs'', each with probability $1-p$. 
The extra plus one is owing to the move $X_T=1$ to $X_{T+1}=0$. 
Using the count on $(D+1)$-Dyck paths of length $T=(D+1)N$, the probability that $(X_t)$ terminates at $(T+1)$-th round is exactly
\[
w_N=\frac{1}{DN+1}\binom{(D+1)N}{N}p^{N}(1-p)^{DN+1}. 
\]
Using the inequality $\binom{n}{k}\leq(\mathrm{e}n/k)^k$, we have
\begin{align*}
w_N\leq\frac{1}{DN+1}\cdot \left(\mathrm{e}(1+D)\right)^{N} \cdot p^{N}(1-p)^{DN+1}\leq\left(\mathrm{e}(1+D)p\right)^{N}. 
\end{align*}

Let $\gamma:=\mathrm{e}(1+D)p$ which is at most $1/2$. 
The probability that the process does not terminate in $T=(D+1)N$ rounds is
\begin{equation*}
\Pr{(X_t)\text{ does not terminate in $(D+1)N$ rounds}}=\sum_{i=N}^{\infty}w_i\leq
\gamma^{N}\sum_{j=0}^{\infty}\gamma^{j}<2\gamma^{N}<2\left(\frac{1}{\varepsilon'}\right)^{\frac{\log(\gamma)}{\log 2}}<\varepsilon'. \qedhere
\end{equation*}
\end{proof}

\begin{proof}[Proof of \Cref{aj-running-time-whp}]
Let $p=q\cdot\delta(\ell)$ and $D=s(\ell)$
Suppose we are now running $\ssmstr{}_{\infty}$ with the timer set to infinity. 
We might alter \Cref{alg-line-subroutine} in $\bdsplittr{}$ a bit, by registering first the $m$ instances of $\ssmstr{}$ to run, and then setting the appropriate parameter and invoking the routine. 
We call an instance of $\ssmstr{}$ \emph{active}, if it is running or has been registered but not yet run. 
Let $Y_t$ be the number of active instances upon the invocation of the $(t+1)$-th instance. 
If there is no such $(t+1)$-th instance, set $Y_t=0$. 
For each instance, it dies out and becomes no more active if it does not recurse into $\bdsplittr{}$, and upon the invocation of the next instance, the count goes down by one. 
Or otherwise, it registers up to $S_{\ell}(v)$ new instances, so that prior to the next invocation (the current instance has not died out yet) the count goes up by the number of newly-created instances. This happens with probability $p_v^0\leq p$, the length of ZOI. 

Despite the fact that the $i$-th invocation may change the boundary condition for later invocations, \Cref{lem-aj-zoi-length} holds for all boundary conditions, and because each invocation use fresh random numbers, the probability that $x$ invocations create copies and $y$ invocations do not is bounded by $p^x(1-p)^y$. 
Then it is a simple observation that
\[
\Pr{(Y_t)\text{ does not terminate in $T$ rounds}}\leq\Pr{(X_t)\text{ does not terminate in $T$ rounds}}. 
\]
This is because each time it branches out $D'\leq D$ new instances. 
We can just treat this as if it created $D$ copies, but the last $D-D'$ ones were bound to die out. 
The lemma then follows by applying \Cref{lem-gwb-terminate} with the aforementioned choices of $p$, $D$ and $T$. 
\end{proof}

\subsection{FPTAS from optimal temporal mixing}

As a well-known result, the notion of SSM is equivalent to \emph{optimal temporal mixing} of the Glauber dynamics on SENG graphs \cite{DSVW04}. 
The latter notion receives a glaring attention in recent study of Markov chains. 

Fix a subset of vertices $\Lambda \subseteq V$. Define its boundary by
\begin{align*}
	\partial \Lambda \defeq \{ u \in V \setminus \Lambda \mid \exists w \in \Lambda \text{ s.t. } \{u,w\} \in E \}.
\end{align*}
Let $(X_t)_{t \geq 0}$ and $(Y_t)_{t \geq 0}$ be two instance of the Glauber dynamics, where $(X_t)_{t \geq 0}$ and $(Y_t)_{t \geq 0}$ may start from different initial configurations $X_0$ and $Y_0$. 
The optimal temporal mixing is defined as follows. 
\begin{definition}[optimal temporal mixing] \label{def-optimal-temporal-mixing}
The Glauber dynamics is said to have the optimal temporal mixing under arbitrary pinning, if there exists $\gamma,\zeta > 0$ such that for any vertex set $\Lambda \subseteq V$, any feasible boundary condition $\sigma \in [q]^{\partial \Lambda}$, and any two instances $(X_t)_{t \geq 0}$ and $(Y_t)_{t \geq 0}$ of the Glauber dynamics on $\Lambda$ with boundary configuration $\sigma$, it holds that 
\begin{align*}
	\forall k \in \mathbb{N}, \quad \DTV{X_{kn}}{Y_{kn}} \leq |\Lambda| \gamma \exp(-\zeta k). 
\end{align*}
\end{definition}
The above definition also implies $O(n \log n)$ mixing time of the Glauber dynamics, by observing that starting from an arbitrary $X_0$, $\DTV{X_T}{\mu^\sigma_\Lambda} \leq 1/4$ for $T = O(n \log n)$. 

One of the main results in \cite{DSVW04} states: 
\begin{theorem}[{\cite[Theorem 2.3]{DSVW04}}] \label{thm-dsvw04}
If the spin system poses optimal temporal mixing on SENG graphs, then the system exhibits SSM. 
\end{theorem}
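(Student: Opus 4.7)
Fix $\Lambda \subseteq V$, $v \notin \Lambda$, and two feasible partial configurations $\sigma, \tau \in [q]^{\Lambda}$ whose discrepancy set $D_0 := \{u \in \Lambda : \sigma_u \ne \tau_u\}$ satisfies $\dist_G(D_0, v) \ge \ell$. The plan is to bound $\DTV{\mu^\sigma_v}{\mu^\tau_v}$ by running two Glauber dynamics chains under a grand coupling for $T = k n$ steps, combining (i) a ``speed of information propagation'' estimate controlled by SENG, with (ii) the optimal temporal mixing hypothesis, and then balancing parameters. The free parameter $k$ will be chosen proportional to $\ell$ with a small constant depending on $\Delta := s(1)$.

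\textbf{Coupling and disagreement propagation.} Let $(X_t)$ and $(Y_t)$ denote Glauber dynamics under boundary conditions $\sigma$ and $\tau$ respectively, starting from a common configuration $X_0 = Y_0$. Couple the two chains by using, at every step, the same uniformly chosen update site and the same $\mathrm{Uniform}[0,1)$ variate (so that both chains pick the same spin whenever their local conditional distributions coincide). Backward tracing the events responsible for $X_T(v) \ne Y_T(v)$ produces a walk $v = u_0, u_1, \ldots, u_m$ in $G$ with $u_m \in D_0$ and $m \ge \ell$, together with strictly decreasing times $T \ge \tau_0 > \tau_1 > \cdots > \tau_{m-1} \ge 1$ at which the respective $u_i$ are updated. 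For a fixed walk and fixed sequence of times, the coupling has probability at most $n^{-m}$ of realising this configuration of updates; the number of time choices is $\binom{T}{m}$; and the number of length-$m$ walks from $v$ in a graph of maximum degree $\Delta$ is at most $\Delta^m$. Hence
\[
  \Pr{X_T(v) \ne Y_T(v)} \le \sum_{m \ge \ell} \Delta^m \binom{k n}{m} n^{-m} \le \sum_{m \ge \ell} \left(\frac{e k \Delta}{m}\right)^m.
\]
Setting $k = \ell/(3 e \Delta)$ reduces this to a geometric tail dominated by $2 \cdot 3^{-\ell}$.

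\textbf{Mixing and the main obstacle.} Optimal temporal mixing (\Cref{def-optimal-temporal-mixing}) applied with $T = k n$ yields $\DTV{X_T}{\mu^\sigma} + \DTV{Y_T}{\mu^\tau} \le 2 n \gamma e^{-\zeta k}$, and the triangle inequality gives
\[
  \DTV{\mu^\sigma_v}{\mu^\tau_v} \le 2 n \gamma \, e^{-\zeta \ell/(3 e \Delta)} + 2 \cdot 3^{-\ell}.
\]
The central difficulty is the spurious factor $n$ in the mixing term, whereas the SSM conclusion demands a bound depending only on $\ell$. To remove this factor, I would localise: reveal a configuration on $V \setminus B_r(v)$ with $r = \ell/2$, and then rerun the whole coupling argument above inside the ball $B_r(v)$ rather than on all of $V$. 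Since $r+1 < \ell$, the boundary conditions $\sigma$ and $\tau$ agree on $\Lambda \cap \partial B_r(v)$, and by the Markov property of Gibbs distributions the restricted Glauber dynamics on $B_r(v)$ targets the correct conditional marginal at $v$. Applying optimal temporal mixing to these restricted chains replaces the prefactor $n$ by $|B_r(v)| \le \sum_{i \le r} s(i) = e^{o(\ell)}$, which is absorbed by the exponential $e^{-\zeta k}$ once $k = \Theta(\ell)$. Balancing the two resulting error terms then produces $\delta(\ell) = \alpha e^{-\beta \ell}$ with $\alpha, \beta > 0$ depending only on $q, \gamma, \zeta, s$ and $\Delta$, as required by the SSM definition. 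The technical care needed in this last step — coupling the two marginals $\mu^\sigma_{V \setminus B_r(v)}$ and $\mu^\tau_{V \setminus B_r(v)}$ while maintaining correctness of the restricted dynamics — is where the DSVW argument, which I would adapt here, does its real work.
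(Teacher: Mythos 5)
The paper does not prove this statement at all --- it is imported as a black-box citation of \cite[Theorem 2.3]{DSVW04}, with only a remark that the SSM definitions match on SENG graphs. So there is no ``paper proof'' to compare against; what you have done is reconstruct the DSVW argument, and your reconstruction is essentially correct: the identity grand coupling, the disagreement-path union bound $\sum_{m\ge\ell}\Delta^m\binom{kn}{m}n^{-m}\le\sum_{m\ge\ell}(ek\Delta/m)^m$ with $k=\Theta(\ell/\Delta)$, the correct identification of the spurious factor $n$ as the main obstacle, and the localisation to $B_r(v)$ with $r=\ell/2$ so that the OTM prefactor becomes $|B_r(v)|=e^{o(\ell)}$ (which is where SENG enters and is absorbed by $e^{-\zeta k}=e^{-\Omega(\ell)}$) are all the right moves.

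The one place you stop short --- ``coupling the two marginals $\mu^\sigma_{V\setminus B_r(v)}$ and $\mu^\tau_{V\setminus B_r(v)}$ \ldots is where the DSVW argument does its real work'' --- is both circular as written (you cannot defer to the theorem you are proving) and, more importantly, an overestimate of the difficulty: no coupling of those marginals is needed. Write $\mu^\sigma_v=\E_{\eta\sim\mu^\sigma_{V\setminus B_r(v)}}\bigl[\mu^{\sigma\cup\eta}_v\bigr]$ and likewise for $\tau$; by convexity of total variation distance under mixtures,
\[
\DTV{\mu^\sigma_v}{\mu^\tau_v}\;\le\;\max_{\eta,\eta'}\DTV{\mu^{\sigma\cup\eta}_v}{\mu^{\tau\cup\eta'}_v},
\]
the maximum ranging over feasible external configurations. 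For any fixed pair $(\eta,\eta')$ the two boundary conditions on $\partial\bigl(B_r(v)\setminus\Lambda\bigr)$ agree on $\Lambda$-vertices (since $D_0$ lies outside $B_{r+1}(v)$) and may disagree only on the sphere at distance $r+1>\ell/2$ from $v$; the spatial Markov property makes the restricted Glauber dynamics on $B_r(v)\setminus\Lambda$ target the correct conditional marginal, OTM under arbitrary pinning applies to this restricted region with prefactor $|B_r(v)|$, and your percolation bound now only needs to cover distance $\ell/2$. Balancing gives $\delta(\ell)=\alpha e^{-\beta\ell}$ with constants depending on $q,\gamma,\zeta,s$. With that paragraph added, your proof is complete; the real work is in the parts you already wrote, not the part you deferred.
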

We remark that the definition of SSM varies slightly across the literature. 
However, the version in this paper is equivalent to that in \cite{DSVW04} on SENG graphs, and the above theorem still applies in our context. 
Combining \Cref{thm-spin-main} and \Cref{thm-dsvw04} yields: 
\begin{theorem}\label{thm-time}
Let ${\bm A}\in\mathbb{R}^{q\times q}_{\geq 0}$ and ${\bm h}^{q}_{\geq 0}$ be an interaction matrix and an external field vector. 
There is an FPTAS for the partition functions of the spin system defined by ${\bm A}$ and ${\bm h}$ on SENG graphs, if the Glauber dynamics on the Gibbs distribution has the optimal temporal mixing under arbitrary pinning.
\end{theorem}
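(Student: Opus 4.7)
The plan is to chain \Cref{thm-dsvw04} with \Cref{thm-spin-main}. First, I would apply \Cref{thm-dsvw04} to the given spin system: since the Glauber dynamics possesses optimal temporal mixing under arbitrary pinning on an underlying SENG graph, the Gibbs distribution exhibits SSM. What needs to be checked, and is the conceptual crux of the reduction, is that the SSM obtained in this way has \emph{exponential} decay in the sense required by \Cref{thm-spin-main}; that is, $\delta(\ell)=\alpha\exp(-\beta\ell)$ for some absolute constants $\alpha,\beta>0$ depending only on the constants $(\gamma,\zeta)$ of \Cref{def-optimal-temporal-mixing} and on the growth bound $s(\cdot)$ of the SENG graph. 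This is the content (and standard conclusion) of the proof of \Cref{thm-dsvw04}, so I would quote it directly.

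Next, I would verify that the triple $(q,\delta,s)$ satisfies \Cref{cond-spin-requirement} for some constant $L=L(q,\delta,s)$. Since $s(\ell)=\exp(o(\ell))$ by the SENG assumption and $\delta(\ell)=\alpha\exp(-\beta\ell)$ by the previous paragraph, the product satisfies
\[
2\mathrm{e}q(1+s(\ell))\delta(\ell)\le 2\mathrm{e}q(1+\exp(o(\ell)))\alpha\exp(-\beta\ell)\xrightarrow{\ell\to\infty} 0,
\]
so there exists an integer $L$, depending only on $q$, $\alpha$, $\beta$, and the sub-exponential growth profile of $s$, such that $2\mathrm{e}q(1+s(\ell))\delta(\ell)\le 1$ for all $\ell\ge L$. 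This checks \Cref{cond-spin-requirement}.

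Finally, I would invoke \Cref{thm-spin-main} with this $(q,\delta,s,L)$ and with the marginal lower bound $b$, which in the SENG regime depends only on $\bm A$, $\bm h$, and $s(1)$ (hence is a positive constant once the interaction matrix and external field are fixed). The theorem delivers a deterministic algorithm producing a $(1\pm\varepsilon)$-multiplicative estimate of the partition function in time $O((n/(b\varepsilon))^{O(s(L)\log q)})$, which, since $q$, $s(L)$, and $b$ are absolute constants under the hypotheses, is $\poly(n,1/\varepsilon)$. This is the desired FPTAS.

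The two steps that require actual care are (i) extracting an \emph{exponential} decay rate $\delta$ from optimal temporal mixing on a SENG graph — this is the quantitative strengthening of \Cref{thm-dsvw04} one needs; and (ii) translating the constants from \Cref{def-optimal-temporal-mixing} into an explicit $L$ for \Cref{cond-spin-requirement}, which is only a bookkeeping exercise once step (i) is in hand. No new combinatorial machinery is needed beyond what has already been developed in the paper and in~\cite{DSVW04}.
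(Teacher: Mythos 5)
Your proposal is correct and follows exactly the route the paper intends: the paper itself proves \Cref{thm-time} only by the one-line remark ``Combining \Cref{thm-spin-main} and \Cref{thm-dsvw04} yields'' the result, and your write-up simply fills in the two verifications (exponential decay of $\delta$ from \cite{DSVW04}, and the existence of $L$ in \Cref{cond-spin-requirement} from $s(\ell)=\exp(o(\ell))$) that make that combination legitimate. No gaps.
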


\subsection{FPTAS from spectral independence}

To derive optimal temporal mixing and hence apply \Cref{thm-time}, 
we utilise a powerful tool called the \emph{spectral independence}, first defined by Anari, Liu and Oveis Gharan \cite{ALO20} to obtain rapid mixing, extended to general $[q]$ domains by various authors \cite{CGSV21,FGYZ21}, and refined by Chen, Liu and Vigoda \cite{CLV21} for optimal mixing of the Glauber dynamics.

The formal definition of spectral independence is given below. 
For any subset $\Lambda \subset V$, any $\sigma \in \Omega(\mu_\Lambda)$, where $\Omega(\mu_\Lambda)$ denotes the support of the distribution $\mu_{\Lambda}$, define 
\begin{align*}
	\widetilde{V}_{\sigma} = \left\{ (u,c) \mid u \in V \setminus \Lambda \text{ and } c \in \Omega(\mu^\sigma_v) \right\}.
\end{align*}
For every pair $(u,i),(v,j) \in \widetilde{V}_{\sigma}$ with $u \neq v$, define the (signed) influence from $(u,i)$ to $(v,j)$ with respect to the conditional $\sigma$ by
\begin{align*}
	\gamma^\sigma_\mu((u,i),(v,j)) = \mu_{v}^{\sigma \land (u \gets i)}(j) - \mu_v^\sigma(j),
\end{align*}
and define $\gamma^\sigma_\mu((v,i),(v,j)) = 0$ for all $(v,i),(v,j) \in \widetilde{V}_{\sigma}$, where $ \mu_{v}^{\sigma \land (u \gets i)}$ denotes the marginal distribution on $v$ conditional on $\sigma$ and the event that $u$ takes the value $i$. 
\begin{definition}[spectral independence]
Let $\eta > 0$ be a constant.
A distribution $\mu$ is $\eta$-spectrally independent if for any $\Lambda \subset V$, any $\sigma \in \Omega(\mu_{\Lambda})$, the maximum eigenvalue of $\gamma^\sigma_\mu$ satisfies $\lambda_{\max}(\gamma^\sigma_\mu) \leq \eta$.
\end{definition}

The main result of this subsection is the following theorem. 
\begin{theorem} \label{thm-si-to-fptas}
Let ${\bm A}\in\mathbb{R}^{q\times q}_{\geq 0}$ and ${\bm h}^{q}_{\geq 0}$ be an interaction matrix and an external field vector. 
There is an FPTAS for the partition functions of the spin system defined by ${\bm A}$ and ${\bm h}$ on SENG graphs, if the Gibbs distribution of the spin system is $\eta$-spectrally independent for some constant $\eta$. 
\end{theorem}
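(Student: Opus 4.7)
The plan is to reduce \Cref{thm-si-to-fptas} to \Cref{thm-time} by showing that $\eta$-spectral independence on a SENG graph implies optimal temporal mixing of the Glauber dynamics under arbitrary pinning (in the sense of \Cref{def-optimal-temporal-mixing}). Since a SENG graph has bounded maximum degree $\Delta \le s(1) = O(1)$, this puts us precisely in the regime where spectral independence is known to imply rapid mixing.

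The key ingredient is the by-now-standard chain of implications from spectral independence to optimal mixing on bounded-degree graphs, established in \cite{CLV21} and subsequent refinements. Concretely, I would proceed as follows. First, observe that the definition of $\eta$-spectral independence quantifies over \emph{all} pinnings $\sigma$, so the property is automatically preserved when we restrict the spin system to an arbitrary subset $\Lambda \subseteq V$ with a fixed feasible boundary condition on $\partial\Lambda$. Combined with the bounded-degree property and the fact that $q$ is constant (hence the marginal lower bound $b$ is a positive constant depending only on $\bm{A}, \bm{h}, \Delta$), the CLV-style machinery yields that the Glauber dynamics on $\Lambda$ with any boundary condition has spectral gap $\Omega(1/|\Lambda|)$, and in fact a modified log-Sobolev inequality with constant $\Omega(1/|\Lambda|)$. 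This upgrades to a mixing time bound of the form $\DTV{X_t^\sigma}{\mu^\sigma_\Lambda} \le |\Lambda| \gamma \exp(-\zeta t/|\Lambda|)$ uniformly in the initial state $X_0$ and in the pinning $\sigma$, which after plugging in $t = k|\Lambda|$ is exactly the optimal temporal mixing statement of \Cref{def-optimal-temporal-mixing} up to a triangle inequality between $X_{kn}$ and $Y_{kn}$.

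Once optimal temporal mixing under arbitrary pinning is in hand, \Cref{thm-time} delivers the FPTAS directly, closing the chain: spectral independence $\Rightarrow$ optimal temporal mixing $\Rightarrow$ SSM (by \Cref{thm-dsvw04} on SENG graphs) $\Rightarrow$ FPTAS (by \Cref{thm-spin-main}, as already packaged inside \Cref{thm-time}).

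The main obstacle I anticipate is citing the spectral-independence-to-mixing step at the right level of generality. The version we need applies uniformly across all pinnings and gives exponential decay after a linear number of steps per vertex, with constants depending only on $q$, $\Delta$, and $\eta$ (not on $n$). For general bounded-degree $q$-spin systems, this exact form is available from \cite{CLV21} when the marginal lower bound is uniform and $\eta$ is a constant, both of which hold here: the marginal lower bound is inherited from ${\bm A}, {\bm h}$ and $\Delta$, and $\eta$ is given. Care must be taken that \emph{pinnings}, not just boundary conditions, preserve the constant in the spectral-independence bound; but this is exactly what the definition of spectral independence guarantees, so no extra work is required beyond invoking the theorem.
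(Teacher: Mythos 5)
Your proposal is correct and follows essentially the same route as the paper: the paper also reduces to \Cref{thm-time} by invoking the CLV-style result (stated as \Cref{thm-clv-si-at}, entropy decay at rate $1/(C|V|)$ under spectral independence plus marginal boundedness, uniformly over pinnings), iterating it and applying Pinsker's inequality to obtain optimal temporal mixing with the $\sqrt{2|\Lambda|\log(1/b)}\exp(-k/(2C))$ prefactor. One small point worth keeping in mind: the spectral gap $\Omega(1/|\Lambda|)$ alone would give a prefactor of order $\sqrt{1/\mu_{\min}}$, which is exponential in $|\Lambda|$ and too weak for \Cref{def-optimal-temporal-mixing}; it is the entropy functional inequality (MLSI / approximate tensorisation), which you do mention, that yields the polynomial-in-$|\Lambda|$ prefactor needed here.
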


The work \cite{CLV21} establishes that, under some conditions, spectral independence implies approximate tensorisation of entropy, a notion that is used to establish the decay of relative entropy and hence optimal temporal mixing (see, for example, \cite{CMT15}). 
Connecting the main theorem of \cite[Theorem 2.8]{CLV21} (see also \cite[Theorem 1.7]{BCCPSV22}) with \cite{CMT15} yields the following. 
\begin{theorem} \label{thm-clv-si-at}
Let $\Delta\geq 3$ be an integer and $b,\eta>0$ be reals. 
Let $\mu$ be the distribution of a $q$-spin system on a graph $G=(V,E)$ of maximum degree at most $\Delta$. 
If $\mu$ is $\eta$-spectrally independent and $b$-marginally bounded, then the relative entropy of $P_{\mathrm{GL}}$ decays with rate $\frac{1}{C|V|}$, i.e.,  
\[
\KL{\nu P_{\mathrm{GL}}}{\mu}\leq\left(1-\frac{1}{C|V|}\right)\KL{\nu}{\mu}
\]
holds for any distribution $\nu$ over $[q]^V$.
Here the constant $C=\left(\frac{\Delta}{b}\right)^{1+2\left\lceil\frac{\eta}{b}\right\rceil}$. 
\end{theorem}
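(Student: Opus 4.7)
The plan is to derive \Cref{thm-clv-si-at} by combining the main technical theorem of \cite{CLV21} (approximate tensorisation of entropy from spectral independence under marginally bounded distributions) with the standard equivalence between approximate tensorisation of entropy and relative-entropy contraction of single-site Glauber dynamics going back to \cite{CMT15}.

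\emph{Step 1 (Approximate tensorisation of entropy).} The first goal is to establish that under the hypotheses ($\mu$ is $\eta$-spectrally independent, $b$-marginally bounded, and the underlying graph has maximum degree $\Delta$), one has the functional inequality
\[
\Ent{\mu}{f} \;\le\; C \cdot \Ex_{\mu}\left[\sum_{v\in V} \Ent{\mu^{X_{V\setminus\{v\}}}}{f}\right]
\]
for every nonnegative $f:[q]^V\to\mathbb{R}_{\ge 0}$, with the same constant $C=(\Delta/b)^{1+2\lceil \eta/b\rceil}$ as in the statement. The proof is an induction over pinnings in the style of Alev--Lau and Anari--Liu--Oveis Gharan, but carried out on the local entropy factors rather than on variances: one shows that spectral independence at every pinning (a hereditary property) controls the entropy contraction of the ``down-up'' walk on large faces, and marginal boundedness $b$ is used to pass from the top-level entropy factorisation to single-site entropy factors, losing a multiplicative factor of at most $\Delta/b$ per level of the recursion. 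Since only $1+2\lceil\eta/b\rceil$ levels suffice to absorb the spectral-independence parameter $\eta$, the total loss is $(\Delta/b)^{1+2\lceil\eta/b\rceil}$. This is the main technical content, and is precisely \cite[Thm.~2.8]{CLV21}.

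\emph{Step 2 (From tensorisation to Dirichlet-form contraction).} Next I would use the identity, valid for the reversible single-site Glauber dynamics $P_{\mathrm{GL}}$ with uniform site selection,
\[
\Ent{\mu}{f} - \Ent{\mu}{P_{\mathrm{GL}} f} \;\ge\; \frac{1}{|V|}\,\Ex_{\mu}\left[\sum_{v\in V} \Ent{\mu^{X_{V\setminus\{v\}}}}{f}\right],
\]
which follows by writing $P_{\mathrm{GL}} = \frac{1}{|V|}\sum_v P_v$, where each $P_v$ resamples site $v$ from its conditional law, and applying convexity of entropy (this is the classical Miclo/CMT-type decomposition). Combining with Step~1 yields
\[
\Ent{\mu}{P_{\mathrm{GL}} f} \;\le\; \left(1 - \frac{1}{C|V|}\right)\Ent{\mu}{f}.
\]

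\emph{Step 3 (Dualisation to $\nu$).} Finally, taking $f = \mathrm{d}\nu/\mathrm{d}\mu$ and using reversibility of $P_{\mathrm{GL}}$ with respect to $\mu$, one has $\Ent{\mu}{f} = \KL{\nu}{\mu}$ and $\Ent{\mu}{P_{\mathrm{GL}} f} = \KL{\nu P_{\mathrm{GL}}}{\mu}$ (since $P_{\mathrm{GL}}^*$, the adjoint in $L^2(\mu)$, coincides with $P_{\mathrm{GL}}$ and pushes $f$ forward to the density of $\nu P_{\mathrm{GL}}$). The claimed contraction follows.

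The hard part is Step~1; Steps~2 and 3 are essentially bookkeeping. In particular, producing the exact exponent $1+2\lceil\eta/b\rceil$ requires carefully tracking how spectral independence at every pinning recovers an entropy-factorisation statement via the recursive local-to-global machinery of \cite{CLV21}, where the $b$-marginal boundedness is invoked at each level of the recursion to convert a spectral-gap bound into an entropy bound. Everything else is a short derivation from this black-box inequality.
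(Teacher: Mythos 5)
Your proposal matches the paper's own treatment: the paper proves this theorem simply by invoking \cite[Theorem 2.8]{CLV21} for approximate tensorisation of entropy under spectral independence and marginal boundedness, and then connecting it to relative-entropy contraction of Glauber dynamics via the standard argument of \cite{CMT15}, exactly as in your Steps 1--3. Your Steps 2 and 3 (convexity of entropy plus the decomposition $\Ent{\mu}{f}=\Ex_\mu[\Ent{\mu^{X_{V\setminus\{v\}}}}{f}]+\Ent{\mu}{P_vf}$, and the identification $\mathrm{d}(\nu P_{\mathrm{GL}})/\mathrm{d}\mu = P_{\mathrm{GL}}f$ by reversibility) are correct and simply spell out the bookkeeping the paper leaves implicit.
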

\Cref{thm-si-to-fptas} can be proved by iterating the above theorem. 
We also remark that spectral independence holds under arbitrary pinning. 

\begin{proof}[Proof of \Cref{thm-si-to-fptas}]
Fix the vertex set $\Lambda$. 
Under arbitrary feasible pinning, we apply \Cref{thm-clv-si-at} iteratively. 
The distribution $X_{kn}$ of the Glauber dynamics on $\Lambda$ after $kn$ rounds then satisfies
\begin{align}
\KL{X_{kn}}{\mu^{\sigma}_{\Lambda}}&\leq\left(1-\frac{1}{C|\Lambda|}\right)^{kn}\KL{X_0}{\mu^{\sigma}_{\Lambda}}\nonumber\\
&\leq \exp\left\{-\frac{k}{C}\right\}\KL{X_0}{\mu^{\sigma}_{\Lambda}} \tag{Using $|\Lambda|\leq n$}\nonumber\\
&\leq \exp\left\{-\frac{k}{C}\right\}\log\left(\frac{1}{\mu^{\sigma}_{\Lambda,\min}}\right) \label{equ-ent-decay-iterate}
\end{align}
where $\mu^{\sigma}_{\Lambda,\min}=\min_{\tau\in[q]^{\Lambda}}\mu^{\sigma}_{\Lambda}(\tau)$ is the minimum non-zero probability of the distribution $\mu^{\sigma}_{\Lambda}$.

We verify optimal temporal mixing as follows. 
\begin{align*}
\DTV{X_{kn}}{Y_{kn}}&\leq\DTV{X_{kn}}{\mu^{\sigma}_{\Lambda}}+\DTV{Y_{kn}}{\mu^{\sigma}_{\Lambda}}\\
&\leq 2\DTV{X_{kn}}{\mu^{\sigma}_{\Lambda}} \tag{w.l.o.g.}\\
&\leq\sqrt{2\KL{X_{kn}}{\mu^{\sigma}_{\Lambda}}} \tag{Pinsker's Inequality}\\
&\leq\sqrt{2}\sqrt{\log\left(\frac{1}{\mu^{\sigma}_{\Lambda,\min}}\right)}\exp\left\{-\frac{k}{2C}\right\}\tag{By (\ref{equ-ent-decay-iterate})}\\
&\leq\sqrt{2|\Lambda|\log(1/b)}\exp\left\{-\frac{k}{2C}\right\}. 
\end{align*}

The theorem then follows by invoking \Cref{thm-time}. 
\end{proof}

Finally, \Cref{thm-spin-colouring} can be proved. 
\begin{proof}[Proof of \Cref{thm-spin-colouring}]
The uniform distribution over proper $q$-colourings in the same regime as \Cref{thm-spin-colouring} is shown to be $\eta$-spectrally independent for some constant $\eta=\eta(q,\Delta)$ \cite{Liu21}. 
This theorem then follows after \Cref{thm-si-to-fptas}. 
\end{proof}

\end{document}